\title{Urgency Annotations for Alternating Choices}
\author{Eren Keskin \and Roland Meyer \and Sören van der Wall}
\institute{TU Braunschweig, \email{\{e.keskin, roland.meyer, s.van-der-wall\}@tu-bs.de}}
\crefname{axiom}{axiom}{axioms}
\tikzstyle{cuttingset}=[fill = gray]
\tikzstyle{pfirst}=[draw, trapezium]
\tikzstyle{psecond}=[draw, trapezium, shape border rotate = 180]
\tikzstyle{peve}=[draw, ellipse]
\tikzstyle{padam}=[draw, rectangle]
\tikzstyle{punk}=[draw, rounded corners]
\begin{document}


\maketitle


\vspace{-0.5cm}
\begin{abstract}
We propose urgency programs, a new programming model with support for alternation, imperfect information, and recursion.
The novelty are urgency annotations that decorate the (angelic and demonic) choice operators and control the order in which alternation is resolved.
We study standard notions of contextual equivalence for urgency programs.
Our first main result are fully abstract characterizations of these relations based on sound and complete axiomatizations.
%
%
Our second main result settles their computability via a normal form construction.
Notably, we show that the contextual preorder is \kexptime{($2\maxurg-1$)}-complete for programs of maximal urgency~$\maxurg$ when the regular observable is given as an input resp.\ \ptime-complete when the regular observable is fixed. 
We designed urgency programs as a framework in which it is convenient to formulate and study verification and synthesis problems. 
We demonstrate this on a number of examples including the verification of concurrent and recursive programs and hyper model checking. 

\keywords{Alternation, Imperfect Information, Contextual Equivalence, Full Abstraction, Axiomatization, Verification, Complexity.} 
\end{abstract}

\vspace{-0.5cm}
\section{Introduction}
Algorithmic program verification may seem like a zoo of approaches that do not have much in common. 
Refinement is checked by establishing a relation between the program and a reference implementation~\cite{Milner71}, the model checking of linear-time properties~\cite{Pnueli77} is formulated as language inclusion~\cite{VW86}, and branching-time properties are reduced to games~\cite{EJ91}. 
This variety makes it difficult to choose a verification approach when one is confronted with a new class of programs. 
Fortunately, there are master problems that provide guidance. 
When the new class of programs is concurrent, there is a good chance that the verification problem can be cast as coverability in well-structured transition systems~\cite{Finkel87,AJ93,FS01}. 
When it is recursive, one would try to reduce it to higher-order model checking~\cite{Ong06,KO09}. 
The master problem then informs us about how to implement the verification algorithm~\cite{KSU11}, from the symbolic representation of program states~\cite{FG09} to the search strategy~\cite{IIC13}. 
But the master problems are no silver bullet. 
Well-structured transition systems are not good at modeling recursion, and the verification of branching-time properties is undecidable for them. 
Higher-order models, in turn, are not good at modeling concurrency. 

We propose a new programming construct to capture verification tasks that are not handled well by the existing master problems.   
The key insight is that by \emph{combining alternation~\cite{CKS81} with imperfect information~\cite{Reif84}} one can model concurrency even in a sequential programming model. 
%
%
%
%
%
%
%
We combine the two by adding urgency annotations to the (angelic and demonic) choice operators.  
%
%
Choices are classically resolved in program order~\cite{CKS81}. 
Urgency annotations are natural numbers that define when a choice has to be made: the higher the urgency, the sooner. 
The lower urgency choices remain unresolved in the program term until all higher urgency choices have been made. 
Choices of different urgency can thus be imagined as belonging to different components in the verification task, the program threads or the specification that has to judge the program behavior. 
%
%
%
%
%
%
To illustrate the order in which choices with urgency are made, consider the transition sequence 
\begin{align*}
(\aletter_l \achoicen{1} \aletter_r)\appl\makeleading{(\aletterp \echoicen{2} \aletterpp)}
\quad\gamemove\quad 
\makeleading{(\aletter_l \achoicen{1} \aletter_r)}\appl \aletterp\quad
\gamemove\quad  \aletter_l\appl \aletterp\ .
\end{align*}
The demonic choice $\achoicen{1}$ of urgency $1$ is resolved only after the angelic choice $\echoicen{2}$ of urgency $2$, although the demonic choice is written earlier in the program.

Urgency programs not only serve as a convenient backend to which to reduce verification tasks.
By comparing the reductions, we also understand how the tasks are related. 
As a running example, we study the formulation of simulation and language inclusion as urgency programs.
Let the systems of interest be $\adfa$ and $\adfap$ from \Cref{Figure:InclusionVsSimulation}.
We examine whether \begin{enumerate*}[label=(\emph{\alph*})]
\item $\adfa$ is simulated by $\adfap$ and
\item the language of~$\adfa$ is included in the language of~$\adfap$.
\end{enumerate*}
The transitions $\aletter_l$ and $\aletter_r$ carry the observation $\aletter$. 
%
%
The programs $\termsim$ and~$\termincl$ from \Cref{Figure:InclusionVsSimulation} encode the simulation resp.\@ the inclusion problem.
For now, just observe that the difference in these programs is merely a shift in urgency.
%
%
%
%
%
\begin{figure}
    \centering
    \begin{tikzpicture}[scale=0.7]
        \tikzmath{
            \xoffset=6.5;
            \ytxt=-0.8; \ytxtp=\ytxt-0.7;
            \distdiag=0.7;
            \distvert=0.5;
        }
        \tikzset{every state/.style={shape = circle,fill = black,minimum size = 4,inner sep=0}}
        {
            \node[state] (lin) at (\xoffset,0) {};
            \node[state] (laleft) [below left=\distdiag of lin] {};
            \node[state] (laright) [below right=\distdiag of lin] {};
            \node[state] (lableft) [below=\distvert of laleft] {};
            \node[state] (lacright) [below=\distvert of laright] {};
            \node (laut) [right=0.6cm of lin] {$\adfap$};
            \node (ltxt) [left=0.6cm of lin] {$\objective_{\adfap} = \set{ \aletter_l \aletterp, \aletter_r \aletterpp}$};
            \path[->]   (lin)    edge node[left]{$\aletter_l$} (laleft)
                                edge node[right]{$\aletter_r$} (laright)
                        (laleft) edge node[right]{$\aletterp$} (lableft)
                        (laright)edge node[left]{$\aletterpp$} (lacright);    
        }
        {
            \node[state] (rin) at (-\xoffset,0) {};
            \node[state] (ra) [below=\distvert of rin] {};
            \node[state] (rableft) [below left=\distdiag of ra] {};
            \node[state] (racright) [below right=\distdiag of ra] {};
            \node (raut) [left=0.6cm of rin]{$\adfa$};
            \node (rtxt) [right=0.6cm of rin] {$\objective_{\adfa} = \set{ \aletter \aletterp, \aletter \aletterpp}$};
            \path[->]   (rin)    edge node[left]{$\aletter$}   (ra)
                        (ra)     edge node[left]{$\aletterp$}  (rableft)
                                 edge node[right]{$\aletterpp$}(racright);
        }
        \node (txt) at (0,\ytxt) {$\termsim = \aletter \appl \makeleading{(\aletter_l \achoicen{1} \aletter_r)}
                \appl (\aletterp \echoicen{1} \aletterpp) \appl (\aletterp \achoicen{1} \aletterpp)$};
        \node (txt2) at (0,\ytxtp) {%
            $\termincl = \aletter \appl (\aletter_l \achoicen{1} \aletter_r) 
                \appl \makeleading{(\aletterp \echoicen{2} \aletterpp)} \appl (\aletterp \achoicen{1} \aletterpp)$};
    \end{tikzpicture}
    \caption{\label{Figure:InclusionVsSimulation}
        Automata $\adfa$, $\adfap$ and programs $\termsim, \termincl$ for simulation resp. inclusion. 
    }
\end{figure}
%

Urgency programs can only provide guidance for solving verification tasks if their verification problem is decidable and the decision procedure is easy to adapt to the new setting.  
To develop decision procedures for urgency programs, we follow a common approach: understand the contextual equivalence~\cite{Milner77} of the programming model, then one understands how to summarize programs, and every algorithm will be a fixed point over these summaries. 
%
%
%
%
The notion of contextual equivalence depends on the level of detail at which one intends to observe the program behavior.
We consider two standard definitions~\cite{JM21}: 
\begin{alignat*}{5}
\aterm\ &\congeq\ \atermp\quad &\text{if} &\quad \forall\objective.\forall \acontext{\contextvar}. &\quad \winsof{\acontext{\aterm}}{\objective}\quad&\text{iff}\quad \winsof{\acontext{\atermp}}{\objective}\\
\aterm\ &\speccongeq{\objective}\ \atermp\quad &\text{if} &\quad \phantom{\forall\objective.}\forall \acontext{\contextvar}.&\quad\winsof{\acontext{\aterm}}{\objective}\quad&\text{iff}\quad \winsof{\acontext{\atermp}}{\objective}\ . 
\end{alignat*}
The former definition quantifies over observables $\objective$ and contexts $\acontext{\contextvar}$ and is our notion of \emph{contextual equivalence}. 
The latter fixes an observable, like termination or reaching an error, and we refer to it as \emph{$\objective$-specialized contextual equivalence}. 
Due to the alternating choices in urgency programs, 
making an observation is defined in a game-theoretic way: 
$\winsof{\acontext{\atermp}}{\objective}$ means 
Eve has a winning strategy in the game arena $\semof{\acontext{\atermp}}$ 
when $\objective\subseteq\analph^*$ is the objective. 

Our first contribution are full abstraction results: contextual equivalence and its specialized variant coincide with congruence relations that neither quantify over contexts nor observables. 
The congruences are defined axiomatically, and one may say we axiomatize the contextual equivalences.
An important insight is that imperfect information distributes over perfect information. 
In the example,  
\begin{align*}
(\aletter_l \achoicen{1} \aletter_r)\appl
(\aletterp \echoicen{2} \aletterpp)\ \congeq\ 
(\aletter_l \achoicen{1} \aletter_r)\appl \aletterp \echoicen{2} (\aletter_l \achoicen{1} \aletter_r)\appl \aletterpp\ \congeq\ 
(\aletter_l\appl\aletterp \achoicen{1} \aletter_r\appl\aletterp)  \echoicen{2} (\aletter_l\appl \aletterpp \achoicen{1} \aletter_r\appl \aletterpp)\ .
\end{align*}

Our second contribution is to settle the complexity of the two contextual equivalences and their preorder variants. 
%
The main finding is that the specialized contextual preorder is 
\kexptime{($2\maxurg-1$)}-complete for programs of maximal urgency~$\maxurg$ when the regular observable is given as an input resp.\@ \ptime-complete when it is fixed.
A consequence is indeed that the verification problem can be solved with the same complexity. 
%
%
%
To get the upper bound right, an important idea is to factorize the set of contexts. 
We equate contexts $\acontext{\contextvar}$ that have the same solution space: the same set of programs $\aterm$ so that  $\winsof{\acontext{\aterm}}{\objective}$.  
The challenge is to handle the factorization algorithmically. 
We show how to represent solution spaces using the novel concept of characteristic terms. 
%

For the lower bound, we reduce context-bounded multi-pushdown games~\cite{Seth09} to the verification of urgency programs. 
This justifies our claim that urgency programs can capture concurrency in a rather natural way. 
To further demonstrate the usefulness of urgency programs, we show how to encode other popular verification problems, notably the recent hyper model checking~\cite{CS10,CFKMRS14}.

\paragraph{Structure.} After an introduction to urgency programs, we state the full abstraction results in \Cref{Section:FullAbstraction}, followed by the proofs of soundness, normalization, and completeness.  
In \Cref{Section:Verification}, we give the decidability and complexity results, followed by the upper and lower bound proofs in \Cref{Section:Applications,Section:UpperBound}. 
%

\vspace{-0.1cm}
\section{Programming Model}
Throughout the development, we fix a natural number $\maxurg>0$ for the maximal urgency in programs, 
a finite alphabet $\analph$ of terminal symbols with typical elements $\aletter, \aletterp, \aletterpp$, 
and a finite or infinite set of non-terminals~$\nonterminals$ with elements $\nonterminal, \nonterminalp, \nonterminalpp$. 
Each non-terminal has a so-called defining program term given by $\eqmap:\nonterminals\to \terms$.  

The set $\terms$ of \emph{program terms} (of urgency up to $\maxurg$ over~$\analph, \nonterminals$) is defined as 
\begin{align*}
    \aterm\quad ::=\quad \aletter \bnf \tskip \bnf \terr \bnf\nonterminal \bnf \aterm\appl\aterm
    \bnf  \bigEchoiceOf{\anurg}{}{\atermset} \bnf \bigAchoiceOf{\anurg}{}{\atermset}\ .
\end{align*}
Terminal symbols represent program commands with visible behavior, $\tskip$ is a command without visible effect, and $\terr$  aborts the computation unsuccessfully. 
Non-terminals model recursive functions, we have concatenation, and 
angelic~($\echoicen{\anurg}$) as well as demonic ($\achoicen{\anurg}$) choice of urgency~$0<\anurg\leq \maxurg$. 
We use $\pchoicen{\anurg}$ to mean $\echoicen{\anurg}$ or $\achoicen{\anurg}$, infix notation for binary choices, and $\bigPchoiceOf{\anurg}{}{\aterm}$ for $\bigPchoiceOf{\anurg}{}{\set{\aterm}}$. 
An action is a term
$\nonterminal$ or $\bigPchoiceOf{\anurg}{}{\atermset}$.
Terms that contain actions are called active.
Terms that do not contain actions are called passive. 
Passive terms are 
words over $\Sigma\cup\set{\tskip, \terr}$, and we also call them word terms $\aword \in\wordterms$. 
%
To avoid brackets, we let concatenation bind stronger than choices. 
Choices range over non-empty but possibly infinite sets of terms.
Infinitary syntax requires care to make sure set and game-theoretic concepts remain sound. 
We moved the corresponding lemmas to \Cref{Appendix:Infinitary}~and~\ref{Appendix:TreeBounds} to keep the presentation light.
The motivation for infinitary syntax will be given at the end of the section. 
We lift the notion of urgency from choices to program terms and define $\urgency: \terms\to\nat$ by
\begin{alignat*}{5}
    \urgencyof{\tskip}&= \urgencyof{\terr} = \urgencyof{\aletter}=0&\qquad
    \urgencyof{\nonterminal}&=\maxurg\\
\urgencyof{\aterm\appl\atermp}&=\max\set{\urgencyof{\aterm}, \urgencyof{\atermp}}&\qquad     \urgencyof{\bigPchoiceOf{\anurg}\atermset}&=\anurg\ .
\end{alignat*}

A \emph{context} $\acontext{\contextvar}$ (of maximal urgency $\maxurg$ over $\analph$ and $\nonterminals$) is a term that contains at most one occurrence of the fresh non-terminal $\contextvar$.
The set of all contexts is denoted by $\contexts$.
The expression $\acontext{\aterm}$ refers to the term 
obtained from $\acontext{\contextvar}$ by replacing $\contextvar$ with the term $\aterm$.
A term $\atermp$ is a subterm of $\aterm$ if there is a context 
$\acontext{\contextvar}$ with $\aterm=\acontext{\atermp}$.
A subterm is called outermost if it is not enclosed by a choice.
For example, $\aterm$ is outermost in $\aterm\appl\atermp$ 
but not in $(\aterm\echoicen{\anurg}\atermp)\appl\atermpp$.
%
%
%
The \emph{leading subterm} $\leadingof{\aterm}$ of a term 
$\aterm$ is defined as the outermost action with the highest urgency. 
If several outermost actions have this urgency, then the leftmost of them is 
leading.
Passive terms do not have leading subterms. 
Where helpful, we will underline a subterm that contains the leading subterm.
Note that $\makeleading{\aterm}\appl\atermp$ implies $\urgencyof{\aterm}\geq\urgencyof{\atermp}$
and 
$\aterm\appl\makeleading{\atermp}$ implies $\urgencyof{\atermp}>\urgencyof{\aterm}$. 
We denote the unique context enclosing the leading subterm
$\leadingof{\aterm}$
in $\aterm$ by $\enclosingctx{\aterm}{\contextvar}\in\contexts$. 
We have $\aterm = \enclosingctx{\aterm}{\leadingof{\aterm}}$. 
\begin{example}
	Consider $\termincl$ from \Cref{Figure:InclusionVsSimulation}. 
	Its urgency is $\urgencyof{\termincl} = 2$, 
	the leading subterm is $\aletterp \echoicen{2} \aletterpp$, 
	and the enclosing context is $\enclosingctx{\termincl}{\contextvar} = 
	\aletter \appl (\aletter_l \achoicen{1} \aletter_r) 
                \appl \contextvar \appl (\aletterp \achoicen{1} \aletterpp)$.
	Further, $\aletter_l \achoicen{1} \aletter_r$ is an outermost subterm of $\termincl$ and $\aletter_l$ is not.
\end{example}
\vspace{-0.2cm}
\subsection{Semantics}
Given that our programming model has alternation, the operational semantics of a term is not a plain transition system but a \emph{game arena} $\semof{\aterm}=(\terms, \aterm, \own, \gamemove)$ in which positions are owned by player $\eve$ or player $\adam$.  
The set of positions is the set of all terms.
The initial position is the given term. 
The ownership assignment $\own:\terms\to \set{\eve, \adam}$ returns the owner of the leading subterm, $\ownof{\aterm}=\ownof{\leadingof{\aterm}}$.
Adam owns the demonic choices, $\ownof{\bigAchoiceOf{\anurg}\atermset}=\adam$, and Eve owns the angelic choices, $\ownof{\bigEchoiceOf{\anurg}\atermset}=\eve$. 
We also give $\tskip$, $\terr$, terminals, and non-terminals to Eve.
This has no influence on the semantics as there will be at most one move from these positions.
The set of moves is defined as the smallest relation satisfying the following rules: 
\begin{align*}
    \axdef{}{\aterm\in\atermset}{\bigPchoiceOf{\anurg}\atermset\gamemove\aterm}
    \hspace{1.3cm}
    \axdef{}{}{\nonterminal\gamemove\eqmapof{\nonterminal}}
    \hspace{1.3cm}
    \axdef{}{\leadingof{\aterm}\gamemove\atermp}
    {\aterm
    \gamemove\enclosingctx{\aterm}{\atermp}}\ .
\end{align*}

\noindent A move always rewrites the leading subterm as illustrated by \Cref{fig:opsemantics}.
For a choice, it selects one alternative.
For a non-terminal, it inserts the defining term.
We define $\successorsof{\aterm}=\setcond{\atermp}{\aterm\gamemove\atermp}$.
It is worth noting that the game arena has perfect information. 
Imperfect information is modeled through choices, and they are eventually resolved.
%

The operational semantics is intensional in that it gives precise information about the program state at runtime. 
From a programming perspective, what matters is the result of a computation or, more generally, the \emph{observable behavior} of the program. 
Due to Adam's influence, the observable behavior will rarely be a single word but rather a language $\objective\subseteq\analph^*$.
We write $\winsof{\aterm}{\objective}$ to mean that Eve can enforce termination and the result will be a word from~$\objective$, no matter how Adam plays. 
We make this precise. 

Our notion of observable behavior is based on concepts from game theory. 
We refer to a language $\objective\subseteq\analph^{*}$ as a reachability objective for the game arena~$\semof{\aterm}$.
A play in this arena is a maximal (finite or infinite) sequence of positions 
$\aplay = \aterm_{0}, \aterm_{1}, \ldots$ that starts in the given term, $\aterm_0=\aterm$, and respects the moves of the game arena, $\aterm_{i}\gamemove\aterm_{i+1}$ for all~$i$. 
If the play ends, the result is a word term $\aword \in \wordterms$.
We interpret it as an element of the monoid with zero $(\analph^* \cup \set{\terr}, \appl\;, \tskip, \terr)$. 
Here, $\tskip$ is the unit, often denoted by $\varepsilon$, and $\terr$ is the zero.
We use $\sigeq$ to denote the monoid equality.  
%
%
Eve wins the play when $\aword$ belongs to $\objective$, meaning there is $\awordp \in \objective$ so that $\aword \sigeq \awordp$ (write $\aword \in \objective$).
Otherwise, Adam wins the play.
In particular, Adam wins all infinite plays and all plays exhibiting $\terr\notin\analph$.

%
A positional strategy for Eve is a function $\strat: \terms\to\terms$
so that $\atermp\gamemove\strat(\atermp)$ for all terms $\atermp$ owned by Eve that admit further rewriting.
Since we are interested in reachability objectives, we can use positional strategies without loss of generality~\cite{Martin75}. 
A play $\aplay$ is conform to $\strat$ if for all $i$ with $\ownof{\aterm_{i}}=\eve$ and $\successorsof{\aterm_i}\neq\emptyset$ we have $\strat(\aterm_{i})=\aterm_{i+1}$.
Eve wins objective~$\objective$, if she has a strategy~$\strat$ so as to win all plays that are conform to this strategy. 
This is what we denote by $\winsof{\aterm}{\objective}$. 
\begin{example}\label{Example:SimulationInclusion}
	We demonstrate how to encode simulation and inclusion between $\adfa$ and $\adfap$ from \Cref{Figure:InclusionVsSimulation}.
	In $\termsim$, Eve is tasked to find a violation of the simulation property, while Adam tries to prove it:
	whenever Eve takes a transition in $\adfa$, Adam tries to select a simulating transition right away in $\adfap$.
	The term $\termincl$ models inclusion:
	Eve selects a run in $\adfa$ and Adam tries to come up with a run in $\adfap$ with matching observations. 
	We define an objective $\objective$ 
	so that $\winsof{\termsim}{\objective}$ holds precisely 
	when $\adfa$ is simulated by $\adfap$ 
	and $\winsof{\termincl}{\objective}$ holds precisely 
	when $\adfa$'s language is included in $\adfap$'s.
	The objective has to make sure that the choices of transitions form paths in the automata and that the letters actually match.
	The latter is easy to check. 
	The former can be guaranteed by interleaving $\objective_\adfa$ and $\objective_{\adfap}$ in an alternating fashion, so $a_1.a_2$ and $b_1.b_2$ yield $a_1.b_1.a_2.b_2$.
	Details are in Section~\ref{Section:Applications}.
\end{example}
\begin{figure}
	\begin{subfigure}[b]{0.5\linewidth}
		\centering
		\scalebox{0.85}{
		\begin{tikzpicture}[auto,
			level 1/.style={sibling distance=25mm, level distance=10mm},
			level 2/.style={sibling distance=13mm, level distance=10mm}]
		\node[padam]{\vphantom{$f$}
		$(\makeleading{\aletter_l\achoicen{1} \aletter_r}).(\aletterp\echoicen{1}\aletterpp)$}
		child{
			node[peve]{\vphantom{$f$}$\aletter_l\appl(\makeleading{\aletterp\echoicen{1}\aletterpp})$} 
			child{node[punk]{\vphantom{$f$}$\aletter_l\appl \aletterp$}} 
			child{node[punk,cuttingset]{\vphantom{$f$}$\aletter_l\appl\aletterpp$}}}
		child{
			node[peve]{\vphantom{$f$}$\aletter_{r}.(\makeleading{\aletterp \echoicen{1}\aletterpp})$} 
			child{node[punk,cuttingset]{\vphantom{$f$}$\aletter_{r}\appl \aletterp$}}
			child{node[punk]{\vphantom{$f$}$\aletter_{r}\appl c$}}};
		\end{tikzpicture}
		}
	\end{subfigure}%
	\begin{subfigure}[b]{0.5\linewidth}
		\centering
		\scalebox{0.85}{
		\begin{tikzpicture}[auto,
			level 1/.style={sibling distance=27mm, level distance=10mm},
			level 2/.style={sibling distance=13mm, level distance=10mm}]
		\node[peve]{\vphantom{$f$}$(\aletter_{l}\achoicen{1} \aletter_{r}).(\makeleading{b \echoicen{2} c})$}
			child{
				node[padam]{\vphantom{$f$}$(\makeleading{\aletter_{l}\achoicen{1} \aletter_{r}})\appl b$} 
					child{node[punk]{\vphantom{$f$}$\aletter_{l}\appl b$}} 
					child{node[punk,cuttingset]{\vphantom{$f$}$\aletter_{r}\appl b$}}}
			child{
				node[padam]{\vphantom{$f$}$(\makeleading{\aletter_{l}\achoicen{1} \aletter_{r}})\appl c$} 
					child{node[punk,cuttingset]{\vphantom{$f$}$\aletter_{l}\appl c$}} 
					child{node[punk]{\vphantom{$f$}$\aletter_{r}\appl c$}}};
		\end{tikzpicture}
		}
	\end{subfigure}
	\caption{\label{fig:opsemantics}The game arenas 
	$\semof{(\aletter_{l}\echoicen{1} \aletter_{r})\appl(b \achoicen{1} c)}$ and 
	$\semof{(\aletter_{l}\echoicen{1} \aletter_{r})\appl(b \achoicen{2} c)}$. $\objective$ is gray.}
\end{figure}%
\begin{example}\label{Example:ReducedTerms}
	The encodings for simulation and inclusion only differ in the terms 
	$(\aletter_l \achoicen{1} \aletter_r)\appl(\aletterp \echoicen{2} \aletterpp)$ and 
	$(\aletter_l \achoicen{1} \aletter_r)\appl(\aletterp \echoicen{1} \aletterpp)$.  
	Their semantics is given in \Cref{fig:opsemantics}.
	Rectangular nodes are owned by Adam, circular ones by Eve, and for rectangular nodes with rounded corners the ownership does not matter.
	The objective is $\objective = \set{\aletter_l\appl\aletterpp, \aletter_r\appl\aletterp}$.
	Indeed, $\adfa$'s language is included in $\adfap$'s, 
	$\notwinsof{(\aletter_{l}\achoicen{1} \aletter_{r})\appl(b \echoicen{2} c)}{\objective}$,
	but $\adfa$ is not simulated by $\adfap$,
	$\winsof{(\aletter_{l}\achoicen{1} \aletter_{r})\appl(b\echoicen{1} c)}{\objective}$.
	%
	%
	%
\end{example}
%
\subsection{Contextual Preorder}
The notion of observable behavior is not compositional: we may have $\winsof{\aterm}{\objective}$ if and only if $\winsof{\atermp}{\objective}$ for all objectives~$\objective$, 
yet the two terms behave differently when placed into a context.
%
%
In our example, $\winsof{(b\echoicen{1} c)}{\objectivep}$
if and only if $\winsof{(b\echoicen{2} c)}{\objectivep}$
for all objectives $\objectivep\subseteq\analph^{*}$.
When inserting the terms into the context 
$\acontext{\contextvar}=(\aletter_{l}\achoicen{1} \aletter_{r})\appl\contextvar$, however, we have the difference discussed above. 
%
This is a common problem, and the way out is to consider the largest 
congruence that lives inside observational equivalence.  
It is more elegant to work with a precongruence and define the congruence of interest 
as a derived notion.  
\begin{definition}
The \emph{contextual preorder} $\congleq\ \subseteq \terms\times\terms$ is defined by $\aterm\congleq\atermp$, if 
\begin{align*}
\forall \objective\subseteq\analph^*.\ \forall \acontext{\contextvar}\in \contexts.\  \winsof{\acontext{\aterm}}{\objective} \text{ implies }\winsof{\acontext{\atermp}}{\objective}\ . 
\end{align*}
The \emph{$\objective$-specialized contextual preorder} $\speccongleq{\objective}\ \subseteq \terms\times\terms$ is defined by fixing $\objective$ and dropping the leading universal quantifier. 
%
The \emph{contextual equivalence} is then 
$\congeq\ =\ \congleq\cap\conggeq$, and the \emph{\hbox{$\objective$-specialized contextual equivalence}} is 
$\speccongeq{\objective}\ =\ \speccongleq{\objective}\cap\specconggeq{\objective}$.
\end{definition}

In the example, $(b\echoicen{2} c)\congleq(b\echoicen{1} c)$
and so by congruence $(\aletter_{l}\achoicen{1} \aletter_{r})\appl(b\echoicen{2} c)\congleq (\aletter_{l}\achoicen{1} \aletter_{r})\appl(b\echoicen{1} c)$. 
The reverse does not hold, consider context $(\aletter_{l}\achoicen{1} \aletter_{r})\appl\contextvar$ and objective $\set{\aletter_{l}\appl c, \aletter_{r}\appl b}$ from above. 
Note that $\aterm\congleq\atermp$ implies $\aterm\speccongleq{\objective} \atermp$ for all $\objective$. 

With $\terminateobj = \analph^*$ as the objective, we can use the specialized contextual equivalence to study the termination behavior of programs. 
We can also introduce a letter $\reachletter$ so that
$\winsof{\aterm}{\reachobj}$ with $\reachobj = \analph^*\appl\reachletter\appl\analph^*$ observes visits to a specific location. 
(Non-specialized) Contextual equivalence is more precise and takes into account all objectives. 
Both notions are also motivated by verification, where contextual equivalence gives information about which information can be abstracted away from an urgency
term without an influence on the objective, similar to how bisimilarity preserves $\text{CTL}^*$ properties~\cite{HM85}.

%
%

The motivation for an infinite set of non-terminals and infinitary terms is to model parameterized functions in a simple yet general way. 
The idea is to introduce a non-terminal for each instantiation of the function's formal parameters by actual values, inspired by value passing in process algebra~\cite{BPS2001}. 
Another more technical argument is that our normal form relies on infinitary syntax.  
%
%
\section{Full Abstraction}\label{Section:FullAbstraction}
We define a precongruence $\axleq\ \subseteq\terms\times\terms$ on program terms that neither quantifies over contexts nor objectives but relates terms solely based on their syntactic structure. 
The relation is defined through a set of axioms that should be understood as explaining the interplay between the operators in our programming model. 
The main finding is that this axiomatic precongruence coincides with the contextual preorder, and we say that we axiomatize (in a sound and complete~way) the contextual preorder. 
This is our main theorem.
\begin{theorem}[Full Abstraction 1]\label{Theorem:FullAbstractionContextual}
$\aterm\axleq\atermp$ if and only if $\aterm\congleq \atermp$. 
\end{theorem}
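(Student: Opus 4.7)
The plan is to prove the two directions separately, following the conventional strategy for full abstraction results. For the soundness direction ($\aterm\axleq\atermp$ implies $\aterm\congleq\atermp$), I would first observe that $\congleq$ is itself a precongruence by construction, since it is defined by quantification over all contexts. It therefore suffices to verify that every axiom defining $\axleq$ is sound under $\congleq$, i.e., for each axiom $\aterm\axleq\atermp$ and every context $\acontext{\contextvar}$ and objective $\objective\subseteq\analph^{*}$, $\winsof{\acontext{\aterm}}{\objective}$ entails $\winsof{\acontext{\atermp}}{\objective}$. To make this manageable, I would establish the preliminary context-restriction lemma foreshadowed by the authors, showing that one only needs to consider contexts of a canonical shape (for instance, contexts in which the hole has been promoted to the leading position by rewriting the surrounding actions, so that the axiom is applied at the root of the game arena). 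With such a reduction, verifying each axiom becomes a local game-theoretic argument about how Eve transfers her strategy from one side of the axiom to the other.

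For the completeness direction ($\aterm\congleq\atermp$ implies $\aterm\axleq\atermp$), I would proceed by contrapositive: assuming $\aterm\not\axleq\atermp$, I construct a context $\acontext{\contextvar}$ and objective $\objective$ witnessing $\aterm\not\congleq\atermp$. The first step is to invoke the normal-form construction mentioned in the introduction, rewriting both terms into a form where recursion is eliminated and choices are arranged in decreasing order of urgency. Concretely, imperfect information is distributed over perfect information, as in the introductory example, until each normal form is a finite alternating tree of choices whose leaves are word terms $\aword\in\wordterms$. The axiomatic relation should be tractable on such trees, reducing to a structural comparison that can be checked by induction on the urgency level; in particular, $\aterm\not\axleq\atermp$ must manifest as a concrete branching discrepancy somewhere in the tree.

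The remaining work is to turn this discrepancy into a distinguishing context. The plan is to design a \emph{characteristic context} that plays the opposite role of the discrepancy: at the urgency levels above the discrepancy, the context mirrors the choice structure so that Eve and Adam are forced to replay their normal-form choices; at the discrepant level the context branches into distinct terminal witness words, and $\objective$ is the set of those witnesses aligning with the winning normal-form plays on the $\aterm$ side but not on the $\atermp$ side. The main obstacle I expect is making this characteristic context work uniformly across all possible shapes of discrepancy, while ensuring that the urgency scheduling inside the context dovetails exactly with the urgency structure of the normal forms. Otherwise, an urgency mismatch might let Eve peek at or hide a choice she could not have seen or hidden in the original normal-form game, creating an unintended strategic interaction that breaks the reduction from contextual distinguishability to structural difference of normal forms.
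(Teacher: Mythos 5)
Your skeleton matches the paper's own proof: soundness via a context-restriction lemma (the paper's \Cref{Lemma:ContextLemma}, which restricts attention to contexts for which at least one of the two terms is \emph{immediate}) followed by per-axiom game-theoretic checks, and completeness via normalization into $\gnfn{\maxurg}$ plus characteristic contexts. Two substantive pieces are missing, though. On soundness, a minor one: \labelcref{axiom:lattice-mono} and \labelcref{axiom:lfp} are inference rules with premises, not ground axioms, so "verify each axiom locally" does not suffice; the paper proves them by simultaneous induction on the proof structure, and the fixed-point rule in particular needs a strengthened statement (quantified over all contexts) proved by transfinite induction on the ordinal bounding Eve's winning time. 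On completeness, the central gap: you assert that on normal forms the axiomatic relation "reduces to a structural comparison" so that $\aterm\not\axleq\atermp$ manifests as a branching discrepancy. This is exactly the step the paper identifies as hard and solves by introducing the \emph{domination preorder} $\discleq$, with \Cref{Lemma:DominationPreorder} carrying two separate proof obligations: $\aterm\speccongleq{\objective}\atermp$ implies $\aterm\discleq\atermp$ (this is where the characteristic contexts live), and $\aterm\discleq\atermp$ implies $\aterm\specaxleq{\objective}\atermp$. Without the second implication your contrapositive never gets started: from $\aterm\not\axleq\atermp$ you must first conclude $\aterm\not\discleq\atermp$ before any distinguishing context can be extracted, and nothing in your sketch supplies this.

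The obstacle you flag at the end is real and remains unresolved in your plan: the inductive characteristic-context construction dualizes the term and shifts the urgency up by one ($\atermpppp_{\aterm}$ for an Eve choice of urgency $\anurg$ is an Adam choice of urgency $\anurg+1$), which is unavailable at the top level since $\maxurg$ is maximal. The paper handles $\gnfn{\maxurg}$ by a separate argument: it picks a witness $\atermpp\in\atermset$ not dominated by any member of the other term's choice set and shows directly that Eve wins from $\aterm\appl\atermpppp_{\atermpp}$ while Adam wins from $\atermp\appl\atermpppp_{\atermpp}$. Relatedly, the base-case context $\bigAchoiceOf{1}\setcond{\awordppp}{\aterm\appl\awordppp\in\objective}$ only exists for non-minimal terms, so minimal (hopeless) terms need the special treatment the paper gives them throughout the induction. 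Finally, a genuine route difference: you design a bespoke objective per discrepancy, whereas the paper fixes the single objective $\setcond{\aword\appl\aword^{\mathit{reverse}}}{\aword\in\analph^*}$, proves it right-separating and domain-shattering (\Cref{Lemma:DomainShattering}), and derives Theorem~\ref{Theorem:FullAbstractionContextual} from the specialized completeness via the chain $\congleq\ \subseteq\ \speccongleq{\objective}\ \subseteq\ \specaxleq{\objective}\ =\ \axleq\ \subseteq\ \congleq$. Your per-discrepancy route is viable in principle but would force you to re-derive, for every pair of distinct leaf words, that right-concatenation alone separates them --- precisely the content the paper proves once for its fixed objective.
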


We have a corresponding result for the $\objective$-specialized contextual preorder.
In this setting, a complete axiomatization is considerably more difficult to obtain because, intuitively, we have to understand the concatenation behavior of language~$\objective$. 
Our solution is partial in that we impose a side condition on the objective to obtain completeness: it should be right-separating, a notion we will define in a moment.
Luckily, the objectives of interest $\terminateobj$ and $\reachobj$
are right-separating. 
It is always sound to reason with the $\objective$-specialized axiomatic precongruence.
\begin{theorem}[Full Abstraction 2]\label{Theorem:FullAbstractionSpecialized}
$\aterm\axleq_{\objective}\atermp$ implies $\aterm\speccongleq{\objective}\atermp$.
If $\objective$ is right-separating, then also $\aterm\speccongleq{\objective} \atermp$ implies $\aterm\axleq_{\objective}\atermp$.
\end{theorem}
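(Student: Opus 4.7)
My plan is to follow the same two-pronged template as for \Cref{Theorem:FullAbstractionContextual}, threading the fixed observable $\objective$ through each step. For soundness, I would verify that each axiom generating $\axleq_{\objective}$ is preserved by $\speccongleq{\objective}$. The axioms shared with $\axleq$ inherit soundness from \Cref{Theorem:FullAbstractionContextual}: the unrelativised statement $\winsof{\acontext{\aterm}}{\objectivep} \Leftrightarrow \winsof{\acontext{\atermp}}{\objectivep}$ for all $\objectivep$ and $\acontext{\contextvar}$ specialises to the claim for $\objectivep = \objective$. Any additional $\objective$-specific axioms should amount to replacements of passive subterms by ones indistinguishable under concatenation into $\objective$, and these I would handle by the same context-restricting lemma that supports Theorem~1, now applied branchwise to the passive leaves of a play.

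For completeness I argue by contraposition. Assuming $\aterm \not\axleq_{\objective} \atermp$, I would first invoke the normal-form construction from \Cref{Section:FullAbstraction} to obtain recursion-free, urgency-stratified normal forms of both terms. The axiomatic failure then becomes visible as a concrete structural mismatch among finitely many passive word terms at the leaves of the two forms, and the task reduces to building a characteristic context $\acontext{\contextvar}$ with $\winsof{\acontext{\aterm}}{\objective}$ and $\notwinsof{\acontext{\atermp}}{\objective}$. The characteristic context of Theorem~1 already witnesses a mismatch for \emph{some} observable; the challenge is to reroute the witness onto the fixed $\objective$. This is where the right-separating hypothesis enters: it should guarantee that for every pair of distinct passive word terms $\aword \not\sigeq \awordp$ that the context must tell apart, there is a uniform continuation placing exactly one of $\aword, \awordp$ into $\objective$. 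I would splice these continuations into the existing characteristic context so that the $\objective$-value of each maximal play faithfully records the axiomatic distinction.

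The main obstacle is interference between urgency levels. Under an unspecialised observable a characteristic context can probe each leaf in isolation because the observable can be tailor-made; under a fixed $\objective$ every branch of the resulting play is judged by the same language, so the continuations we append to distinguish one pair of leaves must not accidentally equate another, nor give a high-urgency angelic choice an unintended escape or a low-urgency demonic choice an unintended trap. I expect to address this by stratifying the appended continuations by urgency layer, exploiting the normal form to localise interactions, and leaning on right-separating as a \emph{uniform} separability property that remains compatible when several separators are combined in sequence. Verifying that the assembled context indeed achieves $\winsof{\acontext{\aterm}}{\objective} \wedge \notwinsof{\acontext{\atermp}}{\objective}$ then reduces to tracking winning regions through the stratification, and this bookkeeping is where I anticipate the bulk of the technical difficulty to reside.
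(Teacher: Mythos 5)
Your soundness half coincides with the paper's route (\Cref{Proposition:Soundness}): the shared axioms are proved sound once, for all objectives, via the context-restricting \Cref{Lemma:ContextLemma}, and the $\objective$-specific \labelcref{axiom:spec} is handled by observing that the lemma forces the context around a word term to itself be a word $\awordpp\appl\contextvar\appl\awordppp$, at which point $\aword\lrleq{\objective}\awordp$ gives the winning implication outright — your ``branchwise at passive leaves'' phrasing is a roundabout description of exactly this. One caution on logical order: you may import only the soundness direction of \Cref{Theorem:FullAbstractionContextual}; in the paper its completeness direction is itself \emph{derived} from the specialized completeness you are proving here (via a domain-shattering, right-separating objective, \Cref{Lemma:DomainShattering}), so your later appeal to ``the characteristic context of Theorem~1'' is circular — no such independent construction exists to reroute.

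The genuine gaps are in the completeness half, and they sit precisely where you locate ``the bulk of the technical difficulty'' without resolving it. The paper does not chase the axiomatic failure directly; it interposes the domination preorder $\discleq$ and proves $\speccongleq{\objective}\ \subseteq\ \discleq\ \subseteq\ \specaxleq{\objective}$ (\Cref{Lemma:DominationPreorder}), which sidesteps having to understand exactly when $\specaxleq{\objective}$ fails. The characteristic contexts are then of the restricted shape $\contextvar\appl\atermpppp_{\aterm}$, appending on the right only — this is where right-separation enters, at the base case, where $\atermpppp_{\aterm}$ is a \emph{demonic} choice over all continuations $\awordppp$ with $\aterm\appl\awordppp\in\objective$. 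The mechanism your ``stratified continuations'' is missing is the urgency increment: to probe an Eve choice $\bigEchoiceOf{\anurg}\atermset$, the appended term is an Adam choice at urgency $\anurg+1$ over the characteristic terms of the non-minimal members of $\atermset$, so the probe is resolved \emph{before} the probed choice, realizing the $\forall\exists$ alternation of $\discleq$; without inverting the resolution order, no appended continuation can select which member of a choice to test. This increment is unavailable at the top urgency $\maxurg$, and the paper needs a separate argument there (pick a witness $\atermpp\in\atermset$ with $\atermpp\not\discleq\atermppp$ for every member $\atermppp$ of the other term, and compare $\aterm\appl\atermpppp_{\atermpp}$ against $\atermp\appl\atermpppp_{\atermpp}$ directly); ``stratification'' cannot rescue this case since no higher urgency exists. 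Finally, you omit the minimality bookkeeping (terms $\specaxeq{\objective}$-equivalent to $\terr$), which is needed to keep the choice sets in the characteristic terms non-empty and hence the construction well-defined.
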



We understand the concatenation behavior of an objective with the help of the syntactic precongruence over the monoid $\analph^*\cup\set{\terr}$.
It may relate terminal words to $\terr$ in case they cannot be extended to a word from the objective. 
\begin{definition}
	The \emph{syntactic precongruence} induced by  $\objective$ on~$\analph^*\cup\set{\terr}$ is defined by $\aword\lrleq{\objective} \awordp$, if for all $\awordpp, \awordppp\in\analph^{*}$we have $\awordpp\appl\aword\appl\awordppp\in\objective$ implies $\awordpp\appl\awordp\appl\awordppp\in\objective$. 
\end{definition}
An objective is then right-separating, if the concatenation from left in the above definition is not needed to distinguish words. 
We define $\rleq{\objective}$ on $\analph^*\cup\set{\terr}$ by $\aword\rleq{\objective} \awordp$, if for all $\awordppp\in\analph^{*}$ we have $\aword\appl\awordppp\in\objective$ implies $\awordp\appl\awordppp\in\objective$. 
\begin{definition}[and Lemma]
Objective $\objective$ is \emph{right-separating}, if $\lrleq{\objective}\ =\ \rleq{\objective}$. 
The objectives $\terminateobj$ and $\reachobj$ are right-separating.
\end{definition} 

\begin{example}
	The syntactic congruence 
	${\lreq{\objective}} = {\lrleq{\objective} \cap \lrgeq{\objective}}$ 
	induced by the objective $\objective=\set{\aletter_{l}.c, \aletter_{r}.b}$ 
	has classes  $\factorize{\analph^*\cup\set{\terr}}{\lreq{\objective}}=\set{\classof{\tskip}, \classof{\aletter_{l}}, \classof{\aletter_{r}}, \classof{b}, \classof{c}, \classof{\aletter_{l}\appl c, \aletter_{r}\appl b}}$ plus a class for the remaining words.
\end{example}

Intuitively, right-separating objectives allow us to evaluate the $\objective$-specialized contextual preorder by using contexts $\contextvar\appl\atermpp$ that only append to the right. 
For arbitrary objectives, we have to consider contexts $\atermppp\appl\contextvar\appl\atermpp$ and it is difficult to understand the interplay between high urgencies in $\atermpp$ and low urgencies in $\atermppp$.

We now give the two axiomatizations and explain them on an intuitive level. 
Recall that a precongruence is a reflexive and transitive relation that is preserved when inserting related terms into the same context. 

\begin{definition}
The \emph{axiomatic precongruence} $\axleq\ \subseteq\terms\times\terms$
is the least precongruence satisfying the axioms in \Cref{fig:axioms} except \labelcref{axiom:spec}. 
The \emph{$\objective$-specialized axiomatic precongruence} $\specaxleq{\objective}$ on terms is the least precongruence satisfying all axioms in \Cref{fig:axioms}. 
We use $\axeq$ for $\axleq\cap\axgeq$ and $\specaxeq{\objective}$ for $\specaxleq{\objective}\cap\specaxgeq{\objective}$.
\end{definition}

\begin{figure}
	\begin{subfigure}[t]{0.58\columnwidth}
		\begin{subfigure}[t]{\columnwidth}
			\caption{Lattice}\label{axioms:lattice}
			\begin{axdefenv}{\rlatmntname}
				\axdefleft{\rlatmntname}{\forall i\in I.\;\aterm_{i}\axleq \atermp_{i}}
				{\bigPchoiceOf{\anurg}{}{\set{\aterm_{i}\mid i\in I}}\axleq \bigPchoiceOf{\anurg}{}{\setcond{\atermp_{i}}{i\in I}}}%
				\label{axiom:lattice-mono}
			\end{axdefenv}
\vspace{0.8em}
			{
			\begin{axdefenv}{\rlatdistname}
				\axdefleft{\rlatdistname}
				{\vphantom{\urgencyof{\aterm}}}
				{\bigEchoiceOf{\anurg}_{i\in I}{\bigAchoiceOf{\anurg}{}{\atermset_{i}}}
				\axeq\bigAchoiceOf{\anurg}_{f: I\to \atermset_{I}}{\bigEchoiceOf{\anurg}{}{\set{f (i)\mid i\in I}}}}%
				\label{axiom:lattice-dist}
			\end{axdefenv}}
\vspace{0.8em}
			{
			\begin{axdefenv}{\rlatabsname}
				{\axdefleft{\rlatabsname}{\urgencyof{\aterm}\leq\anurg}
				{\aterm\achoicen{\anurg} (\aterm\echoicen{\anurg}\atermp)\axeq\aterm
				\qquad
				\aterm\echoicen{\anurg} (\aterm\achoicen{\anurg}\atermp)\axeq\aterm}}%
				\label{axiom:lattice-abs}
			\end{axdefenv}}
\vspace{0.8em}			
			{
			\begin{axdefenv}{\rlatassocname}
				\axdefleft{\rlatassocname}{}
				{\bigPchoiceOf{\anurg}_{i\in I}{\bigPchoiceOf{\anurg}{}{\atermset_{i}}}
				\axeq\bigPchoiceOf{\anurg}{}{\bigcup_{i\in I}\atermset_{i}}}%
				\label{axiom:lattice-assoc}
			\end{axdefenv}}
			\hspace{-1.5em}
			{
			\begin{axdefenv}{\rlatordname}
				\axdefleft{\rlatordname}{\urgencyof{\aterm}\leq\anurg}{\aterm\axleq\aterm\echoicen{\anurg}\atermp}%
				\label{axiom:lattice-ord}
			\end{axdefenv}}
		\end{subfigure}%
	\end{subfigure}%
	\hfill%
	\begin{subfigure}[t]{0.42\columnwidth}
		\begin{subfigure}[t]{\columnwidth}
			\caption{Distributivity}\label{axioms:distributivity}
			{
			\begin{axdefenv}{\rdistlname}
				{\axdefleft{\rdistlname}{\urgencyof{\aterm}<\anurg}
				{\aterm\appl(\bigPchoiceOf{\anurg}{}{\atermsetp})\axeq\bigPchoiceOf{\anurg}{}{\setcond{\aterm\appl\atermp}{\atermp\in\atermsetp}}}}%
				\label{axiom:dist-left}
			\end{axdefenv}}
			\vspace{0.5em}			
			{
			\begin{axdefenv}{\rdistrname}
				\axdefleft{\rdistrname}
				{\urgencyof{\aterm}\leq\anurg}
				{(\bigPchoiceOf{\anurg}{}{\atermsetp})\appl\aterm
				\axeq\bigPchoiceOf{\anurg}{}{\setcond{\atermp\appl\aterm}{\atermp\in\atermsetp}}}%
				\label{axiom:dist-right}
			\end{axdefenv}}
		\end{subfigure}%
		
		\begin{subfigure}[t]{\columnwidth}
			\caption{Normalization}\label{axioms:normalization}
			\begin{axdefenv}{\rnormalformname}
				\axdefleft{\rnormalformname}
				{\anurgp<\anurg}
				{\bigEchoiceOf{\anurgp}{}{ \bigPchoiceOf{\anurg}{}{\atermsetp}}
				\axeq\bigEchoiceOf{\anurgp}{}{\bigPchoiceOf{\anurgp}{}{\atermsetp}}}%
				\label{axiom:norm}%
			\end{axdefenv}
		\end{subfigure}%

		\begin{subfigure}[t]{0.4\columnwidth}
			\caption{Err}
			\begin{axdefenv}{\rordbotname}
				{\axdefleft{\rordbotname}{}
				{\terr\axleq\aterm}}%
				\label{axiom:least}
			\end{axdefenv}
		\end{subfigure}%
		\hfill%
		\begin{subfigure}[t]{0.6\columnwidth}
			\caption{Monoid}
			\begin{axdefenv}{\rmonoidname}
				\axdefleft{\rmonoidname}{{\aword} \sigeq {\awordp}}
				{\aword\axeq\awordp}%
				\label{axiom:monoid}%
			\end{axdefenv}
		\end{subfigure}
	\end{subfigure}

	\begin{subfigure}[t]{0.68\columnwidth}
		\caption{Fixed point}%
		\label{axioms:lfp}%
		{
		\begin{axdefenv}{\rappleqname}
			\axdefleft{\rappleqname}{}{\nonterminal\axeq\eqmapof{\nonterminal}}%
			\label{axiom:eqmap}%
		\end{axdefenv}}
		{
		\begin{axdefenv}{\rfpname}
			\axdefleft{\rfpname}
			{\forall \nonterminal \in \nonterminals .\; \eqmapof{\nonterminal}\replace{\nonterminals}{\atermn{\nonterminals}}\axleq \atermn{\nonterminal}}
			{\nonterminalp\axleq\atermn{\nonterminalp}}%
			\label{axiom:lfp}%
		\end{axdefenv}}
	\end{subfigure}%
	\hfill%
	\begin{subfigure}[t]{0.25\columnwidth}
		\caption{Specialization}%
		\label{axioms:semigroup}
		{
		\begin{axdefenv}{\rtequivname}
			\axdefleft{\rtequivname}
			{\aword\lrleq{\objective}\awordp}
			{\aword\specaxleq{\objective}\awordp}%
			\label{axiom:spec}%
		\end{axdefenv}}
	\end{subfigure}%
	\caption{\label{fig:axioms}Axioms defining $\axleq$ and $\specaxleq{\objective}$.}
\end{figure}

With the axioms given in \Cref{axioms:lattice}, the choice operators span a 
completely distributive lattice on each urgency.
The monotonicity \cref{axiom:lattice-mono} is not covered by the precongruence but implements an infinite replacement. 
The axiom has a side condition that can be found \Cref{Appendix:Infinitary}. 
Due to this axiom, nodes in our proof trees may have an infinite degree. 
Yet, every path is guaranteed to be finite. 
To see the premise in axiom~\labelcref{axiom:lattice-ord}, 
consider $\aterm = (\aletter_{l}\achoicen{2} \aletter_{r})$ and context $\bullet\appl(b\echoicen{2} c)$. 
Then in $\aterm.(b\echoicen{2} c)$ Eve wins while in $(\aterm\echoicen{1}\aterm).(b\echoicen{2} c)$ she loses, similar to \Cref{fig:opsemantics}. 
%
%
%
%
For \labelcref{axiom:lattice-abs}, the reasoning is similar. 
As a consequence of the lattice axioms,
one can derive the dual rules of \labelcref{axiom:lattice-ord,axiom:lattice-dist}.
Distributivity \labelcref{axiom:lattice-dist} states that 
the order of choices can be changed by considering all
choice functions $f:I\to\bigcup_{i\in I}\atermset_{i}$ with $f(i)\in \atermset_{i}$
for all $i\in I$, denoted by $f:I\to \atermset_{I}$. 

The distributivity in \labelcref{axiom:dist-left} captures the essence of imperfect information: concatenation to the left distributes over choice, provided the internals of the term are invisible as the choice has a higher urgency.
The distributivity from the right in~\labelcref{axiom:dist-right} is similar but takes into account that the leading subterm for equal urgencies is leftmost.
This clean interplay between imperfect information and choice came as a surprise and we consider these laws an important contribution.

A string with $\terr\not\in\analph$ is the most disadvantageous term for Eve,
because it belongs to no objective $\objective\subseteq\analph^{*}$.
%

The monoid \cref{axiom:monoid} 
refers to word terms $\aword,\awordp \in \wordterms$. 
We interpret them in the monoid~$\analph^*\cup\set{\terr}$ and inherit the equality there, denoted by $\sigeq$ above. 
The equality strips brackets and $\tskip$, and interprets $\terr$ as zero.
%

The normalization \cref{axiom:norm} reflects the fact that only the outermost choice operator determines the urgency of a term. 
Towards soundess, note that once the outer choice with urgency~$\anurg$ is resolved, 
we are sure that the context to the left has urgency strictly smaller than $\anurg$ and the context to the right has urgency at most $\anurg$.  
Hence, the inner choice is the next to be resolved, independent of whether its
urgency is $\anurg$ or $\anurgp \geq \anurg$. 
%

The fixed-point \cref{axiom:eqmap} allows us to rewrite non-terminals to their defining terms. 
The \cref{axiom:lfp} allows us to rewrite non-terminals to a prefixed point,
using Knaster and Tarski's characterization of least fixed points~\cite{BirkhoffLatticeTheory}.
Here, we let $\aterm_{\nonterminals}$ denote
a vector of terms with one entry $\aterm_{\nonterminal}$ 
per non-terminal $\nonterminal\in\nonterminals$, 
and use $\{\nonterminals/\aterm_{\nonterminals}\}$ 
for the substitution of all non-terminals by these terms. 

Recall that \Cref{axiom:spec} only plays a role in the definition of the specialized axiomatic precongruence. 
The axiom depends on the objective $\objective$ of interest, meaning it actually is a family of axioms. 
The axiom relates word terms $\aword, \awordp \in \wordterms$ as prescribed by the syntactic precongruence. 

\begin{example} 
We show that simulation implies inclusion in \Cref{Example:SimulationInclusion}.
Remember that Eve tries to refute the relation.
We prove axiomatically
\[
	b\echoicen{2} c 
	\;\axleqper{\labelcref{axiom:lattice-ord}}\; (b\echoicen{1} c)\echoicen{2}(b\echoicen{1} c) 
	\;=\; \bigEchoiceOf{2}{}{(b\echoicen{1} c)}
	\;\axeqper{\labelcref{axiom:lattice-abs},\;\labelcref{axiom:lattice-assoc}}\;
	\bigAchoiceOf{2}{}\bigEchoiceOf{2}{}{(b\echoicen{1} c)}
	\;\axeqper{\labelcref{axiom:lattice-abs}}\;
	b\echoicen{1} c\ .
\]
By~\labelcref{axiom:lattice-ord}, we have $b\axleq b\echoicen{1} c$ and $c\axleq b\echoicen{1} c$, which we can apply to subterms by congruence. 
The equality holds because choices range over sets. 
We apply \labelcref{axiom:lattice-abs} with $\aterm=\atermp=\mathord{\echoicen{2}}(b\echoicen{1}c)$, apply \labelcref{axiom:lattice-assoc} to flatten the choices, apply \labelcref{axiom:lattice-abs} once more to remove the choices, and finally apply congruence. 
This yields $\termincl \axleq \termsim$.
With soundess of the axiomatization (\Cref{Section:Soundness}), we obtain the desired $\termincl \congleq \termsim$.
With a normal form result (\Cref{Section:Normalization}), we also show completeness (\Cref{Section:Completeness}).
\end{example}

\section{Soundness}\label{Section:Soundness}
\begin{proposition}[Soundness]\label{Proposition:Soundness}
$\aterm\axleq\atermp$ implies $\aterm\congleq \atermp$, $\aterm\axleq_{\objective}\atermp$ implies $\aterm\speccongleq{\objective}\atermp$.
\end{proposition}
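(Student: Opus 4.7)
The plan is to show that the contextual preorder $\congleq$ (resp.\ $\speccongleq{\objective}$) is itself a precongruence satisfying every axiom of \Cref{fig:axioms} that defines $\axleq$ (resp.\ $\specaxleq{\objective}$). Since $\axleq$ and $\specaxleq{\objective}$ are by definition the \emph{least} such precongruences, the inclusions $\axleq\,\subseteq\,\congleq$ and $\specaxleq{\objective}\,\subseteq\,\speccongleq{\objective}$ follow. Reflexivity and transitivity of $\congleq$ are immediate. Compatibility with contexts follows from the fact that contexts compose: given $\aterm\congleq\atermp$ and an arbitrary context $\acontext{\contextvar}$, any outer context $\acontextp{\contextvar}$ yields $\acontextp{\acontext{\aterm}} = (\acontextp\circ\acontext{})(\aterm)$, so $\winsof{\acontextp{\acontext{\aterm}}}{\objective}$ implies $\winsof{\acontextp{\acontext{\atermp}}}{\objective}$ by definition of $\congleq$. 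The same argument works for $\speccongleq{\objective}$.

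The bulk of the work is verifying each axiom by reasoning directly on the game arena. For the lattice axioms, monotonicity and associativity are straightforward strategy transfers; the distributivity in \labelcref{axiom:lattice-dist} is proven by a bijection between strategies, sending Eve's choice of $i$ followed by Adam's choice in $\atermset_i$ to Adam's choice of $f$ followed by Eve's choice of $i$ such that $f(i)$ matches. The absorption \labelcref{axiom:lattice-abs} and order \labelcref{axiom:lattice-ord} rely on the urgency side condition to guarantee the leading subterm does not shift. The distributivities \labelcref{axiom:dist-left} and \labelcref{axiom:dist-right} are the crux of imperfect information: the side condition forces the outer choice to be leading, so the two sides resolve the choice before anything else and produce the same subgames. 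Axiom \labelcref{axiom:norm} is sound because, once the outer Eve choice of urgency $\anurgp$ is resolved, the inner choice is leading regardless of its labelled urgency. For \labelcref{axiom:least} any visit to $\terr$ is a loss for Eve. For \labelcref{axiom:monoid} monoid-equal word terms yield plays whose outcomes agree on membership in any objective. For \labelcref{axiom:eqmap} the operational rule $\nonterminal\gamemove\eqmapof{\nonterminal}$ makes $\semof{\nonterminal}$ and $\semof{\eqmapof{\nonterminal}}$ immediately bisimilar. For the specialization axiom \labelcref{axiom:spec}, any terminating play through $\acontext{\aword}$ produces a word of the form $\awordpp\aword\awordppp$ for some prefix and suffix read off the play, and $\aword\lrleq{\objective}\awordp$ then transfers membership in $\objective$ to the corresponding play through $\acontext{\awordp}$.

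The main obstacle is the fixed-point rule \labelcref{axiom:lfp} in the presence of infinitary syntax and the infinitely-branching \labelcref{axiom:lattice-mono}. The intended argument is a finitary approximation: let $\nonterminalp^{(k)}$ denote the term obtained by unfolding $\eqmap$ to depth $k$ and replacing non-terminals at depth $k$ with $\terr$. Any winning strategy for Eve in $\acontext{\nonterminalp}$ for a reachability objective produces only finite winning plays (positionality suffices by~\cite{Martin75}), and because each play is finite it refers to only finitely many unfoldings, so Eve also wins $\acontext{\nonterminalp^{(k)}}$ for some $k$. An induction on $k$ then uses the premise $\eqmapof{\nonterminal}\{\nonterminals/\aterm_{\nonterminals}\}\congleq\aterm_{\nonterminal}$ and the already-established compatibility of $\congleq$ with contexts to transfer the strategy to $\acontext{\aterm_{\nonterminalp}}$; the base case uses $\terr\congleq\aterm_\nonterminal$ from \labelcref{axiom:least}. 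Correctness of the approximation step and of the infinite substitution in \labelcref{axiom:lattice-mono} rely on the soundness lemmas for infinitary syntax located in \Cref{Appendix:Infinitary,Appendix:TreeBounds}, which we invoke as black boxes.
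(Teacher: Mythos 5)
Your overall skeleton matches the paper's: prove that the contextual preorder is a precongruence closed under every rule in \Cref{fig:axioms}, so that the \emph{least} such precongruence is contained in it. But there is a genuine gap in the per-axiom verifications. Every one of your game-arena arguments presupposes that the inserted axiom term is the first thing the game resolves (``the two sides resolve the choice before anything else''), whereas an axiom instance must be validated in \emph{every} context --- including contexts where the hole sits inside a choice, or next to an outermost action of higher urgency (or equal urgency further to the left), so that the inserted term is \emph{paused} and the first moves rewrite the context rather than the inserted term. For example, in $(\aletter_l\achoicen{2}\aletter_r)\appl\contextvar$ an inserted urgency-$1$ choice is not touched first. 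The paper closes exactly this hole with a dedicated reduction, \Cref{Lemma:ContextLemma}: it suffices to treat contexts for which one of the two terms is immediate. That lemma is itself non-routine --- it copies Eve's strategy across paused positions via \Cref{Lemma:ContextLeads}, by transfinite induction on an \emph{ordinal} depth bound of her strategy tree (\Cref{Appendix:TreeBounds}), transfinite precisely because choices, and hence strategy trees, are infinitely branching. Your proposal never states or proves such a reduction, so none of the axiom verifications is complete as written.

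The second gap is fatal as stated: the finitary approximation you propose for \labelcref{axiom:lfp} fails in the infinitary setting. From ``every play conform to the strategy is finite'' you infer that Eve wins $\acontext{\nonterminalp^{(k)}}$ for some finite $k$; this is a K\H{o}nig-style step that is false for infinitely branching strategy trees, where each single play uses finitely many unfoldings but there is no uniform bound across plays. Concretely, take $\maxurg=1$, $\eqmapof{\nonterminal}=\tskip\echoicen{1}\aletter\appl\nonterminal$, context $\acontext{\contextvar}=(\bigAchoiceOf{1}\setcond{\aletterp^{\,n}}{n\in\nat})\appl\contextvar$, and objective $\objective=\setcond{\aletterp^{\,n}\appl\aletter^{\,n}}{n\in\nat}$: Eve wins $\acontext{\nonterminal}$ (after Adam reveals $n$, she unfolds $n$ times), yet for every finite $k$ Adam defeats $\acontext{\nonterminal^{(k)}}$ by picking $n>k$, since $\nonterminal^{(k)}$ bottoms out in $\terr$. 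This is exactly why the paper avoids finite unfolding: its proof of \labelcref{axiom:lfp} strengthens the statement to ``if Eve wins $\objective$ from $\acontext{\aterm}$ in $\beta$ moves, then she wins from $\acontext{\aterm\replace{\nonterminals}{\atermn{\nonterminals}}}$'' and performs transfinite induction on the ordinal $\beta$ from \Cref{Appendix:TreeBounds}, interleaved with an outer induction on (possibly infinitely branching) proof trees that simultaneously handles \labelcref{axiom:lattice-mono}. The appendices you invoke as black boxes supply precisely these ordinal depth assignments and nothing finitary, so they cannot repair the finite-$k$ claim; the correct $\omega$-stage approximant is already an infinite choice over all finite unfoldings, as in the transfinite Kleene iteration of \Cref{Appendix:NTNormalization}.
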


This section is devoted to proving~\Cref{Proposition:Soundness}. 
Proving soundness is difficult because it requires us to reason over all contexts.
If $\aterm\specaxleq{\objective}\atermp$ is an axiom, then we need to show that
$\winsof{\acontext{\aterm}}{\objective}$ 
implies
$\winsof{\acontext{\atermp}}{\objective}$
for all $\acontext{\contextvar}\in\contexts$.

We first develop a proof technique for soundness that allows us to reduce the set of contexts we have to consider, and in a second step prove the axioms sound.
To define the contexts that have to be considered, we introduce some terminology. 
We say that term $\aterm$ is \emph{immediate} for context $\acontext{\contextvar}\in\contexts$, 
if $\acontext{\aterm}\in\wordterms$ or $\acontext{\aterm}$ 
is active and $\aterm$ contains the leading subterm 
in $\acontext{\aterm}$, denoted by~$\acontext{\makeleading{\aterm}}$.
If this is not the case, we call $\aterm$ \emph{paused} for~$\acontext{\contextvar}$. 
As the names suggest, immediate terms get rewritten in the next move while paused terms do not.
For example, term $\bigEchoiceOf{2}\atermset$ is immediate 
for $\contextvar\appl\bigAchoiceOf{1}\atermsetp$ 
but paused for $\contextvar\appl\bigAchoiceOf{3}\atermsetp$
and $\bigEchoiceOf{2}\set{\contextvar, \atermp}$. 
In the last context,~$\contextvar$ is enclosed by 
a choice. No term is immediate for such a context.

\begin{lemma}[Proof Technique]\label{Lemma:ContextLemma}
If  $\winsof{\acontext{\aterm}}{\objective}$ implies $\winsof{\acontext{\atermp}}{\objective}$ 
for all contexts $\acontext{\contextvar}\in\contexts$ where at least one of $\aterm$ or $\atermp$ is immediate, then 
	$\aterm\speccongleq{\objective}\atermp$.
\end{lemma}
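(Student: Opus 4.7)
The plan is to prove $\aterm \speccongleq{\objective} \atermp$ by taking an arbitrary context $\acontext{\contextvar}$ and transferring a winning strategy from the $\aterm$-side to the $\atermp$-side. I will fix a positional winning strategy $\strat$ for Eve on $\acontext{\aterm}$ and build a winning strategy $\strat'$ on $\acontext{\atermp}$ by a mirroring argument: while the plays on the two sides remain in the \emph{context part}, that is, neither $\aterm$ nor $\atermp$ is immediate for the currently evolved enclosing context, I force them to take parallel moves; as soon as one of $\aterm$ or $\atermp$ becomes immediate for the evolved context, I switch regime and follow a winning strategy supplied by the hypothesis.

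More concretely, I will maintain the invariant that every position reached from $\acontext{\atermp}$ via $\strat'$ falls into one of three cases. In case (a), the copy of $\atermp$ that descended from $\contextvar$ has been dropped by an enclosing choice, so the position is syntactically identical to the corresponding mirror position reached from $\acontext{\aterm}$ along parallel moves, and $\strat'$ simply copies $\strat$. In case (b), the position still contains an intact $\atermp$-copy that is paused for the currently evolved enclosing context, the mirror position on the $\aterm$-side contains an intact $\aterm$-copy that is also paused inside the same evolved context, and $\strat'$ once again copies the response prescribed by $\strat$. In case (c), $\strat'$ follows a winning strategy obtained at an earlier moment via an invocation of the hypothesis. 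Transitions happen along single context moves: from (b) to (a) when an enclosing choice is resolved away from $\contextvar$, and from (b) to (c) when the evolved context first causes $\aterm$ or $\atermp$ to become immediate. If the original context $\acontext{\contextvar}$ already makes $\aterm$ or $\atermp$ immediate, I start directly in case (c).

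The central step is the transition to case (c): at the first moment the evolved enclosing context makes one of $\aterm, \atermp$ immediate, the mirror position on the $\aterm$-side has been reached through $\strat$-conform moves, so the restriction of $\strat$ to the subgame from there is still winning for Eve; this yields that the evolved $\aterm$-side position wins $\objective$, and the hypothesis transports winning to the evolved $\atermp$-side position, furnishing the required strategy. The main obstacle will be to make case (b) fully precise: I need to check that when both $\aterm$ and $\atermp$ are paused for the shared evolved context, the leading subterm sits entirely inside that context, so the owner and the available moves agree on the two sides and a single move can be copied verbatim. A subsidiary point is to rule out $\strat'$-conform plays that linger in case (b) forever, which I handle by noting that such a play would project onto a non-terminating $\strat$-conform play on the $\aterm$-side, which $\strat$ rules out since it is winning for the reachability objective $\objective$.
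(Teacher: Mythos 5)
Your proposal is correct, and it captures exactly the idea behind the paper's proof---the paper even motivates its argument with your mirroring intuition, saying Eve can copy her strategy from $\acontext{\aterm}$ to $\acontext{\atermp}$ until the inserted term becomes immediate---but the formalizations differ in a way worth noting. The paper proceeds by transfinite induction on an ordinal $\anordinalp$ bounding the depth of Eve's winning strategy tree from $\acontext{\aterm}$ (the existence of such a bound is established separately in \Cref{Appendix:TreeBounds}): if one of the terms is immediate, the premise applies directly; otherwise both are paused, the successors on the two sides are matched via a common set of contexts, and the induction hypothesis is applied to each successor at a strictly smaller ordinal. You instead build the strategy $\strat'$ explicitly through your three-case invariant and then rule out non-termination by projecting a hypothetical infinite case-(b) play onto an infinite $\strat$-conform play, contradicting that $\strat$ wins a reachability objective; this projection argument is sound and does by hand the work that the paper's ordinal measure does implicitly. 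The step you flag as the main obstacle---that when both $\aterm$ and $\atermp$ are paused for the evolved context, the owner coincides and there is a single set of contexts $\acontextset$ with $\successorsof{\acontext{\aterm}}=\set{\acontextd{\aterm}\mid \acontextd{\contextvar}\in \acontextset}$ and likewise for $\atermp$---is precisely the paper's \Cref{Lemma:ContextLeads}, which it isolates and proves separately, so your argument needs that lemma just as the paper's does. Two minor points to tidy when writing it out: your $\strat'$ is phase-dependent rather than positional, which is harmless because winning with an arbitrary strategy suffices (positional strategies are without loss of generality for reachability, as the paper notes); and your case (c) silently covers end-of-play positions, since $\acontextd{\atermp}$ being a word term makes $\atermp$ immediate for $\acontextd{\contextvar}$ by definition, so the premise also transfers membership in $\objective$ there.
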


\Cref{Lemma:ContextLemma} defines precisely the contexts we need to consider when proving the axioms sound.
Its proof relies on the following observation: 
a paused term does not change the outcome of a move in the context (cf.\ \Cref{Appendix:ProofTechnique}).
\begin{lemma}\label{Lemma:ContextLeads}
Consider terms $\aterm$ and $\atermp$ that are paused for context $\acontext{\contextvar}$.
Then, $\ownof{\acontext{\aterm}}=\ownof{\acontext{\atermp}}$ and 
there is a set of contexts $\acontextset\subseteq\contexts$
so that 
	$\successorsof{\acontext{\aterm}}=\set{\acontextd{\aterm}\mid 
	\acontextd{\contextvar}\in D}$ and 
	$\successorsof{\acontext{\atermp}}=
	\set{\acontextd{\atermp}\mid \acontextd{\contextvar}\in D}$.
\end{lemma}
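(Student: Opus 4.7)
The plan is to argue that the pausedness hypothesis on both $\aterm$ and $\atermp$ forces the leading subterm of $\acontext{\aterm}$ and of $\acontext{\atermp}$ to be the same subterm $\atermpp$ of $\acontext{\contextvar}$, sitting outside the hole $\contextvar$. Once this is established, both claims of the lemma follow directly from the move and ownership rules: ownership is read off the leading subterm, and a move rewrites it in place while leaving the surrounding material, including $\aterm$ respectively $\atermp$, untouched. The cleanest organization is by structural induction on $\acontext{\contextvar}$.

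The key observation is that the outermost subterms of $\acontext{\aterm}$ fall into two disjoint groups: those lying entirely in the frame $\acontext{\contextvar}$ (that is, not containing the hole $\contextvar$), and, only if $\contextvar$ is itself outermost in $\acontext{\contextvar}$, those contributed by $\aterm$. The former group is fixed by $\acontext{\contextvar}$ alone. I would first handle the case where $\contextvar$ is enclosed by a choice in $\acontext{\contextvar}$: then $\aterm$ contributes no outermost subterm of $\acontext{\aterm}$, so the leading subterm must lie in the frame, and the same holds for $\atermp$. Otherwise I would pick $\atermpp$ as the leftmost outermost subterm of $\acontext{\contextvar}$ of maximal urgency among those not containing $\contextvar$, and argue that pausedness of $\aterm$ precludes any outermost action of $\aterm$ from overtaking $\atermpp$ in the leading-subterm race, either by having strictly higher urgency or by tying in urgency while lying strictly to the left of $\atermpp$. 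The same reasoning applied to $\atermp$ yields the common leading subterm $\atermpp$.

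With the common leading in hand, I would decompose $\acontext{\contextvar}$ as a two-hole context, one hole occupied by $\contextvar$ and the other by $\atermpp$, again by induction on the shape of $\acontext{\contextvar}$. A move rewrites $\atermpp$ to any $\atermppp \in \successorsof{\atermpp}$, producing a successor in which $\atermpp$ is replaced by $\atermppp$ while $\aterm$ (respectively $\atermp$) stays in the $\contextvar$-position. Collecting the resulting single-hole contexts yields the desired set $\acontextset \subseteq \contexts$. The ownership claim is then immediate from $\ownof{\acontext{\aterm}} = \ownof{\atermpp} = \ownof{\acontext{\atermp}}$, since the owner is determined by the leading subterm.

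The main obstacle is the case analysis that identifies $\atermpp$ as the common leading subterm: one must rule out that a high-urgency outermost action of $\aterm$ pulls the leading position across $\contextvar$ into $\aterm$, which is delicate when the outermost actions of $\aterm$ can match the urgency of $\atermpp$ and when the frame's candidate sits to the right of $\contextvar$. Here the precise tie-breaking (highest urgency, leftmost) in the definition of the leading subterm, together with pausedness applied symmetrically to both $\aterm$ and $\atermp$, does the work.
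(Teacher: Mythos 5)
Your identification of a common, frame-determined leading subterm is correct and matches the paper's intended case analysis, but your proposal contains a genuine error: you assert that the common leading subterm $\atermpp$ sits \emph{outside} the hole $\contextvar$, and your successor argument (the two-hole decomposition with disjoint positions for $\contextvar$ and $\atermpp$, followed by ``$\aterm$ stays in the $\contextvar$-position'') relies on that disjointness. This fails precisely in the choice-enclosed case. Take the paper's own example $\acontext{\contextvar}=\bigEchoiceOf{2}\set{\contextvar, \atermp}$: every term is paused for this context, and the leading subterm of $\acontext{\aterm}$ is the enclosing choice itself, which \emph{contains} the hole. A move does not rewrite some term disjoint from the inserted one while leaving it in place; it selects an alternative, so the inserted term may be exposed ($\acontext{\aterm}\gamemove\aterm$) or discarded entirely ($\acontext{\aterm}\gamemove\atermp$). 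Your paragraph handling the choice-enclosed case only establishes that the leading subterm lies in the frame; it does not establish that it avoids the hole, because the outermost action whose urgency annotation makes it leading can be exactly the choice in which $\contextvar$ is nested (note this leading action cannot be a non-terminal, so it is always a choice in this situation).

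The lemma nevertheless survives this case, and the repair is short: when the leading subterm is a choice $\bigPchoiceOf{\anurg}\atermset$ enclosing the hole, its urgency and owner are fixed by the annotation alone, independent of the inserted term, so $\ownof{\acontext{\aterm}}=\ownof{\acontext{\atermp}}$ still holds; and the successor set is obtained by substituting each alternative for the choice, so one takes $\acontextset$ to be the resulting contexts --- each alternative contains $\contextvar$ at most once, and the statement's definition of context permits \emph{zero} occurrences, so the constant contexts arising from alternatives without the hole are legitimate members of $\acontextset$ and the uniformity $\successorsof{\acontext{\aterm}}=\set{\acontextd{\aterm}\mid \acontextd{\contextvar}\in \acontextset}$ goes through. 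Incidentally, this zero-occurrence flexibility in $\acontextset$ is exactly the signal that your ``inserted term stays put'' picture is too strong: the correct invariant is only that the successors are equal up to the inserted term, which is all that \Cref{Lemma:ContextLemma} needs. Your treatment of the hole-outermost cases (urgency strictly lower, or equal urgency with the hole to the right of the leading frame action) is sound as written.
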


We are now prepared to prove each axiom sound.
Using \Cref{Lemma:ContextLemma}, all proofs share a common approach:
we fix an objective and pick a context that 
is immediate for at least one term in the axiom.
Then, we unroll the game arena for a few moves until it reveals the winning implication we are after. 
We restrict our presentation to the soundness of \Cref{axiom:dist-left} and defer details to \Cref{Appendix:MissingAxioms}.
The proofs make use of two properties of immediate terms.
\begin{lemma}\label{Lemma:TermLeads}
Let term $\aterm$ be immediate for context $\acontext{\contextvar}$. 
Then we have $\ownof{\acontext{\aterm}}=\ownof{\aterm}$.
If $\urgencyof{\aterm}\leq\urgencyof{\atermp}$, then also term $\atermp$ is immediate for
	$\acontext{\contextvar}$.
\end{lemma}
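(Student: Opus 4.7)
The plan is to prove both claims by unfolding the definitions of "immediate" and "leading subterm" and tracking how they interact with the syntactic position of $\contextvar$ in $\acontext{\contextvar}$. A preliminary observation I would establish first is that if any term is immediate for $\acontext{\contextvar}$, then $\contextvar$ is not enclosed by a choice in $\acontext{\contextvar}$: the reasoning mirrors the paused example $\bigEchoiceOf{2}\set{\contextvar,\atermp}$ in the excerpt, since an enclosing choice would itself sit above the plugged-in term and be an outermost action, so the leading position would be outside $\aterm$. Consequently the path from $\contextvar$ to the root of $\acontext{\contextvar}$ uses only concatenations, and I may picture $\acontext{\contextvar}$ as having a left part $L$ and a right part $R$ surrounding $\contextvar$ at the outermost level.

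For the ownership claim I split on the two disjuncts of the immediacy definition. If $\acontext{\aterm} \in \wordterms$, then $\aterm$ is passive as well, and by the paper's convention assigning passive terms to Eve we get $\ownof{\aterm}=\eve=\ownof{\acontext{\aterm}}$. Otherwise $\acontext{\aterm}$ is active with its leading subterm lying inside $\aterm$, and I would argue that $\leadingof{\aterm}=\leadingof{\acontext{\aterm}}$. The key sub-arguments are that $\leadingof{\acontext{\aterm}}$ is not enclosed by a choice on its path to the root of $\acontext{\aterm}$, and hence not on the shorter path to the root of $\aterm$, so it is an outermost action of $\aterm$; that its urgency is maximal in $\acontext{\aterm}$ and therefore certainly in $\aterm$; and that any outermost action in $\aterm$ of the same urgency would remain strictly to the left of it inside $\acontext{\aterm}$ as well, contradicting leftmost. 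Chaining $\ownof{\aterm}=\ownof{\leadingof{\aterm}}$ with $\ownof{\acontext{\aterm}}=\ownof{\leadingof{\acontext{\aterm}}}$ gives the desired equality.

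For the second claim I would first characterize immediacy of $\aterm$ in the context by urgency inequalities: either all of $L, \aterm, R$ are passive so $\acontext{\aterm}$ is a word term, or $\aterm$ is active with $\urgencyof{L}<\urgencyof{\aterm}$ and $\urgencyof{R}\leq\urgencyof{\aterm}$, which is exactly what makes the leading of $\acontext{\aterm}$ sit inside $\aterm$. From $\urgencyof{\aterm}\leq\urgencyof{\atermp}$ the same inequalities transfer verbatim to $\atermp$. In the passive case, any $\atermp$ still yields immediacy: if $\atermp$ is passive then $\acontext{\atermp}$ remains a word term; if $\atermp$ is active then $\atermp$ is the unique carrier of actions in $\acontext{\atermp}$ and hence contains the leading subterm. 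In the active case, $\atermp$ itself must be active since active terms have urgency at least $1$ while $\urgencyof{\atermp}\geq\urgencyof{\aterm}\geq 1$, and the transferred inequalities immediately give immediacy.

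The main obstacle is not the inductive structure but pinning down the characterization of immediacy in terms of urgency inequalities, in particular the boundary where $R$ shares the maximal urgency with $\aterm$, which is still allowed because leftmost ties favor $\aterm$ over anything in $R$, and the edge where $\acontext{\aterm}$ is a word term so that $\leadingof{\acontext{\aterm}}$ is not even defined.
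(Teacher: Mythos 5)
Your proof is correct and matches the argument the paper intends: the paper states this lemma without an explicit proof (treating it as routine), and your unfolding — observing that $\contextvar$ cannot be enclosed by a choice, decomposing the context into concatenated parts $L$ and $R$, and characterizing immediacy of an active $\aterm$ by $\urgencyof{L}<\urgencyof{\aterm}$ and $\urgencyof{R}\leq\urgencyof{\aterm}$ — is exactly the reasoning implicit in the paper's conventions on leading subterms ($\makeleading{\aterm}\appl\atermp$ forces $\urgencyof{\aterm}\geq\urgencyof{\atermp}$, and $\aterm\appl\makeleading{\atermp}$ forces $\urgencyof{\atermp}>\urgencyof{\aterm}$), with the monotone transfer of these inequalities to any $\atermp$ of at least the same urgency giving the second claim. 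Your handling of the edge cases (ties with $R$ resolved leftmost in favor of the hole, and the word-term disjunct where $\leadingof{\acontext{\aterm}}$ is undefined) is also sound.
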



\begin{proof}[Proof of \Cref{Proposition:Soundness} (Soundness)]
\textbf{\Cref{axiom:dist-left}:} We consider binary choices owned by Eve, the generalization to choices over arbitrary sets and also Adam's case are similar. 
For $\urgencyof{\aterm}<\anurg$, the axiom says that  
$\aterm\appl(\atermp\echoicen{\anurg}\atermpp)\axleq\aterm\appl\atermp\echoicen{\anurg}\aterm\appl\atermpp$ and vice versa. 
The goal is thus to show 
$
	\aterm\appl(\atermp\echoicen{\anurg}\atermpp)
	\wineq\aterm\appl\atermp\echoicen{\anurg}\aterm\appl\atermpp
$.

Consider an objective $\objective$ and let $\acontext{\contextvar}$ 
be a context for which at least one of $\aterm\appl(\atermp\echoicen{\anurg}\atermpp)$ or $\aterm\appl\atermp\echoicen{\anurg}\aterm\appl\atermpp$ is immediate.
The urgencies of both terms are $\anurg$.
This means that not only one but actually both terms are immediate for $\acontext{\contextvar}$
and, moreover, the owner of $\acontext{\makeleading{\aterm\appl(\atermp\echoicen{\anurg}\atermpp)}}$ 
and $\acontext{\makeleading{\aterm\appl\atermp\echoicen{\anurg}\aterm\appl\atermpp}}$ is Eve.
The first moves in the game arenas are thus done by the same player and have the same result: 
\begin{gametrees}
	\begin{tikzpicture}[scale=0.9]
		\tikzstyle{level 1}=[sibling distance=4.5em, level distance=12mm]
		\node[peve] {$\acontext{\makeleading{\aterm\appl\atermp\echoicen{\anurg}\aterm\appl\atermpp}}$}
		child {node[punk] {$\acontext{\aterm\appl\atermp}$} edge from parent[->]}
		child {node[punk] {$\acontext{\aterm\appl\atermpp}$} edge from parent[->]};
	\end{tikzpicture}
	\hfill
	\begin{tikzpicture}[scale=0.9]
		\tikzstyle{level 1}=[sibling distance=4.5em, level distance=12mm]
		\node[peve] {$\acontext{\makeleading{\aterm\appl(\atermp\echoicen{\anurg}\atermpp)}}$}
		child {node[punk] {$\acontext{\aterm\appl\atermp}$} edge from parent[->]}
		child {node[punk] {$\acontext{\aterm\appl\atermpp}$} edge from parent[->]};
	\end{tikzpicture}
\end{gametrees}
As a consequence, translating winning strategies becomes straightforward
and the equivalence ${\winsof{\acontext{\aterm\appl(\atermp\echoicen{\anurg}\atermpp)}}{\objective}}$ 
iff $\winsof{\acontext{\aterm\appl\atermp\echoicen{\anurg}\aterm\appl\atermpp}}{\objective}$ holds.
\end{proof}
\section{Normalization}\label{Section:Normalization}
As a first step towards completeness, we show that each term can be brought 
into a normal form using our axioms. 
Normalization is a standard approach in completeness proofs.
The treatment of alternation and urgency is new. 

The normal form eliminates non-terminals and orders the interplay between 
concatenation and choice: a normal form term is a tree of height $2\maxurg$ (with $\maxurg$ the maximal urgency) 
that repeatedly alternates between Eve's and Adam's choices while decreasing the urgency.  
The leaves of the tree are terminal words, $\tskip$, or $\terr$.
Inductively, we define $\gnfn{0} = \analph^{*}\cup\set{\terr}$ and for $\anurg>0$,
\begin{align*}
    \agnfn{\anurg}\ = \
    \setcond{\bigAchoiceOf{\anurg}\atermset}{\emptyset\neq\atermset\subseteq\gnfn{\anurg-1}}
    \hspace{1cm}
    \gnfn{\anurg}\ = \
    \setcond{\bigEchoiceOf{\anurg}\atermset}{\emptyset\neq\atermset\subseteq\agnfn{\anurg}}\,.
\end{align*}
The base case terms are all owned by Eve. 
In an $\agnfn{\anurg}$ term, Adam chooses over $\gnfn{\anurg-1}$ terms. 
In an $\gnfn{\anurg}$ term, Eve chooses over such $\agnfn{\anurg}$ terms owned by Adam.
The main result of this section is the following.
\begin{proposition}\label{Proposition:Normalization}
    There is a function $\norm:\terms\to\gnfn{\maxurg}$ so that  $\normof{\aterm}\axeq\aterm$ for all $\aterm$. 
\end{proposition}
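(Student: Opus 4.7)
The plan is to construct $\norm$ by structural recursion on terms, handling non-terminals through a fixed-point argument, while maintaining throughout the invariant that the output lies in $\gnfn{\maxurg}$ and is axiomatically equivalent to the input. A prerequisite is a padding lemma: from~\labelcref{axiom:lattice-abs,axiom:lattice-assoc,axiom:lattice-ord} and the derivable dual of~\labelcref{axiom:lattice-ord} (announced in the text as a consequence of the lattice axioms), we obtain $\aterm \axeq \bigPchoiceOf{\anurg}\set{\aterm}$ for both $\echoicen{\anurg}$ and $\achoicen{\anurg}$ whenever $\urgencyof{\aterm} \leq \anurg$. This lifts any term to any higher urgency level.

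For the base case, I would collapse each word term via~\labelcref{axiom:monoid}, then pad it with alternating singleton Adam- and Eve-choices at urgencies $1, 1, 2, 2, \ldots, \maxurg, \maxurg$ to reach $\gnfn{\maxurg}$. For a choice $\bigPchoiceOf{\anurg}\atermset$, I would normalize each element of $\atermset$, pad them so their outermost layers all agree, flatten equal-type quantifiers via~\labelcref{axiom:lattice-assoc}, and swap mismatched Eve-Adam stackings at the same urgency via~\labelcref{axiom:lattice-dist}. For a concatenation $\aterm \appl \atermp$ with both sides already in $\gnfn{\maxurg}$, I would iterate the distributivity axioms top-down: \labelcref{axiom:dist-right} pushes $\atermp$ through the outer choices of $\aterm$ (all at urgency $\maxurg \geq \urgencyof{\atermp}$, satisfying the side condition); once a word leaf of $\aterm$ is reached, its urgency drops strictly below any remaining choice of $\atermp$, so \labelcref{axiom:dist-left} takes over and dissolves the concatenation through $\atermp$'s choices; the final word-word concatenations are collapsed by~\labelcref{axiom:monoid}. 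Any spurious repeated-urgency layers produced along the way are absorbed by~\labelcref{axiom:lattice-assoc} and~\labelcref{axiom:norm}.

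For non-terminals, I would define the vector $(\normof{\nonterminal})_{\nonterminal \in \nonterminals}$ simultaneously as the least fixed point on $(\gnfn{\maxurg})^{\nonterminals}$, ordered componentwise by $\axleq$, of the monotone operator that sends $\atermn{\nonterminals}$ to the tuple of recursion-free normalizations of $\eqmapof{\nonterminal}\replace{\nonterminals}{\atermn{\nonterminals}}$. The infinitary syntax guarantees completeness of the lattice, \labelcref{axiom:lattice-mono} yields monotonicity, and~\labelcref{axiom:lfp} gives the inequality $\nonterminal \axleq \normof{\nonterminal}$ directly. The converse $\normof{\nonterminal} \axleq \nonterminal$ follows by iterating~\labelcref{axiom:eqmap} along the Kleene chain up to the fixed point, using complete distributivity to pass to the limit. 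The main obstacle I anticipate is closing this loop between the structural clauses and the fixed-point argument: the recursion-free normalization must accept $\atermn{\nonterminals}$ that are already normal forms and recompose them cleanly with the surrounding structure of $\eqmapof{\nonterminal}$. A secondary subtlety is that distribution over infinite choices in~\labelcref{axiom:lattice-dist} produces branching of arbitrary degree, which is precisely why the sets in $\gnfn{\anurg}$ are allowed to be infinite.
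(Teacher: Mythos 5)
Your base case, choice case, and fixed-point treatment of the non-terminals follow essentially the paper's own route: padding by singleton choices derived from \labelcref{axiom:lattice-abs}, flattening and quantifier-swapping via \labelcref{axiom:lattice-assoc}, \labelcref{axiom:lattice-dist}, and \labelcref{axiom:norm}, and a transfinite Kleene iteration on $(\gnfn{\maxurg})^{\nonterminals}$ closed off by \labelcref{axiom:lfp} in one direction and by iterating \labelcref{axiom:eqmap} (the paper packages this as the derived rule \rrepname) in the other. The concatenation case, however, contains a genuine gap. You claim that \labelcref{axiom:dist-right} pushes $\atermp$ through the outer choices of $\aterm$, ``all at urgency $\maxurg$,'' until a word leaf of $\aterm$ is reached, and only then does \labelcref{axiom:dist-left} take over. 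But a normal form in $\gnfn{\maxurg}$ is a tree of height $2\maxurg$ whose choice layers carry \emph{strictly decreasing} urgencies $\maxurg,\maxurg,\maxurg-1,\maxurg-1,\ldots,1,1$; only the outermost two layers sit at urgency $\maxurg$. The side condition of \labelcref{axiom:dist-right} demands $\urgencyof{\atermp}\leq\anurg$ for a choice of urgency $\anurg$, and since $\atermp\in\gnfn{\maxurg}$ has urgency $\maxurg$, your descent stalls as soon as it meets $\aterm$'s layers of urgency $\maxurg-1$ --- long before any word leaf. Nor is the side condition a technicality one could argue away: distributing a concatenated higher-urgency choice under a lower-urgency choice changes which player commits first, and would identify $(\aletter_l\achoicen{1}\aletter_r)\appl(\aletterp\echoicen{2}\aletterpp)$ with $(\aletter_l\appl(\aletterp\echoicen{2}\aletterpp))\achoicen{1}(\aletter_r\appl(\aletterp\echoicen{2}\aletterpp))$, i.e.\ the inclusion term with a simulation-like term --- exactly the pair that \Cref{Example:ReducedTerms} separates.

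The repair is the interleaving that the paper's concatenation lemma carries out, by induction on the urgency $\anurg$ rather than by a single top-down sweep: for $\aterm,\atermp\in\gnfn{\anurg}$, apply \labelcref{axiom:dist-right} exactly twice to peel the two urgency-$\anurg$ layers of $\aterm$ (the side condition $\urgencyof{\atermp}\leq\anurg$ holds here); the left residues $\aterm'$ now lie in $\gnfn{\anurg-1}$ and have urgency $\anurg-1<\anurg$, so \labelcref{axiom:dist-left} becomes applicable and peels the two urgency-$\anurg$ layers of $\atermp$; the inner concatenations $\aterm'\appl\atermp'$ involve two $\gnfn{\anurg-1}$ terms and are discharged by the induction hypothesis (with \labelcref{axiom:monoid} at urgency $0$), and the resulting four stacked $\anurg$-layers are merged back into two by the choice-normalization step you already have. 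So \labelcref{axiom:dist-left} must take over not at the word leaves but at \emph{every} urgency level, alternating with \labelcref{axiom:dist-right}; with this correction your proposal coincides with the paper's proof.
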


\begin{example}
We illustrate the normal form computation on our running example.
Let $\maxurg = 2$ and consider $(\aletter_{l}\achoicen{1} \aletter_{r})\appl(b\echoicen{2} c)$. 
\begin{align*}
(\aletter_{l}\achoicen{1} \aletter_{r})\appl(b\echoicen{2} c)\ 
\axeqper{\hspace{0.9em}\labelcref{axiom:dist-left}\hspace{0.9em}}&\quad (\aletter_{l}\achoicen{1} \aletter_{r})\appl b \echoicen{2} (\aletter_{l}\achoicen{1} \aletter_{r})\appl c\\
\axeqper{\hspace{0.9em}\labelcref{axiom:dist-right}\hspace{0.9em}}&\quad (\aletter_{l}\appl b\achoicen{1} \aletter_{r}\appl b)\echoicen{2} (\aletter_{l}\appl c \achoicen{1} \aletter_{r}\appl c)\\
\axeqper{\labelcref{axiom:lattice-abs},\; \labelcref{axiom:lattice-assoc}}&\quad [\achoicen{2}\echoicen{1}(\aletter_{l}\appl b\achoicen{1} \aletter_{r}\appl b)]\echoicen{2} [\achoicen{2}\echoicen{1}(\aletter_{l}\appl c \achoicen{1} \aletter_{r}\appl c)]\ .
\end{align*}
\end{example}

%
Formally, we prove \Cref{Proposition:Normalization} in two steps.
First, we compute a normal form term $\normof{\nonterminal}\axeq\nonterminal$ for every non-terminal $\nonterminal\in\nonterminals$ in a fixed-point iteration.
Then, we replace each non-terminal in $\aterm$ by its normal form.
Finally, we normalize the remaining term to obtain $\normof{\aterm}\axeq\aterm$. 
The two steps are formalized in the next lemma, details of the normalization can be found in \Cref{Appendix:Normalization}.
\begin{lemma}\label{Lemma:NTNormalization}
For all $\aterm\in\terms$
    without non-terminals there is
    $\normof{\aterm}\in\gnfn{\maxurg}$
    with $\aterm\axeq\normof{\aterm}$.
For all non-terminals $\nonterminal\in\nonterminals$,
    there is a $\normof{\nonterminal}\in\gnfn{\maxurg}$
    with $\normof{\nonterminal}\axeq\nonterminal$.
\end{lemma}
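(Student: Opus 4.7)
The plan is to prove the two parts in sequence: the non-terminal-free case by structural induction, then the non-terminal case via a fixed-point iteration in the lattice of normal forms, with the infinitary choice syntax providing the required supremum.

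For the first part, I induct on the structure of $\aterm$. The base cases $\tskip$, $\terr$, and terminals $\aletter$ all lie in $\gnfn{0}$, and I pad them into the alternating $\gnfn{\maxurg}$ shape by iterated application of the absorption axiom~\labelcref{axiom:lattice-abs} together with~\labelcref{axiom:lattice-assoc}. For concatenation $\aterm\appl\atermp$, the induction hypothesis yields normal forms on both sides, and I push the concatenation through all choice layers using~\labelcref{axiom:dist-left} and~\labelcref{axiom:dist-right}; since the outer choices of a normal form have maximal urgency, the side conditions on these axioms are satisfied layer by layer, and~\labelcref{axiom:monoid} collapses the concatenations reached at the leaves. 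For choices $\bigPchoiceOf{\anurg}\atermset$, I normalize each element by the induction hypothesis, absorb inner choices of higher urgency via~\labelcref{axiom:norm}, and reshape with~\labelcref{axiom:lattice-dist} and~\labelcref{axiom:lattice-assoc} to restore the alternating $\bigEchoiceOf{\maxurg}\ldots\bigAchoiceOf{1}$ form.

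For the second part, I iterate in the lattice induced on $\gnfn{\maxurg}$ by $\axleq$. Set $\aterm^{(0)}_{\nonterminal}$ to be $\terr$ padded into normal form by part~1, and inductively $\aterm^{(k+1)}_{\nonterminal} = \normof{\eqmapof{\nonterminal}\replace{\nonterminals}{\aterm^{(k)}_{\nonterminals}}}$. Axiom~\labelcref{axiom:least} gives $\aterm^{(0)}_{\nonterminal}\axleq\nonterminal$, and induction on $k$ using congruence together with~\labelcref{axiom:eqmap} yields $\aterm^{(k)}_{\nonterminal}\axleq\nonterminal$ for every $k$. Every $\aterm^{(k)}_{\nonterminal}$ has the form $\bigEchoiceOf{\maxurg}\atermset^{(k)}_{\nonterminal}$ for some $\atermset^{(k)}_{\nonterminal}\subseteq\agnfn{\maxurg}$; taking $\atermset^{(\infty)}_{\nonterminal} = \bigcup_{k}\atermset^{(k)}_{\nonterminal}$ I define $\normof{\nonterminal} = \bigEchoiceOf{\maxurg}\atermset^{(\infty)}_{\nonterminal}\in\gnfn{\maxurg}$. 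Monotonicity~\labelcref{axiom:lattice-mono} places $\normof{\nonterminal}$ above each approximant and hence shows $\normof{\nonterminal}\axleq\nonterminal$. For the reverse $\nonterminal\axleq\normof{\nonterminal}$, I invoke the fixed-point axiom~\labelcref{axiom:lfp}, whose premise reduces to showing that $\normof{\nonterminals}$ is a prefix point of the defining equations, i.e., $\eqmapof{\nonterminal}\replace{\nonterminals}{\normof{\nonterminals}}\axleq\normof{\nonterminal}$. This follows by a continuity argument: substitution distributes (via the distributivity axioms) over the angelic union defining each $\normof{\nonterminalp}$, and each finite-stage substitution normalizes to $\aterm^{(k+1)}_{\nonterminal}$, whose summands sit inside $\atermset^{(\infty)}_{\nonterminal}$ by construction.

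The principal obstacle is the disciplined handling of the infinitary syntax. I must verify the side condition of~\labelcref{axiom:lattice-mono} from~\Cref{Appendix:Infinitary} when forming the infinite union that defines $\normof{\nonterminal}$, and establish the continuity of normalization under substitution that underpins the prefix-point step. A secondary, more mechanical issue is maintaining uniform tree height $2\maxurg$ throughout the structural induction of part~1: every intermediate term must be padded so that the urgency-sensitive distributivity axioms apply, which I handle uniformly by inserting trivial choices via~\labelcref{axiom:lattice-abs}.
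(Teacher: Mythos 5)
Your part~1 tracks the paper's own construction (pad the base cases into shape via \labelcref{axiom:lattice-abs} and \labelcref{axiom:lattice-assoc}, push concatenation through the choice layers with \labelcref{axiom:dist-right} and \labelcref{axiom:dist-left}, collapse leaves with \labelcref{axiom:monoid}, and flatten stacked choices) and is sound. The genuine gap is in part~2: you iterate only through the natural numbers, take the union $\atermset^{(\infty)}_{\nonterminal}=\bigcup_{k}\atermset^{(k)}_{\nonterminal}$ at stage $\omega$, and justify the prefix-point premise of \labelcref{axiom:lfp} by a ``continuity of normalization under substitution''. That continuity claim is false in this infinitary setting --- you name it as your principal obstacle, but it cannot be discharged. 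The culprit is demonic choice over infinite index sets: if $\eqmapof{\nonterminal}$ places occurrences of non-terminals under an Adam choice over an infinite set $I$, then normalizing $\eqmapof{\nonterminal}\replace{\nonterminals}{\normof{\nonterminals}}$ forces an application of \labelcref{axiom:lattice-dist}, which converts an Adam choice over infinitely many Eve choices into an Eve choice over choice functions $f:I\to\atermset_{I}$. Such an $f$ may select, for each $i\in I$, an alternative that first appears at stage $k_{i}$ with $\sup_{i}k_{i}=\omega$; the resulting summand occurs in no finite-stage term $\aterm^{(k+1)}_{\nonterminal}$, so the normal form of the one-step unfolding of $\normof{\nonterminal}$ is not contained in $\atermset^{(\infty)}_{\nonterminal}$ and the premise of \labelcref{axiom:lfp} can fail at $\omega$. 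This is the standard failure of $\omega$-continuity for unbounded demonic nondeterminism.

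The paper avoids the issue entirely: it runs a \emph{transfinite} Kleene iteration, setting $\nonterminal^{(\anordinal)}=\normof{\bigEchoiceOf{\maxurg}\setcond{\eqmapof{\nonterminal}\replace{\nonterminals}{\nonterminals^{(\anordinalp)}}}{\anordinalp<\anordinal}}$, so the alternatives available to Eve are cumulative and the approximants form a chain by construction (your non-cumulative iterates $\aterm^{(k)}_{\nonterminal}$ are not obviously a chain --- monotonicity of $\norm$ under substitution would itself need proof, a second detail the cumulative definition sidesteps). Since every strictly increasing $\axleq$-chain in $\nonterminals\to\gnfn{\maxurg}$ has length at most $\sizeof{\nonterminals}\sizeof{\agnfn{\maxurg}}$ and chains are well-ordered, there is an ordinal $\anordinalpp$ with $\nonterminal^{(\anordinalpp)}\axeq\nonterminal^{(\anordinalpp+1)}$ for all non-terminals; at this stabilization point the prefixed-point premise of \labelcref{axiom:lfp} holds \emph{exactly}, with no continuity argument at all. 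To repair your proof, replace the $\omega$-union with this cumulative transfinite iteration and the chain-length bound; as written, the prefix-point step does not go through.
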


\section{Completeness}\label{Section:Completeness}
\begin{proposition}[Completeness]\label{Proposition:Completeness}
$\aterm\winleq\atermp$ implies $\aterm\axleq\atermp$. For a right-separating objective $\objective\subseteq\analph^{*}$, $\aterm\speccongleq{\objective}\atermp$ implies $\aterm\specaxleq{\objective}\atermp$.
\end{proposition}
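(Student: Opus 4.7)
The plan is to prove the contrapositive: from $\aterm \not\axleq \atermp$ (respectively $\aterm \not\specaxleq{\objective} \atermp$) construct a context $\acontext{\contextvar}$ and an objective $\objective'$ (respectively use the given $\objective$) witnessing $\winsof{\acontext{\aterm}}{\objective'}$ while $\notwinsof{\acontext{\atermp}}{\objective'}$. A first reduction uses \Cref{Proposition:Normalization} together with soundness (\Cref{Proposition:Soundness}): since $\normof{\aterm} \axeq \aterm$ and $\normof{\atermp} \axeq \atermp$, both relations $\axleq$ and $\congleq$ are preserved, so we may assume $\aterm, \atermp \in \gnfn{\maxurg}$. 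Normal-form terms are trees of height $2\maxurg$ that alternate between Eve's and Adam's choices with strictly decreasing urgency, with word terms at the leaves.

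The main technical step is a \emph{structural characterization} of $\axleq$ on $\gnfn{\maxurg}$: by the lattice \cref{axiom:lattice-mono,axiom:lattice-abs,axiom:lattice-ord,axiom:lattice-assoc}, one shows that
$\bigEchoiceOf{\anurg}{}{\atermset} \axleq \bigEchoiceOf{\anurg}{}{\atermsetp}$ iff for every $\aterm_0 \in \atermset$ there exists $\atermp_0 \in \atermsetp$ with $\aterm_0 \axleq \atermp_0$, and dually $\bigAchoiceOf{\anurg}{}{\atermset} \axleq \bigAchoiceOf{\anurg}{}{\atermsetp}$ iff for every $\atermp_0 \in \atermsetp$ there exists $\aterm_0 \in \atermset$ with $\aterm_0 \axleq \atermp_0$. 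At the leaves we use \cref{axiom:monoid,axiom:least} so that $\aword \axleq \awordp$ iff $\aword \sigeq \terr$ or $\aword \sigeq \awordp$; for the specialized version, \cref{axiom:spec} enlarges this to $\aword \lrleq{\objective} \awordp$. Hence $\aterm \not\axleq \atermp$ produces, by descending the trees, an alternating sequence of choices — Eve picks a branch at each $\echoicen{\anurg}$ in $\aterm$, Adam picks at each $\achoicen{\anurg}$ in $\atermp$ — ending in two word terms $\aword, \awordp$ with $\aword \not\axleq \awordp$ (respectively $\aword \not\lrleq{\objective} \awordp$).

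Given this witnessing play, I would construct a \emph{characteristic context and objective}. The context encodes each branching choice in the tree of $\aterm$ using fresh marker letters (or, in the specialized case, using only the alphabet of $\objective$): at positions where Eve must make the choice inside the inserted term, the surrounding context is inert, while at Adam's positions in $\atermp$, the context uses higher-urgency angelic choices in tandem to force Adam to play consistently with the witnessing branch. The objective $\objective'$ accepts precisely those word terms that follow the witnessing branch down to $\aword$ (again using right-separating inflation for the specialized case, to convert the syntactic distinction $\aword \not\rleq{\objective} \awordp$ into a suffix $\awordppp$ with $\aword \appl \awordppp \in \objective$ but $\awordp \appl \awordppp \notin \objective$). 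This context is immediate for the inserted term, so by unrolling a few moves one verifies $\winsof{\acontext{\aterm}}{\objective'}$ and $\notwinsof{\acontext{\atermp}}{\objective'}$.

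The hard part will be twofold. First, justifying the structural characterization of $\axleq$ on $\gnfn{\maxurg}$: one direction (soundness of the combinatorial condition) is a routine derivation from the lattice axioms, but the other direction (that the axioms cannot prove anything more) must be shown by giving a model or by a careful induction on derivations — essentially that $\axleq$ restricted to $\gnfn{\maxurg}$ is \emph{exactly} the distributive-lattice refinement order lifted through the tree. Second, in the $\objective$-specialized case, we have no freedom to add fresh letters. The right-separating assumption $\lrleq{\objective} = \rleq{\objective}$ is precisely what enables the construction of a discriminating suffix $\awordppp$ using only letters from $\analph$, and it will be crucial that the characteristic context places this suffix to the right of $\contextvar$ only, so that the higher-urgency angelic choices used to pin down Adam's branch do not interact with the suffix. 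Handling this alignment between urgency levels and the word-level discriminator is the subtlest point of the proof.
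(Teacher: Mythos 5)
Your skeleton coincides with the paper's proof at a high level: normalize both terms via \Cref{Proposition:Normalization} plus soundness, and your ``structural characterization'' of $\axleq$ on $\gnfn{\maxurg}$ is exactly the paper's domination preorder $\discleq$, discriminated by right-concatenation contexts $\contextvar\appl\atermpppp_{\aterm}$ with an urgency bump and right-separation at the leaves. But there are genuine gaps. First, in the non-specialized case you may \emph{not} introduce fresh marker letters: by definition the contextual preorder quantifies over objectives $\objective\subseteq\analph^{*}$ for the fixed alphabet $\analph$, so any discriminating objective must already live over $\analph$; your plan fails outright for $\sizeof{\analph}=1$. The paper instead derives the non-specialized statement \emph{from} the specialized one, by exhibiting a single objective that is both right-separating and domain-shattering ($\setcond{\aword\appl\aword^{\mathit{reverse}}}{\aword\in\analph^{*}}$, resp.\ $\setcond{\aletter^{(n^2)}}{n\in\nat}$ for a unary alphabet, \Cref{Lemma:DomainShattering}), and closing the circle $\congleq\ \subseteq\ \speccongleq{\objective}\ \subseteq\ \specaxleq{\objective}\ =\ \axleq\ \subseteq\ \congleq$.

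Second, you misplace the hard work. You worry that showing ``the axioms cannot prove anything more'' than the combinatorial order requires a model or an induction on derivations; neither is needed. The paper only proves $\aterm\speccongleq{\objective}\atermp$ implies $\aterm\discleq\atermp$ and $\aterm\discleq\atermp$ implies $\aterm\specaxleq{\objective}\atermp$ (\Cref{Lemma:DominationPreorder}); the remaining inclusion $\specaxleq{\objective}\subseteq\speccongleq{\objective}$ is soundness, so all three relations coincide on normal forms without any analysis of derivations. The genuinely hard step is the first implication, and your single-witnessing-branch construction does not establish it: at an Eve node, failure of domination has the form $\exists\atermpp\in\atermset\,\forall\atermppp\in\atermsetp$, so the discriminating context cannot commit to one branch of $\atermp$. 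The paper's characteristic term therefore aggregates, at each level, the sub-contexts for \emph{all} non-minimal elements of the choice set (dualizing the choice and bumping Eve's urgency $\anurg$ to an Adam choice of urgency $\anurg+1$), and must satisfy a full equivalence $\winsof{\charcontextof{\aterm}{\atermp}}{\objective}$ iff $\aterm\discleq\atermp$, not a one-branch implication. Two further points you omit would break even the repaired construction: minimal terms (the continuation set $\setcond{\awordppp\in\analph^{*}}{\aterm\appl\awordppp\in\objective}$ can be empty, so characteristic contexts exist only for non-minimal terms and the induction must track this), and the top level, where no urgency $\maxurg+1$ is available, so Eve's outermost choice needs the paper's separate argument: pick the witness $\atermpp\in\atermset$ undominated by all of $\atermsetp$ and use the context $\contextvar\appl\atermpppp_{\atermpp}$ directly.
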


Interestingly, the completeness for the axiomatic precongruence follows from the completeness for the specialized version. 
To prove this implication, we consider objectives for which \Cref{axiom:spec} does not add relations. 
Formally, we call an objective $\objective$ \emph{domain shattering}, if $\axleq\ =\ \specaxleq{\objective}$. 
\begin{lemma}\label{Lemma:DomainShattering}
    The language $\aword.\aword^{\mathit{reverse}}$ is domain-shattering and right-separating.
\end{lemma}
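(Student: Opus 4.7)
My plan is to characterize $\rleq{\objective}$ and $\lrleq{\objective}$ on $\analph^*\cup\set{\terr}$ as tightly as possible, after which right-separating and domain-shattering both follow as short corollaries; I will assume $|\analph|\geq 2$, the non-degenerate setting for the palindrome language $\objective = \setcond{\aword\cdot\aword^R}{\aword\in\analph^*}$. The core claim to prove is that, for $u,v\in\analph^*$, already $u\rleq{\objective}v$ forces $u=v$. The witnesses will be the extensions $y_b := bb\cdot u^R$ for $b\in\analph$. Since $u\cdot y_b = (ub)(ub)^R\in\objective$, the hypothesis forces $v\cdot bb\cdot u^R\in\objective$ for every $b$. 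Writing the even-length palindrome condition as $v\cdot bb\cdot u^R = u\cdot bb\cdot v^R$ and matching positions $1$ through $\min(|u|,|v|)$ yields $u_i = v_i$. If $|v|>|u|$, then position $|u|+1$ reads $v_{|u|+1}$ on the left and $b$ on the right (from the $bb$-block of $u\cdot bb\cdot v^R$), so $v_{|u|+1}=b$ would have to hold for every $b\in\analph$, contradicting $|\analph|\geq 2$. The case $|v|<|u|$ is symmetric. Hence $|u|=|v|$, and prefix matching gives $u=v$.

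Right-separating follows quickly. Since $\lrleq{\objective}\subseteq\rleq{\objective}$ always (take $x=\tskip$ in the bilateral definition), the core claim transfers, so $\lrleq{\objective}$ also collapses to equality on $\analph^*\times\analph^*$. On $\terr$-pairs both relations agree: $\terr$ is a bottom for both, since $x\cdot\terr\cdot y = \terr\notin\objective$ vacuously, while no $v\in\analph^*$ sits below $\terr$ (witnessed by $y=v^R$, giving $v\cdot v^R\in\objective$ but $\terr\notin\objective$). Hence $\lrleq{\objective}=\rleq{\objective}$, establishing right-separating.

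For domain-shattering, the question is whether \Cref{axiom:spec} adds anything to $\axleq$. From the characterization, the only pairs $\lrleq{\objective}$ contributes are reflexive pairs (already in $\axleq$) and pairs $\terr\lrleq{\objective}v$; the latter are already derivable in $\axleq$ via \Cref{axiom:least}. So \Cref{axiom:spec} is vacuous and $\axleq = \specaxleq{\objective}$. The main obstacle will be the length-mismatch cases in the core claim: a single palindrome equation $v\cdot bb\cdot u^R = u\cdot bb\cdot v^R$ alone does not force $|u|=|v|$, but requiring it simultaneously for every $b\in\analph$ pins the first ``extra'' character of the longer word to equal every letter of $\analph$ --- impossible for $|\analph|\geq 2$.
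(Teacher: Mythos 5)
Your proof is correct and follows essentially the same route as the paper's: the paper also extends by a two-letter block followed by $\aword^{\mathit{reverse}}$ (it fixes one cleverly chosen letter $\aletter$ depending on which word is longer, where you instead quantify over all $b\in\analph$ and derive the same length/prefix contradiction), and it likewise concludes that the syntactic precongruence is reflexive on $\analph^{*}$ with $\terr$ strictly at the bottom, so that \Cref{axiom:spec} only reproduces instances of \labelcref{axiom:monoid} and \labelcref{axiom:least}. One caveat: your assumption $\sizeof{\analph}\geq 2$ is genuinely necessary --- over a unary alphabet $\setcond{\aword\appl\aword^{\mathit{reverse}}}{\aword\in\analph^{*}}$ is just the even-length words, whose syntactic congruence identifies $\tskip$ with $\aletter\appl\aletter$, so the lemma fails verbatim --- and the paper's proof covers that case by switching to the objective $\setcond{\aletter^{(n^2)}}{n\in\nat}$, which the downstream completeness argument needs when $\analph$ is a singleton; your writeup should at least note this substitution.
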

%
%
%
%

The argument for completeness is this. 
Let $\objective\subseteq\analph^{*}$ be the domain-shattering and right-separating 
objective from \Cref{Lemma:DomainShattering}.
We have
$$\congleq\ \overset{\text{Definition}}{\subseteq}\ \speccongleq{\objective}
    \ \overset{\text{Prop.~\ref{Proposition:Completeness}}}{\subseteq}\ \specaxleq{\objective}\ \overset{\text{Shattering}}{\vphantom{\subseteq}=}\ \axleq\ 
    \overset{\text{Soundness, Prop.~\ref{Proposition:Soundness}}}{\subseteq}\ \congleq.$$

\subsection{Completeness Proof, Specialized Case}
To show \Cref{Proposition:Completeness}, fix a right-separating $\objective\subseteq\analph^{*}$. 
With the normalization in \Cref{Proposition:Normalization} and soundness of the axiomatic precongruence in \Cref{Proposition:Soundness}, it is sufficient to show completeness for terms in normal form. 
For $\aterm,\atermp\in\gnfn{\maxurg}$, we want to show that 
$\aterm\not\specaxleq{\objective}\atermp$ implies $\aterm\not\speccongleq{\objective}\atermp$ 
by giving a context that tells them apart. 
This, however, is difficult as it requires us to understand precisely when the axiomatic congruence fails.

Our way out is to define a less flexible preorder that is easier to handle.
The \emph{domination preorder} $\discleq$ only relates normal form terms of the same urgency and owned by the same player.
It is defined by induction on the urgency. 
For $\aword, \awordp\in\gnfn{0}$, we have $\aword\discleq\awordp$ if $\aword\sgleq{\objective}\awordp$. 
For $\anurg >0$:
    \[
        \bigAchoiceOf{\anurg}\atermset
        \discleq\bigAchoiceOf{\anurg}\atermsetp
        \;\text{ if }\forall\atermppp\in\atermsetp.
        \exists\atermpp\in\atermset.\atermpp\discleq\atermppp
        \quad
        \bigEchoiceOf{\anurg}\atermset
        \discleq\bigEchoiceOf{\anurg}\atermsetp
        \;\text{ if }\forall\atermpp\in\atermset.
        \exists\atermppp\in\atermsetp.\atermpp\discleq\atermppp
    \]
\Cref{Proposition:Completeness} holds with the following lemma, which we prove in \Cref{Appendix:CompletenessProof}.
\begin{lemma}\label{Lemma:DominationPreorder}
For a right-separating objective $\objective$,
$\aterm\speccongleq{\objective}\atermp$ implies $\aterm\discleq\atermp$ and  $\aterm\discleq\atermp$ implies  $\aterm\specaxleq{\objective}\atermp$. 
\end{lemma}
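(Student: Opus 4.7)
The plan is to prove the two directions by parallel inductions on the urgency of the normal-form terms: the easy one lifts domination to axiomatic inequality using the lattice laws, while the hard one exhibits, for each failure of domination, a right-appending context that separates the two terms under $\objective$.

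For $\aterm \discleq \atermp \Rightarrow \aterm \specaxleq{\objective} \atermp$, I induct on the maximal urgency. At urgency $0$, both terms are word terms and the premise unfolds, using right-separation, to $\aterm \lrleq{\objective} \atermp$, which is exactly the antecedent of \Cref{axiom:spec}. For the inductive step with $\aterm = \bigEchoiceOf{\anurg}\atermset$ and $\atermp = \bigEchoiceOf{\anurg}\atermsetp$, each $\atermpp \in \atermset$ is dominated by some $\atermppp \in \atermsetp$, so by induction $\atermpp \specaxleq{\objective} \atermppp$. Combining this with \Cref{axiom:lattice-mono} and the fact that enlarging Eve's option set increases the term in the axiomatic order (derivable from \Cref{axiom:lattice-ord,axiom:lattice-assoc}) yields $\bigEchoiceOf{\anurg}\atermset \specaxleq{\objective} \bigEchoiceOf{\anurg}\atermsetp$. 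The dual argument handles Adam-owned layers, where the quantifier order in the definition of $\discleq$ flips and so does the monotonicity pattern.

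For the converse I argue contrapositively: assuming $\aterm \not\discleq \atermp$, I construct a context $\acontext{\contextvar}$ with $\winsof{\acontext{\aterm}}{\objective}$ and $\notwinsof{\acontext{\atermp}}{\objective}$. Right-separation lets me restrict to contexts of the shape $\contextvar \appl \atermpp$, because any distinction created by a left concatenation is already witnessed by a right one. The base case is direct: $\aword \not\rleq{\objective} \awordp$ yields a suffix $\awordppp$ with $\aword \appl \awordppp \in \objective$ and $\awordp \appl \awordppp \notin \objective$, so $\contextvar \appl \awordppp$ separates the two.

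The main obstacle lies in the inductive step. Unfolding $\aterm \not\discleq \atermp$ in the Eve-owned case supplies some $\atermpp \in \atermset$ for which every $\atermppp \in \atermsetp$ fails to dominate $\atermpp$, and the inductive hypothesis then produces, for each such $\atermppp$, its own separating right-suffix. These suffixes depend on the opposing choice in $\atermp$, so a single appended word will not suffice: the plan is to package them into a right-hand term whose choice structure mirrors the alternation of $\atermp$, with Adam's choices enumerating Eve's possible moves in $\atermp$ and Eve's choices realizing the strategy dictated by the witnessing $\atermpp$ in $\aterm$. The urgencies in this package have to be chosen strictly below the layer at which the mismatch is observed, so that \Cref{axiom:dist-left,axiom:dist-right} apply and the play on $\acontext{\aterm}$ cleanly decomposes into a resolution of $\aterm$ followed by a resolution of the packaged suffix; right-separation guarantees that these two phases cannot interact in a way that invalidates the witness. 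The careful accounting of urgencies in this packaging, together with the symmetric verification for Adam-owned failures, is what I expect to consume most of the proof.
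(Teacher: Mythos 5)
Your first direction ($\discleq$ implies $\specaxleq{\objective}$) is fine and matches what the paper dismisses as the simple half, and your base case for the separation direction is also correct. The genuine gap is in the inductive packaging step: your urgency discipline is inverted. To realize the Eve-layer domination clause $\forall\atermpp\in\atermset.\,\exists\atermppp\in\atermsetp$ game-theoretically with a right-appended probe, the Adam choice enumerating the candidates $\atermpp$ must resolve \emph{before} the Eve choice of the inserted term; since that choice has urgency $\anurg$ and stands to the left of the probe, the enumerating choice needs urgency $\anurg+1$ (and at Adam layers the probing Eve choice needs urgency exactly $\anurg$, exploiting that the leftmost action wins ties). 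The paper's characteristic term $\atermpppp_{\aterm}=\Achoice_{\anurg+1}\setcond{\atermpppp_{\atermpp}}{\atermpp\in\atermset\text{ not minimal}}$ does exactly this — note it enumerates the elements of the \emph{dominating} side $\atermset$, with urgency raised, never lowered. With your ``strictly below the mismatch layer'' choice, the inserted term resolves first and the realized quantifier prefix is $\exists\atermppp\ldots\forall(\text{probe})$, which characterizes a coarser relation than $\discleq$. Concretely, take $\atermset=\set{\atermpp_1,\atermpp_2}$, $\atermsetp=\set{\atermppp_1,\atermppp_2}$ with $\atermpp_i\discleq\atermppp_i$ but no single $\atermppp_j$ dominating both $\atermpp_1$ and $\atermpp_2$: domination holds, yet in your packaged context Adam's late choice picks the sub-probe defeating whichever $\atermppp_j$ Eve already committed to, so the winner analysis of your context is simply wrong. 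Relatedly, your induction hypothesis is too weak: per-opponent separating suffixes cannot be combined, because the combining choice fires at the wrong moment relative to Eve's commitment. What is needed — and what the paper proves — is the uniform characteristic property, an iff against \emph{all} comparison terms: $\winsof{\atermppp\appl\atermpppp_{\atermpp}}{\objective}$ if and only if $\atermpp\discleq\atermppp$.

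Two further issues you would hit after correcting the direction. First, raising urgencies collides with the ceiling $\maxurg$: no choice of urgency $\maxurg+1$ exists, so the top Eve layer admits no characteristic context; the paper handles it by a separate contrapositive argument that fixes a witness $\atermpp\in\atermset$ with $\atermpp\not\discleq\atermppp$ for all $\atermppp\in\atermsetp$ and appends the characteristic term of that Adam-layer witness, whose urgency-$\maxurg$ head makes the inserted Eve choice leading by leftmost resolution. Your plan, probing only with lower urgencies, has no counterpart for this case. Second, choices must range over non-empty sets, so characteristic terms exist only for non-minimal terms (at urgency zero, minimal means $\aword\sgleq{\objective}\terr$, i.e., no suffix extends $\aword$ into $\objective$); the paper threads a minimality side condition through the induction and disposes of minimal $\aterm$ trivially, since they are dominated by everything. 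Without this bookkeeping your packaging produces ill-formed empty choices already in the base case.
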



\newcommand{\decwhichover}[2]{{\normalfont\textsf{$#2$-DEC-}$#1$}}
\newcommand{\decwinsof}[1]{\textsf{$#1$-DEC-}$\winsof{}{}$}
\section{Decidability and Complexity}\label{Section:Verification}
We study the decidability and complexity of checking the contextual preorder and its specialized variant.
To this end, we leave the setting of infinitary syntax and call $\aterm$ \emph{finitary}, if it refers to a finite set of non-terminals $(\nonterminals, \eqmap)$ and all defining terms~$\eqmapof{\nonterminal}$ and $\aterm$ itself are finite. 
We make the decision problem parametric in the relation~$\mathit{R}$ to be checked, and  instantiate~$\mathit{R}$ with $\congeq$, $\speccongeq{\objective}$, and $\speccongleq{\objective}$: 
\begin{quote}
\underline{\decwhichover{\mathit{R}}{\maxurg}}\\
{\bfseries Given:} Finitary $\aterm, \atermp$ over $\analph$,~$(\nonterminals, \eqmap)$ of urgency $\maxurg$.  \\
{\bfseries Problem:} Does $\aterm\ \mathit{R}\ \atermp$ hold?
\end{quote}

The first finding is that already the contextual equivalence is undecidable. 
The proof is by a reduction from the equivalence problem for context-free languages and the result continues to hold if we fix an alphabet with at least two letters. 
A consequence is that the specialized contextual equivalence for domain-shattering objectives is also undecidable.
\begin{proposition}\label{Proposition:Undec}
\decwhichover{\congeq}{\maxurg} and \decwhichover{\speccongeq{\objective}}{\maxurg} with $\objective$ domain-shattering are undecidable for every $\maxurg$. 
\end{proposition}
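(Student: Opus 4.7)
The plan is to reduce the (classically undecidable) equivalence problem for context-free languages over an alphabet of size at least two to $\decwhichover{\congeq}{\maxurg}$, and then to transfer undecidability to the specialized variant via the axiomatic characterizations of \Cref{Section:FullAbstraction}. Given CFGs $G_1, G_2$ over $\analph$, I will construct finitary, Eve-only urgency programs $\aterm_i$ by taking the non-terminals of $G_i$ as urgency-program non-terminals with defining equations $\eqmapof{A} = \bigEchoiceOf{1}{}{\set{\alpha \mid A \to \alpha \in P_i}}$ (using $\terr$ when $A$ has no production), reading each right-hand side $\alpha$ as the concatenation of letters and non-terminals, and setting $\aterm_i$ to be the start symbol of $G_i$. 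A routine induction on derivation length matched against game length yields $L(G_i) = \set{\aword \in \analph^* \mid \winsof{\aterm_i}{\set{\aword}}}$.

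The direction $\aterm_1 \congeq \aterm_2 \Rightarrow L(G_1) = L(G_2)$ is immediate: instantiate the context quantifier with $\contextvar$ and the objective quantifier with each singleton $\set{\aword}$. The converse is the main obstacle. Intuitively, an Eve-only term contributes to any game only the set of words Eve can produce, so two such terms with equal languages are indistinguishable by context. I plan to formalize this via the normalization of \Cref{Proposition:Normalization}. Since $\aterm_i$ is Eve-only, its normal form in $\gnfn{\maxurg}$ has Adam levels that consist of singleton choices forced by the typing of the normal form; using absorption (\labelcref{axiom:lattice-abs}) to collapse such singletons, one checks by induction on urgency that $\normof{\aterm_i}$ is determined up to $\axeq$ by $L(\aterm_i)$ alone. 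Hence equal CFG languages yield $\normof{\aterm_1} \axeq \normof{\aterm_2}$ and so $\aterm_1 \axeq \aterm_2$, which \Cref{Proposition:Soundness} lifts to $\aterm_1 \congeq \aterm_2$. As a backup route for this step I also keep in mind a direct game-theoretic strategy translation: Eve's winning strategy in $\acontext{\aterm_1}$ uses her choices inside $\aterm_1$ only to commit to some word in $L(\aterm_1) = L(\aterm_2)$, which she can replicate in $\acontext{\aterm_2}$ while mirroring all moves outside of the hole.

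The construction is admissible for every $\maxurg \geq 1$: the choices are of urgency~$1$ and non-terminals inherit urgency~$\maxurg$, which affects only the rewrite order in contexts with higher-urgency actions, not the language of producible words. For the specialized variant I invoke \Cref{Lemma:DomainShattering} to fix $\objective_0 = \aword\appl\aword^{\mathit{reverse}}$, which is both right-separating and domain-shattering. Combining completeness from \Cref{Theorem:FullAbstractionSpecialized} with the domain-shattering identity $\axleq = \specaxleq{\objective_0}$ and \Cref{Theorem:FullAbstractionContextual} gives $\congleq \subseteq \speccongleq{\objective_0} \subseteq \specaxleq{\objective_0} = \axleq \subseteq \congleq$, so $\speccongeq{\objective_0}$ coincides with $\congeq$. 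Undecidability of CFG equivalence therefore transfers to both problems.
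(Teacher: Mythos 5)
Your proposal is correct and takes essentially the same route as the paper: the identical Eve-only encoding of CFG productions as urgency-$1$ angelic choices, the identical normal form $\normof{S_i} = \bigEchoiceOf{1} \setcond{\bigAchoiceOf{1} \aword}{\aword \in L(\cfg_i)}$ obtained by Kleene iteration, and the identical transfer to the specialized problem via the right-separating, domain-shattering objective $\setcond{\aword\appl\aword^{\mathit{reverse}}}{\aword\in\analph^{*}}$ together with the chain $\congleq\ \subseteq\ \speccongleq{\objective}\ \subseteq\ \specaxleq{\objective}\ =\ \axleq\ \subseteq\ \congleq$. The only (harmless) local deviation is in establishing $\aterm_1\congeq\aterm_2$ iff $L(\cfg_1)=L(\cfg_2)$: where the paper routes both directions through the domination preorder (\Cref{Lemma:DominationPreorder}) and \Cref{Theorem:FullAbstractionSpecialized}, you prove the forward direction elementarily with singleton objectives in the empty context and the converse by noting that equal languages force equal normal forms, then apply soundness (\Cref{Proposition:Soundness}).
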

Recall that the language $\aword\appl\aword^{\mathit{reverse}}$ is domain-shattering, so already context-free objectives lead to undecidability. 

Our main result in this section is that for regular objectives the specialized contextual preorder is decidable. 
We can also give the precise complexity, for which we measure the size of the input in the expected way as 
$\sizeof{\aterm}+\sizeof{\atermp}+\sizeof{\analph}+\sizeof{\eqmap}$. 
The size of the defining equations is $\sizeof{\eqmap}=\sum_{\nonterminal\in\nonterminals}1+\sizeof{\eqmapof{\nonterminal}}$. 
The size of a term is $\sizeof{\tskip}=\sizeof{\terr}=\sizeof{\aletter}=\sizeof{\nonterminal}=1$, 
$\sizeof{\aterm\appl\atermp}=1+\sizeof{\aterm}+\sizeof{\atermp}$, and $\sizeof{\bigovoid^{\anurg}\atermset}=1+\sum_{\aterm\in\atermset}\sizeof{\aterm}$. 
\begin{theorem}\label{Theorem:UpperLower}
Let $\maxurg$ be an urgency. 
For every regular objective~$\objective\neq \emptyset$, the problem \decwhichover{\speccongleq{\objective}}{\maxurg} is \ptime-complete.  
\end{theorem}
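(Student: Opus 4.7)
The theorem has two halves: an algorithm placing the problem in \ptime{} and a \ptime-hardness reduction. For the upper bound, I follow the \emph{characteristic terms} strategy announced in the introduction---factorize contexts by their solution space relative to $\objective$---and realize it by computing a bounded game summary. Fix a DFA $A$ recognizing $\objective$; since $\objective$ and $\maxurg$ are both parameters of the theorem, $A$ has constant size. Guided by the normal form of \Cref{Proposition:Normalization}, I associate to every term $\aterm$ a summary that is a tree of depth $2\maxurg$ alternating between Eve's and Adam's branching and whose leaves record, for each entry state $q$ of $A$, the set of exit states that Eve can force on $\aterm$'s residual word behavior starting from $q$. These summaries live in a lattice $L$ whose size is constant under the hypotheses. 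The two properties I need are compositionality (the summary of a compound term is determined by those of its components together with the syntactic shape of the composition) and completeness: $\aterm\speccongleq{\objective}\atermp$ iff the summary of $\aterm$ lies below that of $\atermp$ in the natural order on $L$.

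Given such a summary, the algorithm is immediate. Compute the summary of every subterm of $\aterm$, $\atermp$, and of the defining bodies $\eqmapof{\nonterminal}$ by structural induction. For non-terminals, run a Knaster--Tarski fixed-point iteration on the vector of non-terminal summaries, starting from $\bot$ and iterating according to $\eqmap$ until stable. The constant size of $L$ forces convergence in constantly many iterations, each of polynomial cost in the input, so the total time is polynomial in $\sizeof{\aterm}+\sizeof{\atermp}+\sizeof{\eqmap}$. Deciding the preorder is then a constant-time comparison in $L$.

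For the lower bound, I reduce from the \ptime-complete AND/OR graph reachability problem. Fix any $\aword \in \objective$. Given an AND/OR graph with source $v$ and sink $t$, encode each OR vertex $u$ as a non-terminal $\nonterminal_u$ whose defining term is an angelic choice $\bigEchoiceOf{1}{}{\set{\nonterminal_{u'} \mid u' \text{ a successor of } u}}$, each AND vertex analogously with $\bigAchoiceOf{1}$, the sink $t$ as the terminal word $\aword$, and dead-ends as $\terr$. Then Eve wins the standalone game from $\nonterminal_v$ with objective $\objective$ iff $v$ reaches $t$ in the AND/OR sense. The hole context $\contextvar$ witnesses that $\nonterminal_v \speccongleq{\objective} \terr$ fails whenever $v$ reaches $t$; when $v$ does not reach $t$, Adam can always force $\terr$ in $\nonterminal_v$'s game, which makes $\nonterminal_v$ indistinguishable from $\terr$ in every context, so the preorder holds. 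This yields \ptime-hardness.

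The principal obstacle is designing the summary lattice so that it is simultaneously compositional under concatenation and alternation and complete for $\speccongleq{\objective}$. Compositionality must mirror the distributivity axioms~\labelcref{axiom:dist-left} and~\labelcref{axiom:dist-right} with their asymmetric urgency side conditions, and must track the interactions between a term's behavior and both the left and right residuals of $\objective$ that contexts can impose---the very interplay isolated by the right-separating condition used in \Cref{Section:Completeness}. One expects the summary to factor through the syntactic precongruence $\lreq{\objective}$ at the leaves. Once these properties are in hand, termination of the fixed-point iteration follows from the constant height of $L$, and the polynomial-time bound is immediate.
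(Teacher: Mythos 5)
Your upper bound contains a genuine gap at its central claim. You assert that $\aterm\speccongleq{\objective}\atermp$ holds \emph{iff} the summary of $\aterm$ lies below that of $\atermp$ in the order on $L$. Your summaries are, in essence, the paper's $\objective$-specialized normal forms, and the pointwise order you describe is the domination preorder $\discleq$. The direction ``summary below implies contextual preorder'' is sound for every objective (\Cref{Lemma:DominationPreorder}, second half, plus soundness of $\specaxleq{\objective}$). But the converse --- that the contextual preorder implies the summary order --- is exactly the completeness half of \Cref{Theorem:FullAbstractionSpecialized}, which the paper establishes \emph{only for right-separating objectives}: the characteristic contexts in \Cref{Section:Completeness} are built by appending on the right, and right-separation is what licenses ignoring the left flank. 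The theorem you are proving quantifies over \emph{all} nonempty regular objectives, and the paper explicitly warns that here ``the algorithm cannot rely on a full abstraction result.'' For a non-right-separating $\objective$, no context of the form $\atermpp\appl\contextvar\appl\atermppp$ need realize the full discriminating power of the domination order --- the solution spaces of actual contexts are constrained (for instance they are upward closed in a Myhill--Nerode-type order) --- so your comparison in $L$ can return ``no'' on pairs that every context in fact fails to separate. You name this obstacle yourself in your final paragraph but then leave it unresolved; the proof as proposed would fail at precisely that step. The paper's fix is to \emph{avoid} full abstraction: by the context lemma (\Cref{Lemma:ContextLemma}) it suffices to quantify over contexts $\atermpp\appl\contextvar\appl\atermppp$ with $\atermpp,\atermppp$ in specialized normal form (\Cref{Corollary:Algorithm}); for fixed $\objective$ and $\maxurg$ there are only constantly many such flanks, so one normalizes $\acontext{\aterm}$ and $\acontext{\atermp}$ (polynomial, by \Cref{Lemma:EffNormalForm}) and compares the game winners directly. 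The characteristic-term machinery you allude to is the paper's optimization of this quantification over contexts, not a replacement of it by a single order comparison. Repairing your proof therefore means replacing ``compare summaries in $L$'' by ``for each of the constantly many normalized context flanks, decide the two games and compare'' --- which is the paper's \Cref{Proposition:DecideUpperBoundEff} specialized to fixed $\objective$.

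Your lower bound is essentially the paper's reduction (the paper uses the Monotone Circuit Value Problem rather than AND/OR graph reachability; the encodings coincide: angelic choice for $\vee$/OR, demonic for $\wedge$/AND, a fixed $\aword\in\objective$ for true, $\terr$ for false) and is fine in outline, but one step deserves care: from ``Adam can force $\terr$ in the standalone game from $\nonterminal_v$'' you conclude that $\nonterminal_v$ is indistinguishable from $\terr$ in \emph{every} context. Standalone game reasoning does not transfer to contexts for free --- that is the whole difficulty contextual equivalence poses. The paper instead argues axiomatically, deriving $P_i\specaxeq{\objective}\terr$ via \labelcref{axiom:eqmap}, \labelcref{axiom:single} and the lattice axioms and then invoking soundness; note also that with cycles in an AND/OR graph the losing behavior may be divergence rather than reaching $\terr$, which the least-fixed-point treatment (\labelcref{axiom:lfp}) folds into $\terr$ but your game-level claim does not address. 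Both issues are fixable, unlike the completeness gap above.
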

It is worth noting that the result does not expect the objective to be right-separating.
We can indeed decide the specialized contextual preorder $\speccongleq{\objective}$ for all regular objectives $\objective$. 
Moreover, the lower bound holds no matter the objective.

It is natural to define a variant \decwhichover{\speccongleq{*}}{\maxurg} of the problem in which also the objective is part of the input and given as a deterministic finite automaton 
$(\analph, \states, \astateinit, \transitions, \finalstates)$. 
In this case, we use $\sizeof{\objective}$ to refer to $\sizeof{\analph}+\sizeof{\states}$. 
The following result shows the dramatic influence that the objective has on the complexity. 
\begin{theorem}\label{Theorem:UpperLowerInput}
\decwhichover{\speccongleq{*}}{\maxurg} is $\kexptime{(2\maxurg - 1)}$-complete.  
\end{theorem}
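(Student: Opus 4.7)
The plan is to establish matching upper and lower bounds. For the upper bound, I would compute canonical symbolic normal forms for the two input terms and then compare them layer by layer via the domination preorder $\discleq$ from \Cref{Section:Completeness}. For the lower bound, I would reduce from a problem known to be $\kexptime{(2\maxurg-1)}$-hard, most naturally context-bounded multi-pushdown reachability games~\cite{Seth09}, as hinted at in the introduction.

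For the upper bound, the core construction proceeds level by level following the structure of $\gnfn{\maxurg}$. First, compute the two-sided syntactic precongruence $\lrleq{\objective}$ from the DFA of $\objective$; since the objective is regular, the index of $\lrleq{\objective}$ is polynomial in $\sizeof{\objective}$, giving a polynomially-sized set of level-$0$ representatives. At each higher urgency level, a normal-form subterm is represented by the antichain of its children modulo $\discleq$, adapted so that it applies also when $\objective$ is not right-separating by working against $\lrleq{\objective}$ directly. This is precisely the role of the characteristic terms advertised in the introduction. Each additional level incurs at most one exponential blowup over its predecessor, so a representative at level $2\maxurg-1$ has size bounded by $\kexptime{(2\maxurg-1)}$. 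Non-terminals are resolved by a Kleene iteration over the finite-height lattice at each level; the height at level $\anurg$ is at most the size of the $\anurg$-th representation, so the iteration stays within the same complexity bound. The decision procedure then compares the two representatives of $\aterm$ and $\atermp$ with a single top-level $\discleq$ check, which does not add another exponential.

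For the lower bound, I would encode a multi-pushdown reachability game with $k$ context switches into an urgency program of maximal urgency $\lceil k/2 \rceil$, so that taking $k = 2\maxurg - 1$ yields the required $\kexptime{(2\maxurg-1)}$-hardness. One urgency level is used per context: a higher urgency forces the player owning that context to commit before the lower-urgency contexts are revealed, mirroring the imperfect-information semantics of context-bounded games. Recursion of non-terminals encodes the pushdown stacks of the component automata, and a regular DFA objective enforces the cross-context synchronization constraints. The alternation between $\echoicen{\anurg}$ and $\achoicen{\anurg}$ at each level aligns naturally with the two players of the pushdown game.

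The main obstacle will be the upper bound, specifically verifying that normalization plus Kleene iteration over non-terminals fits into $\kexptime{(2\maxurg-1)}$ rather than one exponential more. The necessary savings come from two places: (i) factoring level-$0$ through $\lrleq{\objective}$ starts us at a polynomial rather than exponential base, and (ii) at the topmost Eve-level we compare only the two designated representatives rather than enumerating $\gnfn{\maxurg}$, saving the final exponential. Getting the accounting tight, in particular showing that the characteristic-term representation never materializes more than the reachable fragment of the normal form during fixed-point computation, is where the main technical work lies.
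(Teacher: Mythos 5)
Your proposal follows the paper's general blueprint (specialized normal forms, domination preorder, characteristic terms, reduction from context-bounded multi-pushdown games), but it contains three concrete gaps, two of which break the complexity accounting and one of which breaks correctness.

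On the upper bound, your starting claim is false: the index of the syntactic precongruence $\lrleq{\objective}$ of a DFA-given objective is \emph{not} polynomial in $\sizeof{\objective}$. A class is determined by a state transformation $\states\to\states$, so there can be $\sizeof{\states}^{\sizeof{\states}}$ classes; the paper states explicitly that $\sizeof{\specgnfn{0}}=\repexpof{1}{\bigoof{\sizeof{\objective}^2}}$. With the correct exponential base, your level-by-level accounting lands one exponent too high. The paper recovers this exponent with the compact term representation of \Cref{Section:CompactEncoding}: each congruence class is simulated by an urgency-$1$ angelic choice over state pairs, and one passes to the modified objective $\arrof{\objective}$, whose syntactic congruence has index $\bigoof{\sizeof{\objective}^2}$. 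Second, your final step --- a single top-level $\discleq$ comparison of the two normal forms, ``adapted'' to non-right-separating objectives by using $\lrleq{\objective}$ at the base --- is not complete. \Cref{Lemma:DominationPreorder} gives the inclusion $\speccongleq{\objective}\ \subseteq\ \discleq$ only for right-separating objectives, because the characteristic contexts behind it append on the right only; the paper flags exactly this difficulty for contexts $\atermppp\appl\contextvar\appl\atermpp$. Strengthening the base order to $\lrleq{\objective}$ keeps $\discleq$ sound but makes it strictly stronger than $\speccongleq{\objective}$, so your algorithm would reject pairs that are in fact contextually related. The paper instead quantifies over characteristic terms of two-sided contexts (\Cref{Proposition:Overall}), checking $\atermppppp \discleqof{\arrof{\objective}} \trspecnormof{\aterm}$ implies $\atermppppp \discleqof{\arrof{\objective}} \trspecnormof{\atermp}$ for each of them; the second exponent is saved because there are only $\expof{2\maxurg-1}{\bigoof{\sizeof{\objective}^2}}$ characteristic terms, computable without enumerating contexts via the fixed point of \Cref{Equation:CharTermRecursive} (\Cref{Lemma:CharTermsAtScale}). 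Your antichain canonization of normal-form children does not substitute for this quantification.

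On the lower bound, your arithmetic falls short by two exponents: $\ctxtbound$-2PDG are $\kexptime{(\ctxtbound-2)}$-complete, so your choice $\ctxtbound = 2\maxurg-1$ with two contexts per urgency yields only $\kexptime{(2\maxurg-3)}$-hardness. You need $\ctxtbound = 2\maxurg+1$, which does not fit at two contexts per urgency; the paper's \Cref{Proposition:LowerBound} squeezes \emph{three} contexts into the top urgency by generating the second stack lazily, at the first context switch. Finally, the problem you must prove hard is the preorder \decwhichover{\speccongleq{*}}{\maxurg}, not the observation problem, so your reduction is missing the bridge $\winsof{\aterm}{\objective}$ iff $\charof{\contextvar} \speccongleq{\arrof{\objective}} \arrof{\aterm}$, where the characteristic term of the empty context is polynomial-size only thanks to the compact encoding --- the uncompressed characteristic term is exponential.
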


A consequence is that we can also solve the problem of making an observation, denoted as \decwinsof{\maxurg} and defined with almost the same input as \decwhichover{\speccongleq{*}}{\maxurg}. 
\begin{corollary}\label{Corollary:DecWinner}
\decwinsof{\maxurg} is $\kexptime{(2\maxurg - 1)}$-complete.
\end{corollary}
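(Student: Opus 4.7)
The plan for both bounds is to inherit them from \Cref{Theorem:UpperLowerInput}, with small adjustments at each end.

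For the upper bound I would follow three steps. First, normalize $\aterm$ using \Cref{Proposition:Normalization} to obtain an equivalent term $\normof{\aterm}\in\gnfn{\maxurg}$. By soundness of the axiomatization (\Cref{Proposition:Soundness}), normalization preserves winning: $\winsof{\aterm}{\objective}$ iff $\winsof{\normof{\aterm}}{\objective}$. Second, since $\normof{\aterm}$ is a tree of alternating angelic/demonic choices of depth $2\maxurg$ with word-term leaves, I would take the synchronized product of this tree with a DFA for $\objective$ to obtain a finite reachability game over positions of the form (subtree of $\normof{\aterm}$, DFA-state). Third, I would solve this reachability game in polynomial time in its size and read off the answer. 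The construction of $\normof{\aterm}$ dominates the cost and accounts for the $(2\maxurg-1)$-fold exponential blow-up inherited from \Cref{Theorem:UpperLowerInput}.

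For the lower bound the plan is to repeat the reduction from context-bounded multi-pushdown games used in the proof of \Cref{Theorem:UpperLowerInput}. Multi-pushdown games are themselves phrased as winning-checking problems, and the reduction of that theorem already encodes the pushdown game as an urgency program whose winner against a regular observable mirrors the pushdown-game winner; the pairing with a second term to form a \decwhichover{\speccongleq{*}}{\maxurg} instance is merely a wrapper. Dropping that wrapper and keeping only the one term encoding the game yields a direct reduction to $\winsof$, transferring the $\kexptime{(2\maxurg-1)}$ lower bound.

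The main obstacle I expect is the upper-bound step: the normal-form construction from \Cref{Proposition:Normalization} uses infinitary syntax (choices over possibly infinite sets) and is not a priori a polynomial-size object, so turning it directly into the arena of a finite reachability game needs care. I would have to argue that on a finitary input it suffices to keep only a finite representative of $\normof{\aterm}$ of tower-exponential size, matching the resource analysis in the proof of \Cref{Theorem:UpperLowerInput} via its characteristic-term machinery. Once this finite representation is in place, the synchronized product with the DFA for $\objective$ and the subsequent reachability computation are standard, and the claim follows.
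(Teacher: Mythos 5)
Your lower bound is sound and is in fact how the paper itself is organized: \Cref{Proposition:LowerBound} states the reduction from $(2\maxurg+1)$-context-bounded 2PDG directly at the level of $\winsof{}{}$, and the paper lifts hardness \emph{from} the winner problem \emph{to} the preorder (via a polynomially computable characteristic term for the empty context), not the other way around — so ``dropping the wrapper'' is exactly right. The genuine gap is in your upper bound. Your step one normalizes with \Cref{Proposition:Normalization}, but for a recursive finitary term the normal form $\normof{\aterm}\in\gnfn{\maxurg}$ is genuinely infinite: the base case $\gnfn{0}=\analph^*\cup\set{\terr}$ is infinite and the fixed-point iteration for non-terminals produces infinite choice sets (a non-terminal generating $\setcond{a^nb^n}{n\in\nat}$ already yields an infinite Eve choice); the paper points to precisely this as the reason \Cref{Proposition:Undec} is consistent with decidability. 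You flag the problem, but your proposed repair names the wrong device: the characteristic-term machinery factorizes \emph{contexts} and is needed only to save an exponent when iterating over contexts in the preorder algorithm — the winner problem never quantifies over contexts, and characteristic terms do nothing to make $\normof{\aterm}$ finite. The correct repair is the $\objective$-specialized normal form of \Cref{Lemma:EffNormalForm}: factorize the leaves by the syntactic congruence $\lreq{\objective}$, which makes $\specgnfn{\maxurg}$ finite and effectively computable. Your synchronized-product step then becomes superfluous, since the DFA is already absorbed into the leaves; the winner of a specialized normal form is read off by a bottom-up evaluation (Eve choices as disjunction, Adam choices as conjunction, leaves by membership in $\objective$) in time polynomial in the normal form.

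Even with that repair, your complexity accounting is off by one exponential: $\sizeof{\specgnfn{0}}$ is the index of the syntactic congruence of $\objective$, which can be $\repexpof{1}{\bigoof{\sizeof{\objective}^2}}$, so \Cref{Lemma:EffNormalForm} alone yields a $\repexpof{2\maxurg}{\bigoof{\sizeof{\objective}^2}}$ bound rather than $\kexptime{(2\maxurg-1)}$. To match the theorem you additionally need the compact representation of \Cref{Section:CompactEncoding}: encode each congruence class as an urgency-$1$ angelic choice over state pairs, pass to $\arrof{\aterm}$ and the translated objective $\arrof{\objective}$ (sound by \Cref{Lemma:DFASimulation}), whose syntactic congruence has polynomial index $\bigoof{\sizeof{\objective}^2}$, and only then normalize. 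Your informal appeal to ``a finite representative of tower-exponential size matching the resource analysis'' would, without this step, land you at $2\maxurg$-fold exponential and miss the claimed completeness.
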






\section{Upper Bound}\label{Section:UpperBound}
We prove the upper bounds claimed in \Cref{Theorem:UpperLower,Theorem:UpperLowerInput}
as follows. 
\begin{proposition}\label{Proposition:DecideUpperBoundEff}
Given finitary terms $\aterm$ and $\atermp$ and a regular objective $\objective$ as a DFA, deciding $\aterm\speccongleq{\objective}\atermp$ can be done in time 
$(\sizeof{\aterm}+\sizeof{\atermp}+\sizeof{\eqmap}\sizeof{\nonterminals})\cdot 
    \expof{2\maxurg-1}{\bigoof{\sizeof{\objective}^2}}$.
\end{proposition}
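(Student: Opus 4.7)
The plan is to reduce $\aterm\speccongleq{\objective}\atermp$ to a single comparison in a finite semantic domain~$D$ built from the DFA for~$\objective$, whose size is exactly the tower~$\expof{2\maxurg-1}{\bigoof{\sizeof{\objective}^2}}$. By soundness of the axiomatization (Proposition~\ref{Proposition:Soundness}, which does not require right-separation) and an extension of the domination characterization from Lemma~\ref{Lemma:DominationPreorder} in which the base is taken modulo the two-sided syntactic precongruence $\lrleq{\objective}$ instead of $\rleq{\objective}$, it suffices to compute, for each input term, a canonical element of~$D$ representing its behaviour against arbitrary contexts, and then compare the two elements in~$D$.

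Construction of~$D$: I build a sequence $D_0,D_1,\dots,D_{2\maxurg-1}$ by induction on the urgency level, mirroring the stratification of normal forms $\gnfn{\anurg}$ and $\agnfn{\anurg}$ from Section~\ref{Section:Normalization}. The base~$D_0$ stores word terms modulo~$\lrleq{\objective}$: using the DFA, every $\aword\in\wordterms$ is summarized by the map $\states\to\states\cup\set{\terr}$ it induces, so $|D_0|\le 2^{\bigoof{\sizeof{\objective}^2}}$. For $i>0$, $D_{2i-1}$ (representing $\agnfn{i}$-terms) consists of downward-closed $\discleq$-antichains in $D_{2i-2}$, and $D_{2i}$ (representing $\gnfn{i}$-terms) of upward-closed $\discleq$-antichains in $D_{2i-1}$; each step at most applies a powerset, giving $|D_{2\maxurg-1}|\le\expof{2\maxurg-1}{\bigoof{\sizeof{\objective}^2}}$. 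The partial order is the lifted domination preorder, and comparability is decidable in time polynomial in the domain size.

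Compositional interpretation $\mathit{sem}:\terms\to D_{2\maxurg-1}$: terminals, $\tskip$, and $\terr$ land in~$D_0$ via the DFA; concatenation at a higher urgency level is defined by pushing factors inside the choices using the distributivity axioms~\labelcref{axiom:dist-left},~\labelcref{axiom:dist-right} and re-canonicalizing; the binary and infinite choice operators inject into the appropriate level and close as antichains, with normalization~\labelcref{axiom:norm} ensuring that inner choices of smaller urgency never violate the layered structure. Non-terminals are handled by a Kleene fixed-point iteration in the finite lattice $D_{2\maxurg-1}^{\nonterminals}$, initialized at the $\terr$-vector and repeatedly updated by $\sem(\nonterminal)\gets\sem(\eqmapof{\nonterminal})$; convergence takes $\bigoof{\sizeof{\nonterminals}}$ outer passes with each pass costing $\bigoof{\sizeof{\eqmap}}$ semantic operations (the lattice height is absorbed into the exponential factor), contributing the additive term $\sizeof{\eqmap}\sizeof{\nonterminals}$ in the stated bound. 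Finally, $\aterm\speccongleq{\objective}\atermp$ is decided by the single check $\mathit{sem}(\aterm)\le_{D_{2\maxurg-1}}\mathit{sem}(\atermp)$.

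The main obstacle is proving the faithfulness of the semantic operations: each operation on~$D$ must agree with its syntactic counterpart in $\specaxleq{\objective}$, so that $\mathit{sem}(\aterm)\le\mathit{sem}(\atermp)$ iff $\aterm\speccongleq{\objective}\atermp$. Concretely, one needs to show that the distributivity and normalization axioms lift correctly to every $D_\anurg$, and that the antichain closure at each step quotients exactly by the pointwise one-step domination---too coarse a closure breaks soundness, too fine a closure costs an extra powerset and exceeds the complexity bound. Managing this interplay across urgency levels is precisely where the characteristic-term technology flagged in the introduction is invoked; once set up, the complexity claim follows by multiplying the $\bigoof{\sizeof{\aterm}+\sizeof{\atermp}+\sizeof{\eqmap}\sizeof{\nonterminals}}$ elementary operations by the per-operation cost $\expof{2\maxurg-1}{\bigoof{\sizeof{\objective}^2}}$.
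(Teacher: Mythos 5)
Your reduction to a \emph{single} comparison $\mathit{sem}(\aterm)\le\mathit{sem}(\atermp)$ rests on ``an extension of \Cref{Lemma:DominationPreorder} in which the base is taken modulo $\lrleq{\objective}$,'' and this is precisely the step that fails. Completeness of the domination preorder is only available for right-separating objectives (\Cref{Theorem:FullAbstractionSpecialized}): the characteristic contexts in that proof append exclusively on the right, and for a two-sided base one would need contexts $\atermpp\appl\contextvar\appl\atermppp$, where the interplay between the low-urgency left part and the high-urgency right part is exactly what right-separation is introduced to avoid. The paper explicitly refuses your route (``the algorithm cannot rely on a full abstraction result'') and instead evaluates $\speccongleq{\objective}$ directly over the context shapes licensed by \Cref{Lemma:ContextLemma} (\Cref{Corollary:Algorithm}), then factorizes those contexts by their solution spaces using characteristic terms. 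The resulting check (\Cref{Proposition:Overall}) has the form ``for all characteristic terms $\atermppppp$: $\atermppppp\discleqof{\arrof{\objective}}\trspecnormof{\aterm}$ implies $\atermppppp\discleqof{\arrof{\objective}}\trspecnormof{\atermp}$,'' which is in general coarser than the single domination test $\trspecnormof{\aterm}\discleqof{\arrof{\objective}}\trspecnormof{\atermp}$ you propose; for an objective that is not right-separating your algorithm may answer ``no'' on instances where $\aterm\speccongleq{\objective}\atermp$ in fact holds.

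The complexity accounting also does not close. Your $D_0$ stores maps $\states\to\states\cup\set{\terr}$ and is therefore already exponential in $\sizeof{\objective}$; stacking the $2\maxurg$ choice layers needed to reach $\gnfn{\maxurg}$ (your indexing even stops at $D_{2\maxurg-1}$, which by your own scheme only covers the $\agnfn{\maxurg}$ layer) yields $\repexpof{2\maxurg+1}{\bigoof{\sizeof{\objective}^2}}$ --- the same overshoot the paper attributes to the naive algorithm of \Cref{Corollary:Algorithm}. Antichain representations do not repair this: a poset of $n$ pairwise incomparable elements has $2^{n}$ antichains, so each layer still costs a full powerset in the worst case. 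The paper removes the two surplus exponents with machinery your proposal does not supply: (i) the alphabet translation $\arr$, which replaces every syntactic-congruence class by an urgency-$1$ Eve choice over state pairs, so that the base index of $\arrof{\objective}$ becomes \emph{polynomial}, $\bigoof{\sizeof{\objective}^2}$ (\Cref{Lemma:DFASimulation}, \Cref{Lemma:TranslatedNormalFormTerms}); and (ii) the characteristic terms themselves, which live in $\specagnfnof{\maxurg}{\arrof{\objective}}$ --- one powerset level below the contexts they represent --- and are computed by a fixed point over \Cref{Equation:CharTermRecursive} without ever enumerating contexts (\Cref{Lemma:CharTermsAtScale}). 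You invoke ``characteristic-term technology'' only as a black box for the faithfulness of your semantic operations, which is not the role it plays in the paper, so both the correctness gap and the missing exponent remain open in your argument.
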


The undecidability result in \Cref{Proposition:Undec} shows that the normal form for the axiomatic congruence from \Cref{Section:Normalization} is insufficient as a basis for algorithms. 
The problem is that the normal form terms are typically infinite, and therefore difficult to handle computationally. 
The source of infinity can be found in the base case: already $\gnfn{0}=\analph^*\cup\set{\terr}$ is infinite, and this propagates upwards. 
We realize that the $\objective$-specialized axiomatic congruence admits a more refined normal form that is guaranteed to yield finite terms (and finitely many of~them).  
The key idea is to factorize $\gnfn{0}$ using \Cref{axiom:spec}. 

We define the set of \emph{$\objective$-specialized normal form terms} by induction on the urgency. 
The base case $\specgnfn{0} = \factorize{\analphbot^{*}}{\lreq{\objective}}$ are classes of words in the syntactic congruence $\lreq{\objective}\ =\ \lrleq{\objective}\cap\lrgeq{\objective}$ induced by $\objective$. 
For $\anurg>0$, the definition is
\begin{align*}
    \specagnfn{\anurg}\ = \
    \setcond{\bigAchoiceOf{\anurg}\atermset}{\emptyset\neq\atermset\subseteq\specgnfn{\anurg-1}}\quad
    \specgnfn{\anurg}\ = \
    \setcond{\bigEchoiceOf{\anurg}\atermset}{\emptyset\neq\atermset\subseteq\specagnfn{\anurg}}\ .
\end{align*}
Note that since $\objective$ is regular, the set $\specgnfn{0}$ and so all~$\specagnfn{\anurg}$ and $\specgnfn{\anurg}$ are guaranteed to be finite~\cite{RS59}. 
Another aspect is that we change the alphabet to having $\lreq{\objective}$-congruence classes as letters. 
This can be fixed by working with a representative system: we represent every $\lreq{\objective}$-class by one of its elements.

We adapt the normalization process from \Cref{Section:Normalization} 
to compute a term in the $\objective$-specialized normal form. 
Only the base case changes, for the inductive cases we merely study the complexity. 
Interestingly, the overall normalization takes time $2\maxurg$-fold exponential only in the size of the objective, 
using the common definition $\repexpof{0}{x}=x$ and $\repexpof{\anurg+1}{x}= 2^{\repexpof{\anurg}{x}}$. 
\begin{lemma}\label{Lemma:EffNormalForm}
Given a finitary term $\aterm$ and a regular objective $\objective\subseteq\analph^*$ as a DFA, we can compute $\specnormof{\aterm}\in\specgnfn{\maxurg}$
    with $\specnormof{\aterm}\specaxeq{\objective}\aterm$
    in time $(\sizeof{\aterm}+\sizeof{\eqmap}\sizeof{\nonterminals})\cdot\repexpof{2\maxurg-1}{\bigoof{\sizeof{\specgnfn{0}}}}$. 
We have $\sizeof{\specgnfn{\maxurg}}=\repexpof{2\maxurg}{\bigoof{\sizeof{\specgnfn{0}}}}$. 
\end{lemma}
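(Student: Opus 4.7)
The plan is to adapt the normalization procedure of \Cref{Section:Normalization}, replacing the infinite base set $\gnfn{0}$ by the finite quotient $\specgnfn{0}$, and to analyze the complexity level by level. I would proceed in three stages: compute $\specgnfn{0}$ from the DFA, adapt the structural normalization of \Cref{Lemma:NTNormalization}, and solve the non-terminal equations by a fixed-point iteration.

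For the base case, the syntactic congruence $\lreq{\objective}$ on $\analphbot^{*}$ is computable from the DFA, since two words are $\lreq{\objective}$-equivalent exactly when they induce the same two-sided behaviour on the state set. This yields a representative system of size $\sizeof{\specgnfn{0}}\in\bigoof{\sizeof{\objective}^2}$ together with a multiplication table, tabulated in polynomial time; by \labelcref{axiom:monoid} and \labelcref{axiom:spec}, every passive term is $\specaxeq{\objective}$-equivalent to its representative. For the inductive step, a structural recursion mirrors \Cref{Lemma:NTNormalization}: concatenations are pushed inward through choices by \labelcref{axiom:dist-left} and \labelcref{axiom:dist-right}, choices are rearranged into the strict Eve--Adam alternation of decreasing urgency by \labelcref{axiom:norm}, \labelcref{axiom:lattice-dist}, \labelcref{axiom:lattice-assoc}, and \labelcref{axiom:lattice-abs}, and adjacent passive leaves are multiplied using the precomputed table. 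Because choices now range over subsets of a finite set, the sizes are bounded through $|\specagnfn{\anurg}|\leq 2^{|\specgnfn{\anurg-1}|}$ and $|\specgnfn{\anurg}|\leq 2^{|\specagnfn{\anurg}|}$, yielding $\sizeof{\specgnfn{\maxurg}}\leq\repexpof{2\maxurg}{\bigoof{\sizeof{\specgnfn{0}}}}$; a single normal-form term sits one exponent lower, accounting for the $\repexpof{2\maxurg-1}$ factor in the time bound.

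For non-terminals, I would run a Kleene iteration in the finite lattice $(\specgnfn{\maxurg},\discleq)$ from \Cref{Section:Completeness}: initialize $\specnormof{\nonterminal}=\terr$ for each $\nonterminal\in\nonterminals$ using \labelcref{axiom:least}, and repeatedly replace $\specnormof{\nonterminal}$ by the specialized normal form of $\eqmapof{\nonterminal}$ under the current assignment. Correctness follows from \labelcref{axiom:eqmap} and \labelcref{axiom:lfp}; termination, because the iterates form a strictly ascending $\discleq$-chain whose height is bounded by $\sizeof{\specgnfn{\maxurg}}$. The main obstacle I anticipate is matching the separation between the size bound $\repexpof{2\maxurg}{\cdot}$ and the time bound $\repexpof{2\maxurg-1}{\cdot}$: the key is to represent top-level normal forms by canonical indices into an enumeration of $\specagnfn{\maxurg}$ rather than writing out the outer choice explicitly, and to memoize on the shape of subterms so that the per-iteration cost is amortized across the fixed point, yielding exactly the claimed $(\sizeof{\aterm}+\sizeof{\eqmap}\sizeof{\nonterminals})\cdot\repexpof{2\maxurg-1}{\bigoof{\sizeof{\specgnfn{0}}}}$.
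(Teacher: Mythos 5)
Your three-stage architecture --- quotient the base case by the syntactic congruence, structurally normalize with the distributivity and normalization axioms, then solve the non-terminals by Kleene iteration --- is exactly the paper's plan (\Cref{Appendix:SpecializedNormalForm}), and your size accounting is right, including the observation that a single normal-form term is one exponent smaller than $\sizeof{\specgnfn{\maxurg}}$. But two steps of your complexity analysis have genuine gaps. First, your termination bound for the fixed point is off by an exponent: you bound the ascending chain by the \emph{cardinality} $\sizeof{\specgnfn{\maxurg}} = \repexpof{2\maxurg}{\bigoof{\sizeof{\specgnfn{0}}}}$, which would make the number of iterations, and hence the total running time, $\repexpof{2\maxurg}{\cdot}$ rather than the claimed $\repexpof{2\maxurg-1}{\cdot}$. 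What is needed (and what the paper uses, cf.\ \Cref{Appendix:NTNormalization} and \Cref{Lemma:EffectiveResolveNT}) is the \emph{height} of the lattice, not its size: a term in $\specgnfn{\maxurg}$ is an Eve choice over a subset of $\specagnfn{\maxurg}$, and a strictly ascending chain must strictly grow the downward closure of that subset, so its length is at most $\sizeof{\specagnfn{\maxurg}} = \repexpof{2\maxurg-1}{\bigoof{\sizeof{\specgnfn{0}}}}$; with one component per non-terminal this yields the $\sizeof{\nonterminals}\cdot\repexpof{2\maxurg-1}{\cdot}$ iteration count that the claimed bound requires.

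Second, ``canonical indices plus memoization'' does not by itself tame the application of \labelcref{axiom:lattice-dist}: swapping an Adam choice over an index set $I$ with inner Eve choices enumerates choice functions $f : I \to \atermset_{I}$, of which there can be $\prod_{i\in I}\sizeof{\atermset_i}$ --- one exponent more than the at most $\sizeof{\specagnfn{\anurg}}$ distinct outcomes $\atermsetp_f$ --- and indexing or caching the \emph{results} does not stop you from generating all the functions in the first place. The paper's fix (\Cref{Lemma:EffectiveResolveChoice}), which is the technical heart of this lemma, restructures the computation itself: it uses \labelcref{axiom:lattice-assoc} to split the $\sizeof{I}$-ary distribution into $\sizeof{I}-1$ \emph{binary} applications of \labelcref{axiom:lattice-dist}, deduplicating up to $\specaxeq{\objective}$ after each binary step, so that every intermediate choice set stays within $\sizeof{\specagnfn{\anurg}}$ and each step combines only $\sizeof{\specagnfn{\anurg}}^2$ pairs. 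Your amortization idea points in the right direction, but as written the step ``rearrange choices by \labelcref{axiom:norm}, \labelcref{axiom:lattice-dist}, \labelcref{axiom:lattice-assoc}'' already exceeds the claimed budget before any memoization can help; you must commit to the pairwise-combine-and-dedupe schedule explicitly.
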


The result already allows us to decide the $\objective$-specialized contextual preorder as follows. 
Since we cannot assume the objective to be right-separating, the algorithm cannot rely on a full abstraction result. 
Instead, we have to evaluate $\aterm\speccongleq{\objective}\atermp$ directly, by iterating over contexts.  
What makes this possible is the combination of our proof technique for soundness in \Cref{Lemma:ContextLemma} and the $\objective$-specialized normal form just introduced. 
With \Cref{Lemma:ContextLemma}, we do not have to iterate over all contexts to show $\aterm\speccongleq{\objective}\atermp$, but only over contexts of the form $\atermpp\appl\contextvar\appl\atermppp$. 
With  \Cref{Lemma:EffNormalForm}, the terms $\atermpp$ and $\atermppp$ can be normalized. 
\begin{corollary}\label{Corollary:Algorithm}
Let $\objective$ be regular and $\aterm, \atermp$ finitary.  
Then $\aterm\speccongleq{\objective}\atermp$ iff for all $\acontext{\contextvar}=\atermpp\appl\contextvar\appl\atermppp$ with $\atermpp\in \specgnfnof{\maxurg-1}{\objective}$, $\atermppp\in\specgnfn{\maxurg}$ we have 
$\winsof{\specnormof{\acontext{\aterm}}}{\objective}$ implies $\winsof{\specnormof{\acontext{\atermp}}}{\objective}$. 
\end{corollary}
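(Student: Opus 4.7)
The plan is to establish both directions by combining the proof technique of Lemma~\ref{Lemma:ContextLemma} with the effective normalization of Lemma~\ref{Lemma:EffNormalForm}.

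For the forward direction, I assume $\aterm\speccongleq{\objective}\atermp$ and fix $\atermpp\in\specgnfnof{\maxurg-1}{\objective}$ and $\atermppp\in\specgnfn{\maxurg}$. Then $\atermpp\appl\contextvar\appl\atermppp$ is a legal context, so $\speccongleq{\objective}$ directly gives the implication $\winsof{\atermpp\appl\aterm\appl\atermppp}{\objective}\Rightarrow\winsof{\atermpp\appl\atermp\appl\atermppp}{\objective}$. By Lemma~\ref{Lemma:EffNormalForm}, the normalization satisfies $\specnormof{\atermpp\appl\aterm\appl\atermppp}\specaxeq{\objective}\atermpp\appl\aterm\appl\atermppp$, and by soundness (Proposition~\ref{Proposition:Soundness}) the equivalence lifts to $\speccongeq{\objective}$, so winning $\objective$ is preserved under normalization; the same holds on the $\atermp$-side. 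Chaining these equivalences with the previous implication yields the desired statement on normalized terms.

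For the backward direction I assume the right-hand side and aim for $\aterm\speccongleq{\objective}\atermp$. By Lemma~\ref{Lemma:ContextLemma} it suffices to consider a context $\acontext{\contextvar}$ in which at least one of $\aterm,\atermp$ is immediate and to show $\winsof{\acontext{\aterm}}{\objective}\Rightarrow\winsof{\acontext{\atermp}}{\objective}$. The first step observes that immediacy forces $\contextvar$ not to be enclosed by any choice in $\acontext{\contextvar}$, since otherwise every inserted term would be paused. Hence, up to associativity, $\acontext{\contextvar}=\atermpp'\appl\contextvar\appl\atermppp'$ for some subterms $\atermpp',\atermppp'$ not containing $\contextvar$. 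In the active case, the leftmost-highest-urgency rule for leading subterms imposes $\urgencyof{\atermpp'}<\urgencyof{\aterm}\leq\maxurg$ and $\urgencyof{\atermppp'}\leq\maxurg$, so in particular $\urgencyof{\atermpp'}\leq\maxurg-1$; the passive case, where $\aterm$, $\atermpp'$, $\atermppp'$ are word terms, reduces to monoid arithmetic and is handled directly.

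The second step replaces $\atermpp'$ and $\atermppp'$ by $\specaxeq{\objective}$-equivalent normal forms $\atermpp\in\specgnfnof{\maxurg-1}{\objective}$ and $\atermppp\in\specgnfn{\maxurg}$ via (a parametric version of) Lemma~\ref{Lemma:EffNormalForm}. Soundness makes these substitutions preserve $\speccongeq{\objective}$, and one further invocation of Lemma~\ref{Lemma:EffNormalForm} and soundness lets me also replace the whole context by its normalization. Chaining all equivalences, the assumed implication on normalized terms pulls back to the original context, which via Lemma~\ref{Lemma:ContextLemma} yields $\aterm\speccongleq{\objective}\atermp$. The main obstacle is the urgency-bounded normalization: making sure that a subterm of urgency at most $\maxurg-1$ actually lands in $\specgnfnof{\maxurg-1}{\objective}$ rather than the default target $\specgnfn{\maxurg}$. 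This is handled either by running the construction of Lemma~\ref{Lemma:EffNormalForm} parametrically at urgency $\maxurg-1$, or by collapsing a redundant outermost urgency-$\maxurg$ layer using axiom~\labelcref{axiom:lattice-abs} to remove the resulting singleton choices; with that technicality settled, the rest is routine bookkeeping.
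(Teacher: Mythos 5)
Your proposal is correct and follows essentially the same route the paper sketches just before the corollary: restrict to concatenative contexts $\atermpp\appl\contextvar\appl\atermppp$ via the context lemma (\Cref{Lemma:ContextLemma}), observing that immediacy forbids $\contextvar$ being enclosed by a choice and forces the left part to have urgency at most $\maxurg-1$, then replace the context pieces and the composed terms by $\specaxeq{\objective}$-equivalent normal forms (\Cref{Lemma:EffNormalForm}) and transfer winning via soundness (\Cref{Proposition:Soundness}). Your handling of the technicality that the left piece must land in $\specgnfnof{\maxurg-1}{\objective}$ (normalizing parametrically at urgency $\maxurg-1$, or stripping the singleton outer layer the construction produces) is exactly what the paper's normalization yields.
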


The algorithm formulated in the corollary is slower than the promised upper bound by two exponents because $\specgnfn{\maxurg}$ contains $\repexpof{2\maxurg+1}{\bigoof{\sizeof{\objective}^2}}$ many terms.
To overcome the problem, the first step is to reduce the number of contexts that have to be considered. 
The idea is to factorize the contexts along their solution spaces.
The solution space of a context is the set of terms $\aterm$ for which $\winsof{\acontext{\aterm}}{\objective}$ holds.
When checking for $\aterm \speccongleq{\objective} \atermp$, the job of a context $\acontext{\contextvar}$ is to disprove $\aterm \speccongleq{\objective} \atermp$ by showing $\winsof{\acontext{\aterm}}{\objective}$ and $\notwinsof{\acontext{\atermp}}{\objective}$. 
Hence, when two contexts have the same solution space, it suffices to consider one of them.  

What makes the solution space equivalence algorithmically interesting 
is that (i) it is coarse enough to save an exponent and (ii) we can directly compute with equivalence classes of contexts.  
The key insight behind both statements is that the solution space of a context can be represented in a convenient way: it is the $\specaxleq{\objective}$-upward closure of a so-called \emph{characteristic term}. 
%
%
\begin{definition}
    Term $\aterm$ is \emph{characteristic for~$\acontext{\contextvar}$ wrt. $\objective\subseteq\analph^{*}$}, if for all 
    $\atermp \in \terms$ we have $\winsof{\acontext{\atermp}}{\objective}$ 
    if and only if $\aterm \specaxleq{\objective} \atermp$.
\end{definition}

We will show that there are only $\repexpof{2\maxurg}{\bigoof{\sizeof{\objective}^2}}$ many characteristic terms for contexts of the form $\atermpp\appl\contextvar\appl\atermppp$. 
Moreover, we can compute the characteristic terms directly, without building up the corresponding contexts. 
This saves an exponent in the complexity: we modify the algorithm in Corollary~\ref{Corollary:Algorithm} to iterate through characteristic terms rather than contexts.

We save another exponent in the runtime of our algorithm by a more compact representation of the terms in $\objective$-specialized normal form.  
One exponent in the size of $\specgnfn{\maxurg}$ is inherited from the base case, where already $\sizeof{\specgnfn{0}}=\repexpof{1}{\bigoof{\sizeof{\objective}^2}}$.  
We use the fact that each class of the syntactic congruence can be represented by a function $\states \to \states$ between the states of the objective DFA~\cite{RS59}. 
The key idea is to see these functions as sets of state-pairs.
We simulate a function, say for letter $\aletter$, by letting Eve choose a state change $(\astate, \astatep)$ with $\transitions(\astate,\aletter) = \astatep$.
We thus represent the congruence class $\classof{\aletter} \in \analphbot^*/\lreq{\objective}$ by an angelic choice of urgency $1$ over letters from the alphabet $\states\times\states$.
%
%
%
The modification propagates to the normal form terms, and requires a simple modification of the objective.
The new objective~$\arrof{\objective}$ has a syntactic congruence with an exponentially smaller index, 
namely $\sizeof{\specgnfnof{0}{\arrof{\objective}}}= \bigoof{\sizeof{\objective}^2}$. 

Details on the compact term representation and on how to utilize characteristic terms are given in \Cref{Appendix:Upperbound}.

\renewcommand{\tracesof}[1]{\alang(#1)}
\newcommand{\hp}{\mathsf{HP}}
\newcommand{\langhp}{\alang_{\mathsf{HP}}}
\section{Lower Bound and Applications}\label{Section:Applications}
We show how to reduce well-known program verification tasks to the problem of making an observation. 
The first reduction justifies our complexity lower bound.
\paragraph{Multi-pushdown systems (MPDS).}
MPDS are a model for concurrent programs with recursive threads that synchronize through a shared, finite memory~\cite{QadeerR05,MW20}. 
MPDS are Turing complete~\cite{R00}. 
For bug hunting, it is therefore common to under-approximate their behavior by limiting the number of context switches~\cite{QadeerR05}. 
A context is a sequence of transitions that operate only on one stack, and a context switch is the moment when this stack changes. 
We consider multi-pushdown games (MPDG), the alternating variant of MPDS.\@
A \emph{$\ctxtbound$-context-bounded 2-stack pushdown game~($\ctxtbound$-2PDG)} is an MPDG with two stacks. 
%
%
%
%
\begin{theorem}{\cite{Seth09,MW20}}
    $\ctxtbound$-2PDG are $\kexptime{(\ctxtbound - 2)}$-complete.
\end{theorem}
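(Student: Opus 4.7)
The plan is to prove both directions separately, leveraging the standard context-bounded summary technique for the upper bound and an alternating Turing machine reduction for the lower bound. Since the statement is attributed to \cite{Seth09,MW20}, I would follow the architecture of those proofs but reconstruct the key arguments to see where the precise bound $\kexptime{(\ctxtbound-2)}$ comes from.

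For the upper bound, I would proceed by induction on the number of contexts $\ctxtbound$. The base case corresponds to a single-stack alternating pushdown game, which is known to be in \textsc{EXPTIME} (so $\kexptime{0}$-complete once $\ctxtbound=2$, giving the desired $\ctxtbound-2 = 0$ accounting). For the inductive step, I would exploit that at a context switch the inactive stack is frozen and only a finite summary of its reachable configurations is relevant. Concretely, I would represent the frozen stack's behavior by an alternating summary, which is a set of pairs of bottom-up and top-down constraints expressible as an alternating automaton. Since alternating automata with $n$ states are equivalent to nondeterministic ones with $2^n$ states, peeling off one context costs one exponential. Iterating $\ctxtbound-2$ times above the base case therefore yields the $(\ctxtbound-2)$-fold exponential upper bound. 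The algorithm is a fixed-point computation over these summaries across the at most $\ctxtbound$ phases.

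For the lower bound, I would reduce from acceptance of alternating Turing machines operating in $(\ctxtbound-2)$-fold exponential space, which is known to be $\kexptime{(\ctxtbound-2)}$-hard. The reduction uses the two stacks and context switches as follows: one stack stores a compressed representation of the tape while the other acts as a scratch area; at each context switch, the player (alternating between Eve and Adam) verifies one level of the exponential hierarchy by pushing and popping counters that encode addresses. The crucial point is that each additional context switch lets us verify positions in a tape whose length is one exponential larger than before, because the two stacks together can perform copy/compare operations over addresses of that length. Starting from a two-context base game that already handles one exponential in the pushdown contents, $\ctxtbound-2$ further context switches buy $\ctxtbound-2$ further exponentials, matching the upper bound.

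The main obstacle in this proof is getting the constant right in the tower, in particular justifying the subtraction of $2$ rather than $1$ or~$0$. This is sensitive to how one accounts for the cost of a single pushdown game: the first exponent is already absorbed by solving a plain alternating pushdown game, so the counting of context switches must begin from two. Making this precise requires a careful inductive invariant on the summary size and a careful encoding of tape addresses in the lower bound reduction so that the base case of a single switch corresponds exactly to the known complexity of ordinary alternating pushdown games, rather than double-counting this initial exponential.
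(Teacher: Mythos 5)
You should first note that the paper contains no proof of this statement at all: it is an imported result, cited verbatim from \cite{Seth09,MW20} (the appendix restates it with the same citation), and the paper only \emph{uses} it as a black box to drive its own lower bound via \Cref{Proposition:LowerBound}. So there is no in-paper argument to match; the faithful "proof" in the context of this paper is the citation itself. Judged on its own merits, your reconstruction has the right genre for both directions (a summarization-based upper bound peeling off contexts, and a tower-style hardness encoding), but it does not actually establish the stated bound, and the failure is concentrated exactly where you yourself flag trouble: the constant $-2$.

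Concretely, the exponent accounting is off by one in both directions. For the upper bound, your base case is internally inconsistent: you call it "$\kexptime{0}$-complete once $\ctxtbound=2$" while simultaneously identifying it with a single-stack alternating pushdown game "known to be in \exptime"; but $\kexptime{0}=\ptime$, whereas pushdown reachability games are \exptime-complete, i.e.\ $\kexptime{1}$-complete (Walukiewicz). Anchoring the induction at an \exptime\ base and paying one exponential per peeled context yields $\kexptime{(\ctxtbound-1)}$, not $\kexptime{(\ctxtbound-2)}$ --- which is essentially the weaker bound already available from \cite{Seth09}; the technical content of \cite{MW20} is precisely the finer summary domain that saves an exponent over the naive one-exponential-per-switch analysis, and your sketch contains no mechanism for that saving. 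For the lower bound, the hardness source is misidentified: since $\mathsf{ASPACE}(f)\subseteq\bigcup_{c}\mathsf{DTIME}(2^{c\cdot f})$ and conversely, alternating machines running in $(\ctxtbound-2)$-fold exponential space characterize $\kexptime{(\ctxtbound-1)}$, not $\kexptime{(\ctxtbound-2)}$; a reduction from that class would prove too much, so the invariant "each additional context switch buys one exponential of address length" must itself be off by the same constant and the encoding as sketched cannot go through. A further $\pm 1$ to pin down: in the paper's convention $\ctxtbound$ counts context \emph{switches} (the objective demands $\currcs=\ctxtbound$), i.e.\ $\ctxtbound+1$ contexts, while your sketch slides between switches and contexts. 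As it stands, the proposal yields at best a $\kexptime{(\ctxtbound-1)}$ upper bound and no matching lower bound, and the closing admission that the subtraction of $2$ is unjustified is an accurate self-diagnosis of the missing idea rather than a removable presentational issue.
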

%
%
%
Our lower bound is due to the following result that we prove in \Cref{Appendix:LowerBound}.  
%

\begin{proposition}\label{Proposition:LowerBound}
    Given a $(2\maxurg+1)$-2PDG $\pdg$, 
    we can compute in polynomial time an alphabet $\analph$, 
    a term $\aterm$ and non-terminals $(\nonterminals, \eqmap)$ over $\analph$ all of maximal urgency~$\maxurg$, and an objective DFA $\objective$ 
    so that Eve wins $\pdg$ if and only if $\winsof{\aterm}{\objective}$.
\end{proposition}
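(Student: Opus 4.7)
The plan is to construct, given a $(2\maxurg+1)$-context-bounded 2PDG $\pdg$, a finitary urgency program $\aterm$ with defining equations $(\nonterminals,\eqmap)$ of maximal urgency $\maxurg$ together with a regular objective $\objective$ so that the game $\semof{\aterm}$ against $\objective$ has the same winner as $\pdg$. Without loss of generality, I would first normalize $\pdg$ so that its $2\maxurg+1$ contexts strictly alternate between the two stacks, with stack $1$ active in odd contexts and stack $2$ in even contexts. Enforcing strict alternation only costs a linear blow-up of the position set, since a would-be second consecutive use of the same stack can be turned into a losing trap.

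Next, I would introduce pushdown summary non-terminals. For each stack $j\in\set{1,2}$ and each pair of control states $(p,q)$ of $\pdg$, let $\nonterminal^{j}_{p,q}\in\nonterminals$ be a non-terminal whose defining term $\eqmapof{\nonterminal^{j}_{p,q}}$ enumerates all one-context executions on stack $j$ that start in state $p$ with only a fresh bottom marker on top and return to the bottom marker in state $q$. The recursion simply unwinds single pushdown rules: a push of $X$ is followed by a recursive summary that eventually pops $X$. Each rule contributes a terminal letter encoding its label, and the angelic versus demonic choice used at each branching matches the ownership of the corresponding $\pdg$-position. All choices inside these summaries use the lowest urgency $1$.

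The heart of the construction is the top-level term that chains $2\maxurg+1$ summaries together while letting the intermediate boundary states be resolved in a controlled order. Schematically,
\[
  \aterm \;=\; \bigopchoice^{\anurg_{0}}_{p_{0}}\, \nonterminal^{1}_{p_{0},p_{1}} \appl \bigopchoice^{\anurg_{1}}_{p_{1}}\, \nonterminal^{2}_{p_{1},p_{2}} \appl \cdots \appl \bigopchoice^{\anurg_{2\maxurg}}_{p_{2\maxurg}}\,\nonterminal^{1}_{p_{2\maxurg},p_{2\maxurg+1}}\,,
\]
where $\bigopchoice^{\anurg_{i}}$ is either $\echoicen{\anurg_i}$ or $\achoicen{\anurg_i}$ depending on who is responsible for the state at context boundary $i$, and the urgencies $\anurg_{0},\dots,\anurg_{2\maxurg}$ are arranged symmetrically around the middle so that the pattern mirrors the alternation of the two stacks across contexts. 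With $2\maxurg$ non-trivial urgency levels above the base, each providing one angelic and one demonic quantifier, we obtain exactly $2\maxurg+1$ ``quantifier slots'' when combined with the base-urgency pushdown simulation inside each summary. Thus the urgency budget is tight but sufficient to match the $2\maxurg+1$ contexts of $\pdg$.

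The objective $\objective$ is a DFA that reads the emitted trace and accepts iff the sequence of rule labels, together with the guessed boundary states, forms a legitimate play of $\pdg$ ending in Eve's winning set. This is a regular property: given the transitions emitted by the summaries and the boundary state markers inserted between them, consistency (the end state of context $i$ equals the start state of context $i+1$, and the initial/final configuration of $\pdg$ are as prescribed) can be checked by a finite automaton of polynomial size. Correctness is then proved by translating strategies: a winning Eve-strategy in $\pdg$ is processed in time-order to produce answers to boundary guesses from higher urgency down and to transitions inside each summary at urgency $1$; a winning Eve-strategy in $\semof{\aterm}$ yields a strategy in $\pdg$ by reading off boundary guesses and rule choices in the order they are resolved. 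The main obstacle is precisely this strategy translation: one has to check that the fixed top-down order in which urgencies force choices to be revealed coincides with the information flow available to Eve in $\pdg$, that is, that Eve in the urgency program never commits to a later-context decision before a previous-context decision is made. The strict stack-alternation enforced in the first step is what makes this matching possible, so verifying it carefully is where I expect the bulk of the proof work to lie.
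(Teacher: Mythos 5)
There is a genuine gap, and it sits exactly where you locate the "bulk of the proof work": the architecture of chained per-context summaries cannot be made to work. Your summary non-terminals $\nonterminal^{j}_{p,q}$ describe single-context runs that start at a fresh bottom marker and return to it, so whatever a context leaves on its stack is discarded. But in a context-bounded 2PDG the stack content of stack $j$ at the end of its $i$-th context is precisely the starting content of its next context; chaining empty-stack-to-empty-stack summaries only captures plays in which every context happens to empty its stack. No regular objective $\objective$ can restore the lost correspondence, since it is a matching between unbounded stack contents — this is the non-regular core of the model, and without it the reduction could not be \kexptime{$(\ctxtbound-2)$}-hard. The paper does not use summaries at all: it materializes \emph{both} stacks inside the term as concatenations of stack-symbol terms $\wrpsymbol{\anurg}$, uses the urgency of these terms to make the top of the active stack the leading subterm, and implements a context switch by a decrement process that lowers the urgency of every symbol of the deactivated stack one by one (checked by the objective via emitted markers). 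Leftover stack content thus simply persists in the term and is resumed when its stack becomes active again.

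The second failure is the quantifier order of your boundary guesses. Since all boundary-state choices sit at urgency above $1$ while all intra-context choices sit at urgency $1$, every boundary state is committed before a single intra-context move is resolved — yet the boundary states are functions of the intra-context moves of \emph{both} players. Whichever player you let guess them, the opponent can falsify the guess by playing the context differently, and a consistency-checking DFA cannot assign blame in a way that preserves the value of the game: pre-guessing interfaces is sound for nondeterministic context-bounded reachability (Qadeer--Rehof style), but for games it reverses quantifiers and changes the winner. The paper's reduction avoids this by simulating the 2PDG strictly in time order: the play on the urgency term mirrors the 2PDG play move by move, Eve only guesses data that is already determined by the emitted history (the current control state), and the objective verifies the guess on the spot. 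Urgencies are not "quantifier slots" for boundary states; they keep the inactive stack dormant, each urgency level hosts two contexts (leftmost-first resolution within equal urgency handles the stack order), and one extra context comes from generating the second stack lazily at the top urgency — which is how $\maxurg$ urgencies yield $2\maxurg+1$ contexts, a count your "slot" argument does not actually substantiate.
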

\paragraph{Hyper model checking (HMC).}
Hyperproperties~\cite{CS10} are program properties that refer to sets of computations, as opposed to the classical linear-time properties~\cite{Pnueli77} that refer to single computations. 
The corresponding logics~\cite{CFKMRS14} express hyperproperties by quantifying over the computations of the program $\akripke$, as in  
\begin{align*}
\hp\quad =\quad    \forall \aword_1\in \tracesof{\akripke}.\ \exists \aword_2\in\tracesof{\akripke}\ \ldots\ \mathsf{Q} \aword_n\in \tracesof{\akripke}.\ \aword_1\oplus\ldots \oplus \aword_n\in \langhp \ .
\end{align*}
%
%
%
We use $\oplus$ to denote the convolution of words. 
The convolution expects words of the same length (that is then chosen by the first quantifier) and forms their letter-by-letter product, meaning $\langhp$ is a language over an alphabet ${\analph^n}$. 
%
%
The hyper model checking problem (HMC) asks whether a given program satisfies a given hyperproperty, denoted by $\akripke \vDash \hp$.
The standard way of solving HMC is to compose the program with $\langhp$, and then repeatedly with itself~\cite{BDR04,FRS15}. 
We show how to encode HMC into the problem of making an observation for urgency terms.
To be precise, HMC is usually defined over infinite words, but we consider a finite-word version.
The details are in \Cref{Appendix:Hyper}. 
\begin{theorem}
    Given $\akripke$ and $\hp$, we can compute in polynomial time a term $\aterm_{\akripke, \hp}$ and an objective $\objective$ so that $\akripke \models \hp$ if and only if $\winsof{\aterm_{\akripke, \hp}}{\objective}$.
\end{theorem}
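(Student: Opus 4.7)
The plan is to turn the quantifier alternation of $\hp$ into an alternation of urgencies, using the fact that higher-urgency choices are resolved first. Writing $\hp$ as $\mathsf{Q}_1 \aword_1 \ldots \mathsf{Q}_n \aword_n.\ \aword_1 \oplus \ldots \oplus \aword_n \in \langhp$ with $\mathsf{Q}_i \in \{\forall, \exists\}$, I assign urgency $n - i + 1$ to the choices produced by quantifier~$i$, so the traces are committed to in the order $\aword_1, \aword_2, \ldots, \aword_n$. The choice operator is $\achoicen{n-i+1}$ when $\mathsf{Q}_i = \forall$ and $\echoicen{n-i+1}$ when $\mathsf{Q}_i = \exists$, so ownership matches the quantifier type.

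Concretely, for each $i \in \{1, \ldots, n\}$ and each state $q$ of $\akripke$, I introduce a non-terminal $\nonterminal_q^{(i)}$ defined as the choice (of type $\mathsf{Q}_i$ and urgency $n - i + 1$) over the set
\[
\set{\aletter \appl \nonterminal_{q'}^{(i)} \mid q \xrightarrow{\aletter} q' \text{ in } \akripke} \;\cup\; \set{\tskip},
\]
the extra $\tskip$-branch allowing the $i$-th trace to terminate at state $q$. Writing $\aterm_{\akripke}^{(i)}$ for $\nonterminal_{q_0}^{(i)}$ with $q_0$ the initial state of $\akripke$, I set
\[
\aterm_{\akripke, \hp} \;=\; \aterm_{\akripke}^{(1)} \appl \aterm_{\akripke}^{(2)} \appl \ldots \appl \aterm_{\akripke}^{(n)}.
\]
By the urgency semantics, all urgency-$n$ choices (those of $\aterm_{\akripke}^{(1)}$) are resolved before any urgency-$(n-1)$ choice, and so on, so the player owning $\mathsf{Q}_i$ commits to the entire trace $\aword_i$ before any decision is made for $\aword_{i+1}$.

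For the objective, I let $\objective \subseteq \analph^*$ be the regular language of concatenations $\aword_1 \appl \ldots \appl \aword_n$ with $|\aword_1| = \ldots = |\aword_n|$ and $\aword_1 \oplus \ldots \oplus \aword_n \in \langhp$. Since $\langhp$ is regular over the product alphabet $\analph^n$, a DFA for $\objective$ can be built by reading $n$ blocks of letters in turn and, at matching block positions, feeding the corresponding $n$-tuple to the DFA for $\langhp$, while simultaneously checking that all blocks have the same length. This construction is polynomial in $|\akripke|$ and in the size of the given DFA for $\langhp$, so the overall reduction runs in polynomial time.

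Correctness follows by matching plays in $\semof{\aterm_{\akripke, \hp}}$ with Skolem witnesses for the quantifiers: Eve has a winning strategy iff for every choice by Adam of the $\forall$-quantified traces she can respond with $\exists$-quantified traces whose concatenation lies in $\objective$, which is exactly the statement $\akripke \models \hp$. The main obstacle is enforcing the equal-length requirement of convolution without creating spurious synchronization between quantifier blocks; I handle this entirely inside the objective, so each $\aterm_{\akripke}^{(i)}$ is free to pick its own termination point and $\objective$ rejects unbalanced concatenations. A minor subtlety is that the ownership of non-terminal expansion steps is irrelevant since these steps have no genuine choice, which the operational semantics already accommodates.
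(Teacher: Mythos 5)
Your term construction and urgency discipline (urgency $n-i+1$ for quantifier $i$, ownership matching the quantifier type) agree with the paper's intent, but the reduction breaks at the objective. Because you concatenate $n$ independently generated trace blocks, the language you need,
\[
\objective\;=\;\setcond{\aword_1\appl\ldots\appl\aword_n}{\;|\aword_1|=\ldots=|\aword_n|,\ \aword_1\oplus\ldots\oplus\aword_n\in\langhp},
\]
is not regular: a DFA reading the concatenation would have to remember the entire (unbounded) first block in order to feed ``matching block positions'' into the automaton for $\langhp$, and already the constraint $|\aword_1|=|\aword_2|$ is beyond finite state. Concretely, for $n=2$ and $\langhp$ the set of convolutions of two identical traces, your $\objective$ is the copy language $\setcond{\aword\appl\aword}{\aword\in\analph^*}$, which is not even context-free. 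So the claimed polynomial-time DFA construction (``reading $n$ blocks of letters in turn \ldots while simultaneously checking that all blocks have the same length'') does not exist. This matters beyond aesthetics: the theorem is only useful because the objective is a DFA, which is what feeds into \Cref{Corollary:DecWinner} and the decidability discussion for recursive programs; for non-regular objectives, nothing is gained (cf.\ \Cref{Proposition:Undec}, where already context-free objectives cause undecidability). The obstacle you explicitly acknowledge and push ``entirely inside the objective'' is precisely the step that cannot be done there.

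The paper resolves this synchronization problem in the \emph{term} rather than the objective: it interleaves instead of concatenating. A single non-terminal $\nonterminal$ with $\eqmapof{\nonterminal}=\tskip\echoicen{n}\aterm_1\appl\nonterminal$ is unrolled once per trace position; within one unrolling, the gadget $\aterm_1$ lets the players pick one letter for each of the $n$ traces, with urgencies $n,n-1,\ldots,1$, so globally all of trace $1$'s letters are still fixed before any of trace $2$'s, exactly as in your encoding. The emitted word is then the letter-wise flattening of the convolution, so a DFA of size linear in $\sizeof{\akripke}$ plus the size of the $\langhp$-automaton checks membership online, $n$ letters per round; equal lengths hold by construction, since each round extends every trace by exactly one letter, and Eve's urgency-$n$ stop/continue choice on $\nonterminal$ selects the common length $l$. (Trace-validity is also moved into the objective there: players choose raw transition letters, odd components must be runs of $\akripke$, and Adam loses if his even components are not; your idea of baking $\akripke$ into the grammar handles that sub-issue correctly, but it cannot rescue the block-concatenation shape.) To repair your proof you must restructure the term, not strengthen the objective.
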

Interestingly,  our reduction to urgency terms even yields decidability results for HMC over recursive programs.
As HMC generalizes inclusion, which is undecidable for context-free languages, the recursive case needs a restriction. 
We can work with a restriction similarly liberal as in the recent positive results~\cite{GMO22arxiv,BZCV23}, which form the decidability frontier for HMC.\@
%
%
%
\newcommand{\impset}{H}
\newcommand{\imp}{\mathsf{hd}}
\newcommand{\aimp}{h}
\newcommand{\impof}[1]{\imp(#1)}
\newcommand{\agame}{\mathit{G}}
\newcommand{\runsof}[1]{\mathsf{runs} (#1)}

\paragraph{Simulation, Inclusion, and Games with imperfect information.}
As explained in the introduction, urgency terms can express inclusion and simulation problems.
We now give the details of the encoding, and also add imperfect information games to the picture. 
Let $\adfa = (\states_{\adfa}, \transitions_{\adfa}, \astateinit_{\adfa}, \finalstates_{\adfa})$
and $\adfap = (\states_{\adfap}, \transitions_{\adfap}, \astateinit_{\adfap}, \finalstates_{\adfap})$ be non-deterministic finite automata. 
Let $\astate, \astatep$ range over $\states_{\adfa}$ 
and $\astatepp, \astateppp, \astatepppp$ over $\states_{\adfap}$ with~$\astateppp \neq \astatepppp$. 
Imperfect information games $(\adfa, \imp)$ (for a definition, refer to~\cite{Reif84}) 
are played on a finite automaton $\adfa$ 
and hide the concrete states $\astate \in \states_{\adfa}$ from Adam 
and only present him an abstraction $\aimp$, 
using a function $\imp : \states \to \impset$. 
Such a game is played in rounds in which Adam sees the abstraction $\aimp \in \impset$ of the current state~$\astate$, 
chooses an action $\aletter \in \analph$, 
and then Eve selects a transition from $\astate$ for that action.
Eve wins $\agame$ when she can force the play from the initial state $\astateinit_{\adfa}$ into a final state in $\finalstates_{\adfa}$.

We give the defining equations for the non-terminals: 
\begin{align*}
    \nonterminal_{\astate}^{\subseteq} &= 
        \Echoicen{2}
        \setcond{(\Achoicen{1}_{(\astatepp, \aletter, \astateppp)\in \transitions_{\adfap}} 
        (\astatepp, \aletter, \astateppp)) \appl \nonterminal_{\astatep}^{\subseteq}}
        {(\astate, \aletter, \astatep) \in \transitions_{\adfa}} 
        \cup \setcond{\tskip}{\astate \in \finalstates_{\adfa}} \\
    \nonterminal_{\astate}^{\precsim} &= 
        \Echoicen{1}
        \setcond{(\Achoicen{1}_{(\astatepp, \aletter, \astateppp)\in \transitions_{\adfap}} 
        (\astatepp, \aletter, \astateppp)) \appl \nonterminal_{\astatep}^{\precsim}}
        {(\astate, \aletter, \astatep) \in \transitions_{\adfa}} 
        \cup \setcond{\tskip}{\astate \in \finalstates_{\adfa}} \\
    \nonterminal_{\aimp} &=
        \Achoicen{1}_{\aletter\in\analph}
        \Echoicen{1} \setcond{\nonterminal_{\aimp'}
            \appl\Echoicen{1}_{{(\astate,\aletter,\astatep) \in \transitions_{\adfa}, \impof{\astatep} = \aimp'}}
                (\astate,\aletter,\astatep)}
            {\aimp'\in\impset} 
        \cup \setcond{\tskip}{\astate \in \finalstates_{\adfa}}
\end{align*}
The only difference between the encodings of inclusion and simulation is the urgency of the angelic choice. 
In the case of inclusion, Adam only has to make a demonic choice when Eve has chosen the entire word. 
For simulation, the players take turns. 
For imperfect information games, Adam chooses the letter and Eve chooses the abstraction of the next state that will be made visible to Adam and the actual transition that is consistent with  the letter and the abstraction. 
The point is that Adam should see the abstraction but not see the transition.
We achieve this with a left-linear grammar where Eve's choice for the next transition will only be revealed when Adam has made all choices. 
By using urgency two, we could have avoided this trick and used a right-linear model.
In all cases, the task of the objective $\objective$ is to check that we have successive transitions. 

\begin{theorem}
    $\winsof{\nonterminal_{\astateinit_{\adfa}}^{\subseteq}}{\objective}$ if and only if $\langof{\adfa} \not \subseteq \langof{\adfap}$,\,
    $\winsof{\nonterminal_{\astateinit_{\adfa}}^{\precsim}}{\objective}$ if and only if $\adfa \not \precsim \adfap$,\,
    $\winsof{\nonterminal_{{\impof{\astateinit_{\adfa}}}}}{\objective}$ if and only if Eve wins $(\adfa, \imp)$.
\end{theorem}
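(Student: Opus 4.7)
The plan is to prove the three equivalences by setting up correspondences between positional strategies in the semantic game arena $\semof{\cdot}$ and strategies in the underlying automaton-theoretic problem, with $\objective$ tailored in each case to encode the respective acceptance condition as a regular constraint over the transition-letters appearing in the leaves of a fully-resolved term. The general form of $\objective$ accepts a transition-sequence precisely when Adam's chosen transitions form a properly chained run of $\adfap$ that fails to be accepting over Eve's word (for inclusion and simulation) or when Eve's hidden transition choices form an accepting run of $\adfa$ consistent with the observed abstractions and letters (for imperfect information).

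For the inclusion case, the angelic choice in $\nonterminal_{\astate}^{\subseteq}$ has urgency $2$ and the demonic choice urgency $1$, so by the semantics of urgency Eve's angelic choices are all resolved before any of Adam's demonic choices. Since the set-builder indexes Eve's alternatives by transitions $(\astate, \aletter, \astatep) \in \transitions_{\adfa}$, with the recursion going through $\astatep$ and the base case $\tskip$ available only when $\astate \in \finalstates_{\adfa}$, Eve effectively commits up front to a full accepting run of $\adfa$, that is, to a word in $\langof{\adfa}$. Adam, facing the word, then tries to produce a matching accepting run of $\adfap$. Eve wins iff her word admits no such matching run, which is precisely $\langof{\adfa} \not\subseteq \langof{\adfap}$. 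For simulation, both urgencies drop to $1$ so the choices are resolved in program order and the players alternate: Eve picks a transition of $\adfa$, Adam immediately picks a matching transition of $\adfap$, and so on. This is the standard simulation game, so Eve wins iff $\adfa \not\precsim \adfap$.

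The imperfect-information case is the most delicate. The non-terminal $\nonterminal_{\aimp}$ is indexed by the abstraction $\aimp \in \impset$ currently visible to Adam. Its defining equation first has Adam's demonic letter choice, then Eve's angelic abstraction choice $\aimp'$ giving rise to a recursive call $\nonterminal_{\aimp'}$, and finally, concatenated to the right, Eve's concrete transition choice consistent with the letter and $\impof{\astatep} = \aimp'$. Because the recursive non-terminal stands to the left of the trailing transition choice, all urgency-$1$ choices within the unfolded $\nonterminal_{\aimp'}$ are resolved before that trailing transition is resolved; Adam therefore sees only abstractions and letters during his deliberations, never Eve's concrete transitions. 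Requiring via $\objective$ that Eve's trailing transition choices form a valid accepting run of $\adfa$ then encodes exactly the winning condition for Eve in $(\adfa, \imp)$.

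The hard part will be the imperfect-information direction, specifically the argument that positional strategies in $\semof{\nonterminal_{\impof{\astateinit_{\adfa}}}}$ correspond bijectively to observation-based strategies for Eve in $(\adfa, \imp)$. The key observation is that a positional strategy in the urgency arena depends only on the current term, which carries exactly the history of abstractions and letters that is visible in the imperfect information game, while Eve's already-committed transition choices sit in trailing subterms whose resolution does not feed back into Adam's decisions. The other two correspondences are more routine once $\objective$ is fixed, amounting to direct translations between plays in the game arena and runs in the target automata.
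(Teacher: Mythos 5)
Your overall route is the paper's own: direct equivalence chains for inclusion and simulation that exploit the urgency shift (Eve's urgency-$2$ choices and the non-terminal unfoldings resolve before any of Adam's urgency-$1$ choices, so Eve commits to a full accepting run of $\adfa$ up front; with both urgencies equal to $1$ the choices resolve in program order and the players alternate), and, for imperfect information, a strategy transfer along the invariant that the term always has the shape $\nonterminal_{\aimp_n}\appl\atermp_n\appl\ldots\appl\atermp_1$, where the trailing choices $\atermp_i$ record only the letter-and-abstraction history. That is exactly the paper's argument. But your definition of $\objective$ for the first two cases is wrong as stated: you let $\objective$ accept precisely the words where Adam's transitions \emph{form a properly chained run of $\adfap$ that fails to be accepting}. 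The grammar only constrains Adam's choice to transitions of $\transitions_{\adfap}$ carrying Eve's letter $\aletter$; it does not force his transitions to chain. Under your objective, Adam simply plays a sequence that breaks the chain, the resulting word falls outside $\objective$, and Adam wins every play — so Eve never wins, and the direction ``$\langof{\adfa}\not\subseteq\langof{\adfap}$ implies $\winsof{\nonterminal_{\astateinit_{\adfa}}^{\subseteq}}{\objective}$'' fails. The objective must be the \emph{complement}: Eve wins precisely when Adam's sequence is \emph{not} a properly chained accepting run of $\adfap$ from $\astateinit_{\adfap}$, so that chaining violations count in Eve's favor. The same repair is needed for the simulation case.

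Two further slips, both repairable. First, in $(\adfa, \imp)$ it is Adam, not Eve, who is observation-restricted; Eve has full information and her game strategies $\strat : \transitions_{\adfa}^* \times \analph \to \transitions_{\adfa}$ may depend on her own concrete transitions, which are \emph{not} recoverable from the current term (they are still deferred inside the constraint sets). So the correspondence you announce — positional arena strategies matching observation-based strategies for Eve bijectively — is neither the right attribution nor attainable: the game-to-arena transfer needs Eve's intended run as memory (this is what the paper's invariant tracks along a fixed run), after which positional determinacy of reachability games is invoked; the arena-to-game direction works because Adam's behavior in both models depends only on letters and abstractions, so Eve loses nothing by deferring her concrete choices. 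Second, a small but necessary detail for the imperfect-information objective: the left-linear grammar emits the transitions in \emph{reverse} order ($\atermp_n\appl\ldots\appl\atermp_1$), so $\objective$ must check reversed chaining of Eve's run — regular, hence harmless, but your description of $\objective$ omits it.
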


It would strengthen the usefulness of urgency terms if the above reductions gave optimal complexity upper bounds. 
With what we have, this is true only for imperfect information games, which are \exptime-complete \cite{Reif84}. 
The inclusion problem is \pspace-complete, but \Cref{Corollary:DecWinner} only yields \kexptime{3} upper bound.
Simulation is \ptime-complete, but \Cref{Corollary:DecWinner} yields an \exptime\ upper bound. 
The problem with simulation can be fixed with a different encoding that is also natural but does not show the similarity to language inclusion.
There, we use the fact that the objective is fixed and a consequence of \Cref{Theorem:UpperLower} that is similar to \Cref{Corollary:DecWinner}. 
In the following, we close the gap for language inclusion. 

We identify natural subclasses of urgency terms and study their verification problems.
In the term constructed for inclusion, for every urgency there is a single player that makes a choice.
This is not true for simulation nor for imperfect information games. 
We call this fragment the \emph{weak urgency terms}.
%
%
Linear grammars as used in $\nonterminal_{\astateinit_{\adfa}}^{\subseteq}$, 
$\nonterminal_{\astateinit_{\adfa}}^{\precsim}$, 
and $\aterm_{\akripke, \hp}$ further reduce the complexity. 
The definition of weak and linear terms can be found in \Cref{Section:WeakLinearGrammars}.  
\begin{theorem}
    \decwhichover{\winsof{}{}}{\maxurg} for weak, linear terms is in \kexpspace{$(\maxurg - 2)$}.
\end{theorem}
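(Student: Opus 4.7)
The plan is to decide the game by a sequence of on-the-fly subset constructions, one per urgency level, that together fit into $\repexpof{\maxurg - 2}{\mathrm{poly}}$ space. Two independent savings combine: the weak restriction cuts the normal form height from $2\maxurg$ to $\maxurg$ (since each urgency is owned by a single player and thus contributes only one quantifier level), and the linearity of the grammar turns each layer into an automaton rather than a full pushdown system, avoiding a second exponent in the fixed-point computation that underlies \Cref{Lemma:NTNormalization}.

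More concretely, I would first recast a weak, linear term as an alternating automaton-like device with $\maxurg$ strictly alternating quantifier levels, whose innermost urgency-$1$ layer (after taking the product with the objective DFA) is an NFA of polynomial size. Then, working from the innermost layer outward, I would eliminate quantifier alternations one at a time by an on-the-fly subset construction against the objective DFA. Each elimination round handles one further urgency level and incurs a single exponential blowup in the implicitly maintained state space, using the compact representation of $\lreq{\objective}$-classes by functions $\states \to \states$ introduced at the end of \Cref{Section:UpperBound}. After $\maxurg - 2$ such rounds we are left with a single remaining quantifier alternation on a state space of size $N = \repexpof{\maxurg - 2}{\mathrm{poly}(\sizeof{\objective})}$, and the corresponding two-player reachability problem can be solved in time $\mathrm{poly}(N)$, hence in $\mathrm{DSPACE}(N) = \kexpspace{(\maxurg - 2)}$. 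At no point are the intermediate subset-construction layers materialised: the algorithm maintains, per active level, only the current subset in compact form, each within that level's space budget.

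The main obstacle lies in the precise accounting for linearity. While weakness cleanly provides the first saved exponent by halving the normal-form height, the second saving demands an argument that each intermediate subset-construction layer really stays at a single exponential. Concretely, I would need to show that the $\states \to \states$ function representation composes cleanly across urgency layers, so that layer $k$ stays within $\repexpof{k}{\mathrm{poly}(\sizeof{\objective})}$ and admits an on-the-fly subset construction that produces layer $k+1$ in exactly one further exponent. Verifying this compositionality, handling the left/right asymmetry in concatenation highlighted in \labelcref{axiom:dist-left,axiom:dist-right}, and confirming that the base case $\maxurg = 2$ specialises to the textbook \pspace{} algorithm for NFA language inclusion is where the bulk of the technical effort will go.
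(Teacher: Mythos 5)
There is a genuine gap, and it sits exactly where you defer ``the bulk of the technical effort''. Your exponent budget and first saving are right: weakness halves the normal-form height (the paper's weak normal forms $\wnfn{\anurg}$ have one layer per urgency instead of two, cf.\ \Cref{Section:WeakLinearGrammars}), and guessing a single object of size $\repexpof{\maxurg-2}{\mathrm{poly}}$ and solving the residual alternation within that space matches the paper's arithmetic. But your second saving rests on the claim that linearity turns each layer into ``an NFA of polynomial size'' on which iterated subset constructions operate, and that is wrong: a linear grammar is a (one-turn) pushdown, not a finite automaton, and even the urgency-$1$ behaviour of a weak linear grammar is linear context-free. The per-non-terminal sets of normal-form values are only reached by a Kleene iteration whose length is up to $\sizeof{\nonterminals}\cdot\sizeof{\wnfn{\maxurg-1}}$, itself $(\maxurg-1)$-fold exponential. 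Moreover, non-terminals carry maximal urgency and their defining terms mix urgencies, so the levels of the (never materialised) normal form cannot be computed independently bottom-up; answering a membership query at level $k$ spawns queries through the grammar, and without further structure these pending queries form a call stack that blows your per-level space budget. Your stratified on-the-fly subset construction never engages with this recursion, and the concatenation handling you flag as open is precisely the step the framing cannot express.

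The paper's proof supplies exactly the missing mechanisms, and none of them appear in your plan. First, a witness lemma (\Cref{Lemma:WeakLinPortion}) exploits weakness at the outermost level: since the top choice is owned by a single player, $\winsof{\aterm}{\objective}$ is witnessed by a single member $\atermp\in\wnfn{\maxurg-1}$ of the outermost choice, one exponent smaller than the full normal form, which is guessed and stored. Second, the check $\atermp\discleqof{\objective}\specnormof{\aterm}$ is performed recursively without constructing $\specnormof{\aterm}$ (\Cref{Lemma:WeakLinLightRec}); here linearity is used not to make layers finite-state but to make the recursion \emph{tail-recursive}: in every concatenation at least one operand is non-terminal-free, so that operand is verified and its memory released before descending into the other, eliminating the call stack. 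Third, the fixed-point unfolding of non-terminals is tracked by a binary counter in $\bigoof{\log i}$ space, and fourth, Savitch's theorem converts the nondeterministic guessing into deterministic space (squaring, which the bound absorbs). Without the witness lemma, the tail-recursive domination check, and the iteration counter, your proposal reduces to reconstructing the full normalization, i.e.\ the \kexptime{$(\maxurg-1)$} bound of \Cref{Lemma:EffectiveNormWeak} that the theorem is meant to beat, so as it stands the argument does not go through.
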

This yields the desired tight upper bound not only for inclusion checking, but also for the hyper model checking of non-recursive programs \cite{rabe2016}. 


\paragraph{Procedure summaries.} 
Procedure summaries~\cite{SP78,Walukiewicz2001,HMM2016} are a standard technique for the analysis of recursive programs.
We now show how to derive procedure summaries from our axiomatization. 
The benefit is that the axiomatization guides the development of summaries, and thus reduces the moment of ingenuity required to come up with them.  
%
%

Summaries are a closed-form representation of the call-return behavior of a recursive procedure.
For functions, such a closed-form representation would capture the input-output behavior. 
In the context of linear-time model checking, the call-return behavior is the state change that the procedure may induce on the observer automaton. 
%
%
Procedure summaries are typically computed as a fixed point over a suitable domain. 
%
We now rediscover these domains as the normal forms for urgency terms.
Our development covers the game version as well. 

\newcommand{\transitionsint}{\transitions_{\mathit{int}}}
\newcommand{\transitionspush}{\transitions_{\mathit{push}}}
\newcommand{\transitionspop}{\transitions_{\mathit{pop}}}
\newcommand{\pds}{\mathcal{P}}

Pushdown games~\cite{Walukiewicz2001,HMM2016} extend recursive programs by alternation.
We can assume reachability as the winning condition because parity can be reduced to reachability in polynomial time~\cite{HMMZ18}. 
Let $\pds = (\states_{\mathit{E}}, \states_{\mathit{A}}, \analph, \stackalph, \transitions, \finalstates)$ be a pushdown game
with transitions $\transitions = (\transitionsint, \transitionspush, \transitionspop)$. 
For convenience, we assume that final states in $\finalstates$ do not have outgoing transitions, but only loops that empty the stack. 
We solve two verification tasks:
\begin{enumerate*}[label=(\alph*)]
    \item whether Eve can force the play into a final state
    \item while also producing a trace of observations from $\analph$ accepted by a DFA $\adfa$. 
\end{enumerate*}
We model both with the same urgency grammar. 
The alphabet are again state changes $(\astate, \aletter, \astatep)$, this time of the pushdown, and the objective $\objective$ checks whether they form a path. 
We define non-terminals $\nonterminal_{\stsymbol}^{\astate}$ that express the fact that the current procedure is~$\stsymbol$ and the current state is~$\astate$. 
In the definition, $\Pchoice$ is $\Echoice$ if Eve owns the state, and $\Achoice$ if Adam owns the state, $\aletter \in \analph$ is a letter, and $\stsymbol, \stsymbolp \in \stackalph$ are stack symbols:
\begin{align*}
    \nonterminal_{\stsymbol}^{\astate}\ =\ \Pchoicen{1} {} &\setcond{(\astate, \aletter, \astatep)\appl \nonterminal_{\stsymbol}^{\astatep}}{(\astate, \aletter, \astatep) \in \transitionsint}
        & \nonterminal_{\stsymbol}\ &=\ \Echoicen{1} \setcond{\nonterminal_{\stsymbol}^{\astate}}{\astate \in \states}
        \\  \cup\ & \setcond{(\astate, \aletter, \astatep)\appl\nonterminal_{\stsymbolp}^{\astatep}\appl\nonterminal_{\stsymbol}}{{(\astate, \aletter, \astatep, \stsymbolp) \in \transitionspush}}
        \\  \cup\ & \setcond{(\astate, \aletter, \astatep)}{{(\astate, \aletter, \astatep, \stsymbolp) \in \transitionspop}}\ .
\end{align*}

We observe that the normal form $\normof{\nonterminal_{\stsymbol}^{\astate}} = \Echoicen{1}_{i \in I} \Achoicen{1} R_{i}$ 
is an Eve choice over a set~$I$ followed by an Adam choice over sets $R_i \subseteq \states \times \factorize{\analphbot^{*}}{\lreq{\langof{\adfa}}} \times \states$.
By definition, every element of $R_i$ has $\astate$ as the first component.
Each $i \in I$ corresponds to a strategy $\strat$ that Eve can follow from state $\astate$ with the top of stack being $\stsymbol$ until this symbol is popped. 
An element $(\astate, \classof{\aword}, \astatep) \in R_i$ thus corresponds to a play through the pushdown according to $\strat$ that ends in state $\astatep$ after $\stsymbol$ has been popped, and the play produces an observation $\awordp \in \classof{\aword}$.
%
%
If $\langof{\adfa} = \analph^*$, the middle part of our summaries becomes irrelevant.
This is (up to parity tracking) the domain that Walukiewicz uses to solve pushdown games in \exptime~\cite{Walukiewicz2001}, which matches our complexity.
For other DFA, we obtain the summaries for pushdown games with inclusion objectives~\cite{HMM2016}. 
To be precise, we obtain an \exptime\ rather than a \kexptime{2}\ upper bound because we assume the objective to be deterministic.

\section{Related Work}\label{Section:RelatedWork}

Urgency annotations are related to priorities in process algebra \cite{CLN01}. 
The key difference is that priorities preserve the program order while urgencies do not. 
This out-of-order execution brings a new form of unbounded memory to the semantics (the unresolved choices) that our axiomatization explains how to handle.  
Also alternation~\cite{CS76} is not common in process algebra. 
Indeed, we have not found a study of angelic and demonic choice from the perspective of contextual equivalence. 
It is the special case of our work when the urgency is $\maxurg=1$. 

Our axiomatization is related to Salomaa's work on language equivalence~\cite{SalomaaAxioms} and Milner's on bisimilarity~\cite{Milner84}. 
The former reference is particularly close: our Axioms (D1) and (D2) generalize Salomaa's Axioms (A4) and (A5) to higher urgencies. 
Also our context lemma for proving soundness has relatives in process algebra~\cite{PS98}.   
Characteristic objects (here, contexts and terms) have appeared as early as~\cite{MP71}. 
Our contribution is to adapt the general idea to our setting. 



We showed that urgency programs can capture hyper model checking over regular languages.  
%
%
Hyper model checking remains decidable if one language is context-free and the others are regular~\cite{PT18}. 
We can model this fragment with urgency programs as well. 
%
%
%
Urgency programs combine recursion and imperfect information. 
This is interesting as the canonical  pushdown games with imperfect information~\cite{Aminof2013} are known to be undecidable even under strong restrictions~\cite{Bozelli2011}. 
%

The goal of effective~denotational~semantics~\cite{Aehlig07,Terui12} is to  solve verification problems with tailor-made denotational semantics. 
Given a specification~$\objective$, the task is to derive a denotational semantics $\semdom_{\objective}$ such that $\semdom_{\objective}(\aterm)$ answers the  question of whether program $\aterm$ satisfies specification~$\objective$. 
Satisfying a specification coincides with our notion of making an observation, $\winsof{\aterm}{\objective}$. The link is close. 
In Appendix~\ref{Appendix:Denotational}, we show how our axiomatization of the specialized contextual preorder induces a denotational semantics that is effective for finitary programs. 
%
%
This shows that the semantic domain can be obtained in a systematic way.    
The landmark result in this field 
is an effective denotational semantics that captures the higher-order model-checking problem~\cite{SalvatiW15b}. 
It would be interesting to consider it from the axiomatic point of view.
%
%
%

\section{Conclusion and Future Work}\label{Section:Conclusion}
We presented urgency annotations for alternating choices as a new programming construct and studied the standard notions of contextual equivalence for the resulting programs.  
We gave sound and complete axiomatizations and settled the complexity.  
Our findings can be used to obtain new algorithms for verification and synthesis tasks, as we demonstrated on examples.  

The next step is to extend urgency programs to infinite words. 
The challenge is to find the right semantics.
If we define the semantics via infinite unrollings, then it is unclear how to ever switch to choices with lower urgency. 
Instead, it seems appropriate to work with programs that contain an $\omega$-operator and can be rewritten a finite number of times. 
This, however, calls for a different set of algebraic techniques~\cite{Wilke94}. 

\newpage

\bibliographystyle{IEEEtran}

\bibliography{cited}

\begin{thebibliography}{10}
\providecommand{\url}[1]{#1}
\csname url@samestyle\endcsname
\providecommand{\newblock}{\relax}
\providecommand{\bibinfo}[2]{#2}
\providecommand{\BIBentrySTDinterwordspacing}{\spaceskip=0pt\relax}
\providecommand{\BIBentryALTinterwordstretchfactor}{4}
\providecommand{\BIBentryALTinterwordspacing}{\spaceskip=\fontdimen2\font plus
\BIBentryALTinterwordstretchfactor\fontdimen3\font minus
  \fontdimen4\font\relax}
\providecommand{\BIBforeignlanguage}[2]{{%
\expandafter\ifx\csname l@#1\endcsname\relax
\typeout{** WARNING: IEEEtran.bst: No hyphenation pattern has been}%
\typeout{** loaded for the language `#1'. Using the pattern for}%
\typeout{** the default language instead.}%
\else
\language=\csname l@#1\endcsname
\fi
#2}}
\providecommand{\BIBdecl}{\relax}
\BIBdecl

\bibitem{Milner71}
R.~Milner, ``An algebraic definition of simulation between programs,'' in
  \emph{IJCAI}.\hskip 1em plus 0.5em minus 0.4em\relax Kaufmann, 1971, pp.
  481--489.

\bibitem{Pnueli77}
A.~Pnueli, ``The temporal logic of programs,'' in \emph{FOCS}.\hskip 1em plus
  0.5em minus 0.4em\relax {IEEE}, 1977, pp. 46--57.

\bibitem{VW86}
M.~Y. Vardi and P.~Wolper, ``An automata-theoretic approach to automatic
  program verification,'' in \emph{LICS}.\hskip 1em plus 0.5em minus
  0.4em\relax {IEEE}, 1986, pp. 332--344.

\bibitem{EJ91}
E.~A. Emerson and C.~S. Jutla, ``Tree automata, mu-calculus and determinacy,''
  in \emph{FOCS}.\hskip 1em plus 0.5em minus 0.4em\relax {IEEE}, 1991, pp.
  368--377.

\bibitem{Finkel87}
A.~Finkel, ``A generalization of the procedure of {K}arp and {M}iller to well
  structured transition systems,'' in \emph{ICALP}, ser. LNCS, vol. 267.\hskip
  1em plus 0.5em minus 0.4em\relax Springer, 1987, pp. 499--508.

\bibitem{AJ93}
P.~A. Abdulla and B.~Jonsson, ``Verifying programs with unreliable channels,''
  in \emph{LICS}.\hskip 1em plus 0.5em minus 0.4em\relax IEEE, 1993, pp.
  160--170.

\bibitem{FS01}
A.~Finkel and P.~Schnoebelen, ``Well-structured transition systems
  everywhere!'' \emph{TCS}, vol. 256, no. 1-2, pp. 63--92, 2001.

\bibitem{Ong06}
C.~L. Ong, ``On model-checking trees generated by higher-order recursion
  schemes,'' in \emph{LICS}.\hskip 1em plus 0.5em minus 0.4em\relax {IEEE},
  2006, pp. 81--90.

\bibitem{KO09}
N.~Kobayashi and C.~L. Ong, ``A type system equivalent to the modal mu-calculus
  model checking of higher-order recursion schemes,'' in \emph{LICS}.\hskip 1em
  plus 0.5em minus 0.4em\relax {IEEE}, 2009, pp. 179--188.

\bibitem{KSU11}
N.~Kobayashi, R.~Sato, and H.~Unno, ``Predicate abstraction and {CEGAR} for
  higher-order model checking,'' in \emph{PLDI}.\hskip 1em plus 0.5em minus
  0.4em\relax {ACM}, 2011, pp. 222--233.

\bibitem{FG09}
A.~Finkel and J.~Goubault{-}Larrecq, ``Forward analysis for {WSTS}, part {I:}
  completions,'' in \emph{STACS}, ser. LIPIcs, vol.~3.\hskip 1em plus 0.5em
  minus 0.4em\relax Dagstuhl, 2009, pp. 433--444.

\bibitem{IIC13}
J.~Kloos, R.~Majumdar, F.~Niksic, and R.~Piskac, ``Incremental, inductive
  coverability,'' in \emph{CAV}, ser. LNCS, vol. 8044.\hskip 1em plus 0.5em
  minus 0.4em\relax Springer, 2013, pp. 158--173.

\bibitem{CKS81}
A.~K. Chandra, D.~Kozen, and L.~J. Stockmeyer, ``Alternation,'' \emph{{JACM}},
  vol.~28, no.~1, pp. 114--133, 1981.

\bibitem{Reif84}
J.~H. Reif, ``The complexity of two-player games of incomplete information,''
  \emph{JCSS}, vol.~29, no.~2, pp. 274--301, 1984.

\bibitem{Milner77}
R.~Milner, ``Fully abstract models of typed $\lambda$-calculi,'' \emph{TCS},
  vol.~4, no.~1, pp. 1--22, 1977.

\bibitem{JM21}
G.~Jaber and A.~S. Murawski, ``Complete trace models of state and control,'' in
  \emph{ESOP}, ser. LNCS, vol. 12648.\hskip 1em plus 0.5em minus 0.4em\relax
  Springer, 2021, pp. 348--374.

\bibitem{Seth09}
A.~Seth, ``Games on multi-stack pushdown systems,'' in \emph{LFCS}, ser. LNCS,
  vol. 5407.\hskip 1em plus 0.5em minus 0.4em\relax Springer, 2009, pp.
  395--408.

\bibitem{CS10}
M.~R. Clarkson and F.~B. Schneider, ``Hyperproperties,'' \emph{JCS}, vol.~18,
  no.~6, pp. 1157--1210, 2010.

\bibitem{CFKMRS14}
M.~R. Clarkson, B.~Finkbeiner, M.~Koleini, K.~K. Micinski, M.~N. Rabe, and
  C.~S{\'{a}}nchez, ``Temporal logics for hyperproperties,'' in \emph{POST},
  ser. LNCS, vol. 8414.\hskip 1em plus 0.5em minus 0.4em\relax Springer, 2014,
  pp. 265--284.

\bibitem{Martin75}
D.~A. Martin, ``Borel determinacy,'' \emph{AMATH}, vol. 102, no.~2, pp.
  363--371, 1975.

\bibitem{HM85}
M.~Hennessy and R.~Milner, ``Algebraic laws for nondeterminism and
  concurrency,'' \emph{JACM}, vol.~32, no.~1, pp. 137--161, 1985.

\bibitem{BPS2001}
J.~A. Bergstra, A.~Ponse, and S.~A. Smolka, Eds., \emph{Handbook of Process
  Algebra}.\hskip 1em plus 0.5em minus 0.4em\relax Elsevier, 2001.

\bibitem{BirkhoffLatticeTheory}
G.~Birkhoff, \emph{Lattice Theory}.\hskip 1em plus 0.5em minus 0.4em\relax AMS,
  1967.

\bibitem{RS59}
M.~O. Rabin and D.~S. Scott, ``Finite automata and their decision problems,''
  \emph{{IBM} J. Res. Dev.}, vol.~3, no.~2, pp. 114--125, 1959.

\bibitem{QadeerR05}
S.~Qadeer and J.~Rehof, ``Context-bounded model checking of concurrent
  software,'' in \emph{{TACAS}}, ser. LNCS, vol. 3440.\hskip 1em plus 0.5em
  minus 0.4em\relax Springer, 2005, pp. 93--107.

\bibitem{MW20}
R.~Meyer and S.~van~der Wall, ``On the complexity of multi-pushdown games,'' in
  \emph{FSTTCS}, ser. LIPIcs, vol. 182.\hskip 1em plus 0.5em minus 0.4em\relax
  Dagstuhl, 2020, pp. 52:1--52:35.

\bibitem{R00}
G.~Ramalingam, ``Context-sensitive synchronization-sensitive analysis is
  undecidable,'' \emph{{TOPLAS}}, vol.~22, no.~2, pp. 416--430, 2000.

\bibitem{BDR04}
G.~Barthe, P.~R. D'Argenio, and T.~Rezk, ``Secure information flow by
  self-composition,'' in \emph{CSFW}.\hskip 1em plus 0.5em minus 0.4em\relax
  IEEE, 2004, pp. 100--114.

\bibitem{FRS15}
B.~Finkbeiner, M.~N. Rabe, and C.~S{\'{a}}nchez, ``Algorithms for model
  checking {HyperLTL} and {HyperCTL${}^*$},'' in \emph{CAV}, ser. LNCS, vol.
  9206.\hskip 1em plus 0.5em minus 0.4em\relax Springer, 2015, pp. 30--48.

\bibitem{GMO22arxiv}
J.~O. Gutsfeld, M.~M{\"{u}}ller{-}Olm, and C.~Ohrem, ``Deciding asynchronous
  hyperproperties for recursive programs,'' \emph{CoRR}, vol. abs/2201.12859,
  2022.

\bibitem{BZCV23}
A.~Bajwa, M.~Zhang, R.~Chadha, and M.~Viswanathan, ``Stack-aware
  hyperproperties,'' in \emph{{TACAS}}, ser. LNCS, vol. 13993.\hskip 1em plus
  0.5em minus 0.4em\relax Springer, 2023, pp. 308--325.

\bibitem{rabe2016}
M.~N. Rabe, ``A temporal logic approach to information-flow control,'' Ph.D.
  dissertation, Saarland University, 2016.

\bibitem{SP78}
M.~Sharir and A.~Pnueli, \emph{Two approaches to interprocedural data flow
  analysis}.\hskip 1em plus 0.5em minus 0.4em\relax New York, NY: New York
  Univ. Comput. Sci. Dept., 1978.

\bibitem{Walukiewicz2001}
I.~Walukiewicz, ``Pushdown processes: {G}ames and model-checking,'' \emph{IC},
  vol. 164, no.~2, pp. 234--263, 2001.

\bibitem{HMM2016}
L.~Hol{\'{\i}}k, R.~Meyer, and {S. Muskalla}, ``Summaries for context-free
  games,'' in \emph{FSTTCS}, ser. LIPIcs, vol.~65.\hskip 1em plus 0.5em minus
  0.4em\relax Dagstuhl, 2016, pp. 41:1--41:16.

\bibitem{HMMZ18}
M.~Hague, R.~Meyer, S.~Muskalla, and M.~Zimmermann, ``Parity to safety in
  polynomial time for pushdown and collapsible pushdown systems,'' in
  \emph{MFCS}, ser. LIPIcs, vol. 117.\hskip 1em plus 0.5em minus 0.4em\relax
  Dagstuhl, 2018, pp. 57:1--57:15.

\bibitem{CLN01}
R.~Cleaveland, G.~L{\"{u}}ttgen, and V.~Natarajan, ``Priority in process
  algebra,'' in \emph{Handbook of Process Algebra}.\hskip 1em plus 0.5em minus
  0.4em\relax Elsevier, 2001, pp. 711--765.

\bibitem{CS76}
A.~K. Chandra and L.~J. Stockmeyer, ``Alternation,'' in \emph{FOCS}.\hskip 1em
  plus 0.5em minus 0.4em\relax IEEE, 1976, pp. 98--108.

\bibitem{SalomaaAxioms}
A.~Salomaa, ``Two complete axiom systems for the algebra of regular events,''
  \emph{{JACM}}, vol.~13, no.~1, pp. 158--169, 1966.

\bibitem{Milner84}
R.~Milner, ``A complete inference system for a class of regular behaviours,''
  \emph{JCSS}, vol.~28, no.~3, pp. 439--466, 1984.

\bibitem{PS98}
A.~Pitts and I.~Stark, ``Operational reasoning for functions with local
  state,'' in \emph{Higher Order Operational Techniques in Semantics}.\hskip
  1em plus 0.5em minus 0.4em\relax CUP, 1998, pp. 227--273.

\bibitem{MP71}
R.~McNaughton and S.~A. Papert, \emph{Counter-Free Automata}.\hskip 1em plus
  0.5em minus 0.4em\relax MIT Press, 1971.

\bibitem{PT18}
A.~Pommellet and T.~Touili, ``Model-checking {HyperLTL} for pushdown systems,''
  in \emph{SPIN}, ser. LNCS, vol. 10869.\hskip 1em plus 0.5em minus 0.4em\relax
  Springer, 2018, pp. 133--152.

\bibitem{Aminof2013}
B.~Aminof, A.~Legay, A.~Murano, O.~Serre, and M.~Y. Vardi, ``Pushdown module
  checking with imperfect information,'' \emph{IC}, vol. 223, pp. 1--17, 2013.

\bibitem{Bozelli2011}
L.~Bozzelli, ``New results on pushdown module checking with imperfect
  information,'' in \emph{GandALF}, ser. EPTCS, vol.~54, 2011, pp. 162--177.

\bibitem{Aehlig07}
K.~Aehlig, ``A finite semantics of simply-typed lambda terms for infinite runs
  of automata,'' \emph{LMCS}, vol.~3, no.~3, 2007.

\bibitem{Terui12}
K.~Terui, ``Semantic evaluation, intersection types and complexity of simply
  typed lambda calculus,'' in \emph{RTA}, ser. LIPIcs, vol.~15.\hskip 1em plus
  0.5em minus 0.4em\relax Dagstuhl, 2012, pp. 323--338.

\bibitem{SalvatiW15b}
S.~Salvati and I.~Walukiewicz, ``A model for behavioural properties of
  higher-order programs,'' in \emph{CSL}, ser. LIPIcs, vol.~41.\hskip 1em plus
  0.5em minus 0.4em\relax Dagstuhl, 2015, pp. 229--243.

\bibitem{Wilke94}
T.~Wilke, ``An algebraic theory for regular languages of finite and infinite
  words,'' \emph{Algebra and Computation}, vol.~3, no.~4, pp. 447--489, 1993.

\bibitem{goldschlager1977monotone}
L.~M. Goldschlager, ``The monotone and planar circuit value problems are log
  space complete for {P},'' \emph{ACM SIGACT News}, vol.~9, no.~2, pp. 25--29,
  1977.

\bibitem{openssl}
\BIBentryALTinterwordspacing
O.~M. Committee. Openssl. [Online]. Available: \url{openssl.org}
\BIBentrySTDinterwordspacing

\bibitem{Kocher96}
P.~C. Kocher, ``Timing attacks on implementations of {D}iffie-{H}ellman, {RSA},
  {DSS}, and other systems,'' in \emph{CRYPTO}, ser. LNCS, vol. 1109.\hskip 1em
  plus 0.5em minus 0.4em\relax Springer, 1996, pp. 104--113.

\bibitem{BrumleyB03}
D.~Brumley and D.~Boneh, ``Remote timing attacks are practical,'' in
  \emph{USENIX Security}.\hskip 1em plus 0.5em minus 0.4em\relax {USENIX}
  Association, 2003.

\bibitem{SAVITCH}
W.~J. Savitch, ``Relationships between nondeterministic and deterministic tape
  complexities,'' \emph{Journal of Computer and System Sciences}, vol.~4,
  no.~2, pp. 177--192, 1970.

\end{thebibliography}

\cleardoublepage{}

\appendix
\section{Handling Infinitary Syntax}\label{Appendix:Infinitary}

For the development of the section, 
we call a term an $\maxurg$-term 
when no choice subterm has urgency higher than $\maxurg$
and the expansion of each non-terminal is an $\maxurg$-term.
Infinitary syntax requires care 
to make sure that set theoretic concepts remain sound.
The problem lies on the unbounded nature of the choice operator.
If we were to allow choices to range over \emph{arbitrary sets} of terms, 
the class of all terms would no longer form a set.
To see this, suppose that the class of terms with unbounded choice were a set $\terms$.
Then, we could build the term ${\bigEchoiceOf{\anurg}{\terms}\in\terms}$, 
which contradicts the Axiom of Regularity.
But for our development in the paper, 
it is a convenience to have $\terms$ as a set 
and not be distracted by subtle differences of classes and sets. 
For this reason, we impose a restriction on the terms, that does not hinder our developments.
It is defined over the structural depth of a term, which is
\begin{align*}
    \dpthof{\nonterminal}&=\dpthof{\aletter}=\dpthof{\terr}
    =\dpthof{\tskip}=0\\[0.5em]
    \dpthof{\aterm\appl\atermp}&=\max
    \set{\dpthof{\aterm}+1, \dpthof{\atermp}+1}\\
    \dpth(\bigPchoiceOf{\anurg}\atermset)&=
    \sup\set{\dpthof{\aterm'}+1\mid \aterm'\in\atermset}\,.
\end{align*}
The notion of structural depth lifts naturally to defining assignments~$(\nonterminals, \eqmap)$.
We let $\dpthof{\eqmap}$ be the smallest limit ordinal 
strictly greater than $\dpthof{\eqmapof{\nonterminal}}$ for all 
$\nonterminal\in\nonterminals$.
So, whenever the paper mentions the set of all terms $\terms$, we refer, in fact, to restriction 
$\terms_{\eqmap}=\setcond{\aterm}{\dpthof{\aterm}<\dpthof{\eqmap}}$.
The set of permitted contexts $\contexts_{\eqmap}$ is defined in the same way,
as contexts are defined as terms built from 
$(\nonterminals\discup\set{\contextvar}, \eqmap)$.
We let $\contexts_{\eqmap}=\set{\acontext{\contextvar}\mid 
\dpthof{\acontext{\contextvar}}<\dpthof{\eqmap}}$, 
and drop the subscript from this set as well.  
This restriction indeed does not hinder our development.
All definitions we apply to terms can be expressed 
as context free replacements and 
the set $\terms_{\eqmap}$ is closed under such replacements.
The only exceptions build \Cref{axiom:lattice-mono,axiom:lfp}, for which we present side conditions so to stay in the restricted set $\terms_{\eqmap}$.

\begin{lemma}\label{Lemma:SensibleOrdinal}
    Let $(\nonterminals, \eqmap)$ be a defining assignment to $\maxurg$-terms.
    For all $\aterm\in\terms_{\eqmap}$ 
    and all 
    $\acontext{\contextvar}\in\contexts_{\eqmap}$,
    $\acontext{\aterm}\in\terms_{\eqmap}$.
\end{lemma}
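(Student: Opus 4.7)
}

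The plan is to proceed by structural induction on the context $\acontext{\contextvar}$, exploiting two features of $\dpthof{\eqmap}$ that follow from it being a limit ordinal: it is closed under the successor operation and closed under finite maxima of ordinals strictly below it. The crucial structural fact I will lean on is that by definition a context contains at most one occurrence of $\contextvar$, so substitution only modifies one branch of the syntactic tree of $\acontext{\contextvar}$ and the induction stays clean without any bound of the form $\dpthof{\acontext{\aterm}} \leq \dpthof{\acontext{\contextvar}} + \dpthof{\aterm}$, which would force us into awkward ordinal arithmetic.

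For the base cases I distinguish whether $\contextvar$ appears. If $\acontext{\contextvar} = \contextvar$, then $\acontext{\aterm} = \aterm$, and the claim is just the assumption $\aterm \in \terms_{\eqmap}$. If $\acontext{\contextvar}$ is a terminal, a non-terminal from $\nonterminals$, $\tskip$, or $\terr$, then $\contextvar$ does not occur in $\acontext{\contextvar}$, so $\acontext{\aterm} = \acontext{\contextvar}$ and $\dpthof{\acontext{\aterm}} = 0 < \dpthof{\eqmap}$.

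For concatenation, write $\acontext{\contextvar} = \atermp_{1} \appl \atermp_{2}$. By the single-occurrence property, $\contextvar$ appears in at most one of the $\atermp_{i}$; assume $\atermp_{1}$ contains $\contextvar$ (the other cases are symmetric or treated as in the base). View $\atermp_{1}$ itself as a context. By the induction hypothesis, $\dpthof{\atermp_{1}[\contextvar := \aterm]} < \dpthof{\eqmap}$, and $\dpthof{\atermp_{2}} \leq \dpthof{\acontext{\contextvar}} < \dpthof{\eqmap}$. Since $\dpthof{\eqmap}$ is a limit ordinal, it is closed under successor, so both $\dpthof{\atermp_{1}[\contextvar := \aterm]} + 1$ and $\dpthof{\atermp_{2}} + 1$ are still strictly below $\dpthof{\eqmap}$. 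Their maximum is exactly $\dpthof{\acontext{\aterm}}$, and maxima of two ordinals below a limit stay below it.

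For a choice context $\acontext{\contextvar} = \bigPchoiceOf{\anurg}{}{(\atermset \cup \set{\acontext{\contextvar}'})}$, let $\atermset$ collect the branches not containing $\contextvar$ and $\acontext{\contextvar}'$ be the unique branch that does. After substitution, $\acontext{\aterm} = \bigPchoiceOf{\anurg}{}{(\atermset \cup \set{\acontext{\aterm}'})}$, so
\[
\dpthof{\acontext{\aterm}} \;=\; \max\bigl(\sup\set{\dpthof{\atermpp}+1 \mid \atermpp \in \atermset},\ \dpthof{\acontext{\aterm}'}+1\bigr).
\]
The first argument of the max is bounded above by $\dpthof{\acontext{\contextvar}} < \dpthof{\eqmap}$, because that same supremum is part of the defining supremum of $\dpthof{\acontext{\contextvar}}$. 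The second argument is $< \dpthof{\eqmap}$ by the induction hypothesis for $\acontext{\contextvar}'$ together with closure under successor. Hence $\dpthof{\acontext{\aterm}} < \dpthof{\eqmap}$. The main technical obstacle lies here: a choice may range over an infinite set, and naively bounding $\dpthof{\acontext{\aterm}}$ via ordinal addition can escape $\dpthof{\eqmap}$ when the latter is not additively indecomposable. The single-occurrence restriction circumvents this by isolating exactly one branch whose depth can grow under substitution, while all other branches retain a supremum already recorded inside $\dpthof{\acontext{\contextvar}}$.
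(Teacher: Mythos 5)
Your proof is correct and follows essentially the same route as the paper's: a structural induction on the context exploiting the single occurrence of $\contextvar$, with the same case split (hole, concatenation, choice) and the same limit-ordinal closure arguments, including bounding the supremum over the non-hole branches of a choice by $\dpthof{\acontext{\contextvar}}$ itself. The paper additionally folds the ``no hole occurs'' situation into a single base case $\acontext{\contextvar}=\aterm\in\terms_{\eqmap}$, which is the trivial observation you dispatch in passing.
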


\begin{proof}
    Proof is by an induction on the structure of 
    $\acontext{\contextvar}$.
    Let $\aterm\in\terms_{\eqmap}$ and let 
    $\anordinal=\dpthof{\eqmap}$.
    Note that $\dpthof{\aterm}<\anordinal$.
    The case $\acontext{\contextvar}=\aterm\in\terms_{\eqmap}$ is
    clear, since $\acontext{\contextvar}=\acontext{\aterm}$.
    For the case $\acontext{\contextvar}=\contextvar$,
    $\dpthof{\acontext{\aterm}}=\dpthof{\aterm}<\anordinal$.
    
    For the concatenative inductive case, let
    $\acontext{\contextvar}=\atermp\appl\acontextd{\contextvar}$.
    The case $\acontext{\contextvar}=
    \acontextd{\contextvar}\appl\atermp$ is analogous.
    Since $\dpthof{\acontext{\contextvar}}<\anordinal$, 
    we also have $\dpthof{\acontextd{\contextvar}}<\anordinal$
    and $\dpthof{\atermp}<\anordinal$.
    Applying the induction hypothesis yields
    $\dpthof{\acontextd{\aterm}}<\anordinal$.
    And since $\anordinal$ is a limit ordinal, also 
    $\dpthof{\atermp}+1<\anordinal$ and 
    ${\dpthof{\acontext{\aterm}}+1<\anordinal}$.
    %
    
    For the choice inductive case, let 
    $\acontext{\contextvar}=\bigPchoiceOf{\anurg}
    \set{\acontextd{\contextvar}}\cup\atermsetp$.
    We have $\dpthof{\acontext{\aterm}}=
    \sup(\set{\dpthof{\acontextd{\aterm}}+1}\cup
    \set{\dpthof{\atermp}\mid \atermp\in\atermsetp})$.
    This is equal to the maximum of 
    $\dpthof{\acontextd{\aterm}}+1$ and 
    $\sup\set{\dpthof{\atermpp}+1\mid \atermpp\in\atermsetp}$.
    We know that $\dpthof{\acontext{\contextvar}}<\anordinal$.
    So, $\sup\set{\dpthof{\atermpp}+1\mid \atermpp\in\atermsetp}
    \leq\dpthof{\acontext{\contextvar}}<\anordinal$.
    Per definition, we also have 
    $\dpthof{\acontextd{\contextvar}}<\anordinal$.
    We apply the induction hypothesis to get 
    $\dpthof{\acontextd{\aterm}}<\anordinal$ 
    and due to $\anordinal$ being a limit ordinal, $\dpthof{\acontext{\aterm}}<\anordinal$.
\end{proof}

An important implication of \Cref{Lemma:SensibleOrdinal} 
is that the successor relation is well defined for $\terms_{\eqmap}$.
All rewriting has the form
$\acontext{\bigPchoiceOf{\anurg}\atermset}\gamemove\acontext{\aterm}$ 
with $\aterm\in\atermset$ 
or $\acontext{\nonterminal}\gamemove\acontext{\eqmapof{\nonterminal}}$
for some context $\acontext{\contextvar}$ 
with $\dpthof{\acontext{\contextvar}}<\dpthof{\eqmap}$.
Then, the successors have 
$\dpthof{\acontext{\aterm}} < \dpthof{\acontext{\bigPchoiceOf{\anurg}\atermset}} 
    < \dpthof{\eqmap}$ 
for all $\aterm\in\atermset$,
and $\dpthof{\eqmapof{\nonterminal}}<\dpthof{\eqmap}$
for all $\nonterminal\in\nonterminals$.

\subsection{Axiom Side Conditions on Depth}
The restriction to $\terms_{\eqmap}$ 
also restricts the axiom system to terms only in $\terms_{\eqmap}$
This affects \Cref{axiom:lattice-mono,axiom:eqmap}.

For \Cref{axiom:lattice-mono}, 
note that $\bigPchoiceOf{\anurg}\terms_{\eqmap}\not\in\terms_{\eqmap}$.
\[
    {\axdefleft{\labelcref{axiom:lattice-mono}}{\forall i\in I.\;\aterm_{i}\axleq\atermp_{i}}
		{\bigPchoiceOf{\anurg}{}{\set{\aterm_{i}\mid i\in I}}\axleq 
        \bigPchoiceOf{\anurg}{}{\setcond{\atermp_{i}}{i\in I}}}}
\]
So to keep terms in $\terms_{\eqmap}$, 
we impose the following side condition to \Cref{axiom:lattice-mono}:
$\sup\set{\dpthof{\aterm_i}+1\mid i \in I}<\dpthof{\eqmap}$ and 
$\sup\set{\dpthof{\atermp_i}+1\mid i \in I}<\dpthof{\eqmap}$.
This results in 
$\bigPchoiceOf{\anurg}\set{\aterm_i\mid i \in I}\in\terms_{\eqmap}$
and $\bigPchoiceOf{\anurg}\set{\atermp_i\mid i\in I}\in\terms_{\eqmap}$.

For \Cref{axiom:eqmap}, note that $\terms_{\eqmap}$ is closed under finite substitutions, 
but not under infinite substitutions.
However, $\nonterminals$ is infinite, 
so for arbitrary $\aterm_{\nonterminals}$, 
it is not guaranteed that 
$\eqmapof{\nonterminal}\replace{\nonterminals}{\atermn{\nonterminals}}\in\terms_{\eqmap}$.
\[
    \axdefleft{\labelcref{axiom:lfp}}
        {\forall \nonterminal \in \nonterminals .\; 
            \eqmapof{\nonterminal}
            \replace{\nonterminals}{\atermn{\nonterminals}}
            \axleq \atermn{\nonterminal}}
        {\nonterminalp\axleq\atermn{\nonterminalp}}
\]
We need to ensure that the substituted term belongs to $\terms_{\eqmap}$. 
Thus, we require $\dpthof{\eqmapof{\nonterminal}\replace{\nonterminals}{\atermn{\nonterminals}}}
    <\dpthof{\eqmap}$
for all $\nonterminal\in\nonterminals$. 
In fact, the requirement is already implicitly stated by the axiom.
The precondition requires 
$\eqmapof{\nonterminal}\replace{\nonterminals}{\atermn{\nonterminals}} 
    \axleq \atermn{\nonterminalp}$
and ${\axleq} \subseteq \terms_{\eqmap}\times\terms_{\eqmap}$.
So, $\dpthof{\eqmapof{\nonterminal}\replace{\nonterminals}{\atermn{\nonterminals}}}
    <\dpthof{\eqmap}$
is already required by the axiom implicitly.

\section{Strategy Tree Bounds}\label{Appendix:TreeBounds}
Let $G=(V, v, \own, E)$ be a game arena with reachability objective $O \subset V$
and $\sigma:V\to V$ be a winning strategy for Eve 
from starting position $v \in V$.
Consider the subgraph $T$ of $(V, E)$ reachable from $v$,
where for all $v \in V$ $\own$ed by Eve
only the successor $\sigma(v)$ is part of $T$.
Since $\sigma$ is a winning strategy, 
all paths from $v$ in $T$ must reach the objective in finitely many steps,
i.e.\ $T$ is a tree and every branch is finite.
We prove in a second, 
that this is sufficient to obtain an ordinal $\anordinal$ 
to bound the depth of $T$.
If Eve has a strategy tree with depth $\anordinal$,
we say that Eve wins in $\anordinal$ turns.

The existence of uniform positional strategies
results in an important property.
If Eve wins from $v\in V$ in $\anordinal$ turns,
then there must be a $w\in E(v)$ 
from which Eve wins in $\anordinalpp<\anordinalp$ turns, 
if $\ownof{v}=Eve$.
If $\ownof{v}=Adam$, Eve wins from all $w\in E(v)$ 
in $\anordinal_{w}<\anordinal$ turns.

It remains to prove that the absence of an infinite branch is sufficient to obtain $\anordinal$.
Let $T=(V, E)$ be a directed tree and $\anordinal_{V}$ be the smallest ordinal with $\sizeof{V} \leq \sizeof{\anordinal_{V}}$.
\begin{definition}
    A function $\dpth : V \to \anordinal_{V}$ is called depth assignment 
    when $\dpthof{v}<\dpthof{w}$ for all $(v,w) \in E$.
\end{definition}
Note that $E(v)=\emptyset$ implies $\dpthof{v}=\sup(\emptyset)=0$.

\begin{lemma}\label{Lemma:Depthable}
    If a directed tree $T=(V, E)$ has no infinite path then
    $T$ has a depth assignment $\dpth : V \to \anordinal_{V}$.
\end{lemma}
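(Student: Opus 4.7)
The plan is to define $\dpth$ by well-founded recursion on the tree and then verify that the values really do lie below $\anordinal_V$. The crucial ingredient is that the absence of infinite paths upgrades to well-foundedness of the edge relation: every non-empty $S \subseteq V$ contains a node with no $E$-successor inside $S$. Indeed, otherwise one could, by dependent choice, produce an infinite descending branch inside $S$, contradicting the hypothesis. This legitimizes recursion along $E$.

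With well-foundedness in hand, I would define
\[
  \dpth(v) \;=\; \sup\{\, \dpth(w) + 1 \,:\, w \in E(v)\,\}
\]
by well-founded recursion, using the convention $\sup\emptyset = 0$ for leaves. The required strict monotonicity along edges is then immediate: if $(v,w) \in E$, then $\dpth(w) < \dpth(w)+1 \leq \dpth(v)$. The only remaining obligation is that $\dpth$ takes values inside $\anordinal_V$. I would show this by well-founded induction on $v$: assuming $\dpth(w) < \anordinal_V$ for every $w \in E(v)$, the defining supremum is taken over a set indexed by $E(v) \subseteq V$, hence of cardinality at most $|V| \leq |\anordinal_V|$. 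Since $\anordinal_V$ is chosen as the least ordinal of its cardinality and therefore is an initial ordinal, a supremum of at most that many ordinals each below $\anordinal_V$ remains below $\anordinal_V$, so $\dpth(v) < \anordinal_V$ as required.

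The main obstacle is this last cardinality bookkeeping: one needs to confirm that the recursive construction does not accidentally climb up to, or past, $\anordinal_V$. The argument is essentially a cofinality computation and is the only place where the specific choice of $\anordinal_V$ (rather than an arbitrary ordinal) is actually used; the earlier steps only need well-foundedness of $E$, not any quantitative bound on branching. Everything else — the existence of the recursion, the monotonicity along edges, and the handling of leaves — is routine well-founded recursion.
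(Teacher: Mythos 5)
Your route is genuinely different from the paper's: the paper argues by contraposition (if $T$ has no depth assignment, then some child subtree also lacks one, since otherwise the assignments on the pairwise disjoint child subtrees could be glued via $\dpth(v)=\sup\set{\dpth(w)+1\mid w\in E(v)}$, and dependent choice then produces an infinite path), whereas you construct the rank function directly by well-founded recursion. Your first two steps are sound: absence of infinite paths does give well-foundedness of the child relation (again via dependent choice), which legitimizes the recursion, and strict monotonicity along edges is immediate. Your reading $\dpth(w)<\dpth(v)$ for $(v,w)\in E$ also matches the paper's intent, as witnessed by its note that leaves receive depth $0$ and by its own gluing step, even though the displayed definition has the inequality reversed.

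The genuine gap is the final cardinality step. That $\anordinal_V$ is an initial ordinal bounds its \emph{cardinality}, not its \emph{cofinality}: a supremum of $\sizeof{\anordinal_V}$-many ordinals below $\anordinal_V$ can equal $\anordinal_V$. Concretely, for $\sizeof{V}=\aleph_0$ we have $\anordinal_V=\omega$ and $\sup_{n<\omega}n=\omega$; in the tree whose root has, for each $n$, a child heading a chain of length $n$, there is no infinite path, yet your recursion assigns the root depth $\omega\not<\anordinal_V$. (A further small slip: the supremum ranges over the values $\dpth(w)+1$, which are only $\leq\anordinal_V$ under your induction hypothesis, so even regularity would not close the argument as written.) This example in fact shows the lemma fails for infinite $V$ if the codomain $\anordinal_V$ is read strictly — an imprecision the paper's own proof shares, since its gluing step asserts the codomain without verification. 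The correct repair is to bound by the successor cardinal $\sizeof{V}^{+}$: show by well-founded induction that every ordinal $\beta<\dpth(v)$ is attained as $\dpth(u)$ for some $u$ in the subtree of $v$, so $\dpth(v)$ injects into that subtree and $\dpth(v)<\sizeof{V}^{+}$ (for finite $V$ this even recovers $\dpth(v)<\sizeof{V}=\anordinal_V$). Since the only downstream use of the lemma is the existence of \emph{some} ordinal bounding the depth of winning-strategy trees, the weakened codomain suffices for the paper; but the cofinality computation you yourself single out as the one place where the specific choice of $\anordinal_V$ matters is precisely where your proof, as written, fails.
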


\begin{proof}
    We prove the contraposition.
    Let $T=(V, E)$ be a directed tree with root $r\in V$.
    We use $T_{v}=(V_{v}, E_{v})$ for the subtree rooted in $v\in V$.
    The key insight is that for all $v\in V$, 
    where $T_{v}$ has no depth assignment,
    there must be $w\in E_{v}(v)$ 
    where also $T_{w}$ has no depth assignment.
    To see the validity of this statement, 
    suppose the existence of $v\in V$ where $T_{v}$ has no depth assignments,
    while all $w\in E(v)$, $T_{w}$ have depth assignments $\dpth_{w}$.
    Note that all $V_{w}$ are disjoint. 
    Then $\dpth: V_{v}\to \anordinal_{V_{v}}$ is a depth assignment, 
    where $\dpth(u)=\dpth_{w}(u)$ if $u\in V_{w}$,
    and $\dpth(v)=\sup\set{\dpth(w)+1\mid w\in E_{v}(v)}$.
    The depth property is satisfied 
    for all $u \in V_{w}$ and also for $v\in V_{v}$.
    So $\dpth:V_{v}\to\anordinal_{|V_{v}|}$ is a depth assignment,
    which contradicts the assumption.

    Let $T=(V, E)$ have no depth assignment.
    We inductively construct an infinite sequence of nodes $(v_0, v_1, \ldots)$ 
    with $v_i,v_{i+1}\in E$ 
    and so that $T_{v_i}$ has no depth assignment.
    The root is $v_0=r$ 
    and to extend $(v_0, \ldots, v_n)$,
    where $T_{v_n}$ has no depth assignment,
    we choose any $v_{n+1}$ such that $T_{v_{n+1}}$ has no depth assignment either,
    which we have shown to exist.
\end{proof}

\section{Soundness: Proof Technique}\label{Appendix:ProofTechnique}

\begin{proof}[Proof of \Cref{Lemma:ContextLemma}]
    Assume Eve wins $\objective$ from $\acontext{\aterm}$. 
    Then she does so in at most $\anordinalp$-many moves, where $\anordinalp$ is an ordinal that is guaranteed to exist by results in \Cref{Appendix:TreeBounds}. 
    To be clear, even for transfinite $\anordinalp$, Eve wins each play after a finite number of moves.
    The ordinal $\anordinalp$ limits the size of the game arena reachable from $\acontext{\aterm}$ when she plays according to her strategy. 
    We show $\winsof{\acontext{\atermp}}{\objective}$
    by transfinite induction on $\anordinalp$.
    The base case is simple, yet instructive. 
    If $\anordinalp=0$ then $\acontext{\aterm}\in\objective$, meaning $\acontext{\aterm}\in\wordterms$.
    Then $\aterm$ is immediate for context $\acontext{\contextvar}$, and  
    %
    the premise yields 
    $\winsof{\acontext{\atermp}}{\objective}$.
    
    In the inductive case, it will make no difference whether~$\anordinalp$
    is a limit ordinal or a successor ordinal, 
    so we will not distinguish the two.
    We have $\winsof{\acontext{\aterm}}{\objective}$. 
    If  $\aterm$ or $\atermp$ is immediate for $\acontext{\contextvar}$,
    then the premise of the lemma already tells us~$\winsof{\acontext{\atermp}}{\objective}$.
    Therefore, assume both terms are paused for $\acontext{\contextvar}$.
    Intuitively, 
    we will see that Eve can copy her strategy 
    from $\acontext{\aterm}$ 
    to~$\acontext{\atermp}$ (until the inserted term becomes immediate). 
    %
    Since both terms are paused for $\acontext{\contextvar}$,
    we can apply \Cref{Lemma:ContextLeads}.
    It shows that, after insertion, the owner is the same,  
    $\ownof{\acontext{\aterm}}=\ownof{\acontext{\atermp}}$, 
    and also gives a set of contexts $\acontextset\subseteq\contexts$ capturing the successors.
    Let $\ownof{\acontext{\aterm}}=\ownof{\acontext{\atermp}}=\eve$.
    Then, there must be a context 
    $\acontextd{\contextvar}\in\acontextset$ so that Eve wins 
    $\acontextd{\aterm}$ in $\anordinalp'<\anordinalp$ moves.
    By the induction hypothesis, $\winsof{\acontextd{\atermp}}{\objective}$.
    Moreover, Eve can play 
    $\acontext{\atermp}\gamemove\acontextd{\atermp}$ and win.
    If $\ownof{\acontext{\aterm}}=
    \ownof{\acontext{\atermp}}=\adam$, 
    then for all
    $\acontextd{\contextvar}\in\acontextset$, 
    Eve must win $\acontextd{\aterm}$ in 
    $\anordinalp'<\anordinalp$ turns.
    By the induction hypothesis, we get 
    $\winsof{\acontextd{\atermp}}{\objective}$ for all 
    $\acontextd{\contextvar}\in\acontextset$.
    This is exactly $\successorsof{\acontext{\atermp}}$,
    so we have $\winsof{\acontext{\atermp}}{\objective}$ 
    as well.
    \end{proof}

\section{Soundness: Missing Axiom Proofs}\label{Appendix:MissingAxioms}

\begin{proof}[Proofs of the remaining axioms]
    In all of the soundness proofs, we conclude with 
    \Cref{Lemma:ContextLemma} to generalize from contexts 
    for which at least one side of the conclusion is immediate, 
    to all contexts.
    To avoid repetition, this conclusion is omitted.

    Most axioms have a direct proof.
    Only \Cref{axiom:lattice-mono,axiom:lfp} require a simultaneous induction on the proof structure.
    The induction is kept implicit.

    \textbf{\Cref{axiom:lfp}:} 
    The proof makes heavy use of substitution.
    For the sake of readability, we use $\replaceLFPd{\atermp}$ 
    for $\atermp\replaceLFP$.
    We also extend this notation to contexts and write $\acontextdLFP{\contextvar}$ 
    for the context obtained from $\acontextd{\contextvar}\in\contexts$ by replacing all
    non-terminals $\nonterminal\in\nonterminals$ (different from $\contextvar$) by 
    $\aterm_{\nonterminal}$.

    Assuming $\replaceLFPd{\eqmapof{\nonterminal}}\axleq\atermn{\nonterminal}$ holds for all
    $\nonterminal\in\nonterminals$, the axiom yields $\nonterminalp\axleq\aterm_{\nonterminalp}$. 
    We proceed by an (outer) induction on the ordinal height of proof trees. 
    The induction hypothesis yields   
    $\replaceLFPd{\eqmapof{\nonterminal}}\congleq\atermn{\nonterminal}$ for all
    $\nonterminal\in\nonterminals$.
    We have to show $\nonterminalp\congleq\aterm_{\nonterminalp}$. 
    Consider an objective~$\objective$ and a context~$\acontextd{\contextvar}$ for which $\aterm_{\nonterminalp}$ or~$\nonterminalp$ is immediate.  
    We prove that $\winsof{\acontextd{\nonterminalp}}{\objective}$ implies~$\winsof{\acontextd{\aterm_{\nonterminalp}}}{\objective}$ with a detour. 
    If Eve wins $\objective$ from $\nonterminalp$, she does so in  $\anordinalp$-many moves, with $\beta$ an ordinal, \Cref{Appendix:TreeBounds}.
    We apply transfinite induction on $\beta$ to establish the following more general statement. 
    For all contexts $\acontext{\contextvar}$ and all terms $\aterm$, if Eve wins $\objective$ from $\acontext{\aterm}$ in~$\beta$ moves, then she wins $\objective$ from $\acontext{\replaceLFPd{\aterm}}$. 
    Letting $\acontext{\contextvar}=\acontextd{\contextvar}$ and $\aterm=\nonterminalp$ gives us the desired conclusion.

    %
    The base case $\anordinalp=0$ is trivial: the term $\acontext{\aterm}$ must be a word term and hence $\acontext{\aterm}=\acontext{\replaceLFPd{\aterm}}$.
    Before moving on with the inductive step, we make a remark.
    Since the urgency of non-terminals is maximal, we have $\urgencyof{\nonterminal}\geq\urgencyof{\aterm}$ for all non-terminals $\nonterminal$ and all terms $\aterm$.
    Moreover, the urgency of a term is monotonic in the urgency 
    of its subterms, so in particular $\urgencyof{\atermp}\geq\urgencyof{\replaceLFPd{\atermp}}$ holds.

    In the inductive step, let $\aterm$ be a term and $\acontext{\contextvar}$ be a context so that Eve wins~$\objective$ from $\acontext{\aterm}$ in~$\beta$ moves. 
    We first consider the case that $\aterm$ is 
    paused for $\acontext{\contextvar}$.
    Since
    $\urgencyof{\aterm}\geq\urgencyof{\replaceLFPd{\aterm}}$,
    \Cref{Lemma:TermLeads} tells us that $\replaceLFPd{\aterm}$ is also paused for $\acontext{\contextvar}$.
    Similar to the proof of \Cref{Lemma:ContextLemma}, we can use the induction 
    hypothesis to argue that Eve wins~$\objective$ from $\acontext{\replaceLFPd{\aterm}}$. 
    It is worth noting that in the paused case the move will change the surrounding context, which is why we strengthened the inductive statement to universally quantify over contexts. 

    Assume $\aterm$ is immediate for $\acontext{\contextvar}$. 
    Let $\atermp=\leadingof{\aterm}$ be the leading subterm and recall that $\aterm = \enclosingctx{\aterm}{\atermp}$, the term can be written as the unique context enclosing the leading subterm with the leading subterm inserted. 
    The substitution distributes to all subterms and we also have $\replaceLFPd{\aterm}=\enclosingctxLFP{\aterm}{\replaceLFPd{\atermp}}$. 
    %
    %
    %
    Showing that Eve wins from 
    $\acontext{\replaceLFPd{\aterm}}$ thus means to show that she wins from 
    $\acontext{\enclosingctxLFP{\aterm}{\replaceLFPd{\atermp}}}$.
    Since $\atermp$ is leading in $\aterm$, it is a choice or a non-terminal.
    We begin with the choice, $\atermp=\bigPchoiceOf{\anurg}\atermsetp$. 
    We argue that $\replaceLFPd{\atermp}$ must be immediate
    for $\acontext{\enclosingctxLFP{\aterm}{\contextvar}}$.
    To see this, note that  $\aterm$ is immediate for $\acontext{\contextvar}$ and so $\atermp$ is 
    immediate for $\acontext{\enclosingctx{\aterm}{\contextvar}}$. 
    The substitution distributes over the choice and we have $\replaceLFPd{(\bigPchoiceOf{\anurg}\atermsetp)}=
    \bigPchoiceOf{\anurg}\set{\replaceLFPd{\atermpp} \mid  \atermpp\in\atermsetp}$. 
    This shows $\urgencyof{\atermp}=\urgencyof{\replaceLFPd{\atermp}}$.
    For the outermost actions in $\acontext{\enclosingctx{\aterm}{\contextvar}}$, the substitution can only lower the urgencies. 

    The fact that $\replaceLFPd{\atermp}$ is  immediate
    for $\acontext{\enclosingctxLFP{\aterm}{\contextvar}}$ yields 
    \begin{align*}
    \successorsof{\acontext{\enclosingctxLFP{\aterm}{\makeleading{\replaceLFPd{\atermp}}}}}\quad=\quad
        \set{\acontext{\enclosingctxLFP{\aterm}{\replaceLFPd{\atermpp}}}\mid \atermp\gamemove\atermpp}
        \quad=\quad\set{\acontext{\replaceLFPd{\atermppp}}\mid \aterm\gamemove\atermppp}.
    \end{align*}
    We also have $\successorsof{\acontext{\makeleading{\aterm}}}=
    \set{\acontext{\atermppp}\mid \aterm\gamemove\atermppp}$.
    Similar to the proof of \Cref{Lemma:ContextLemma},
    Eve wins from $\acontext{\atermppp}$ in 
    $\anordinalp'<\anordinalp$ turns
    for all/one $\atermppp\in\successorsof{\aterm}$,
    depending on the owner of $\acontext{\aterm}$. 
    For every successor $\atermppp$, the induction 
    hypothesis tells us that 
    $\winsof{\acontext{\atermppp}}{\objective}$ implies 
    $\winsof{\acontext{\replaceLFPd{\atermppp}}}{\objective}$. 
    Since the owner of both $\acontext{\aterm}$ and 
    $\acontext{\replaceLFPd{\aterm}}$ is the owner of the 
    choice $\ovoid$, Eve can copy her strategy. 

    It remains to consider the case that $\atermp$ is a non-terminal 
    $\nonterminal$.
    Then, $\acontext{\enclosingctx{\aterm}{\makeleading{\nonterminal}}}$ 
    can only be played into 
    $\acontext{\enclosingctx{\aterm}{\eqmapof{\nonterminal}}}$ and Eve wins from this position in $\anordinalp'<\anordinalp$ 
    moves.
    By the hypothesis,  
    $\acontext{\replaceLFPd{\enclosingctx{\aterm}{\eqmapof{\nonterminal}}}}$
    is also won by Eve.
    We can write this term as 
    $\acontext{\enclosingctxLFP{\aterm}{\replaceLFPd{\eqmapof{\nonterminal}}}}$.
    The hypothesis of the outer induction yields  
    $\replaceLFPd{\eqmapof{\nonterminal}}\congleq\aterm_{\nonterminal}$.
    Therefore, Eve must also win from 
    $\acontext{\enclosingctxLFP{\aterm}{\aterm_{\nonterminal}}}=
    \acontext{\replaceLFPd{\enclosingctx{\aterm}{\nonterminal}}}
    =\acontext{\replaceLFPd{\aterm}}$.

    \textbf{\Cref{axiom:norm}:}
    We only show the case 
    $\bigEchoiceOf{\anurgp} \bigPchoiceOf{\anurg}\atermsetp
    \wineq\bigEchoiceOf{\anurgp} \bigPchoiceOf{\anurgp}\atermsetp$ with
    $\anurgp<\anurg$. 
    Like in the previous proof, both terms are immediate for the context 
    $\acontext{\contextvar}$ of interest.
    The key is to note that the inner choice cannot be resolved until 
    the outer choice has been made.
    The game arenas are:
    \begin{gametrees}
    \begin{tikzpicture}[scale=0.9]
        \tikzstyle{level 1}=[sibling distance=4.5em, level distance=10mm]
        \node[peve] {$\acontext{\makeleading{\bigEchoiceOf{\anurgp} \bigPchoiceOf{\anurg}\atermsetp}}$}
        child {
            node[punk] {$\acontext{\makeleading{\bigPchoiceOf{\anurg}\atermsetp}}$}
            child {node[draw=none] {$\cdots$} edge from parent[->]}
            child {node[draw=none] {$\acontext{\atermp}$, $\atermp\in\atermsetp$} edge from parent[->]}
            child {node[draw=none] {$\cdots$} edge from parent[->]}
            edge from parent[->]
        };
    \end{tikzpicture}%
    \hfill
    \begin{tikzpicture}[scale=0.9]
        \tikzstyle{level 1}=[sibling distance=4.5em, level distance=10mm]
        \node[peve] {$\acontext{\makeleading{\bigEchoiceOf{\anurgp} \bigPchoiceOf{\anurgp}\atermsetp}}$}
        child {
            node[punk] {$\acontext{\makeleading{\bigPchoiceOf{\anurgp}\atermsetp}}$}
            child {node[draw=none] {$\cdots$} edge from parent[->]}
            child {node[draw=none] {$\acontext{\atermp}$, $\atermp\in\atermsetp$} edge from parent[->]}
            child {node[draw=none] {$\cdots$} edge from parent[->]}
            edge from parent[->]
        };
    \end{tikzpicture}%
    \end{gametrees}
    As before, translation of strategies is straightforward. 

    \textbf{\Cref{axiom:least}:} 
    To show $\bot\congleq\aterm$, note that Eve never wins from a term $\acontext{\bot}$ and hence the implication is trivial.  

    \textbf{\Cref{axiom:lattice-mono}}: 
    Assume $\aterm_{i}\axleq\atermp_{i}$ for all $i\in I$ for which also $\aterm_{i}\congleq\atermp_{i}$ holds.
    Further let $\bigPchoiceOf{\anurg}\set{\aterm_{i}\mid i\in I}$
    and $\bigPchoiceOf{\anurg}\set{\atermp_{i}\mid i\in I}$ 
    be valid terms (i.e.\ they satisfy the depth constraints
    from \Cref{Appendix:Infinitary}).
    Acquire $\acontext{\contextvar}$ for which one of these terms 
    are immediate.
    Since $\urgencyof{\bigPchoiceOf{\anurg}\set{\aterm_{i}\mid i\in I}}
    =\urgencyof{\bigPchoiceOf{\anurg}\set{\atermp_{i}\mid i\in I}}
    =\anurg$,  
    \Cref{Lemma:TermLeads} tells us that 
    both terms are immediate for
    $\acontext{\contextvar}$.
    Fix an objective $\objective\subseteq\analph^{*}$.
    Let 
    $\winsof{\acontext{\makeleading{
        \bigPchoiceOf{\anurg}\set{\aterm_{i}\mid i\in I}
    }}}{\objective}$.
    We have that $\successorsof{\acontext{\makeleading{
        \bigPchoiceOf{\anurg}\set{\aterm_{i}\mid i\in I}
    }}}=\set{\acontext{\aterm_{i}}\mid i\in I}$.
    Then, for some $i \in I$ ($\ovoid=\echoice$)
    [for all $i \in I$ ($\ovoid=\achoice$)]
    holds $\winsof{\acontext{\aterm_{i}}}{\objective}$.
    Since $\aterm_{i}\congleq\atermp_{i}$, 
    also $\winsof{\acontext{\atermp_{i}}}{\objective}$.
    %
    Thus, $\winsof{\acontext{\makeleading{
        \bigPchoiceOf{\anurg}\set{\atermp_{i}\mid i\in I}
    }}}{\objective}$.

    \textbf{\Cref{axiom:lattice-assoc}}:
    Let $\bigPchoiceOf{\anurg}_{i\in I}\bigPchoiceOf{\anurg}\atermset_{i}$
    and 
    $\bigPchoiceOf{\anurg}\bigcup_{i\in I}\atermset_{i}$ be terms.
    Let $\acontext{\contextvar}$ be a context where one, 
    and by \Cref{Lemma:TermLeads} both, terms are immediate.
    After one move from $\acontext{
        \makeleading{\bigPchoiceOf{\anurg}_{i\in I}
        \bigPchoiceOf{\anurg}\atermset_{i}}}$, 
    the resulting term is always of the form 
    $\acontext{\bigPchoiceOf{\anurg}\atermset_{i}}$ 
    for some $i\in I$.
    Since $\urgencyof{\bigPchoiceOf{\anurg}\atermset_{i}}=\anurg$,
    \Cref{Lemma:TermLeads} states that this term is 
    also immediate for $\acontext{\contextvar}$.
    Then $\successorsof{\acontext{\makeleading{
        \bigPchoiceOf{\anurg}\atermset_{i}}}}=
    \set{\acontext{\aterm}\mid \aterm\in\atermset_{i}}$.
    This position is owned by the same player that owns the 
    initial position, $\acontext{
        \makeleading{\bigPchoiceOf{\anurg}_{i\in I}
        \bigPchoiceOf{\anurg}\atermset_{i}}}$
    and the position $\acontext{\makeleading{
        \bigPchoiceOf{\anurg}\bigcup_{i\in I}\atermset_{i}
        }}$.
    Then, in two moves, this player reaches 
    $\bigcup_{i\in I}\set{\acontext{\aterm}\mid \aterm\in\atermset_{i}}=
    \set{\acontext{\aterm}\mid \aterm\in\bigcup_{i\in I}\atermset_{i}}$
    from $\acontext{
        \makeleading{\bigPchoiceOf{\anurg}_{i\in I}
        \bigPchoiceOf{\anurg}\atermset_{i}}}$.
    We also have 
    $\successorsof{\acontext{\makeleading{
        \bigPchoiceOf{\anurg}\bigcup_{i\in I}\atermset_{i}
        }}}
    =\set{\acontext{\aterm}\mid \aterm\in\bigcup_{i\in I}\atermset_{i}}$.
    So it is straightforward to 
    lift the strategies from one term to the other 
    easily under any objective.

    \textbf{\Cref{axiom:lattice-ord}}:
    Let $\aterm,\atermp\in\terms$ with $\urgencyof{\aterm}\leq\anurg$.
    Acquire a context $\acontext{\contextvar}$ for which 
    one of $\aterm$ or $\aterm\echoicen{\anurg}\atermp$
    is immediate.
    If $\aterm$ is immediate for $\acontext{\contextvar}$, 
    since $\urgencyof{\aterm}\leq\anurg=
    \urgencyof{\aterm\echoicen{\anurg}\atermp}$, 
    $\aterm\echoicen{\anurg}\atermp$ is also immediate 
    for $\acontext{\contextvar}$.
    Then, in any case, $\aterm\echoicen{\anurg}\atermp$ is 
    immediate for $\acontext{\contextvar}$.
    Fix an objective $\objective\subseteq\analph^{*}$ and 
    let $\winsof{\acontext{\aterm}}{\objective}$.
    Since $\acontext{\makeleading{\aterm\echoicen{\anurg}\atermp}}$,
    Eve can choose $\aterm$ in the 
    inserted choice to reach $\acontext{\aterm}$ and 
    win, i.e. 
    $\winsof{
        \acontext{\makeleading{\aterm\echoicen{\anurg}\atermp}}
    }{\objective}$.

    \textbf{\Cref{axiom:lattice-abs}}:
    Let $\aterm,\atermp\in\terms$ and let 
    $\urgencyof{\aterm}\leq\anurg$.
    We will only show 
    $\aterm\congeq\aterm\echoicen{\anurg}
            (\aterm\achoicen{\anurg}\atermp)$.
    The proof of the dual statement is analogous.
    Fix an objective $\objective\subseteq\analph^{*}$.
    Acquire a context $\acontext{\contextvar}$ for 
    which one of $\aterm$ or 
    $\aterm\echoicen{\anurg}(\aterm\achoicen{\anurg}\atermp)$
    be immediate.
    Similarly to \labelcref{axiom:lattice-ord}, 
    $\aterm\echoicen{\anurg}(\aterm\achoicen{\anurg}\atermp)$
    is guaranteed to be immediate.
    Let $\winsof{\acontext{\aterm}}{\objective}$.
    Then Eve can play 
    $\acontext{
        \makeleading{
            \aterm\echoicen{\anurg}(\aterm\achoicen{\anurg}\atermp)
        }
    }\gamemove\acontext{\aterm}$ 
    and win, so $\winsof{\acontext{
        \makeleading{
            \aterm\echoicen{\anurg}(\aterm\achoicen{\anurg}\atermp)
        }
    }}{\objective}$ as well.
    Let $\winsof{\acontext{\aterm}}{\objective}$ not hold.
    Then Adam has a winning strategy
    from $\acontext{\aterm}$, since reachability games 
    are determined.
    If Eve were to play 
    $\acontext{
        \makeleading{
            \aterm\echoicen{\anurg}(\aterm\achoicen{\anurg}\atermp)
        }
    }\gamemove\acontext{\aterm}$,
    Adam would win from this position.
    If Eve were to instead play 
    $\acontext{
        \makeleading{
            \aterm\echoicen{\anurg}(\aterm\achoicen{\anurg}\atermp)
        }
    }\gamemove\acontext{\makeleading{\aterm\achoicen{\anurg}\atermp}}$,
    Adam can play 
    $\acontext{\makeleading{\aterm\achoicen{\anurg}\atermp}}\gamemove 
    \acontext{\aterm}$ and win.
    Thus $\winsof{\acontext{\makeleading{
            \aterm\echoicen{\anurg}(\aterm\achoicen{\anurg}\atermp)
        }
    }}{\objective}$ does not hold. 

    \textbf{\Cref{axiom:lattice-dist}}: Let 
    $\bigAchoiceOf{\anurg}_{i\in I}\bigEchoiceOf{\anurg}\atermset_{i}\in\terms$.
    Per Axiom of choice, $\set{f\mid f:I\to\atermset_{I}}\neq\emptyset$
    and $\bigEchoiceOf{\anurg}_{f:I\to\atermset_{I}}
    \bigAchoiceOf{\anurg}\set{f(i)\mid i \in I}$ is well defined.
    Let $\bigEchoiceOf{\anurg}_{f:I\to\atermset_{I}}
    \bigAchoiceOf{\anurg}\set{f(i)\mid i \in I}\in\terms$.
    Fix an objective $\objective\subseteq\analph^{*}$ and 
    acquire a context $\acontext{\contextvar}$ 
    for which one of $\bigEchoiceOf{\anurg}_{f:I\to\atermset_{I}}
    \bigAchoiceOf{\anurg}\set{f(i)\mid i \in I}$
    and $\bigAchoiceOf{\anurg}_{i\in I}\bigEchoiceOf{\anurg}\atermset_{i}$
    is immediate.
    Since both terms have urgency $\anurg$, \Cref{Lemma:TermLeads}
    states that both terms are immediate.
    Per definition, we have 
    $\winsof
    {\acontext{
        \makeleading{
            \bigAchoiceOf{\anurg}_{i\in I}\bigEchoiceOf{\anurg}\atermset_{i}
        }
    }}
    {\objective}$
    if and only if for all $i\in I$, there is a $\aterm\in\atermset_{i}$ 
    with $\winsof{\acontext{\aterm}}{\objective}$.
    Since we assume axiom of choice, we can apply Skolemization 
    to get the following equivalent statement: 
    There is a $f:I\to \atermset_{I}$ where for all $i\in I$,
    $\winsof{\acontext{f(i)}}{\objective}$.
    But this is equivalent to 
    $\winsof
    {\acontext{
        \makeleading{
            \bigEchoiceOf{\anurg}_{f:I\to\atermset_{I}}
            \bigAchoiceOf{\anurg}\set{f(i)\mid i \in I}
        }
    }}
    {\objective}$.
    Then $\winsof
    {\acontext{
        \makeleading{
            \bigEchoiceOf{\anurg}_{f:I\to\atermset_{I}}
            \bigAchoiceOf{\anurg}\set{f(i)\mid i \in I}
        }
    }}
    {\objective}$ if and only if 
    $\winsof
    {\acontext{
        \makeleading{
            \bigAchoiceOf{\anurg}_{i\in I}\bigEchoiceOf{\anurg}\atermset_{i}
        }
    }}
    {\objective}$.

    \textbf{\Cref{axiom:dist-right}:} 
    Let $\aterm,
    \bigPchoiceOf{\anurg}\atermsetp\in\terms$ where 
    $\urgencyof{\aterm}\leq\anurg$.
    Acquire a context for which at least one, and 
    per \Cref{Lemma:TermLeads} both, of 
    $(\bigPchoiceOf{\anurg}\atermsetp)\appl\aterm$ and
    $\bigPchoiceOf{\anurg}
    \set{\atermp\appl\aterm\mid \atermp\in\atermsetp}$
    are immediate.
    We have the owner
    $\ownof{\acontext{
        \makeleading{(\bigPchoiceOf{\anurg}\atermsetp)\appl\aterm}}}=
    \ownof{\acontext{\bigPchoiceOf{\anurg}
    \set{\atermp\appl\aterm\mid \atermp\in\atermsetp}}}$.
    It follows that we have 
    $\successorsof{
        \acontext{
        \makeleading{(\bigPchoiceOf{\anurg}\atermsetp)\appl\aterm}}
    }=\set{\acontext{\atermp\appl\aterm}\mid \atermp\in\atermsetp}$
    due to $\successorsof{(\makeleading{\bigPchoiceOf{\anurg}\atermsetp})
    \appl\aterm}=\set{\atermp\appl\aterm\mid \atermp\in\atermsetp}$.
    Further, $\successorsof{\acontext{\bigPchoiceOf{\anurg}
    \set{\atermp\appl\aterm\mid \atermp\in\atermsetp}}}=
    \set{\acontext{\atermp\appl\aterm}\mid \atermp\in\atermsetp}$
    as well.
    So under any objective $\objective\subseteq\analph^{*}$,
    lifting the strategies from one term to the other is straightforward.

    \textbf{\Cref{axiom:monoid}:}
    Let $\aword, \awordp\in\wordterms$ with $\aword\sigeq\awordp$.
    Acquire a context $\acontext{\contextvar}$ that is immediate
    for one of $\aword$ or $\awordp$.
    Since these are both word terms, 
    the only way one of these terms can be immediate 
    is if $\acontext{\contextvar}$ is a concatenation of 
    terminals and $\contextvar$.
    Then, we have $\acontext{\aword}\sigeq\acontext{\awordp}$.
    For an objective $\objective\subseteq\analph^{*}$,
    we also see that
    $\winsof{\acontext{\aword}}{\objective}$ if and only if 
    $\winsof{\acontext{\awordp}}{\objective}$.

    \textbf{\Cref{axiom:spec}:}
    Let $\objective\subseteq\analph^*$ be an objective.
    Let $\aword,\awordp\in \analphbot^*$ be word terms
    with  $\aword\axleq_{\objective}\awordp$ due to $\aword\sgleq{\objective}\awordp$.
    Let $\acontext{\contextvar}$ be a context for which one of the words is immediate.
    As the words have urgency zero, the context muss be a word as well: $\acontext{\contextvar}=\awordpp\appl\contextvar\appl\awordppp$
    (ignoring the bracketing) for
    $\awordpp,\awordppp\in \analphbot^*$.
    Assume Eve wins~$\objective$ from $\acontext{\aword}=\awordpp\appl\aword\appl\awordppp$.
    Then $\awordpp\appl\aword\appl\awordppp\in\objective$.
    By definition of $\sgleq{\objective}$, we get
    $\awordpp\appl\awordp\appl\awordppp\in\objective$.
    So Eve wins $\objective$ from $\acontext{\awordp}$ as well.

    \textbf{\Cref{axiom:eqmap}:} Let $\nonterminal\in\nonterminals$.
    Acquire a context $\acontext{\contextvar}$ for which at least 
    one of $\nonterminal$ or $\eqmapof{\nonterminal}$ 
    is immediate.
    Since 
    $\urgencyof{\nonterminal}=\maxurg\geq\eqmapof{\nonterminal}$,
    \Cref{Lemma:TermLeads} 
    tells us that $\nonterminal$ is guaranteed to be immediate.
    The term $\acontext{\makeleading{\nonterminal}}$ 
    has exactly one successor, $\acontext{\eqmapof{\nonterminal}}$.
    Then under any objective $\objective\subseteq\analph^{*}$,
    $\winsof{\acontext{\makeleading{\nonterminal}}}{\objective}$ 
    if and only if 
    $\winsof{\acontext{\eqmapof{\nonterminal}}}{\objective}$.
\end{proof}
\section{Normalization}\label{Appendix:Normalization}
We rely on proof rules that can be derived from the axioms.
These follow from the axioms in \Cref{fig:axioms}.
Before we address normalization,
we take a brief detour to prove the utilized proof rules correct.
These will also be used in Appendix~\Cref{Appendix:SpecializedNormalForm}.
\begin{center}
	\begin{axdefenv}{\rrepname}
		\axdefleft{\rrepname}
    		{\forall \nonterminal\in\nonterminals.\;
    			\aterm_{\nonterminal}\axleq\nonterminal}
    		{\atermp\replaceLFP \axleq \atermp}%
		\label{axiom:rep}
	\end{axdefenv}
	\hspace{0.6cm}
	\begin{axdefenv}{\rsinglename}
		\axdefleft{\rsinglename}
			{\urgencyof{\aterm}\leq \anurg}
			{\aterm\axeq\bigPchoiceOf{\anurg}{}{\aterm}}%
		\label{axiom:single}
	\end{axdefenv}\\
    \begin{axdefenv}{\rdnormname}
        \axdefleft{\rdnormname}
            {\anurgp<\anurg}
            {\bigAchoiceOf{\anurgp}{}{\bigPchoiceOf{\anurg}{}{\atermsetp}}
		        \axeq\bigAchoiceOf{\anurgp}{}{\bigPchoiceOf{\anurgp}{}{\atermsetp}}}    
        \label{axiom:dual-norm}
    \end{axdefenv}
\end{center}

\begin{proof}[Proof sketch]
    For \labelcref{axiom:rep}, we use \labelcref{axiom:lattice-dist} and the congruence rule inductively on the subterms.
    Utilizing \labelcref{axiom:lattice-abs} twice yields \labelcref{axiom:single}. 
    \labelcref{axiom:dual-norm} follows from 
    \begin{multline*}
        \bigAchoiceOf{\anurgp}_{i\in I}{ \bigPchoiceOf{\anurg}{}{\atermsetp_i}}
        \axeqper{\labelcref{axiom:single}} \bigEchoiceOf{\anurgp}{}{\bigAchoiceOf{\anurgp}_{i \in I}{ \bigPchoiceOf{\anurg}{}{\atermsetp_i}}}
        \axeqper{\labelcref{axiom:lattice-dist}} \bigAchoiceOf{\anurgp}_{i \in I}{\bigEchoiceOf{\anurgp}{}{\bigPchoiceOf{\anurg}{}{\atermsetp_i}}}\\
        \axeqper{\labelcref{axiom:norm}} \bigAchoiceOf{\anurgp}_{i \in I}{\bigEchoiceOf{\anurgp}{}{\bigPchoiceOf{\anurgp}{}{\atermsetp_i}}}
        \axeqper{\labelcref{axiom:single}} \bigAchoiceOf{\anurgp}_{i\in I}{ \bigPchoiceOf{\anurgp}{}{\atermsetp_i}}
    \end{multline*}
\end{proof}

\subsection{Proof of \Cref{Lemma:NTNormalization}, Part 1}\label{Appendix:NTFreeNormalization}
The function $\normof{\aterm}$ is defined by induction.
For $\aterm$ being a terminal or $\tskip, \terr$, 
we use \labelcref{axiom:single} to introduce a sequence of $2\maxurg$ choices 
over singleton sets and arrive at a term 
$\normof{\aterm}\axeq\aterm$ in normal form.
For a concatenation or choice, we recursively normalize the operands and then invoke specialized functions
that rely on the operands being normalized: 
\begin{align*}
\normof{\aterm\appl\atermp}\quad &=\quad
\normconcof{\normof{\aterm}.\normof{\atermp}}\\
\normof{\bigPchoiceOf{\anurg}\atermset}\quad &=\quad
\normchoiceof{\bigPchoiceOf{\anurg}\setcond{\normof{\aterm}}{\aterm\in\atermset}}\ . 
\end{align*}


\begin{lemma}\label{Lemma:ResolveChoice}
    Let $\atermsetpp\subseteq\gnfn{\maxurg}$
    and $\atermpp=\bigPchoiceOf{\anurg}\atermsetpp$.
    We can find $\normchoiceof{\atermpp}
    \in\gnfn{\maxurg}$ 
    with 
    $\normchoiceof{\atermpp}
    \axeq\atermpp$.
\end{lemma}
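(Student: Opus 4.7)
I would give an explicit recursive construction of $\normchoiceof{\atermpp}$ and verify $\normchoiceof{\atermpp} \axeq \atermpp$ by induction on $\maxurg$, with an inner case analysis on the player owning the outer choice and on whether $\anurg = \maxurg$ or $\anurg < \maxurg$. The subcase $\anurg = \maxurg$ subsumes the base case $\maxurg = 1$ and will serve as the backbone of the argument.

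When $\anurg = \maxurg$, each element of $\atermsetpp$ unfolds as $\bigEchoiceOf{\maxurg}\atermset_i$ for some $\atermset_i \subseteq \agnfn{\maxurg}$. If the outer choice is Eve's, a single application of \labelcref{axiom:lattice-assoc} flattens the nested Eve choices into $\bigEchoiceOf{\maxurg}\bigcup_i \atermset_i \in \gnfn{\maxurg}$. If the outer choice is Adam's, I first swap the outer Adam below the inner Eves via \labelcref{axiom:lattice-dist}; each element of the resulting Adam-$\maxurg$ choice lies in $\agnfn{\maxurg}$ and thus unfolds as an $\bigAchoiceOf{\maxurg}$ over $\gnfn{\maxurg - 1}$, so a second use of \labelcref{axiom:lattice-assoc} flattens the inner Adam nesting and lands the result in $\gnfn{\maxurg}$.

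The subcase $\anurg < \maxurg$ proceeds in three phases. First, I lower the urgency of every inner choice operator from $\maxurg$ to $\anurg$ via a split-apply-reflatten idiom: \labelcref{axiom:lattice-assoc} splits the outer $\anurg$-choice into singleton summands, \labelcref{axiom:norm} (or its Adam-dual \labelcref{axiom:dual-norm}) rewrites the inner choice urgency down to $\anurg$ inside each singleton, and \labelcref{axiom:lattice-assoc} re-flattens. Second, I apply the $\anurg = \maxurg$ technique one level down, reshaping the term into an Eve-$\anurg$ of Adam-$\anurg$ choices whose leaves remain in $\gnfn{\maxurg - 1}$. Third, I invoke the induction hypothesis at $\maxurg - 1$ (available because the leaves are already in $\gnfn{\maxurg - 1}$ and the top urgency $\anurg$ satisfies $\anurg \leq \maxurg - 1$) to renormalize the inner Adam-$\anurg$ subterms and then the outer Eve-$\anurg$ choice into a single $\gnfn{\maxurg - 1}$ term $Z$, and I lift $Z$ into $\gnfn{\maxurg}$ by wrapping it with a singleton Adam-$\maxurg$ followed by a singleton Eve-$\maxurg$ via \labelcref{axiom:single}.

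The main obstacle is the bookkeeping around the urgency side conditions: \labelcref{axiom:single} requires the wrapping choice to have urgency at least that of the wrapped term, so urgencies can only ever be raised by wrapping. This asymmetry is what forces the two-phase strategy of the subcase $\anurg < \maxurg$: inner urgencies must first be brought down to $\anurg$ via the split-apply-reflatten idiom, and only then can the induction hypothesis together with \labelcref{axiom:single} lift the result up to fill in the $\gnfn{\maxurg}$ skeleton.
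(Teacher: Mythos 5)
Your proposal is correct and follows essentially the same route as the paper, which proves this statement via its effective variant (\Cref{Lemma:EffectiveResolveChoice}): induction on $\maxurg$, with \labelcref{axiom:norm} (and its derived Adam-dual) aligning inner urgencies when $\anurg<\maxurg$, \labelcref{axiom:lattice-dist} handling the harder Adam-owned case, \labelcref{axiom:lattice-assoc} flattening the resulting nesting, the induction hypothesis renormalizing below the top level, and singleton wrapping via the derived rule \labelcref{axiom:single} restoring the $\gnfn{\maxurg}$ shape. Your write-up is merely more explicit about the split-apply-reflatten bookkeeping that the paper leaves implicit.
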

\begin{proof}
    Consider the proof of \Cref{Lemma:EffectiveResolveChoice}.
\end{proof}
\begin{lemma}\label{Lemma:ResolveConcat}
    For $\aterm,\atermp\in\gnfn{\maxurg}$
    we can find
    $\normconcof{\aterm\appl\atermp}\in\gnfn{\maxurg}$
    with $\normconcof{\aterm\appl\atermp}\axeq\aterm\appl\atermp$.
\end{lemma}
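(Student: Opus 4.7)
My plan is to define $\normconc$ by induction on the urgency bound $\maxurg$, reducing the task to the trivial word--word base case handled by \labelcref{axiom:monoid}. For $\maxurg = 0$, both $\aterm$ and $\atermp$ lie in $\analph^{*} \cup \set{\terr}$, so I set $\normconcof{\aterm \appl \atermp}$ to be the unique $\aword$ with $\aterm \appl \atermp \sigeq \aword$, justified by \labelcref{axiom:monoid}.

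For $\maxurg \geq 1$, I would write $\aterm = \bigEchoiceOf{\maxurg}_{i \in I} \bigAchoiceOf{\maxurg}_{k \in K_i} X_{i,k}$ and $\atermp = \bigEchoiceOf{\maxurg}_{j \in J} \bigAchoiceOf{\maxurg}_{l \in L_j} Y_{j,l}$ with every $X_{i,k}, Y_{j,l} \in \gnfn{\maxurg-1}$. The first step is to distribute the concatenation past every urgency-$\maxurg$ choice. Two applications of \labelcref{axiom:dist-right} (licensed by $\urgencyof{\atermp} \leq \maxurg$) push $\atermp$ inside the outer $\bigEchoiceOf{\maxurg}$ and $\bigAchoiceOf{\maxurg}$ of $\aterm$. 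Then, since $\urgencyof{X_{i,k}} \leq \maxurg - 1 < \maxurg$, two applications of \labelcref{axiom:dist-left} push each $X_{i,k}$ inside the choices of $\atermp$. After congruence, the result is
\[
\aterm \appl \atermp \;\axeq\; \bigEchoiceOf{\maxurg}_i \bigAchoiceOf{\maxurg}_k \bigEchoiceOf{\maxurg}_j \bigAchoiceOf{\maxurg}_l \bigl(X_{i,k} \appl Y_{j,l}\bigr).
\]
The inductive hypothesis then replaces each inner factor $X_{i,k} \appl Y_{j,l}$ by some $Z_{i,k,j,l} \in \gnfn{\maxurg-1}$.

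It remains to collapse the four-level alternation $\bigEchoiceOf{\maxurg}\,\bigAchoiceOf{\maxurg}\,\bigEchoiceOf{\maxurg}\,\bigAchoiceOf{\maxurg}$ into the two-level shape required by $\gnfn{\maxurg}$. Using the dual of \labelcref{axiom:lattice-dist}, which the paper notes is derivable from the lattice axioms, I would swap the middle $\bigAchoiceOf{\maxurg}$ and $\bigEchoiceOf{\maxurg}$ via $\bigAchoiceOf{\maxurg}_k \bigEchoiceOf{\maxurg}_j W_{k,j} \axeq \bigEchoiceOf{\maxurg}_{f : K \to J} \bigAchoiceOf{\maxurg}_k W_{k, f(k)}$. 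Once \labelcref{axiom:lattice-assoc} flattens the resulting adjacent $\bigEchoiceOf{\maxurg}$s and adjacent $\bigAchoiceOf{\maxurg}$s into single choices indexed by pairs, the outcome is a term $\bigEchoiceOf{\maxurg}_{(i,f)} \bigAchoiceOf{\maxurg}_{(k,l)} Z_{i,k,f(k),l} \in \gnfn{\maxurg}$, which I designate as $\normconcof{\aterm \appl \atermp}$.

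The main obstacle I expect is this final flattening step: naive distribution of concatenation inevitably produces too deep an alternation, and reassembling it into normal form requires the full choice-interchange law \labelcref{axiom:lattice-dist} in both directions combined with \labelcref{axiom:lattice-assoc}, with bookkeeping over the choice functions. I would also have to check the side condition on \labelcref{axiom:lattice-mono} for the possibly infinite index sets, but since the construction never increases the structural depth beyond that of the input $\aterm \appl \atermp$, the depth constraint from \Cref{Appendix:Infinitary} is preserved throughout.
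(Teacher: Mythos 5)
Your proposal is correct and follows essentially the same route as the paper: a strengthened induction on the urgency level, the same two applications each of \labelcref{axiom:dist-right} and \labelcref{axiom:dist-left} to produce the four-level alternation, and the induction hypothesis on the inner factors. Your final collapse via the dual of \labelcref{axiom:lattice-dist} plus \labelcref{axiom:lattice-assoc} is exactly what the paper delegates to \Cref{Lemma:ResolveChoice} (and carries out in the proof of \Cref{Lemma:EffectiveResolveChoice}), so you have merely inlined that auxiliary lemma.
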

\begin{proof}
We strengthen the statement and show that for all urgencies $\anurg$, if we have normal form terms $\aterm,\atermp\in\gnfn{\anurg}$, then we can obtain a normal form in $\gnfn{\anurg}$. 
We proceed by induction on $\anurg$.
For the base case $\anurg=0$, \labelcref{axiom:monoid}
yields $\aterm\appl\atermp \sigeq \atermpp \in \gnfn{0}$. 
  
For the inductive case, let $\anurg>0$.
Then $\aterm=\bigEchoiceOf{\anurg}_{i\in I}\bigAchoiceOf{\anurg}\atermset_{i}$
and $\atermp=\bigEchoiceOf{\anurg}_{j\in J}\bigAchoiceOf{\anurg}\atermsetp_{j}$.
So we can write:
\begin{align*}
    \aterm\appl\atermp
    &\overset{\hphantom{2\times\labelcref{axiom:dist-right}}}{=}(\bigEchoiceOf{\anurg}_{i\in I}\bigAchoiceOf{\anurg}\atermset_{i})
    \appl(\bigEchoiceOf{\anurg}_{j\in J}\bigAchoiceOf{\anurg}\atermsetp_{j})\\
    &\axeqper{2\times\labelcref{axiom:dist-right}}\bigEchoiceOf{\anurg}_{i\in I}
    \bigAchoiceOf{\anurg}_{\aterm'\in\atermset_{i}}\aterm'\appl
    (\bigEchoiceOf{\anurg}_{j \in J}\bigAchoiceOf{\anurg}\atermsetp_{j})\\
    &\axeqper{2\times\labelcref{axiom:dist-left}}
    \bigEchoiceOf{\anurg}_{i\in I}
    \bigAchoiceOf{\anurg}_{\aterm'\in\atermset_{i}}
    \bigEchoiceOf{\anurg}_{j \in J}
    \bigAchoiceOf{\anurg}_{\atermp'\in\atermsetp_{j}}\aterm'\appl\atermp'\ . 
    \intertext{We have $\aterm',\atermp'\in\gnfn{\anurg-1}$.
    We apply I.H. to obtain 
    $\normconcof{\aterm'\appl\atermp'}\in\gnfn{\anurg-1}$}
    &\axeqper{\mathrm{I.H.}}
    \bigEchoiceOf{\anurg}_{i\in I}
    \bigAchoiceOf{\anurg}_{\aterm'\in\atermset_{i}}
    \bigEchoiceOf{\anurg}_{j \in J}
    \bigAchoiceOf{\anurg}_{\atermp'\in\atermsetp_{j}}\normconcof{\aterm'\appl\atermp'}\ . 
\end{align*}
The term is not in normal form due to the two layers of~$\anurg$ choices. 
We apply \Cref{Lemma:ResolveChoice} to obtain a normal form.
\end{proof}

\subsection{Proof of \Cref{Lemma:NTNormalization}, Part 2}\label{Appendix:NTNormalization}
    We proceed by a transfinite Kleene iteration.
    By induction on $\anordinal$, we construct normal form terms 
    $\nonterminal^{(\anordinal)}\in \gnfn{\maxurg}$ for all non-terminals
    $\nonterminal$ and all ordinals $\anordinal$.
    We write $\replaceLFPIt{\aterm}{\anordinalpp}$ to denote 
    $\aterm\replace{\nonterminals}{\nonterminals^{(\anordinalpp)}}$,  
    where $\nonterminals^{(\anordinalpp)}$ refers to a vector of 
    terms that has $\nonterminal^{(\anordinalpp)}$ as its 
    $\nonterminal$ component.
    We let $\nonterminal^{(0)}=\normof{\terr}$
    and for all $\anordinal>0$: 
    \begin{align*}
        \nonterminal^{(\anordinal)}\quad =\quad
        \normof{\bigEchoiceOf{\maxurg}
        \set{
            \replaceLFPIt{\eqmapof{\nonterminal}}{\anordinalp}
        \mid \anordinalp < \anordinal}}\ . 
    \end{align*}
    In both cases, we rely on the normalization from \Cref{Appendix:NTFreeNormalization}.
    
    We claim that 
    $\nonterminal^{(\anordinalp)}\axleq\nonterminal^{(\anordinal)}\axleq\nonterminal$
    for all ordinals $\anordinalp<\anordinal$.
    By definition, the alternatives available to Eve in $\nonterminal^{(\anordinalp)}$ are contained 
    in the alternatives available to her in $\nonterminal^{(\anordinal)}$.  
    %
    The former precongruence $\nonterminal^{(\anordinalp)}\axleq\nonterminal^{(\anordinal)}$ then follows from a 
    simple application of \Cref{axiom:lattice-mono,axiom:lattice-assoc,axiom:lattice-ord}.

    With a transfinite induction we show that for all ordinals~$\anordinal$ and for all non-terminals $\nonterminal$ we have $\nonterminal^{(\anordinal)}\axleq\nonterminal$.
    For the base case, we already have 
    $\nonterminal^{(0)}\axeq\terr\axleq\nonterminal$.
    For the inductive case, let $\anordinal$ be an ordinal so that 
    for all ordinals $\anordinalp<\anordinal$ and all non-terminals $\nonterminal$ we have $\nonterminal^{(\anordinalp)}\axleq\nonterminal$.
    %
    %
    Using \labelcref{axiom:rep}, we see that
    $\replaceLFPIt{\eqmapof{\nonterminal}}{\anordinalp}
    \axleq\eqmapof{\nonterminal}$.
    So we can apply \labelcref{axiom:lattice-mono} to get 
    \[
        \bigEchoiceOf{\maxurg}
    \setcond{\replaceLFPIt{\eqmapof{\nonterminal}}{\anordinalp}}{\anordinalp < \anordinal}
    \ \axleq\ 
    \bigEchoiceOf{\maxurg}
    \eqmapof{\nonterminal}\ 
    \axeqper{\labelcref{axiom:single}}\
    \eqmapof{\nonterminal}\
    \axeqper{\labelcref{axiom:eqmap}}\ 
    \nonterminal\ .
    \]
    
    It remains to show that $\nonterminal\axleq\nonterminal^{(\anordinalpp)}$ for some ordinal $\anordinalpp$. 
    The largest $\axleq$-chain of strictly increasing elements in $\gnfn{\maxurg}$ has size~$\sizeof{\agnfn{\maxurg}}$.
    Then the largest such chain in $\nonterminals\to \gnfn{\maxurg}$ has size $\sizeof{\nonterminals}\sizeof{\agnfn{\maxurg}}$.
    As a chain forms a well-ordered set, there is an ordinal $\anordinalpp$ having at least this size.   
    This means we have $\nonterminal^{(\anordinalpp)}\axeq\nonterminal^{(\anordinalpp+1)}$ for all non-terminals, and
    %
    %
    \begin{multline*}
        \eqmapof{\nonterminal}^{(\anordinalpp+1)} \axleqper{\labelcref{axiom:lattice-ord}} \eqmapof{\nonterminal}^{(\anordinalpp)} \;\echoicen{\maxurg}\;
        \bigEchoiceOf{\maxurg}{
            \setcond{\replaceLFPIt{\eqmapof{\nonterminal}}{\anordinalp}}{\anordinalp < \anordinalpp}
        } \\
        \axeqper{\labelcref{axiom:lattice-assoc}} \bigEchoiceOf{\maxurg}{
            \setcond{\replaceLFPIt{\eqmapof{\nonterminal}}{\anordinalp}}{\anordinalp < \anordinalpp+1}
        } 
        \axeq \nonterminal^{(\anordinalpp+1)} \axeq \nonterminal^{(\anordinalpp)}\ .
    \end{multline*}
    %
    %
    Applying \labelcref{axiom:lfp}, we get $\nonterminal \axleq \nonterminal^{(\anordinalpp)}$ 
    for all $\nonterminal\in\nonterminals$.
    %
    %

\section{Completeness: Missing Proof}\label{Appendix:CompletenessProof}
\begin{proof}
We focus on the former implication, the latter is simple. 
The implication trivially holds for terms $\aterm$ that are minimal in the domination preorder. 
We will need information about the shape of these minimal terms. 
In urgency zero, minimal are all terms $\aword\sgleq{\objective}\terr$, meaning there is no chance to extend~$\aword$ to a word in $\objective$.  
For higher urgencies $\anurg>0$, terms $\bigEchoiceOf{\anurg}\atermset$ are minimal where all elements in $\atermset$ are minimal. 
Terms $\bigAchoiceOf{\anurg}\atermset$ are minimal where an element in~$\atermset$ is minimal.
        %
        %

To show the implication for terms $\aterm$ and $\atermp$ where $\aterm$ is not minimal, we use \emph{characteristic contexts}. 
Given a normal form term $\aterm$, we construct a context $\charcontext{\aterm}$ so that for all normal form terms $\atermp$ of the same urgency as $\aterm$ and owned by the same player we have:
\begin{align}
\winsof{\charcontextof{\aterm}{\atermp}}{\objective}\quad\text{iff}\quad\aterm\discleq\atermp\ .\label{Equation:CharContexts}
\end{align} 
%
%
The implication from $\aterm\speccongleq{\objective}\atermp$ to $\aterm\discleq\atermp$ indeed follows. 
Since $\aterm\discleq\aterm$, we obtain $\winsof{\charcontextof{\aterm}{\aterm}}{\objective}$ by Equivalence~\eqref{Equation:CharContexts}. 
The assumption $\aterm\speccongleq{\objective}\atermp$ now yields $\winsof{\charcontextof{\aterm}{\atermp}}{\objective}$. 
Hence, again by Equivalence~\eqref{Equation:CharContexts}, we have $\aterm\discleq\atermp$. 
For the maximal urgency, we need a special treatment.

It remains to give the construction of $\charcontext{\aterm}$. 
It will have the shape $\contextvar\appl\atermpppp_{\aterm}$ where $\atermpppp_{\aterm}$ is again in normal form. 
We proceed by induction on the urgency $\anurg$ and need a special case for the maximal urgency $\maxurg$. 

\noindent\emph{Base Case $\aterm\in\gnfn{0}$.} 
We define 
\[
    \atermpppp_{\aterm}\quad=\quad
    \bigAchoiceOf{1}\setcond{\awordppp\in\analph^*}{\aterm.\awordppp\in\objective}\ .
\]
The set is non-empty as $\aterm$ is not minimal. 
Moreover, for every normal form term $\atermp$ of urgency zero, 
Equivalence~\eqref{Equation:CharContexts} holds by the definition of $\sgleq{\objective}$. 

\noindent\emph{Inductive Case
    $\aterm=\bigEchoiceOf{\anurg}\atermset\in\gnfn{\anurg}$, $\anurg<\maxurg$.}
    We define
    \[
        \atermpppp_{\aterm}\quad=\quad
        \Achoice_{\anurg + 1}
        \setcond{\atermpppp_{\atermpp}}{\atermpp\in\atermset\text{ not minimal}}\ .
    \]
    Note that we can increase the urgency because $\anurg<\maxurg$.
    A non-minimal $\atermpp$ is guaranteed to exist in $\atermset$ by the assumption that $\bigEchoiceOf{\anurg}\atermset$ itself is not minimal. 
    To prove Equivalence~\eqref{Equation:CharContexts}, we consider $\bigEchoiceOf{\anurg}\atermsetp\in\gnfn{\anurg}$ and argue as follows:
    \begin{align*}
    &\quad\winsof{(\bigEchoiceOf{\anurg}\atermsetp)\appl(\Achoice_{\anurg + 1}\setcond{\atermpppp_{\atermpp}}{\atermpp\in\atermset\text{ not minimal}})}{\objective}\\
    \text{iff}&\quad
    \forall \atermpp\in\atermset\text{ not minimal}.\; 
    \exists \atermppp\in\atermsetp.\;
    \winsof{\atermppp\appl\atermpppp_{\atermpp}}{\objective}\\ 
    \text{\{I.H.\}\quad iff}&\quad
    \forall \atermpp\in\atermset\text{ not minimal}.\; 
    \exists \atermppp\in\atermsetp.\;\atermpp\discleq\atermppp\\ 
    \text{iff}&\quad 
    \forall \atermpp\in\atermset.\; 
    \exists \atermppp\in\atermsetp.\;
    \atermpp\discleq\atermppp\\
    \text{iff}&\quad \bigEchoiceOf{\anurg}\atermset\;\discleq\;\bigEchoiceOf{\anurg}\atermsetp\ . 
    \end{align*}
The inductive case for $\aterm=\bigAchoiceOf{\anurg}\atermset\in\agnfn{\anurg}$, $\anurg\leq\maxurg$, is similar.

        \noindent\emph{Special Case} $\gnfn{\maxurg}$:\;
        Since the urgency $\maxurg+1$ is not allowed,
        we are not able to construct the context in the way we did above.
        Instead, we show 
        that for $\aterm,\atermp\in\gnfn{\maxurg}$ with 
        $\aterm\not\discleq\atermp$, there is a term
        $\atermpp\in\agnfn{\maxurg}$ where $\aterm\appl\atermpppp_{\atermpp}$ 
        is won by Eve and $\atermp\appl\atermpppp_{\atermpp}$ is won by Adam.
        This yields $\aterm\not\congleq\atermp$.

        Let $\aterm=\bigEchoiceOf{\maxurg}\atermset$ and 
        $\atermp=\bigEchoiceOf{\maxurg}\atermsetp$ both in $\gnfn{\maxurg}$ with $\aterm\not\discleq\atermp$.
        By definition of the domination preorder, there is $\atermpp\in\atermset$ so that for all 
        $\atermppp\in\atermsetp$ we have~$\atermpp\not\discleq\atermppp$.
        The term $\atermpp$ cannot be minimal, and hence~$\atermpppp_{\atermpp}$ is guaranteed to exist.
        We claim that Eve wins~$\objective$ from~$\aterm\appl\atermpppp_{\atermpp}$ while 
        Adam wins~$\objective$ from~$\atermp\appl\atermpppp_{\atermpp}$. 
        Note that the leading terms are 
        $\makeleading{\aterm}\appl\atermpppp_{\atermpp}$ resp.~$\makeleading{\atermp}\appl\atermpppp_{\atermpp}$.
        To see that Eve wins from 
        $\makeleading{\aterm}\appl\atermpppp_{\atermpp}$, let her choose $\atermpp\in\atermset\subseteq\agnfn{\maxurg}$ to 
        reach $\atermpp\appl\atermpppp_{\atermpp}$.
        Since $\atermpp\discleq\atermpp$, Equivalence~\eqref{Equation:CharContexts} yields $\winsof{\atermpp\appl\atermpppp_{\atermpp}}{\objective}$. 
        To see that Adam wins from~$\makeleading{\atermp}\appl\atermpppp_{\atermpp}$, let Eve choose 
        $\atermppp\in\atermsetp\subseteq\agnfn{\maxurg}$.
        As $\atermpp\not\discleq\atermppp$, Equivalence~\eqref{Equation:CharContexts} implies 
        that Eve loses from $\atermppp\appl\atermpppp_{\atermpp}$.
\end{proof}

\section{Completeness: Characteristic Context}
The missing inductive case is $\aterm=\bigAchoiceOf{\anurg}\atermset\in\agnfn{\anurg}$.
    We set
    \[
        \atermpppp_{\aterm}\quad=\quad
        \bigEchoiceOf{\anurg}\setcond{\atermpppp_{\atermpp}}{\atermpp\in\atermset}.
    \]
    Since $\aterm$ is not minimal, no element $\atermpp\in\atermset$ is minimal and hence
    the $\atermpppp_{\atermpp}$ are guaranteed to exist. 
    To see Equivalence~\eqref{Equation:CharContexts}, consider $\bigAchoiceOf{\anurg}\atermsetp\in\agnfn{\anurg}$:
    \begin{align*}
        &\quad\winsof{(\bigAchoiceOf{\anurg}\atermsetp)\appl(\bigEchoiceOf{\anurg}
        \setcond{\atermpppp_{\atermpp}}{\atermpp\in\atermset})}{\objective}\\
        \text{iff}&\quad
        \forall \atermppp\in\atermsetp.\; 
       \exists \atermpp\in\atermset.\;
        \winsof{\atermppp\appl\atermpppp_{\atermpp}}{\objective}\\ 
        \text{\{I.H.\}\quad iff}&\quad \forall \atermppp\in\atermsetp.\; 
       \exists \atermpp\in\atermset.\;
        \atermpp\discleq\atermppp. \\
\text{iff}&\quad \bigAchoiceOf{\anurg}\atermset\;\discleq\; \bigAchoiceOf{\anurg}\atermsetp. 
\end{align*}
\section{Completeness: Domain Shattering Objectives}\label{Appendix:CompletenessDomainShattering}

We show \Cref{Lemma:DomainShattering},
namely there are right-separating and domain-shattering objectives.
%
%
For $\sizeof{\analph}>1$, we take
\[
    \objective=\setcond{\aword\appl\aword^{\mathit{reverse}}}{\aword\in\analph^{*}}\ .
\]
For $\analph=\set{\aletter}$, we let $\objective=\setcond{\aletter^{(n^2)}}{n\in\nat}$. 
It is known that these objectives have a syntactic congruence with singleton classes. 
Even the syntactic precongruence is the reflexive relation, ${\lrleq{\objective}} = {\lreq{\objective}}$.

Let $\objective=\set{a^{(n^2)}\mid n\in \nat}$
if $\analph=\set{a}$ and
$\objective=
\set{\aword\appl\aword^{reverse}\mid \aword\in\analph^{*}}$
if $|\analph|>1$.
We show that for all  
$\aword, \awordp\in\wordterms$ with $\aword\not\sigeq\awordp$,
we can find a $\awordpp\in\analph^{+}$ with
$\aword\appl\awordpp\in\objective$ and 
$\awordp\appl\awordpp\not\in\objective$.
This proves that $\objective$ is right separating.
In particular, it also showcases that \Cref{axiom:spec} cannot relate terms $\aword, \awordp$ unless $\aword \sigeq \awordp$.
So it also proves that $\objective$ is domain-shattering.
This also shows for all $\aword\in\analph^{+}$
the existence of $\awordpp$ 
with $\aword\appl\awordpp\in\objective$ 
to establish the right separability of $\aword$ and $\bot$.

\textbf{Case $\sizeof{\analph}=1$:}
Let $\aword=\aletter^{i}$ and $\awordp=\aletter^{j}$ with $i\neq j$.
Then set $\awordpp=\aletter^{t^{2}-i}$ with $t=i + j+1$.
Obviously, $\aword\appl\awordpp \in \objective$.
Suppose $\aletter^{j}\appl\aletter^{t^{2}-i}\in\objective$.
Then, $j+t^{2} - i=k^{2}$ for some $k\in\nat$.
So $j - i=t^{2}-k^{2} = (t-k)\cdot(t + k)$.
$i\neq j$ implies $|t-k|\geq 1$.
Note that $k, i, j \geq 0$ 
so we get the contradiction $|j-i| < i+j < t+k < |t-k|\cdot |t-k| = |j-i|$.

\textbf{Case $\sizeof{\analph}>1$:}
Let $\aword = \aword_0\aword_1\ldots\aword_n$ and $\awordp = \awordp_0\awordp_1\ldots\awordp_m$.
Let $\aletter \neq \aword_m$ be a terminal.
We set $\awordpp=\aletter\appl\aletter\appl\aword^{reverse}$,
where 
$\aletter\neq\aword_{|\awordp|}$ if $|\aword|>|\awordp|$,
$\aletter\neq\awordp_{|\aword|}$ if $|\awordp|>|\aword|$, and
$\aletter$ arbitrary if $|\aword|=|\awordp|$.
Suppose $\awordp\appl\aletter\appl\aletter\appl\aword^{reverse}\in\objective$.
Then, $\awordp\appl\aletter\appl\aletter\appl\aword^{reverse}=\aword\appl\aletter\appl\aletter\appl\awordp^{reverse}$.
If $|\aword|=|\awordp|$, $\aword=\awordp$ is a contradiction.
Otherwise, if $|\awordp|<|\aword|$, we have 
${(\awordp\appl\aletter)}_{|\awordp|}=\aletter\neq\aword_{|\awordp|}
={(\aword\appl\aletter)}_{|\awordp|}$ for a contradiction.
Similarly, if $|\aword|<|\awordp|$, the contradiction is
${(\aword\appl\aletter)}_{|\aword|}=\aletter\neq\awordp_{|\aword|}
={(\awordp\appl\aletter)}_{|\aword|}$.

\section{Specialized Normal Form}\label{Appendix:SpecializedNormalForm}
We now provide a sketch of the normalization algorithm 
with the complexity $(\sizeof{\aterm}+\sizeof{\eqmap}\sizeof{\nonterminals})\cdot\sgexpof{2\maxurg-1}$
\footnote{We use $\sgexpof{\anurg}$ for $\repexpof{\anurg}{\bigoof{\sizeof{\specgnfn{0}}}}$. Usages of $\leq$ and $=$ are to be understood by means of $\in$ or $\subseteq$.}
as given in \Cref{Lemma:EffNormalForm}.

We use methods from \Cref{Section:Normalization} with 
slight modifications.
To avoid repetition, 
we only note the modifications that need to be made, 
instead of giving a full algorithm.
The section takes the form of constructing modified normalization
functions $\specnormof{.}$, $\specnormchoiceof{.}$, and 
$\specnormconcof{.}$.
We provide lemmas that reference those in 
\Cref{Section:Normalization}.
The $\specnormchoiceof{.}$ and $\specnormconcof{.}$ 
calls made by $\specnormof{.}$ are kept the same.
Note that implementing $\specnormchoiceof{.}$ and $\specnormconcof{.}$ 
with time complexity $|\aterm|\cdot\sgexpof{2\maxurg-1}$
for the input term $\aterm$ means that 
$\specnormof{\atermp}$ is also constructed in 
$|\atermp|\cdot\sgexpof{2\maxurg-1}$ time, 
if $\atermp$ has no non-terminals.
For the rest of the section,
fix a regular objective $\objective\subseteq\analph^{*}$,
a finite set of non-terminal symbols $\nonterminals$, 
and a finitary defining assignment 
$\eqmap:\nonterminals\to\terms$.

\begin{lemma}[Modifies
    \Cref{Lemma:ResolveChoice}]\label{Lemma:EffectiveResolveChoice}
    Let $\atermset\subseteq\specgnfn{\maxurg}$ and 
    $\atermpp=\bigPchoiceOf{\anurg}\atermset$. 
    We can find $\specnormchoiceof{\atermpp}\in\specgnfn{\maxurg}$
    with ${\specnormchoiceof{\atermpp}\specaxeq{\objective}\atermpp}$ in 
    $|\atermpp|\cdot\sgexpof{2\maxurg - 1}$ time.
\end{lemma}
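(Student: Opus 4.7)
The plan is to replay the proof of the non-effective version, \Cref{Lemma:ResolveChoice}, while maintaining a careful ledger of intermediate term sizes so that the total running time stays within $|\atermpp|\cdot\sgexpof{2\maxurg-1}$. Since every element of $\atermset$ already lies in $\specgnfn{\maxurg}$, what remains is to harmonise the outermost choice $\bigPchoiceOf{\anurg}{\atermset}$ with the alternating structure required of terms in $\specgnfn{\maxurg}$, and eventually to coerce the leaves into the finite quotient $\factorize{\analphbot^{*}}{\lreq{\objective}}$.

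First I would split on the urgency. If $\anurg<\maxurg$, I use \labelcref{axiom:norm} for the angelic case or its derivable dual \labelcref{axiom:dual-norm} for the demonic case to lift $\bigPchoiceOf{\anurg}{\atermset}$ to a choice whose outer urgency matches $\maxurg$, so henceforth I may assume the outer urgency is $\maxurg$. Next, if the outer and inner top-level players agree, \labelcref{axiom:lattice-assoc} collapses the two adjacent layers and the job reduces to a recursive call one level deeper. The interesting case is a mismatch, a $\bigAchoiceOf{\maxurg}{}$ ranging over terms of the form $\bigEchoiceOf{\maxurg}{\atermset_{i}}$: here \labelcref{axiom:lattice-dist} swaps the layers to produce an outer $\bigEchoiceOf{\maxurg}{}$ over choice functions selecting one $\specagnfn{\maxurg}$-summand from each $\atermset_{i}$, after which the two adjacent $\bigAchoiceOf{\maxurg}{}$ layers flatten via \labelcref{axiom:lattice-assoc}. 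The harmonisation then recurses down the $2\maxurg-1$ subsequent urgency layers by the same case analysis; upon reaching a word term, \labelcref{axiom:monoid} together with \labelcref{axiom:spec} contracts it to its $\lreq{\objective}$-class representative, which is a genuine element of $\specgnfn{0}$.

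For the complexity bookkeeping I would canonicalise after every axiom application by treating choice operators as sets and deduplicating, which is sound because choice sets may be arbitrary non-empty sets. With this convention, each intermediate layer at urgency $\anurg$ is a subset of the precomputed finite $\specgnfn{\anurg}$ or $\specagnfn{\anurg}$, and the only step causing superpolynomial blow-up is the distributivity swap. That step produces a function space of cardinality $\prod_{i\in I}|\atermset_{i}|$, which is bounded by $|\specagnfn{\maxurg}|^{|I|}\leq\sgexpof{2\maxurg-1}$ once $|I|\leq|\atermpp|$ is folded into the leading factor. Deeper recursive calls operate within ever smaller finite sets and contribute only lower-order costs, yielding the claimed $|\atermpp|\cdot\sgexpof{2\maxurg-1}$ runtime.

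The main obstacle is preventing the distributivity cost from cascading: a naive analysis would pay one extra exponent at every one of the $2\maxurg$ alternation layers and arrive at $\sgexpof{2\maxurg}$ instead of $\sgexpof{2\maxurg-1}$. My strategy is to exploit that after a single distributivity swap the inner choice is already a flat subset of $\specagnfn{\maxurg-1}$, so the deeper recursive invocations only need to merge and deduplicate adjacent layers rather than Skolemise anew. In effect, only the first mismatch along the alternation sequence pays the functional-exponent price, while every deeper level is handled by $\bigoof{|\specgnfn{\maxurg}|}$ set-theoretic operations against the fixed normal-form lattice.
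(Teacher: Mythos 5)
Your overall route matches the paper's: induction on the urgency, \labelcref{axiom:norm} (resp.\ \labelcref{axiom:dual-norm}) to align urgencies, \labelcref{axiom:lattice-assoc} to flatten matching layers, and \labelcref{axiom:lattice-dist} for the mismatched Adam-over-Eve case, with canonicalization against the finite normal-form lattice. But your complexity accounting for the distributivity step contains a genuine error, and it is exactly the point where the paper has to work. You claim that the function space $\prod_{i\in I}\sizeof{\atermset_i}\leq\sizeof{\specagnfn{\maxurg}}^{\sizeof{I}}$ is bounded by $\sgexpof{2\maxurg-1}$ ``once $\sizeof{I}\leq\sizeof{\atermpp}$ is folded into the leading factor.'' This is false: $\sizeof{\specagnfn{\maxurg}}^{\sizeof{I}}$ is roughly $2^{\sizeof{I}\cdot\repexpof{2\maxurg-2}{\bigoof{\sizeof{\specgnfn{0}}}}}$, i.e.\ \emph{exponential} in $\sizeof{I}$, and an exponent cannot be folded into a multiplicative factor $\sizeof{\atermpp}\cdot\sgexpof{2\maxurg-1}$. (Already for $\maxurg=1$ with a fixed objective, your enumeration is exponential in the input term while the claimed bound is linear.) Your rule of deduplicating ``after every axiom application'' does not rescue this, because if \labelcref{axiom:lattice-dist} is applied once over all of $I$, the enumeration cost of building the choice-function term has been paid before any deduplication can happen. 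Your closing paragraph guards against a cascade of exponents across the $2\maxurg$ alternation layers, but that is not where the danger lies; the induction on urgency handles the layers fine, and the blow-up occurs within a single layer.

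The missing idea, which is the heart of the paper's proof, is to \emph{interleave} distribution with canonicalization: use \labelcref{axiom:lattice-assoc} to split the single application of \labelcref{axiom:lattice-dist} into $\sizeof{I}-1$ binary applications, each distributing over a two-element index set. A binary step enumerates at most $\sizeof{\specagnfn{\anurg}}^2$ functions, each union $\atermsetp_f$ is computed in $\sizeof{\specgnfn{\anurg-1}}^2$ time (after normalizing $f(1),f(2)$ via the induction hypothesis when $\anurg<\maxurg$), and crucially the intermediate result is deduplicated \emph{between} steps so that it never exceeds $\sizeof{\specagnfn{\anurg}}=\sgexpof{2\anurg-1}$. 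Summing $\sizeof{I}$ such steps gives the claimed $\sizeof{\atermpp}\cdot\sgexpof{2\maxurg-1}$. As a minor further point, \labelcref{axiom:norm} lowers the inner urgency to the outer one rather than lifting the outer urgency to $\maxurg$ as you describe; re-obtaining a $\specgnfn{\maxurg}$ term is done at the end by wrapping with singleton choices via \labelcref{axiom:single}.
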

\begin{proof}
    By induction on $\maxurg$. Note that for normalforms in $\specgnfn{\anurg}$ also have maximal urgency $\anurg$.
    In any case of $\anurg$ we need to normalize a term $\atermpp = \bigPchoiceOf{\anurg}_{i\in I}{\bigEchoiceOf{\maxurg}{}{\atermset_i}}$.
    The harder case is $\Pchoice = \Achoice$.
    We apply \labelcref{axiom:norm} if $\anurg < \maxurg$ and \labelcref{axiom:lattice-dist} to obtain $\bigEchoiceOf{\anurg}_{f:I \to \atermset_I}{\bigAchoiceOf{\anurg}_{i \in I}{f(i)}}$ and utilize \labelcref{axiom:lattice-assoc} to combine the $\bigAchoiceOf{\anurg}_{i\in I}{}$ with the $f(i) \in \specagnfn{\anurg}$:
    \[
        \specnormchoiceof{\atermpp} = \bigEchoiceOf{\anurg}_{f:I \to \atermset_I}{\bigAchoiceOf{\anurg}{}{\atermsetp_f}}\,
    \]
    where $\atermsetp_f = \bigcup \setcond{\atermsetp}{f(i) = \bigAchoiceOf{\anurg}{}{\atermsetp}, i \in I}$.
    Naively, applying \Cref{axiom:lattice-dist} requires us to enumerate all the choice functions $f:I\to\atermset_{I}$.
    However, we do not need to account for all of them, because after applying \labelcref{axiom:lattice-assoc} for a single $f$, we know that $\bigAchoiceOf{\anurg}{}{\atermsetp_f} \in \specagnfn{\anurg}$.
    Knowing this means that the distribution considers way more functions $f$ (namely $\prod_{i \in I} \sizeof{\atermset_i} \geq 2^{\sizeof{\specgnfn{\anurg}}}$) than there can be sets $\atermsetp_f$ ($\sizeof{\specagnfn{\anurg}}$).
    Instead, we can use \labelcref{axiom:lattice-assoc} to split the application of \labelcref{axiom:lattice-dist} into $\sizeof{I}-1$ many single applications of \labelcref{axiom:lattice-dist} and keeping the size of the intermediary results bound by $\sizeof{\specagnfn{\anurg}} = \sgexpof{2\anurg-1}$:
    In the case of $J \subseteq I$ with $\sizeof{J} = 2$, i.e.\ constant, the number of functions $f : J \to \atermset_J$ is bound by $\sizeof{\atermset_1} \cdot \sizeof{\atermset_2} \leq \sizeof{\specagnfn{\anurg}}^2$.
    For such an $f$ (with binary co-domain) the union $\atermsetp_f$ can be computed in time $\sizeof{\specgnfn{\anurg-1}}^2$ as long as the terms $f(j)$ are in normalform $f(j) \in \specagnfn{\anurg}$.
    In case of $\maxurg = \anurg$, this will be the case.
    Otherwise, $\anurg < \maxurg$ and we apply the induction hypothesis to obtain normalforms equivalent to $f(1)$ and $f(2)$ before computing $\atermsetp_f$.
    That way, we compute no more than $\sizeof{I}\leq \sizeof{\atermpp}$ many \labelcref{axiom:lattice-assoc} for splitting, $\sizeof{I}$ many \labelcref{axiom:lattice-dist} enumerating $\sizeof{\specagnfn{\anurg-1}}^2 \leq \sgexpof{2\anurg-1}$ functions $f$ each, and for each $f$ we apply \labelcref{axiom:lattice-assoc} to create the union $\atermsetp_f$ in time~$\sizeof{\specgnfn{\anurg}}^2\leq \sgexpof{2\anurg-2}$.
    Together (for $\maxurg = \anurg$):
    \[
        \underbrace{\sizeof{I}}_{\labelcref{axiom:lattice-assoc}}{} 
        + \sizeof{I} 
            \cdot \underbrace{\sizeof{\specagnfn{\anurg}}^2}_{\labelcref{axiom:lattice-dist}} 
            \cdot \underbrace{\sizeof{\specgnfn{\anurg-1}}^2}_{\labelcref{axiom:lattice-assoc}, \atermsetp_f} \leq \sizeof{\atermpp} \cdot \sgexpof{2\anurg-1}
    \]

    In case of $\anurg < \maxurg$, the right summand changes to
    \begin{multline*}
        \sizeof{I} 
            \cdot \underbrace{\sizeof{\specagnfn{\anurg}}^2}_{\labelcref{axiom:lattice-dist}} 
            \cdot (\underbrace{\sizeof{\specgnfn{\anurg-1}}^2}_{\labelcref{axiom:lattice-assoc}, \atermsetp_f} 
        + \underbrace{2\sizeof{\specgnfn{\maxurg-1}} \cdot \sgexpof{2\maxurg-3}}_{\text{I.H.}}) \\
        \leq \sizeof{\atermpp} \cdot \sgexpof{2\maxurg-1}
    \end{multline*}
    Adding singleton choice operators \labelcref{axiom:single} to reobtain a term in $\specgnfn{\maxurg}$ costs close to no time.
\end{proof}

\begin{lemma}[Modifies
    \Cref{Lemma:ResolveConcat}]\label{Lemma:EffectiveResolveConcat}
    For any $\aterm,\atermp\in\specgnfn{\maxurg}$, 
    we can find $\specnormconcof{\aterm\appl\atermp}
    \in\specgnfn{\maxurg}$
    with
    $\specnormconcof{\aterm\appl\atermp}
    \specaxeq{\objective}\aterm\appl\atermp$ 
    in $\sgexpof{2\maxurg - 1}$ time.
\end{lemma}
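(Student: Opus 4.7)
The plan is to mirror the structure of the classical Lemma~\ref{Lemma:ResolveConcat} but with an induction on the maximal urgency $\maxurg$ and careful bookkeeping so that the runtime stays within $\sgexpof{2\maxurg - 1}$. In the base case $\maxurg = 0$, the terms $\aterm, \atermp \in \specgnfn{0}$ are chosen representatives of $\lreq{\objective}$-classes, so by \labelcref{axiom:monoid} and \labelcref{axiom:spec} the concatenation $\aterm \appl \atermp$ is $\specaxeq{\objective}$-equivalent to the representative of the class $\classof{\aword\appl\awordp}$. Since $\specgnfn{0}$ is finite and comes from the DFA for $\objective$, this representative can be produced by a single DFA-based lookup in time polynomial in $\sizeof{\specgnfn{0}}$, which is well within the target.

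For the inductive step $\maxurg > 0$, write $\aterm = \bigEchoiceOf{\maxurg}_{i\in I}\bigAchoiceOf{\maxurg}\atermset_{i}$ and $\atermp = \bigEchoiceOf{\maxurg}_{j\in J}\bigAchoiceOf{\maxurg}\atermsetp_{j}$ with $\atermset_i, \atermsetp_j \subseteq \specgnfn{\maxurg - 1}$. Apply \labelcref{axiom:dist-right} twice to push $\atermp$ past the outer Eve/Adam choices of $\aterm$, then \labelcref{axiom:dist-left} twice to push each leaf operand past the outer Eve/Adam choices of $\atermp$, producing
\[
    \bigEchoiceOf{\maxurg}_{i\in I} \bigAchoiceOf{\maxurg}_{\aterm' \in \atermset_i} \bigEchoiceOf{\maxurg}_{j \in J} \bigAchoiceOf{\maxurg}_{\atermp' \in \atermsetp_j} \aterm' \appl \atermp'\,.
\]
The side conditions of \labelcref{axiom:dist-left} and \labelcref{axiom:dist-right} are satisfied because each $\aterm' \in \specgnfn{\maxurg - 1}$ has urgency at most $\maxurg - 1 < \maxurg$. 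Invoke the induction hypothesis on every pair $(\aterm', \atermp') \in \specgnfn{\maxurg - 1}^2$ to replace $\aterm' \appl \atermp'$ by an equivalent $\specnormconcof{\aterm' \appl \atermp'} \in \specgnfn{\maxurg - 1}$, and then pass the resulting (not-yet-normal) term through $\specnormchoiceof{\cdot}$ of Lemma~\ref{Lemma:EffectiveResolveChoice} to flatten the two nested layers of urgency-$\maxurg$ choice into a single one, yielding a term in $\specgnfn{\maxurg}$.

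The main obstacle is the complexity analysis: naively, distributivity creates $\sizeof{I}\sizeof{J}\max_i\sizeof{\atermset_i}\max_j\sizeof{\atermsetp_j}$ leaf concatenations, which can be as large as $\sgexpof{2\maxurg - 1}^2$, and a naive recursion would miss the target bound. The fix is the same memoisation trick used inside Lemma~\ref{Lemma:EffectiveResolveChoice}: there are only $\sizeof{\specgnfn{\maxurg - 1}} = \sgexpof{2\maxurg - 2}$ distinct operands, so we cache the $\specnormconcof{\cdot}$ of every pair of $\specgnfn{\maxurg - 1}$-terms, filling a table of size $\sgexpof{2\maxurg - 2}^2 = \sgexpof{2\maxurg - 2}$ at cost $\sgexpof{2\maxurg - 3}$ per entry (by induction hypothesis), for a total $\sgexpof{2\maxurg - 2}$. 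Using bit-vector representations for the subsets $\atermset_i, \atermsetp_j$, the intermediate pre-flattening term has size bounded by $\sgexpof{2\maxurg - 1}$, so the concluding call to $\specnormchoiceof{\cdot}$ runs in time $\sgexpof{2\maxurg - 1}\cdot\sgexpof{2\maxurg - 1} = \sgexpof{2\maxurg - 1}$ after absorbing the quadratic factor into the top exponent. Summing everything yields the claimed $\sgexpof{2\maxurg - 1}$ bound.
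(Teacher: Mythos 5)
Your proof is correct and, in its mathematical core, identical to the paper's: both distribute via \labelcref{axiom:dist-right} and \labelcref{axiom:dist-left} to reach the four-layer term $\bigEchoiceOf{\maxurg}_{i}\bigAchoiceOf{\maxurg}_{\aterm'}\bigEchoiceOf{\maxurg}_{j}\bigAchoiceOf{\maxurg}_{\atermp'}\aterm'\appl\atermp'$, recurse on the pairs, and flatten the doubled urgency-$\maxurg$ layers with \Cref{Lemma:EffectiveResolveChoice}; the base case is in both proofs a lookup of the syntactic-congruence class of the concatenation (your justification via \labelcref{axiom:monoid} and \labelcref{axiom:spec} is in fact the right one — the paper's citation of \labelcref{axiom:eqmap} at that point is evidently a slip). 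The one place you genuinely diverge is the complexity accounting, and there your motivating claim is off: naive recursion does \emph{not} miss the target bound. Under the paper's convention, $\sgexpof{\anurg}$ denotes the class $\repexpof{\anurg}{\bigoof{\sizeof{\specgnfn{0}}}}$ and $\leq,=$ are read as membership/inclusion, so squaring is absorbed, ${(\sgexpof{2\maxurg-1})}^{2}=\sgexpof{2\maxurg-1}$. The paper's own analysis is exactly the naive one: distribution creates at most $2\cdot\sizeof{\aterm}\cdot\sizeof{\atermp}$ pairs, each recursive call costs $\sgexpof{2\maxurg-3}$ by the induction hypothesis, the intermediate term has size $\sgexpof{2\maxurg-1}$, and the final flattening costs another ${(\sgexpof{2\maxurg-1})}^{2}=\sgexpof{2\maxurg-1}$. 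Your memoisation over pairs of $\specgnfn{\maxurg-1}$-terms is sound and even caps the recursion cost at $\sgexpof{2\maxurg-2}$, so it buys a (marginally) tighter intermediate bound, but it is unnecessary machinery for the stated result — the absorption built into the $\sgexpof{\cdot}$ notation already does the work.
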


\begin{proof}
    The proof in \Cref{Section:Normalization} proceeds 
    by an induction on urgency.
    We change the base case to $\maxurg = 1$ and for both, base and inductive case, apply \labelcref{axiom:dist-left,axiom:dist-right} as in the proof of \Cref{Lemma:ResolveConcat}.
    The application of \labelcref{axiom:dist-left} creates at most $\sizeof{\aterm}$ copies of $\atermp$, and \labelcref{axiom:dist-right}
    creates another $\sizeof{\atermp}$ copies of each $\aterm'$.
    Call the resulting term $\atermppp$ with
    $|\atermppp|\leq 2\cdot|\aterm|\cdot|\atermp|$.
    
    In the base case, $\aterm'$ and $\atermp'$ belong to $\specgnfn{0}$. 
    We employ~\labelcref{axiom:eqmap} and find 
    an $\awordpp\in\specgnfn{0}$ with 
    $\awordpp\lreq{\objective}\aword\appl\awordp$.
    Finding the syntactical congruence class 
    of $\aword\appl\awordp$ can be done in time 
    ${(\sgexpof{0})}^{k}$ for some fixed $k\in\nat$ by naively
    checking.
    
    In the inductive case, we apply the induction hypothesis to normalize the 
    concatenative subterms $\aterm'\appl\atermp'$ into $\specgnfn{\maxurg}$.
    A term in $\specgnfn{\maxurg}$ term is bound by size $\sizeof{\specagnfn{\maxurg}} \leq \sgexpof{2\maxurg-1}$.
    After applying the induction hypothesis to each pair in $\atermppp$, the resulting term $\atermpppp$ 
    has size~$|\atermpppp|\leq2\cdot|\aterm|\cdot|\atermp|\cdot\sgexpof{2\maxurg-1} = \sgexpof{2\maxurg-1}$.
    In \Cref{Section:Normalization} 
    the layers of choices are resolved by invoking 
    \Cref{Lemma:ResolveChoice}.
    In our case, we employ the 
    modified \Cref{Lemma:EffectiveResolveChoice} for another $|\atermpppp|\cdot\sgexpof{2\maxurg-1}
    \leq{(\sgexpof{2\maxurg-1})}^{2}
    =\sgexpof{2\maxurg-1}$.
\end{proof}

\begin{lemma}[Modifies \Cref{Lemma:NTNormalization}]\label{Lemma:EffectiveResolveNT}
    For any $\nonterminal\in\nonterminals$, 
    we can find $\normof{\nonterminal}\in\specgnfn{\maxurg}$ 
    with 
    $\normof{\nonterminal}\specaxeq{\objective}\nonterminal$
    in $|\eqmap|\cdot|\nonterminals|
    \cdot\sgexpof{2\maxurg-1}$ time.
\end{lemma}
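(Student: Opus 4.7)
The plan is to mirror the proof of \Cref{Lemma:NTNormalization} but carry out the Kleene iteration effectively over the finite lattice $\specgnfn{\maxurg}$, using \Cref{Lemma:EffectiveResolveChoice} and \Cref{Lemma:EffectiveResolveConcat} as black boxes for combining already-normalized operands. I initialize $\nonterminal^{(0)} := \specnormof{\terr} \in \specgnfn{\maxurg}$ for every $\nonterminal \in \nonterminals$, and define the update $\nonterminal^{(i+1)} := \specnormchoiceof{\Echoice_{\maxurg}\set{\widehat{\eqmapof{\nonterminal}}^{(i)},\, \nonterminal^{(i)}}}$, where $\widehat{\eqmapof{\nonterminal}}^{(i)}$ denotes the term obtained from $\eqmapof{\nonterminal}$ by evaluating concatenations via $\specnormconc$ and choices via $\specnormchoice$, treating each occurrence of a non-terminal $\nonterminalp$ as the atomic normal-form leaf $\nonterminalp^{(i)} \in \specgnfn{\maxurg}$. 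The disjunction with $\nonterminal^{(i)}$ under $\echoicen{\maxurg}$ guarantees the chain is monotone, so $\nonterminal^{(i)} \specaxleq{\objective} \nonterminal^{(i+1)}$ at every step.

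Because each $\nonterminal^{(i)}$ lives in the finite lattice $\specgnfn{\maxurg}$, whose height is bounded by $\sizeof{\specagnfn{\maxurg}} = \sgexpof{2\maxurg-1}$, and we have $\sizeof{\nonterminals}$ non-terminals climbing in parallel, the simultaneous iteration must stabilize after at most $\sizeof{\nonterminals} \cdot \sgexpof{2\maxurg-1}$ strict updates. Using a worklist strategy that only re-evaluates a non-terminal $\nonterminal$ when some $\nonterminalp$ occurring in $\eqmapof{\nonterminal}$ has changed, I amortize the work: the total number of re-evaluations of $\nonterminal$ across the entire run is bounded by the height times the number of dependencies pointing at it, and summing over all non-terminals yields a total cost of $\sizeof{\eqmap} \cdot \sizeof{\nonterminals} \cdot \sgexpof{2\maxurg-1}$, since each re-evaluation walks $\eqmapof{\nonterminal}$ once and spends $\sgexpof{2\maxurg-1}$ per interior node by \Cref{Lemma:EffectiveResolveConcat} and \Cref{Lemma:EffectiveResolveChoice}. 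I then set $\normof{\nonterminal}$ to the stabilized value.

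Correctness proceeds exactly as in \Cref{Lemma:NTNormalization}. For $\normof{\nonterminal} \specaxleq{\objective} \nonterminal$, an induction on the iteration step combines \labelcref{axiom:rep}, \labelcref{axiom:eqmap}, \labelcref{axiom:lattice-mono}, \labelcref{axiom:single}, and \labelcref{axiom:lattice-ord}, together with congruence of $\specnormconc$ and $\specnormchoice$ guaranteed by \Cref{Lemma:EffectiveResolveConcat} and \Cref{Lemma:EffectiveResolveChoice}. For the converse $\nonterminal \specaxleq{\objective} \normof{\nonterminal}$, at the fixed point we have $\eqmapof{\nonterminal}\replace{\nonterminals}{\normof{\nonterminals}} \specaxleq{\objective} \normof{\nonterminal}$ for every $\nonterminal$, so \labelcref{axiom:lfp} applies and yields $\nonterminal \specaxleq{\objective} \normof{\nonterminal}$ directly.

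The main obstacle is the amortization: a naive bound gives $\sgexpof{2\maxurg-1}$ global iterations, each of which would redo $\sizeof{\eqmap} \cdot \sgexpof{2\maxurg-1}$ work across all non-terminals, losing an exponential factor. The fix is to avoid expanding the substituted non-terminals at all, keeping them as leaves in $\specgnfn{\maxurg}$ so that each structural node of $\eqmapof{\nonterminal}$ costs only one call into \Cref{Lemma:EffectiveResolveChoice,Lemma:EffectiveResolveConcat}, and to trigger re-evaluations only on dependency changes, so that the total number of structural-node visits across the entire fixed-point computation stays at $\sizeof{\eqmap} \cdot \sizeof{\nonterminals} \cdot \sgexpof{2\maxurg-1}$.
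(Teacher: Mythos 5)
Your construction is, at its core, the paper's own proof: the same Kleene iteration over the finite lattice $\specgnfn{\maxurg}$, substituting the current iterates for non-terminals, normalizing with \Cref{Lemma:EffectiveResolveChoice,Lemma:EffectiveResolveConcat}, bounding convergence by $\sizeof{\nonterminals}\cdot\sizeof{\specagnfn{\maxurg}}=\sizeof{\nonterminals}\cdot\sgexpof{2\maxurg-1}$ strict updates via the lattice height, and closing both directions with \labelcref{axiom:rep}, \labelcref{axiom:lattice-ord}, \labelcref{axiom:eqmap}, and \labelcref{axiom:lfp}. Your binary accumulation $\Echoicen{\maxurg}\set{\widehat{\eqmapof{\nonterminal}}^{(i)}, \nonterminal^{(i)}}$ in place of the paper's choice over all previous iterates is a harmless variant that yields the same monotone chain in the finitary setting.

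However, your ``main obstacle'' is a phantom, and the worklist amortization you build to overcome it is unnecessary. Recall that $\sgexpof{2\maxurg-1}$ abbreviates $\repexpof{2\maxurg-1}{\bigoof{\sizeof{\specgnfn{0}}}}$ with the constant hidden \emph{inside} the tower, so polynomial powers collapse: ${(\sgexpof{2\maxurg-1})}^{k}=\sgexpof{2\maxurg-1}$ for fixed $k$ (the paper uses this absorption repeatedly, e.g.\ in the proofs of \Cref{Lemma:EffectiveResolveChoice,Lemma:EffectiveResolveConcat}). Multiplying two $(2\maxurg-1)$-fold exponentials does not produce a $2\maxurg$-fold exponential. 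Consequently the naive global iteration already meets the stated bound, and this is exactly how the paper computes it: at most $\sizeof{\nonterminals}\cdot\sgexpof{2\maxurg-1}$ iterations, each re-normalizing all components at cost $\sizeof{\eqmap}\cdot\sgexpof{2\maxurg-1}$, for a total of $\sizeof{\eqmap}\cdot\sizeof{\nonterminals}\cdot{(\sgexpof{2\maxurg-1})}^{2}=\sizeof{\eqmap}\cdot\sizeof{\nonterminals}\cdot\sgexpof{2\maxurg-1}$. Your dependency-triggered re-evaluation is a correct optimization, but the extra bookkeeping (and the somewhat loose accounting of ``height times dependencies'') buys nothing for the claimed complexity; the result stands either way.
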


\begin{proof}[Proof Sketch]
    We construct an increasing chain of $\specgnfn{\maxurg}$ 
    terms using the same least fixed point construction from 
    \Cref{Section:Normalization}.
    The constructed term 
    for component $\nonterminal$ has size at most 
    $|\eqmapof{\nonterminal}|\cdot{(\sgexpof{2\maxurg - 1})}^{2}$.
    Per \Cref{Lemma:EffectiveResolveChoice,Lemma:EffectiveResolveConcat},
    this term is normalized in 
    $|\eqmapof{\nonterminal}|\cdot{(\sgexpof{2\maxurg - 1})}^{3}
    =|\eqmapof{\nonterminal}|\cdot\sgexpof{2\maxurg-1}$
    time.
    So normalizing all components takes 
    $|\eqmap|\cdot\sgexpof{2\maxurg-1}$ time.
    The chain
    converges in at most 
    $|\nonterminals|\cdot|\specagnfn{\maxurg}|=
    |\nonterminals|\cdot\sgexpof{2\maxurg-1}$
    iterations.
    So the normalization process takes 
    $|\eqmap|\cdot|\nonterminals|\cdot\sgexpof{2\maxurg-1}$
    time.
\end{proof}

We complete the normalization function
analogously to \Cref{Section:Normalization}.
To calculate $\specnormof{\aterm}$ for any finitary
$\aterm\in\terms$ with non-terminals,
the algorithm first constructs $\specnormof{\nonterminal}$ for all 
$\nonterminal\in\nonterminals$.
This takes $|\eqmap|\cdot|\nonterminals|\cdot\sgexpof{2\maxurg-1}$ 
time.
Then, we let 
$\specnormof{\aterm}=\specnormof{\aterm\replace{\nonterminals}{\aterm_{\nonterminals}}}$
where $\aterm_{\nonterminal}=\specnormof{\nonterminal}$.
The normalization takes $|\aterm|\cdot\sgexpof{2\maxurg-1}$
time
per \Cref{Lemma:EffectiveResolveChoice,Lemma:EffectiveResolveConcat}.
This concludes the construction of the algorithm and 
thus proves \Cref{Lemma:EffNormalForm}.

\section{Decidability and Complexity}\label{Appendix:Decidability}

\begin{proof}[Proof of \Cref{Proposition:Undec}]
    Consider two CFG~$\cfg_i = (\nonterminals_i, \eqmap_i, S_i), i \in \set{0,1}$ over some alphabet~$\analph$ and with disjoint sets of non-terminals~$\nonterminals_1 \cap \nonterminals_2 = \emptyset$.
    Note that each non-terminal has a single production rule of the shape $\eqmap_i(\nonterminal) = \aword_1 \mid \ldots \mid \aword_n$, $\aword_i \in {(\nonterminals \cup \analph)}^*$.
    We reproduce this non-deterministic structure by yielding the choice to Eve. 
    We construct the program-term grammar $(\nonterminals_1 \cup \nonterminals_2, \eqmap)$,
    where $\eqmap$ is defined via $ \eqmap(\nonterminal) = \bigEchoiceOf{1} \set{\aword_1, \ldots, \aword_n}$ for each production rule of above shape.
    Kleene iteration yields the normal form 
    \[
        \normof{S_i} = \bigEchoiceOf{1} \setcond{\bigAchoiceOf{1} \aword}{\aword \in L(\cfg_i)} \congeq \bigEchoiceOf{1} L(\cfg_i)\,.
    \]
    We utilize ${\speccongleq{\objective}} = {\congleq}$ by choosing the shattering objective $\objective=\setcond{\aword\appl\aword^{\mathit{reverse}}}{\aword\in\analph^{*}}$.
    Since $\speccongleq{\objective}$ is the reflexivity relation on $\Sigma^* \times \Sigma^*$, the domination pre-order yields $\bigEchoiceOf{1} L \disceq \bigEchoiceOf{1} M$ if and only if $ L = M $ for any $L,M \subseteq \Sigma^*$.
    Using \Cref{Lemma:DominationPreorder}, \Cref{Theorem:FullAbstractionSpecialized}, and that $\objective$ is shattering and right-separating, $L(\cfg_1) = L(\cfg_2)$ if and only if
    $S_1 \congeq \bigEchoiceOf{1} L(\cfg_1) \congeq \bigEchoiceOf{1} L(\cfg_2) \congeq S_2$. 
\end{proof}

\begin{proof}[Proof of \Cref{Theorem:UpperLower}]
    We show that 
    \decwhichover{\speccongleq{\objective}}{\maxurg} is 
    $\mathsf{PTIME}$-hard (wrt. $\log$-space reductions) 
    by sketching out a reduction
    from the Monotonic Circuit Value Problem~\cite{goldschlager1977monotone}
    The problem consists of assignments of boolean 
    variables $\mathsf{P}_{i\leq n}$ to $\vee$/$\wedge$ 
    clauses built out 
    of the variables
    $\mathsf{P}_{j<i}$, to $\mathsf{true}$,
    or to $\mathsf{false}$.
    The input is accepted 
    if and only if the variable $\mathsf{P}_{n}$ evaluates 
    to true.
    Let non-empty $\objective\subseteq\analph^{*}$ 
    and let $\aword\in\objective$.
    For each boolean variable $\mathsf{P}_i$, the $\log$-space 
    Turing Machine
    outputs a non-terminal 
    $P_i$ and a defining equation $\eqmapof{P_i}$.
    The machine outputs $\eqmapof{P_i}=\aword$
    if $\mathsf{P}_i=\mathsf{true}$, and it outputs 
    $\eqmapof{P_i}=\terr$ 
    if $\mathsf{P}_i=\mathsf{false}$.
    For $\mathsf{P}_i=\mathsf{P}_{j_0}
    \wedge \ldots \wedge \mathsf{P}_{j_k}$, the machine outputs
    $\eqmapof{P}=\bigAchoiceOf{1}\set{P_{j_l}\mid  l \leq k}$.
    For $\mathsf{P}_i=\mathsf{P_{j_0}}
    \vee \ldots\vee \mathsf{P_{j_k}}$, it outputs
    $\bigEchoiceOf{1}\set{P_{j_l}\mid l\leq k}$.
    The reduced problem instance asks whether 
    $\aword\speccongleq{\objective} P_{n}$. 
    By induction on $i$ 
    we show  
    $P_{i}\specaxeq{\objective}\aword$ if 
    $\mathsf{P}_{i}$ evaluates to true, 
    and 
    $P_{i}\specaxeq{\objective}\aword$ and if 
    $\mathsf{P}_{i}$ evaluates to false.
    Since $\specaxeq{\objective}$ sound,
    and $\aword\not\speccongleq{\objective}\terr$
    (with the witnessing context 
    $\acontext{\contextvar}=\contextvar$)
    this proof suffices.

    For the base case, we have $\mathsf{P}_0=\mathsf{true}$ or
    $\mathsf{P}_0=\mathsf{false}$.
    After applying \Cref{axiom:eqmap}, we get 
    $P\specaxeq{\objective} \aword$ and $P\specaxeq{\objective}\terr$
    respectively.
    For the inductive case, we have 
    $\mathsf{P}_i=\mathsf{true}$, 
    $\mathsf{P}_i=\mathsf{false}$, 
    $\mathsf{P}_i
    =\mathsf{P}_{j_0}\wedge 
    \ldots\wedge\mathsf{P}_{j_k}$, or $\mathsf{P}_i=\mathsf{P}_{j_0}\vee 
    \ldots\vee\mathsf{P}_{j_k}$.
    The first two sub-cases are already handled in the base case.
    The latter two cases are dual, 
    so we only handle the $\wedge$-case.
    Per \Cref{axiom:eqmap}, we have 
    $P_i\specaxeq{\objective}\bigAchoiceOf{1}
    \set{P_{j_l}\mid l\leq k}$.
    The variable $\mathsf{P}_{i}$ evaluates to true, if and only if 
    $\mathsf{P}_{j_l}$ evaluates to true for all $l\leq k$.
    Let all variables evaluate to true.
    Induction hypothesis delivers us 
    $\mathsf{P}_{j_l}\specaxeq{\objective}\aword$ 
    and thus 
    $P_i\specaxeq{\objective}\bigAchoiceOf{1}\set{\aword}$.
    The rule \Cref{axiom:single} tells us 
    $\bigAchoiceOf{1}\set{\aword}\specaxeq{\objective}\aword$.
    W.l.o.g.\ let a variable $P_{j_0}$ evaluate to false.
    then induction hypothesis delivers us 
    $P_{j_0}\specaxeq{\objective}\terr$ and 
    $P_{j_l}\specaxeq{\objective}\aword$ or 
    $P_{j_l}\specaxeq{\objective}\terr$ for all 
    $0 < l\leq k$.
    We have $P_i\specaxeq{\objective}\bigAchoiceOf{1}\set{\terr}$
    or $P_i\specaxeq{\objective}\bigAchoiceOf{1}\set{\terr, \aword}$.
    In the former case we have $P_i\specaxeq{\objective}\terr$
    per \Cref{axiom:single}.
    In the latter case, the lattice axioms tell us that 
    $\bigAchoiceOf{1}\set{\aword, \terr}\specaxeq{\objective}\terr$,
    and thus $P_i\specaxeq{\objective}\terr$.
\end{proof}

\section{Upper Bound: Compact Terms Representation and Characteristic Terms}
\label{Appendix:Upperbound}
\subsection{Compact Term Representation}\label{Section:CompactEncoding}
Recall that the idea behind our compact term representation is to translate syntactic congruence classes $\classof{\aletter} : \states \to \states$ into angelic choices over state changes.  
With a representative system for the syntactic congruence, we can assume the source alphabet is $\Sigma$ rather than $\analphbot^*/\lreq{\objective}$. 
The translation of term~$\aterm$ over~$\analph$ 
is the term $\arrof{\aterm}$ over the alphabet $\arralph$ defined by\vspace{-0.3cm}
$$\begin{aligned}
    \arrof{\aletter}&=\bigEchoiceOf{1}{}{\setcond{(\astate, \astatep)}
        {\transitions(\astate,\aletter) = \astatep}} 
        &
        \arrof{\terr}&=\terr
        \\
        \arrof{\bigPchoiceOf{\anurg}{}{\atermset}}&=\bigPchoiceOf{\anurg}
        {}{\setcond{\arrof{\aterm}}{\aterm\in\atermset}}
        & 
        \arrof{\tskip} &= \tskip
        \\
        \arrof{\aterm\appl\atermp}&=\arrof{\aterm}\appl\arrof{\atermp}
        &
        \arrof{\nonterminal}&=\nonterminal \,,
\end{aligned}
$$
and we also translate the equations, $\arreqmapof{\nonterminal} =\arrof{\eqmapof{\nonterminal}}$.
The new alphabet calls for a translation of the objective. 
%
The DFA $\arrof{\objective}$ is obtained from $\objective$ by 
modifying the transitions and adding a failure state 
$\bot$.
We let $\arrof{\transitions}(\astate, (\astatep, \astatepp)) = \astatepp$
if $\astate=\astatep$ and 
$\arrof{\transitions}(\astate, (\astatep, \astatepp))=\bot$ 
if $\astate\neq\astatep$.
The accepting and initial states remain the same.
The translation is faithful when it comes to our notion of observable behavior. 
\begin{lemma}\label{Lemma:DFASimulation}
$\winsof{\aterm}{\objective}$ if and only if $\winsof{\arrof{\aterm}}{\arrof{\objective}}$. 
\end{lemma}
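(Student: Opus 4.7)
The plan is to exhibit a natural correspondence between winning strategies for Eve in the two games, based on the observation that the urgency-$1$ Eve choices that $\arrof{.}$ inserts at letter positions are \emph{forced} moves: Eve must select the pair $(\astate, \transitions(\astate, \aletter))$ where $\astate$ is the state of $\objective$'s DFA reached along the already-committed prefix of the play; any other pair is sent by $\arrof{\transitions}$ into the sink $\bot$, from which $\arrof{\objective}$ never accepts. Crucially, $\arrof{.}$ introduces no new Adam positions, so Adam's move set in $\semof{\arrof{\aterm}}$ is in bijection with his move set in $\semof{\aterm}$.

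For the forward direction I would lift a positional winning strategy $\strat$ for $\winsof{\aterm}{\objective}$ to a strategy $\arrof{\strat}$ on $\semof{\arrof{\aterm}}$: on Eve positions inherited from $\aterm$ play as $\strat$; on a position whose leading subterm is a translated letter $\arrof{\aletter}$ select the unique pair $(\astate, \transitions(\astate, \aletter))$ indexed by the DFA state determined by the pairs committed so far along the play. Any play conform to $\arrof{\strat}$ then projects, by erasing first components of the committed pairs, to a play conform to $\strat$. Since the latter yields a word $\aword\in\objective$, the former yields the accepting run of $\objective$ on $\aword$, hence is won by Eve under $\arrof{\objective}$.

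For the backward direction, given a winning $\arrof{\strat}$ for $\winsof{\arrof{\aterm}}{\arrof{\objective}}$, I would obtain $\strat$ by restricting $\arrof{\strat}$ to positions whose leading subterm is an originally-present Eve choice. Every play in $\semof{\aterm}$ conform to $\strat$ lifts canonically to a play in $\semof{\arrof{\aterm}}$ conform to $\arrof{\strat}$ by inserting the letter-resolution moves prescribed by $\arrof{\strat}$. Because $\arrof{\strat}$ wins, the committed sequence of pairs never visits $\bot$, so it is a bona fide run of $\objective$ ending in an accepting state, and the projected word lies in $\objective$.

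The main obstacle is verifying that the leading-subterm discipline in $\semof{\arrof{\aterm}}$ interleaves the new and the old moves correctly. I have to check that whenever a translated leaf $\bigEchoiceOf{1}{\setcond{(\astate,\astatep)}{\transitions(\astate,\aletter)=\astatep}}$ becomes leading, all of the originally higher-urgency choices appearing to its left have already been resolved, so that the current DFA state is well-defined by the committed prefix. This will follow by structural induction on $\aterm$ using that $\arrof{.}$ commutes with all compound constructors and only rewrites leaves: any original leading subterm of urgency $>1$ remains leading after translation, and the only new urgency-$1$ comparisons are among translated letters and pre-existing urgency-$1$ choices, where the leftmost-leading rule yields precisely the left-to-right order in which the DFA must read letters.
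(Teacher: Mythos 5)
Your proposal is correct and takes essentially the same route as the paper: the paper's actual proof (via its strengthened lemma on quasi-runs) formalizes precisely your two key observations — the forced urgency-$1$ Eve moves tracking the DFA run become the invariant that positions of the translated game are $\sigeq$-equal to a partially resolved run (a quasi-run on the word head) concatenated with the translated body, and your claim that Adam's moves are preserved is its commutation lemma $\successorsof{\arrof{\aterm}}=\arrof{\successorsof{\aterm}}$ for headless terms. The only presentational difference is that the paper replaces your explicit strategy lifting by a transfinite induction on the ordinal number of moves Eve needs to win (a technicality forced by the infinitary syntax), and your interleaving obstacle — a translated letter can lead only when everything to its left is already a committed word of pairs — is discharged there by the head/body decomposition together with the case split on whether the body has urgency $\geq 2$ or exactly $1$, which is exactly the leftmost-leading analysis you sketch.
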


By combining \Cref{Lemma:DFASimulation} and \Cref{Corollary:Algorithm}, we can decide the specialized contextual preorder $\aterm\speccongleq{\objective}\atermp$ by checking whether $\winsof{\acontext{\arrof{\aterm}}}{\arrof{\objective}}$
implies $\winsof{\acontext{\arrof{\atermp}}}{\arrof{\objective}}$ 
for all contexts  
$\arrof{\atermppp}\appl\contextvar\appl\arrof{\atermpppp}$
with $\atermppp\in\specgnfnof{\maxurg-1}{\objective}$ 
and $\atermpppp\in\specgnfnof{\maxurg}{\objective}$.
The problem with this algorithm is that, for complexity reasons, we cannot work with an explicit translation of terms in specialized normal form. 
To overcome the problem, an attempt would be to generalize the above set of contexts to all 
${\atermppp\appl\contextvar\appl\atermpppp}$ with $\atermppp\in\specgnfnof{\maxurg-1}{\arrof{\objective}}$ and $\atermpppp\in\specgnfnof{\maxurg}{\arrof{\objective}}$. 
However, the set~$\specgnfnof{\maxurg}{\arrof{\objective}}$ contains terms over the new alphabet $\states^2$ that do not result from a translation of an $\specgnfnof{\maxurg}{\objective}$ term.
Unfortunately, these extra contexts may, incorrectly so, disprove the specialized contextual preorder of interest. 
%

Our solution is to come up with a direct construction for the image of $\specgnfnof{\maxurg}{\objective}$ under $\trspecnorm\circ\arr$, the translation followed by a normalization.
This is the appropriate subset of $\specgnfnof{\maxurg}{\arrof{\objective}}$ over which we should form contexts. 
The idea behind the construction is to explicitly translate the urgency~$1$ terms in specialized normal form, and build up the higher orders in the standard way. 
We define the \emph{translated $\objective$-specialized normal form} terms (with $\anurg>1$) by
\begin{align*}
    \plgnfn{0} = \specgnfnof{0}{\arrof{\objective}} &\qquad \plgnfn{1} = \trspecnormof{\arrof{\specgnfn{1}}}\\
    \plagnfn{\anurg} =&\ \setcond
        {\bigAchoiceOf{\anurg}{}{\atermset}}
        {\emptyset\neq\atermset\subseteq\plgnfn{\anurg-1}}\\
    \plgnfn{\anurg} =&\ \setcond{\bigEchoiceOf{\anurg}\atermset}
        {\emptyset\neq\atermset\subseteq\plagnfn{\anurg}}\ .
\end{align*}
The set of contexts we should iterate over is thus
\begin{align*}
\arrcontexts\ =\ \setcond{\atermpp\appl\contextvar\appl\atermppp}{\atermpp\in \plgnfn{\maxurg-1}\text{ and }\atermppp\in\plgnfn{\maxurg}}\ . 
\end{align*}
The argumentation leads to the following algorithm for checking the specialized contextual preorder.
\begin{proposition}\label{Proposition:PlausibleNormalContexts}
    Let $\objective$ be regular and $\aterm,\atermp$ finitary.  
    Then $\aterm\speccongleq{\objective}\atermp$ iff
    for all $\acontext{\contextvar}\in\arrcontexts$, we have that
    $\winsof{\acontext{\arrof{\aterm}}}{\arrof{\objective}}$ implies 
    $\winsof{\acontext{\arrof{\atermp}}}{\arrof{\objective}}$.
\end{proposition}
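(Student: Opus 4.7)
The plan is to combine Corollary~\ref{Corollary:Algorithm} with Lemma~\ref{Lemma:DFASimulation} and an inductive correspondence between $\plgnfn{\anurg}$ and the translations $\arrof{\specgnfn{\anurg}}$.

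The first step is to establish by induction on $\anurg \geq 1$ two companion claims. \emph{(i)}~For every $t \in \plgnfn{\anurg}$ there exists $t^* \in \specgnfn{\anurg}$ with $t \specaxeq{\arrof{\objective}} \arrof{t^*}$. \emph{(ii)}~For every $t^* \in \specgnfn{\anurg}$ we have $\trspecnormof{\arrof{t^*}} \in \plgnfn{\anurg}$ and $\trspecnormof{\arrof{t^*}} \specaxeq{\arrof{\objective}} \arrof{t^*}$. The base case $\anurg = 1$ is immediate from $\plgnfn{1} = \trspecnormof{\arrof{\specgnfn{1}}}$ together with soundness of normalization. For $\anurg > 1$, a term $t \in \plgnfn{\anurg}$ decomposes as $\bigEchoiceOf{\anurg}\{\bigAchoiceOf{\anurg}\atermset_i \mid i \in I\}$ with $\atermset_i \subseteq \plgnfn{\anurg-1}$; the induction hypothesis yields preimages $u^* \in \specgnfn{\anurg-1}$ for each $u \in \atermset_i$, from which we assemble $t^* \in \specgnfn{\anurg}$ with the same top-level choice pattern, and conclude $t \specaxeq{\arrof{\objective}} \arrof{t^*}$ by congruence via axiom~\ref{axiom:lattice-mono}. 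Claim~(ii) is symmetric: since the outer urgency-$\anurg$ choices of $\arrof{t^*}$ are already correctly nested, the normalization acts only on the inner pieces, which land in $\plgnfn{\anurg-1}$ by the induction hypothesis.

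For the forward direction, suppose $\aterm \speccongleq{\objective} \atermp$ and pick any $\acontext{\contextvar} = \atermpp \appl \contextvar \appl \atermppp \in \arrcontexts$. Use claim~(i) to obtain preimages $\atermpp^* \in \specgnfn{\maxurg-1}$ and $\atermppp^* \in \specgnfn{\maxurg}$ with $\atermpp \specaxeq{\arrof{\objective}} \arrof{\atermpp^*}$ and $\atermppp \specaxeq{\arrof{\objective}} \arrof{\atermppp^*}$. Congruence, Proposition~\ref{Proposition:Soundness}, and Lemma~\ref{Lemma:DFASimulation} together give
\[
\winsof{\atermpp \appl \arrof{\aterm} \appl \atermppp}{\arrof{\objective}} \;\iff\; \winsof{\arrof{\atermpp^* \appl \aterm \appl \atermppp^*}}{\arrof{\objective}} \;\iff\; \winsof{\atermpp^* \appl \aterm \appl \atermppp^*}{\objective},
\]
and likewise for $\atermp$. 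Applying the hypothesis with context $\atermpp^* \appl \contextvar \appl \atermppp^*$ closes the implication.

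For the backward direction, assume the winning implication holds for all contexts in $\arrcontexts$. By Corollary~\ref{Corollary:Algorithm} it suffices to verify, for every $\atermpp^* \in \specgnfnof{\maxurg-1}{\objective}$ and $\atermppp^* \in \specgnfn{\maxurg}$, that $\winsof{\specnormof{\atermpp^* \appl \aterm \appl \atermppp^*}}{\objective}$ implies $\winsof{\specnormof{\atermpp^* \appl \atermp \appl \atermppp^*}}{\objective}$. Soundness and Lemma~\ref{Lemma:DFASimulation} reformulate this as the analogous winning statement over $\arrof{\objective}$ for the context $\arrof{\atermpp^*} \appl \contextvar \appl \arrof{\atermppp^*}$. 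Setting $\atermpp = \trspecnormof{\arrof{\atermpp^*}}$ and $\atermppp = \trspecnormof{\arrof{\atermppp^*}}$, claim~(ii) places $\atermpp \appl \contextvar \appl \atermppp$ in $\arrcontexts$; soundness of $\specaxeq{\arrof{\objective}}$ then transfers the hypothesis back to the required implication. The main obstacle will be claim~(ii): verifying that $\trspecnormof{\arrof{t^*}}$ lands \emph{exactly} in $\plgnfn{\anurg}$, and not merely in the larger ambient set $\specgnfnof{\anurg}{\arrof{\objective}}$, requires a careful inspection of the normalization algorithm to confirm that it respects the alternating urgency structure of $\arrof{t^*}$ rather than collapsing or rearranging it via distributivity.
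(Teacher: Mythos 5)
Your overall architecture — reduce to \Cref{Corollary:Algorithm}, transfer winning via \Cref{Lemma:DFASimulation}, and bridge the two context families by an inductive correspondence between $\plgnfn{\anurg}$ and $\arrof{\specgnfn{\anurg}}$ up to $\specaxeq{\arrof{\objective}}$ — is exactly the paper's strategy for $\maxurg>1$, and your claims~(i) and~(ii) are precisely the two inclusions the paper proves by induction (your worry about claim~(ii) is resolved just as you suggest, via $\normof{\arrof{\bigEchoiceOf{\anurg}\atermset}} = \normof{\bigEchoiceOf{\anurg}\setcond{\normof{\arrof{\atermpp}}}{\atermpp\in\atermset}}$; it is not where the difficulty lies).

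The genuine gap is the case $\maxurg=1$, equivalently the urgency-$0$ left component of the contexts, which your claims do not cover: they are stated for $\anurg\geq 1$, but the forward direction applies claim~(i) to $\atermpp\in\plgnfn{\maxurg-1}$, which for $\maxurg=1$ is $\plgnfn{0}=\specgnfnof{0}{\arrof{\objective}}$, i.e.\ (by \Cref{Lemma:TrSyntacticCongruence}) single state pairs from $\states^2$ plus $\terr,\tskip$. No preimage exists there: for a word $\aword$, the translation satisfies $\arrof{\aword}\specaxeq{\arrof{\objective}}\bigEchoiceOf{1}\transofword{\aword}$ (\Cref{Lemma:TrDFARuns}), an urgency-$1$ Eve choice over \emph{all} state pairs consistent with $\aword$ — one per starting state — which is never $\specaxeq{\arrof{\objective}}$-equivalent to a single pair $(\astate,\astatep)$ except in degenerate cases. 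Symmetrically, in your backward direction $\trspecnormof{\arrof{\atermpp^*}}$ for $\atermpp^*\in\specgnfnof{0}{\objective}$ is an urgency-$1$ term, hence \emph{not} in $\plgnfn{0}$, so the context you build lies outside $\arrcontexts$ and your hypothesis, which quantifies only over $\arrcontexts$, cannot be invoked. The paper closes exactly this hole with a dedicated argument: it replaces the word context $\atermppp$ by a fully determined run $(\astateinit,\astate_1)\appl\ldots\appl(\astate_{n-1},\astate_n)\in\plgnfn{0}\cap\quasirunsof{\atermppp}$ and proves the winner is preserved via the strengthened \Cref{Lemma:DFASimulationExtended} (quasi-runs). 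Note that soundness alone cannot do this: a single determined run sits strictly \emph{below} $\bigEchoiceOf{1}\transofword{\aword}$ in the preorder, so equivalence of winning requires the game-theoretic fact that Eve can resolve the run-prefix choices eagerly without loss — an argument your proposal has no substitute for. You should either restrict your proof to $\maxurg>1$ and add the paper's quasi-run case analysis for $\maxurg=1$, or extend your claims to a (necessarily different) statement at urgency~$0$.
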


The benefit over \Cref{Corollary:Algorithm} is that the translated normal form terms are in $\arrof{\objective}$-specialized normal form, $\plgnfn{\maxurg}\subseteq \specgnfnof{\maxurg}{\arrof{\objective}}$, so we inherit the bound.
\begin{lemma}\label{Lemma:TranslatedNormalFormTerms}
$\sizeof{\plgnfn{\maxurg}}\leq \repexpof{2\maxurg}{\bigoof{\sizeof{\states}^2}}$. 
\end{lemma}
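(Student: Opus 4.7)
}
The plan is a straightforward induction on urgency once the base case has been established, so the main work lies in bounding $\sizeof{\plgnfn{0}}$ and $\sizeof{\plgnfn{1}}$ by understanding the structure of the translated objective $\arrof{\objective}$.

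\textbf{Base case $\anurg=0$.} First I would analyze the syntactic congruence $\lreq{\arrof{\objective}}$ on $\arralph^*$. The key observation is that for every letter $(\astatep,\astatepp)\in\arralph$ and every state $\astate\in\states_{\bot}$, the transition $\arrof{\transitions}(\astate,(\astatep,\astatepp))$ is $\astatepp$ if $\astate=\astatep$ and $\bot$ otherwise (with $\bot$ being absorbing). Reading a word $w\in\arralph^*$ from a state $\astate$ therefore either stays in $\bot$ throughout or picks out a unique maximal chain starting at $\astate$. Consequently, every word $w$ induces either the constant-$\bot$ function, or the function that maps a unique starting state $\astate$ to a unique ending state $\astatep$ and everything else to $\bot$. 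Since each class in $\factorize{\arralph^*}{\lreq{\arrof{\objective}}}$ is represented by such a function~\cite{RS59}, there are at most $\sizeof{\states}^2+1=\bigoof{\sizeof{\states}^2}$ of them, yielding $\sizeof{\plgnfn{0}}=\sizeof{\specgnfnof{0}{\arrof{\objective}}}\leq\bigoof{\sizeof{\states}^2}=\repexpof{0}{\bigoof{\sizeof{\states}^2}}$.

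\textbf{Case $\anurg=1$.} Here I cannot use the recursive definition (which does not apply at urgency $1$), but I would use the inclusion $\plgnfn{1}\subseteq\specgnfnof{1}{\arrof{\objective}}$, which holds because $\trspecnorm$ produces terms in $\arrof{\objective}$-specialized normal form. Counting subsets gives
\[
\sizeof{\specagnfnof{1}{\arrof{\objective}}}\;\leq\;2^{\sizeof{\plgnfn{0}}}\;\leq\;2^{\bigoof{\sizeof{\states}^2}}\;=\;\repexpof{1}{\bigoof{\sizeof{\states}^2}},
\]
and similarly $\sizeof{\plgnfn{1}}\leq 2^{\sizeof{\specagnfnof{1}{\arrof{\objective}}}}\leq\repexpof{2}{\bigoof{\sizeof{\states}^2}}$.

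\textbf{Inductive step $\anurg\geq 2$.} The recursive definition of $\plgnfn{\anurg}$ and $\plagnfn{\anurg}$ as (non-empty) subsets of the level below gives
\[
\sizeof{\plagnfn{\anurg}}\;\leq\;2^{\sizeof{\plgnfn{\anurg-1}}},\qquad \sizeof{\plgnfn{\anurg}}\;\leq\;2^{\sizeof{\plagnfn{\anurg}}}\;\leq\;2^{2^{\sizeof{\plgnfn{\anurg-1}}}}.
\]
Each urgency level thus introduces exactly two exponentials. Together with the $\anurg=1$ bound of $\repexpof{2}{\bigoof{\sizeof{\states}^2}}$, a routine induction yields $\sizeof{\plgnfn{\anurg}}\leq\repexpof{2\anurg}{\bigoof{\sizeof{\states}^2}}$ for all $\anurg\geq 1$. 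Specializing to $\anurg=\maxurg$ gives the claim. The only nontrivial step is the base case analysis of $\arrof{\objective}$; the rest is counting.
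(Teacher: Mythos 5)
Your proposal is correct and takes essentially the same route as the paper: the paper gets the bound from the inclusion $\plgnfn{\maxurg}\subseteq\specgnfnof{\maxurg}{\arrof{\objective}}$, the tower bound $\sizeof{\specgnfn{\maxurg}}=\repexpof{2\maxurg}{\bigoof{\sizeof{\specgnfn{0}}}}$ of \Cref{Lemma:EffNormalForm}, and the characterization $\specgnfnof{0}{\arrof{\objective}}=\states^{2}\cup\set{\terr,\tskip}$ (\Cref{Lemma:TrSyntacticCongruence}), which is precisely your base-case analysis combined with your two-exponentials-per-urgency count. One tiny slip: your base case overlooks the class of the empty word (the identity function, i.e.\ $\tskip$), giving $\sizeof{\states}^2+2$ rather than $\sizeof{\states}^2+1$ classes, but this is absorbed by $\bigoof{\sizeof{\states}^2}$ and does not affect the argument.
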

\subsection{Characteristic Terms}
With \Cref{Proposition:PlausibleNormalContexts}, we need to iterate over $2\maxurg$-exponentially many contexts. 
We now eliminate another exponent by factorizing the contexts with the help of characteristic terms.
Recall that term $\aterm$ is characteristic for context $\acontext{\contextvar}$ wrt. $\arrof{\objective}$, if its $\specaxleq{\arrof{\objective}}$-upward closure is the solution space of the context: for all $\atermp$ we have $\aterm \specaxleq{\arrof{\objective}} \atermp$ 
    if and only if $\winsof{\acontext{\atermp}}{\arrof{\objective}}$. 

For the contexts $\acontext{\contextvar} \in \arrcontexts$ we just defined, giving Adam a choice over the solution space yields a characteristic term:
\begin{align}
\charof{\acontext{\contextvar}}\ =\ 
    \bigAchoiceOf{\maxurg}{}{\setcond
        {\aterm\in\specgnfnof{\maxurg-1}{\arrof{\objective}}}
        {\winsof{\acontext{\aterm}}{\arrof{\objective}}}}\ . \label{Equation:CharTerms}
\end{align}
To see that the term is characteristic indeed, we rely on the domination preorder introduced in the completeness proof. 
\begin{lemma}\label{Lemma:CharTerm}
Term $\charof{\acontext{\contextvar}}$ is characteristic for $\acontext{\contextvar} \in \arrcontexts$ wrt. $\arrof{\objective}$ and can be computed in time~$\expof{2\maxurg-1}{\bigoof{\sizeof{\states}^2}}$.
\end{lemma}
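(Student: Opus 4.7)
The lemma splits into two claims: that $\charof{\acontext{\contextvar}}$ is characteristic, and the complexity bound. For the first, I must establish the biconditional $\charof{\acontext{\contextvar}} \specaxleq{\arrof{\objective}} \atermp$ iff $\winsof{\acontext{\atermp}}{\arrof{\objective}}$ for every $\atermp \in \terms$. I plan to handle the two directions separately, exploiting both the game-theoretic meaning of the set $S$ used to define $\charof{\acontext{\contextvar}}$ and the soundness of $\specaxleq{\arrof{\objective}}$ from \Cref{Proposition:Soundness}.

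\textbf{Forward direction.} Assuming $\charof{\acontext{\contextvar}} \specaxleq{\arrof{\objective}} \atermp$, my first subgoal is to show $\winsof{\acontext{\charof{\acontext{\contextvar}}}}{\arrof{\objective}}$ directly. Writing $\acontext{\contextvar} = \aterml \appl \contextvar \appl \atermr$ with $\urgencyof{\aterml} < \maxurg$ and $\urgencyof{\atermr} \leq \maxurg$ as guaranteed by $\acontext{\contextvar} \in \arrcontexts$, the Adam choice at the root of $\charof{\acontext{\contextvar}}$, which has urgency $\maxurg$ and sits leftmost among the maximum-urgency actions, is the leading subterm of $\acontext{\charof{\acontext{\contextvar}}}$. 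Hence Adam is forced to move first and can only pick some $\aterm \in S$, and by the very definition of $S$ the resulting position $\acontext{\aterm}$ is won by Eve. Soundness of $\specaxleq{\arrof{\objective}}$ (\Cref{Proposition:Soundness}, applied in the context $\acontext{\contextvar}$) then lifts this to $\winsof{\acontext{\atermp}}{\arrof{\objective}}$.

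\textbf{Backward direction.} Assuming $\winsof{\acontext{\atermp}}{\arrof{\objective}}$, I first invoke \Cref{Lemma:EffNormalForm} to replace $\atermp$ by an equivalent specialized normal form $\specnormof{\atermp} = \bigEchoiceOf{\maxurg} \atermsetp$. Soundness transfers the winning condition to $\acontext{\specnormof{\atermp}}$. Since the top Eve choice is leading, then Adam's ensuing choice is leading, and only afterwards does $\atermr$ become relevant, two leading-subterm analyses in a row let me extract some $\atermpp = \bigAchoiceOf{\maxurg} \atermsetpp \in \atermsetp$ such that $\winsof{\acontext{\atermppp}}{\arrof{\objective}}$ for every $\atermppp \in \atermsetpp$. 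Hence $\atermsetpp \subseteq S$. Using \labelcref{axiom:lattice-assoc} to split $S = \atermsetpp \cup (S \setminus \atermsetpp)$, then the dual of \labelcref{axiom:lattice-ord} to discard the extra Adam alternatives, and finally \labelcref{axiom:lattice-ord} itself to embed $\atermpp$ into the outer Eve choice, I chain
\[
\charof{\acontext{\contextvar}} = \bigAchoiceOf{\maxurg} S \;\specaxleq{\arrof{\objective}}\; \bigAchoiceOf{\maxurg} \atermsetpp = \atermpp \;\specaxleq{\arrof{\objective}}\; \bigEchoiceOf{\maxurg} \atermsetp = \specnormof{\atermp} \;\specaxeq{\arrof{\objective}}\; \atermp\,.
\]

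\textbf{Complexity bound and main obstacle.} To compute $\charof{\acontext{\contextvar}}$ it suffices to enumerate $\specgnfnof{\maxurg-1}{\arrof{\objective}}$, whose cardinality the argument underlying \Cref{Lemma:TranslatedNormalFormTerms} bounds by $\expof{2\maxurg-2}{\bigoof{\sizeof{\states}^2}}$, and for each candidate $\aterm$ decide the reachability game $\semof{\acontext{\aterm}}$ in time polynomial in its arena. Multiplying these factors keeps us within $\expof{2\maxurg-1}{\bigoof{\sizeof{\states}^2}}$. I expect the backward direction to be the most delicate step: the transition from a game-theoretic win to an axiomatic inequality requires tracking leading subterms across two consecutive moves and composing the right sequence of lattice axioms; the remaining pieces (forward direction via soundness, complexity by enumeration) are comparatively routine once the definition of $S$ is unpacked.
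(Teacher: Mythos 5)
Your proof is correct, and on the algebraic half of the statement it takes a genuinely different route from the paper's. The game-theoretic core is shared: like the paper, you exploit that the inserted term is immediate for $\acontext{\contextvar}\in\arrcontexts$ to reduce $\winsof{\acontext{\atermp}}{\arrof{\objective}}$ to the pattern ``some Eve alternative all of whose Adam alternatives win'', i.e.\ $\atermsetpp\subseteq S$ for the solution set $S$ from \Cref{Equation:CharTerms}. But where the paper then passes through the domination preorder $\discleqof{\arrof{\objective}}$ and merely \emph{asserts} that for the terms at hand it coincides with $\specaxleq{\arrof{\objective}}$ ``even if the objective is not right-separating'', you bypass $\discleqof{\arrof{\objective}}$ entirely: the forward direction is soundness (\Cref{Proposition:Soundness}) applied after observing that Adam leads in $\acontext{\charof{\acontext{\contextvar}}}$ and every alternative he can pick lies in $S$, and the backward direction is an explicit chain of \labelcref{axiom:lattice-assoc}, the dual of \labelcref{axiom:lattice-ord}, and \labelcref{axiom:lattice-ord} itself. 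Since neither soundness nor the lattice axioms require right-separation, your argument supplies a self-contained justification of exactly the point the paper leaves unproven --- a genuine improvement in transparency. On the complexity half you also diverge: the paper checks $\winsof{\acontext{\aterm}}{\arrof{\objective}}$ by normalizing $\acontext{\aterm}$ via \Cref{Lemma:EffNormalForm} and evaluating the normal form, whereas you solve the finite reachability game $\semof{\acontext{\aterm}}$ directly. This works, but ``polynomial in its arena'' does unacknowledged work: you should note that all three components of $\acontext{\aterm}$ are choice trees without non-terminals, so reachable positions are concatenations of residual subterms and the arena has at most $\sizeof{\atermpp}\cdot\sizeof{\aterm}\cdot\sizeof{\atermppp}\leq \expof{2\maxurg-1}{\bigoof{\sizeof{\states}^2}}$ positions, dominated by the $\plgnfn{\maxurg}$ part of the context; without this bound the claimed runtime is unsubstantiated. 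Two minor nits, neither of which puts you behind the paper: \Cref{Lemma:EffNormalForm} is stated for finitary terms while the characteristic property quantifies over all $\atermp\in\terms$, so for arbitrary (possibly infinitary) $\atermp$ your backward direction should invoke the non-effective specialized normalization (\Cref{Proposition:Normalization} together with \labelcref{axiom:spec}) rather than the effective one; and, like the paper, you tacitly assume $S\neq\emptyset$ so that the Adam choice in \Cref{Equation:CharTerms} is well-formed.
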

\begin{proof}
To prove that $\charof{\acontext{\contextvar}}$ is characteristic, consider term $\aterm=
    \bigEchoiceOf{\maxurg}_{i\in I}{
        \bigAchoiceOf{\maxurg}{}{\atermset_{i}}
    }$ in $\arrof{\objective}$-specialized normal form. 
%
    %
    As $\aterm$ is immediate for $\acontext{\contextvar}$, we get $\winsof{\acontext{\aterm}}{\arrof{\objective}}$ 
    if and only if 
    there is an index $i\in I$ so that for all $\atermp\in\atermset_{i}$ we have
    $\winsof{\acontext{\atermp}}{\arrof{\objective}}$. 
 This can be shown to be equivalent to the domination preorder
    $\bigEchoiceOf{\maxurg}
    \bigAchoiceOf{\maxurg}\setcond{\aterm\in
    \specgnfnof{\maxurg-1}{\arrof{\objective}}}{ 
    \winsof{\acontext{\aterm}}{\arrof{\objective}}}
    \discleqof{\arrof{\objective}}
    \bigEchoiceOf{\maxurg}_{i\in I}\bigAchoiceOf{\maxurg}\atermset_{i}$. 
For the terms at hand, this domination preorder is equivalent to  
    $\bigAchoiceOf{\maxurg}\setcond{\aterm\in
    \specgnfnof{\maxurg-1}{\arrof{\objective}}}{ 
    \winsof{\acontext{\aterm}}{\arrof{\objective}}}
    \specaxleq{\arrof{\objective}}
    \bigEchoiceOf{\maxurg}_{i\in I}\bigAchoiceOf{\maxurg}\atermset_{i}$, 
    even if the objective is not right-separating.
    
    To compute the characteristic term, we have to check, for every term $\aterm\in\specgnfnof{\maxurg-1}{\arrof{\objective}}$, whether $\winsof{\acontext{\aterm}}{\arrof{\objective}}$ holds. 
    Such a check requires a normalization of $\acontext{\aterm}$, followed by a polynomial-time evaluation of the resulting term.  
    The normalization takes time $\sizeof{\acontext{\aterm}} \cdot \expof{2\maxurg-1}{\bigoof{\sizeof{\states}^2}}$,~\Cref{Lemma:EffNormalForm}. 
    The dominating factor in $\sizeof{\acontext{\aterm}}$ is the size of the $\plgnfn{\maxurg}$ term in the context, which is bounded by $\expof{2\maxurg-1}{\bigoof{\sizeof{\states}^2}}$. 
    There are $\expof{2\maxurg-2}{\bigoof{\sizeof{\states}^2}}$ terms $\aterm$ we have to go through.
    The overall runtime is thus bounded by $\expof{2\maxurg-1}{\bigoof{\sizeof{\states}^2}}$. 
\end{proof}

Let $\charof{\arrcontexts} = \setcond{\charof{\acontext{\contextvar}}}{\acontext{\contextvar} \in \arrcontexts}$ denote the set of characteristic terms. 
As these terms belong to $\specagnfnof{\maxurg}{\arrof{\objective}}$, we inherit the following bound.
\begin{lemma}\label{Lemma:NumberCharTerms}
$\sizeof{\charof{\arrcontexts}}\leq \expof{2\maxurg-1}{\bigoof{\sizeof{\objective}^2}}$. 
\end{lemma}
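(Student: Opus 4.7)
The plan is to treat this as a direct consequence of two observations: a structural one about what characteristic terms look like, and a counting one about the size of $\specagnfnof{\maxurg}{\arrof{\objective}}$.

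First I would record the inclusion $\charof{\arrcontexts}\subseteq \specagnfnof{\maxurg}{\arrof{\objective}}$. By definition~\eqref{Equation:CharTerms}, every characteristic term has the shape $\bigAchoiceOf{\maxurg}\atermset$ with $\atermset\subseteq\specgnfnof{\maxurg-1}{\arrof{\objective}}$; matching this against the definition of $\specagnfnof{\maxurg}{\arrof{\objective}}$ gives the inclusion (admitting $\atermset = \emptyset$ is harmless, or one reads the empty demonic choice as the unit~$\tskip$). Hence it suffices to bound $\sizeof{\specagnfnof{\maxurg}{\arrof{\objective}}}$.

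Next I would unwind the two mutually recursive definitions
\[
\sizeof{\specagnfnof{\anurg}{\arrof{\objective}}}\ \leq\ 2^{\sizeof{\specgnfnof{\anurg-1}{\arrof{\objective}}}},\qquad \sizeof{\specgnfnof{\anurg}{\arrof{\objective}}}\ \leq\ 2^{\sizeof{\specagnfnof{\anurg}{\arrof{\objective}}}},
\]
starting from the base case $\sizeof{\specgnfnof{0}{\arrof{\objective}}}=\bigoof{\sizeof{\objective}^2}$ that is recorded in the paper just after the compact term representation is introduced (the index of the syntactic congruence of $\arrof{\objective}$ is quadratic in the number of states). A short induction on $\anurg$ then yields $\sizeof{\specagnfnof{\anurg}{\arrof{\objective}}}\leq \expof{2\anurg-1}{\bigoof{\sizeof{\objective}^2}}$: each step of the recursion adds one level to the tower, and reaching $\specagnfnof{\anurg}{\arrof{\objective}}$ from the base requires traversing $\specagnfnof{1}{}, \specgnfnof{1}{}, \ldots, \specagnfnof{\anurg}{}$, i.e.\ $2\anurg-1$ alternations. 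Instantiating with $\anurg=\maxurg$ gives the stated bound.

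Neither step looks obstacle-prone; the only thing to keep an eye on is the off-by-one on the exponent, which pins down why the tower has height $2\maxurg-1$ rather than $2\maxurg$: characteristic terms sit in $\specagnfn{}$ and not in~$\specgnfn{}$, so the last Eve-layer is spared.
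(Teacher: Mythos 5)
Your proof is correct and is essentially the paper's own argument: the paper likewise observes that every characteristic term lies in $\specagnfnof{\maxurg}{\arrof{\objective}}$ (by \Cref{Equation:CharTerms}) and then inherits the bound $\expof{2\maxurg-1}{\bigoof{\sizeof{\objective}^2}}$ from the size of that set, which follows by exactly your tower induction from the base case $\sizeof{\specgnfnof{0}{\arrof{\objective}}}=\bigoof{\sizeof{\objective}^2}$ (cf.\ \Cref{Lemma:EffNormalForm}). Your parenthetical handling of a possibly empty solution space is a fine point the paper glosses over, and it does not affect the count.
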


Compared to \Cref{Lemma:TranslatedNormalFormTerms}, there are exponentially fewer characteristic terms than contexts.  
To decide $\aterm\speccongleq{\objective}\atermp$, we thus intend to iterate over all
$\atermppppp \in \charof{\arrcontexts}$ and check 
whether 
$\atermppppp \speccongleq{\arrof{\objective}} \aterm$ 
implies 
$\atermppppp \speccongleq{\arrof{\objective}} \atermp$.
We will use the domination preorder for these checks.
\begin{lemma}\label{Lemma:ComparisonTime}
    Given $\aterm, \atermp \in \specgnfnof{\maxurg}{\arrof{\objective}}$, we can decide $\aterm \discleqof{\arrof{\objective}} \atermp$ in time $\sizeof{\aterm} \cdot \sizeof{\atermp}$.
\end{lemma}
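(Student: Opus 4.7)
The plan is to give a recursive algorithm that mirrors the inductive definition of the domination preorder $\discleqof{\arrof{\objective}}$ and then bound its running time by induction on the urgency. Since $\aterm,\atermp\in\specgnfnof{\maxurg}{\arrof{\objective}}$, at every recursive call the two terms being compared sit at the same level of the normal-form hierarchy; in particular they share urgency and outer player, so the side condition for $\discleqof{\arrof{\objective}}$ to be defined is always met.

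For the base case at urgency zero, the inputs are canonical representatives of classes in $\specgnfnof{0}{\arrof{\objective}}$, and the test reduces to deciding $\aword\sgleq{\arrof{\objective}}\awordp$. I would precompute the full relation $\sgleq{\arrof{\objective}}$ on $\specgnfnof{0}{\arrof{\objective}}$ once from the DFA for $\arrof{\objective}$ and store it as a lookup table, so that each base-case comparison then costs $\bigoof{1}$. This preprocessing is independent of $\aterm$ and $\atermp$ and is absorbed into the overall algorithm of \Cref{Proposition:DecideUpperBoundEff} that uses the present routine as a subroutine; it is therefore not charged against $T(\aterm,\atermp)$.

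For the inductive step, suppose $\aterm=\bigEchoiceOf{\anurg}\set{\aterm_i\mid i\in I}$ and $\atermp=\bigEchoiceOf{\anurg}\set{\atermp_j\mid j\in J}$; Adam's case is symmetric, with the quantifiers on $I$ and $J$ swapped. I would iterate over $i\in I$ and, for each $i$, scan $j\in J$ in turn, recursively testing $\aterm_i\discleqof{\arrof{\objective}}\atermp_j$, returning positively as soon as a match is found and negatively if the scan for some $i$ exhausts $J$. Writing $T(\aterm,\atermp)$ for the running time and applying the induction hypothesis $T(\aterm_i,\atermp_j)\leq\sizeof{\aterm_i}\cdot\sizeof{\atermp_j}$, one obtains in the worst case
\begin{align*}
T(\aterm,\atermp)\ \leq\ \sum_{i\in I}\sum_{j\in J}\sizeof{\aterm_i}\cdot\sizeof{\atermp_j}\ =\ \Bigl(\sum_{i\in I}\sizeof{\aterm_i}\Bigr)\Bigl(\sum_{j\in J}\sizeof{\atermp_j}\Bigr)\ \leq\ \sizeof{\aterm}\cdot\sizeof{\atermp},
\end{align*}
using $\sizeof{\aterm}=1+\sum_{i\in I}\sizeof{\aterm_i}$ and the analogous identity for $\atermp$.

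The main point to verify, rather than a genuine obstacle, is that the $\bigoof{\sizeof{I}\cdot\sizeof{J}}$ additive bookkeeping overhead per recursive call also fits inside $\sizeof{\aterm}\cdot\sizeof{\atermp}$. This is immediate because $\sizeof{I}\leq \sizeof{\aterm}$ and $\sizeof{J}\leq \sizeof{\atermp}$ (every alternative has size at least one), so the displayed bound has enough slack to absorb a constant factor coming from the inner loops.
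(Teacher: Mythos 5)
Your proof is correct and is essentially the intended argument: the paper states \Cref{Lemma:ComparisonTime} without an explicit proof, and the direct recursive evaluation of the domination preorder with the product recurrence $\sum_{i\in I}\sum_{j\in J}\sizeof{\aterm_i}\cdot\sizeof{\atermp_j}=\bigl(\sum_{i\in I}\sizeof{\aterm_i}\bigr)\bigl(\sum_{j\in J}\sizeof{\atermp_j}\bigr)\leq\sizeof{\aterm}\cdot\sizeof{\atermp}$ is exactly the computation the stated bound presupposes, with the constant-factor overhead absorbed over the fixed recursion depth $2\maxurg$. Your treatment of the urgency-zero case is also the right accounting: since $\specgnfnof{0}{\arrof{\objective}}$ is just $\states^{2}\cup\set{\terr,\tskip}$ by \Cref{Lemma:TrSyntacticCongruence}, the base relation can be tabulated once in time polynomial in $\sizeof{\objective}$ and charged to the surrounding algorithm of \Cref{Proposition:DecideUpperBoundEff}, making each base comparison unit cost as the lemma implicitly assumes.
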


There is a last obstacle: we do not know the characteristic terms, like we did not know the translated normal form terms above. 
Going through all contexts and determining the characteristic terms is prohibitively expensive. 
Generalizing from $\charof{\arrcontexts}$ to $\specagnfnof{\maxurg}{\arrof{\objective}}$ is incorrect. 
%
The way out is to give a direct construction of the characteristic terms.

The key insight is that the characteristic terms satisfy the following equation, where we have $\atermppp\in\specgnfnof{\maxurg-1}{\arrof{\objective}}$,  $\atermsetpppp\subseteq \specagnfnof{\maxurg}{\arrof{\objective}}$, and $\acontext{\contextvar}=\atermppp\appl\contextvar\appl\bigEchoiceOf{\maxurg}{}{\atermsetpppp}$:
%
\begin{align}
    \charof{\acontext{\contextvar}}
    \ \specaxeq{\arrof{\objective}}\ 
    \bigAchoiceOf{\maxurg}{}{
        \setcond{\charof{\atermppp\appl\contextvar\appl\atermpppp}}
        {\atermpppp\in \atermsetpppp}}\ . 
    \label{Equation:CharTermRecursive}
\end{align}
The equation follows from \Cref{Equation:CharTerms}, \Cref{Appendix:OptimalUB}.   
%
%
%
%
The impact of \Cref{Equation:CharTermRecursive} may not be immediate: we still have to make sure to construct the characteristic term for every set $\atermsetpppp\subseteq \specagnfnof{\maxurg}{\arrof{\objective}}$.  
What the equation does is to give us an inductive formulation of the characteristic terms which allows us to compute the set of all characteristic terms in a fixed point.
We first construct the characteristic terms for singleton sets $\sizeof{\atermsetpppp}=1$. 
Then we conjoin the characteristic terms as prescribed by \Cref{Equation:CharTermRecursive} to obtain the characteristic terms for sets of size $\sizeof{\atermsetpppp}\leq 2$. 
We repeat the latter conjunction until we reach a fixed point. 
Throughout the process, we work up to~$\specaxeq{\arrof{\objective}}$. 
With \Cref{Lemma:NumberCharTerms}, the sets we compute with have size at most $\expof{2\maxurg-1}{\bigoof{\sizeof{\objective}^2}}$. 
Moreover, we are guaranteed to reach the fixed point after at most $\expof{2\maxurg-1}{\bigoof{\sizeof{\objective}^2}}$ steps.
To state the correctness, define for $\atermset\subseteq\specgnfnof{\maxurg-1}{\arrof{\objective}}$ 
    and $\atermsetp\subseteq\specagnfnof{\maxurg}{\arrof{\objective}}$: 
\begin{align*}
    \allchartermsof{\atermset}{\atermsetp}=
    \bigcup_{{\atermppp\in\atermset, \atermsetpppp\subseteq\atermsetp}}
    \set{
    \charof{\atermppp\appl\contextvar\appl\bigEchoiceOf{\maxurg}{}{\atermsetpppp}
    }}\ .
\end{align*}
\begin{lemma}\label{Lemma:CharTermsAtScale}
    $\charof{\arrcontexts} = \allchartermsof{\plgnfn{\maxurg-1}}{\plagnfn{\maxurg}}$. 
    The set can be computed in time 
    $\expof{2\maxurg-1}{\bigoof{\sizeof{\objective}^{2}}}$.
\end{lemma}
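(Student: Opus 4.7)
The equality $\charof{\arrcontexts} = \allchartermsof{\plgnfn{\maxurg-1}}{\plagnfn{\maxurg}}$ is essentially by unpacking definitions: every context in $\arrcontexts$ decomposes as $\atermpp\appl\contextvar\appl\bigEchoiceOf{\maxurg}{}{\atermsetpppp}$ with $\atermpp \in \plgnfn{\maxurg-1}$ and $\emptyset \neq \atermsetpppp \subseteq \plagnfn{\maxurg}$, which is exactly the family indexed on the right-hand side (ignoring the vacuous $\atermsetpppp = \emptyset$, which does not correspond to a legal $\plgnfn{\maxurg}$ term).

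For the complexity bound, the plan is to avoid enumerating $\arrcontexts$ (which has size $\repexpof{2\maxurg}{\bigoof{\sizeof{\objective}^{2}}}$, one exponent too many) by exploiting the recursive identity~\eqref{Equation:CharTermRecursive} through a saturation algorithm. For each $\atermpp \in \plgnfn{\maxurg-1}$ I maintain a set $C_{\atermpp}$ of characteristic terms whose left component is $\atermpp$. \emph{Phase~1} (base) computes, for every $\atermpppp \in \plagnfn{\maxurg}$, the singleton characteristic term $\charof{\atermpp\appl\contextvar\appl\atermpppp}$ directly via Lemma~\ref{Lemma:CharTerm} and inserts it into $C_{\atermpp}$. \emph{Phase~2} (saturation) repeatedly picks two elements $s,t \in C_{\atermpp}$, forms $\bigAchoiceOf{\maxurg}{}{\set{s,t}}$, normalizes the result with Lemma~\ref{Lemma:EffNormalForm} to an element of $\specagnfnof{\maxurg}{\arrof{\objective}}$, and inserts it if it is new modulo $\specaxeq{\arrof{\objective}}$, using the domination preorder of Lemma~\ref{Lemma:ComparisonTime} for the equivalence test.

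Correctness follows by induction on $\sizeof{\atermsetpppp}$: by~\eqref{Equation:CharTermRecursive} and the associativity axiom~\labelcref{axiom:lattice-assoc}, every characteristic term $\charof{\atermpp\appl\contextvar\appl\bigEchoiceOf{\maxurg}{}{\atermsetpppp}}$ with $\sizeof{\atermsetpppp}=k+1$ is $\specaxeq{\arrof{\objective}}$ to a binary $\bigAchoiceOf{\maxurg}$-combination of one such term of size $k$ and a singleton one, both already in $C_{\atermpp}$ by induction hypothesis; conversely every term produced in Phase~2 is a characteristic term by~\eqref{Equation:CharTermRecursive}. The complexity fits because Lemma~\ref{Lemma:NumberCharTerms} bounds $\sizeof{C_{\atermpp}}\leq\sizeof{\charof{\arrcontexts}}\leq\expof{2\maxurg-1}{\bigoof{\sizeof{\objective}^{2}}}$, so Phase~2 performs at most that many insertions. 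Each insertion combines two terms of size at most $\expof{2\maxurg-1}{\bigoof{\sizeof{\objective}^{2}}}$, runs a normalization of the same cost (Lemma~\ref{Lemma:EffNormalForm}), and sweeps $C_{\atermpp}$ for equivalence via Lemma~\ref{Lemma:ComparisonTime}; all of these are polynomial in the $(2\maxurg-1)$-exponential budget and therefore remain within it. Iterating over the $\expof{2\maxurg-1}{\bigoof{\sizeof{\objective}^{2}}}$ choices of $\atermpp$ (Lemma~\ref{Lemma:TranslatedNormalFormTerms}) preserves the bound, since a product of $(2\maxurg-1)$-exponentials collapses into a single~$(2\maxurg-1)$-exponential.

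The main obstacle I expect is Phase~2's termination-and-correctness argument: one must carefully justify that binary $\bigAchoiceOf{\maxurg}$-closure suffices to reach every characteristic term, that intermediate combinations really can be re-normalized into $\specagnfnof{\maxurg}{\arrof{\objective}}$ so the domination test of Lemma~\ref{Lemma:ComparisonTime} applies, and that equivalence testing prevents duplicate representatives from inflating $\sizeof{C_{\atermpp}}$ beyond the Lemma~\ref{Lemma:NumberCharTerms} bound. A minor technical wrinkle is that Equation~\eqref{Equation:CharTermRecursive} yields a term whose outer shape must be re-flattened by~\labelcref{axiom:lattice-assoc} before the invariant $C_{\atermpp}\subseteq\specagnfnof{\maxurg}{\arrof{\objective}}$ can be re-established, but this is absorbed into the per-step normalization cost.
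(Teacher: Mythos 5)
Your Phase~1/Phase~2 saturation is, for $\maxurg \geq 2$, essentially the paper's own proof: singleton characteristic terms are computed via \Cref{Lemma:CharTerm}, the set is closed under $\bigAchoiceOf{\maxurg}$-conjunction as licensed by \Cref{Equation:CharTermRecursive}, one works up to $\specaxeq{\arrof{\objective}}$, and the termination and size bounds come from \Cref{Lemma:NumberCharTerms}. Two minor remarks on that part: flattening $\bigAchoiceOf{\maxurg}{}{\set{s,t}}$ needs only \labelcref{axiom:lattice-assoc}, i.e.\ a set union of the Adam alternatives, so invoking the full normalization of \Cref{Lemma:EffNormalForm} per step is overkill (though harmless for the bound); and it is essential, as you implicitly ensure by indexing the sets $C_{\atermpp}$ by the left part, that only terms with the same $\atermpp$ are ever combined.

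The genuine gap is the case $\maxurg = 1$, which the paper treats separately in \Cref{Appendix:OptimalUB} and which your argument cannot handle. For $\maxurg = 1$ the set $\plagnfn{1}$ is not even defined (the inductive clauses require $\anurg > 1$), and $\plgnfn{1} = \trspecnormof{\arrof{\specgnfn{1}}}$ is the normalized image of translated terms, \emph{not} the set of all Eve choices over some family of Adam choices. Consequently, (a) your ``unpacking definitions'' equality does not go through for $\maxurg=1$, and (b) the singleton base contexts of your Phase~1 need not lie in $\arrcontexts$: the inner Adam choices occurring in $\plgnfn{1}$ terms are in general not themselves $\specaxeq{\arrof{\objective}}$ to images of translated terms, so saturating over all Adam choices over $\plgnfn{0}$ manufactures characteristic terms of contexts outside $\arrcontexts$ --- exactly the over-generalization the paper warns may incorrectly disprove the specialized contextual preorder downstream in \Cref{Proposition:Overall}. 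Restricting instead to genuine $\plgnfn{1}$ right parts would force an enumeration of a potentially substantial subset of $\specgnfnof{1}{\arrof{\objective}}$, costing $\expof{2}{\bigoof{\sizeof{\objective}^2}}$, one exponent above the claimed $\expof{2\maxurg-1}{\bigoof{\sizeof{\objective}^2}} = \expof{1}{\bigoof{\sizeof{\objective}^2}}$. The paper closes this case with a different idea: it computes $\charof{\arrcontexts}$ directly by showing (\Cref{Lemma:SingleUpwardClosure,Lemma:SolutionsUpwardClosed}) that the solution spaces of $\arrcontexts$ contexts are precisely the upward closures, in the Myhill--Nerode right-precongruence extended to state pairs, of non-dead pairs $(\astate,\astatep)\in\states^{2}$, and these upward-closed sets can be enumerated within the single-exponential budget. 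Your proposal needs this (or an equivalent) supplement before the lemma is established as stated.
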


The following proposition yields the overall algorithm. 
\begin{proposition}\label{Proposition:Overall}
    Let $\objective$ be regular and $\aterm,\atermp$ finitary.  
    Then $\aterm\speccongleq{\objective}\atermp$ iff
    for all $\atermppppp\in \allchartermsof{\plgnfn{\maxurg-1}}{\plagnfn{\maxurg}}$, we have 
    $\atermppppp \discleqof{\arrof{\objective}} \trspecnormof{\aterm}$ implies 
    $\atermppppp \discleqof{\arrof{\objective}} \trspecnormof{\atermp}$.
\end{proposition}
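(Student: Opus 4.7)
The plan is to chain together the earlier reductions in this section so that the iteration over contexts becomes an iteration over characteristic terms evaluated through the domination preorder. Concretely, we will rewrite the right-hand side condition of \Cref{Proposition:PlausibleNormalContexts} step by step: first push the normalization through winning, then replace winning by the characteristic term relation from \Cref{Lemma:CharTerm}, then replace the axiomatic precongruence by the domination preorder, and finally substitute the set of contexts by the set of characteristic terms supplied by \Cref{Lemma:CharTermsAtScale}.

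The detailed steps are as follows. Starting from $\aterm\speccongleq{\objective}\atermp$, \Cref{Proposition:PlausibleNormalContexts} yields the statement that for every $\acontext{\contextvar}\in\arrcontexts$ the implication $\winsof{\acontext{\arrof{\aterm}}}{\arrof{\objective}}\Rightarrow\winsof{\acontext{\arrof{\atermp}}}{\arrof{\objective}}$ holds. Since $\arrof{\aterm}\specaxeq{\arrof{\objective}}\trspecnormof{\aterm}$ by \Cref{Lemma:EffNormalForm} and $\specaxeq{\arrof{\objective}}$ is contained in $\speccongeq{\arrof{\objective}}$ by soundness (\Cref{Proposition:Soundness}), we may replace $\arrof{\aterm}$ by $\trspecnormof{\aterm}$ and $\arrof{\atermp}$ by $\trspecnormof{\atermp}$ without changing the truth of the winning predicate. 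Next, \Cref{Lemma:CharTerm} says that $\charof{\acontext{\contextvar}}$ is characteristic, so $\winsof{\acontext{\trspecnormof{\aterm}}}{\arrof{\objective}}$ is equivalent to $\charof{\acontext{\contextvar}}\specaxleq{\arrof{\objective}}\trspecnormof{\aterm}$, and similarly for $\atermp$. Then, still following the argument inside the proof of \Cref{Lemma:CharTerm}, for the specific shape of left-hand term $\charof{\acontext{\contextvar}}\in\specagnfnof{\maxurg}{\arrof{\objective}}$ and right-hand term in $\specgnfnof{\maxurg}{\arrof{\objective}}$ the relation $\specaxleq{\arrof{\objective}}$ coincides with the domination preorder $\discleqof{\arrof{\objective}}$. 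Finally, \Cref{Lemma:CharTermsAtScale} tells us that as $\acontext{\contextvar}$ ranges over $\arrcontexts$, the characteristic term $\charof{\acontext{\contextvar}}$ ranges (up to $\specaxeq{\arrof{\objective}}$) exactly over $\allchartermsof{\plgnfn{\maxurg-1}}{\plagnfn{\maxurg}}$, so that quantifying over contexts can be replaced by quantifying over $\atermppppp$ in that set.

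The main obstacle is the step that identifies $\specaxleq{\arrof{\objective}}$ with $\discleqof{\arrof{\objective}}$ on the terms we actually compare, because we are not assuming $\objective$ (or $\arrof{\objective}$) to be right-separating, and hence \Cref{Lemma:DominationPreorder} is not available in full. This is overcome by reusing the observation from the proof of \Cref{Lemma:CharTerm}: on a term of the form $\bigAchoiceOf{\maxurg}{}{\atermsetp}$ compared against an $\bigEchoiceOf{\maxurg}\bigAchoiceOf{\maxurg}$ normal form, the domination preorder can be spelled out directly as a purely combinatorial condition on the involved subsets, and this condition is precisely what the corresponding axiomatic derivation establishes, independently of right-separation. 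Putting the above chain together, $\aterm\speccongleq{\objective}\atermp$ is equivalent to: for all $\atermppppp\in\allchartermsof{\plgnfn{\maxurg-1}}{\plagnfn{\maxurg}}$, $\atermppppp\discleqof{\arrof{\objective}}\trspecnormof{\aterm}$ implies $\atermppppp\discleqof{\arrof{\objective}}\trspecnormof{\atermp}$, which is the claim of the proposition.
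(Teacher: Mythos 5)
Your proof is correct and takes essentially the same route as the paper, which establishes \Cref{Proposition:Overall} by exactly this chain: \Cref{Proposition:PlausibleNormalContexts}, normalization via \Cref{Lemma:EffNormalForm} combined with soundness (\Cref{Proposition:Soundness}), the characteristic-term property of \Cref{Lemma:CharTerm}, and the identification $\charof{\arrcontexts} = \allchartermsof{\plgnfn{\maxurg-1}}{\plagnfn{\maxurg}}$ from \Cref{Lemma:CharTermsAtScale}. You also correctly isolate and resolve the one delicate point, namely that the domination preorder may be used in place of the axiomatic preorder even though $\arrof{\objective}$ need not be right-separating, precisely because the left-hand term is characteristic (an Adam choice over an upward-closed solution space), which is exactly the justification the paper gives in its remark following the proposition.
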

The domination preorder is sound for checking the axiomatic preorder even for objectives that fail to be right-separating because we have characteristic terms on the left. 
The time for computing the characteristic terms is given in \Cref{Lemma:CharTermsAtScale}. 
The normalization is \Cref{Lemma:EffNormalForm}, and we make use of the fact that the syntactic congruence of $\arrof{\objective}$ has size quadratic in~$\sizeof{\states}$. 
By \Cref{Lemma:ComparisonTime}, the comparison takes quadratic time.  
This concludes the proof of \Cref{Proposition:DecideUpperBoundEff}.

\section{Upper Bound: More Details on Compact Term Representation and Characteristic Terms}\label{Appendix:OptimalUB}

In this section, we handle the omitted proofs from 
\Cref{Section:UpperBound}.
Namely, we prove \Cref{Proposition:PlausibleNormalContexts},
\Cref{Lemma:DFASimulation}, the special case $\maxurg=1$ 
in \Cref{Lemma:CharTermsAtScale} and we justify 
\Cref{Equation:CharTermRecursive}.

\textbf{Reasoning for \Cref{Equation:CharTermRecursive}:}
To see \Cref{Equation:CharTermRecursive}, note that by \Cref{Equation:CharTerms} the characteristic term $\charof{\acontext{\contextvar}}$ is an urgency-$\maxurg$ choice owned by Adam over the set of $\specgnfnof{\maxurg-1}{\arrof{\objective}}$ solutions of $\acontext{\contextvar}$. 
This set of solutions is the union of the $\specgnfnof{\maxurg-1}{\arrof{\objective}}$ solutions for $\atermppp\appl\contextvar\appl\atermpppp$ with $\atermpppp\in\atermsetpppp$.
The reason is that the choice over $\atermsetpppp$ in the context has a higher urgency than the inserted term. 
We can thus stratify Adam's choice into a choice over $\atermpppp\in\atermsetpppp$ followed by a choice of the $\specgnfnof{\maxurg-1}{\arrof{\objective}}$ solutions for $\atermppp\appl\contextvar\appl\atermpppp$. 
Again by \Cref{Equation:CharTerms}, this is precisely the right-hand side of \Cref{Equation:CharTermRecursive}.

\textbf{Proofs of \Cref{Lemma:DFASimulation} and 
\Cref{Proposition:PlausibleNormalContexts}:}
Proofs of these statements
require us to observe the inner workings of terms.
To do so cleanly,
we extend the relation $\sigeq$ to terms.
The relation $\sigeq$ on 
terms is the smallest equivalence relation that 
contains $\sigeq$ on $\wordterms$, and the equalities
$\aterm\appl(\atermp\appl\atermpp)\sigeq
(\aterm\appl\atermp)\appl\atermpp$,
$\aterm\appl\tskip\sigeq\aterm$, 
$\tskip\appl\aterm\sigeq\aterm$,
$\aterm\appl\terr\appl\atermp\sigeq\terr$
for all $\aterm, \atermp, \atermpp\in\terms$.
Note that this is different than \Cref{axiom:monoid}, 
which allows us to apply $\sigeq$ to words enclosed by
arbitrary contexts.
We also define $\termhead:\terms\to\wordterms$ 
and $\termbody: \terms\to\terms$.
We let $\termheadof{\aterm}=\aword\in\wordterms$ 
be the concatenation of outermost terminals (including any $\tskip$ and $\terr$)
that appear in $\aterm$ before the leftmost outermost action.
If there are no such terms, $\termheadof{\aterm}=\tskip$.
We let $\termbodyof{\aterm}=\atermp$ be the 
concatenation of the remaining outermost actions and commands.
If no outmost action exists in the term, we let 
$\termbodyof{\aterm}=\tskip$.
Note that $\aterm\sigeq\termheadof{\aterm}\appl\termbodyof{\aterm}$
for all $\aterm\in\terms$.
Finally, we define a set of quasi-runs $\quasirunsof{\aword}$ 
for all $\aword\in\analph^{*}\cup\set{\terr}$.
We employ the notation $\quasirunsof{\aword}$ for a word term 
$\aword\in\wordterms$, as a shorthand for $\quasirunsof{\awordp}$ 
where $\awordp$ is the monoid element from 
$\analph^{*}\cup\set{\terr}$.
We let $\quasirunsof{\tskip}=\set{\tskip}$ and
$\quasirunsof{\terr}=\set{\terr}$.
For a word $\aword=\aletter_{0}\ldots\aletter_{k-1}\in\analph^{+}$, 
the set 
$\quasirunsof{\aword}$ contains all terms
of the form 
$(\astateinit, \astate_1)\appl(\astate_1, \astate_2)\ldots(\astate_{n-1}, \astate_{n})
    \appl\arrof{\aletter_{n}}\ldots\arrof{\aletter_{k-1}}$, 
where $\astateinit\overset{\aletter_{0}}{\to}\astate_1
    \ldots \astate_{n-1}\overset{\aletter_{n-1}}{\to}\astate_n$
is a run in the DFA for $\objective$.
So an element of $\quasirunsof{\aword}$ runs the DFA on the prefix of $\aword$ up to $n$ 
and has undetermined transitions in form of the terms $\arrof{\aletter_{i}}$ for the remainder of $\aword$.

We find it useful to prove a stronger version of 
\Cref{Lemma:DFASimulation}.

\begin{lemma}\label{Lemma:DFASimulationExtended}
    Let $\aterm\in\terms$, $\arun
    \in \quasirunsof{\termheadof{\aterm}}$
    and $\objective\subseteq\analph^{*}$.
    Then 
    for all $\atermp\sigeq\arun
    \appl\arrof{\termbodyof{\aterm}}$,
    $\winsof{\aterm}{\objective}$ if and only if 
    $\winsof{\atermp}{\arrof{\objective}}$.
\end{lemma}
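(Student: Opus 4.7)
The plan is to prove the strengthened statement \Cref{Lemma:DFASimulationExtended} by simultaneous transfinite induction on the depth of the relevant winning strategy tree, using the ordinal depths guaranteed by \Cref{Appendix:TreeBounds}. I establish both directions of the biconditional together, since each inductive step may need the dual direction at smaller depth, and reachability games are determined~\cite{Martin75}. The base case is $\aterm \in \wordterms$: then $\termbodyof{\aterm} = \tskip$ and $\atermp \sigeq \arun$. The determined prefix of $\arun$ already encodes the partial DFA run on $\termheadof{\aterm}$, and for any undetermined suffix $\arrof{\aletter_n}\ldots\arrof{\aletter_{k-1}}$ Eve is forced to play the unique pair $(\astate, \transitions(\astate, \aletter))$ matching the current DFA state, since any other pair routes $\arrof{\objective}$ into the failure state $\bot$. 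Hence $\aterm \in \objective$ iff $\arrof{\objective}$ accepts $\atermp$ under Eve's forced continuation.

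For the inductive step, I case-split on the location of the leading subterm of $\atermp$. In \emph{Case A}, the leading subterm lies inside $\arun$; this happens precisely when $\termbodyof{\aterm}$'s leading urgency is at most one and $\arun$ still contains an undetermined letter. The leading subterm is then an angelic urgency-one choice $\arrof{\aletter_i}$ owned by Eve. She plays the unique state-matching pair, producing $\atermp' \sigeq \arun' \appl \arrof{\termbodyof{\aterm}}$ with $\arun' \in \quasirunsof{\termheadof{\aterm}}$ having one more determined letter, and the induction hypothesis applies with the same $\aterm$ and the new $\arun'$. In \emph{Case B}, the leading subterm lies inside $\arrof{\termbodyof{\aterm}}$ and is the translation of the leading subterm of $\aterm$, with identical ownership, urgency, and successor set (modulo $\arr$). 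For each move $\aterm \gamemove \aterm''$, I pair it with the analogous move $\atermp \gamemove \atermp''$ and construct a fresh $\arun'' \in \quasirunsof{\termheadof{\aterm''}}$ by appending the undetermined translations of any terminals newly exposed as outermost; this yields $\atermp'' \sigeq \arun'' \appl \arrof{\termbodyof{\aterm''}}$, and the induction hypothesis applies.

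The main obstacle will be the bookkeeping in Case B: after resolving a leading choice or unfolding a non-terminal, terminals that were interior in $\termbodyof{\aterm}$ may become outermost in $\aterm''$, shifting the split between head and body. I must verify that the corresponding pieces of $\atermp''$ can always be realigned as $\arun'' \appl \arrof{\termbodyof{\aterm''}}$ via the monoid identities in $\sigeq$, consistently for every possible leading action. A secondary subtlety is the converse direction of Case A: a winning Eve strategy on $\atermp$ must be projectable onto $\aterm$ by simply discarding her bookkeeping moves, which is sound precisely because these moves are angelic and, by the $\bot$-sink argument above, effectively forced.
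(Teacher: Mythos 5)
Your plan follows essentially the same route as the paper's proof: induction on the ordinal depth of Eve's winning strategy (justified via \Cref{Appendix:TreeBounds}), a word-term base case using the $\bot$-sink forcing argument, a case split on whether the leading subterm sits in $\arun$ or in $\arrof{\termbodyof{\aterm}}$, and realignment of head and body after each move. Your ``fresh $\arun''$ obtained by appending the undetermined translations of newly exposed terminals'' is exactly how the paper proceeds, backed there by two auxiliary lemmas your Case~B silently assumes: $\sigeq$-invariance of ownership and winning depth (\Cref{Lemma:TermAssoc}) and the commutation $\successorsof{\arrof{\aterm}}=\arrof{\successorsof{\aterm}}$ for headless terms (\Cref{Lemma:ArrSuccCommute}); both need proof but are exactly the bookkeeping obstacle you flagged.

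There is, however, one genuine flaw in your inductive structure: in Case~A of the \emph{forward} direction, the move resolving an undetermined $\arrof{\aletter}$ inside $\arun$ is a move of the $\atermp$-game only; no move happens in the $\aterm$-game, so invoking ``the induction hypothesis with the same $\aterm$ and the new $\arun'$'' applies the IH at the \emph{same} ordinal --- the measure does not decrease and the step is circular as stated. The repair is what the paper does: since $\arun$ contains only finitely many undetermined letters, Eve exhaustively resolves them by a directly described (finite) strategy, with no appeal to the IH, until either $\arun$ is a word term or the body becomes leading; the IH is invoked only after a genuine move of $\termbodyof{\aterm}$. (Your backward Case~A is unproblematic, since there the bookkeeping move does decrease the depth of the tree being measured, and your projection argument is sound because a winning Eve strategy cannot pick an inconsistent pair.) Relatedly, your ``simultaneous'' induction is not a single well-founded induction: the forward direction is measured on $\aterm$'s strategy tree and the backward on $\atermp$'s, and determinacy is not actually needed --- the paper simply runs two separate inductions, one per implication, and neither direction ever uses the dual implication at smaller depth.
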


Assuming \Cref{Lemma:DFASimulationExtended} we show 
\Cref{Proposition:PlausibleNormalContexts}.

\begin{proof}[Proof of \Cref{Proposition:PlausibleNormalContexts}]
    Let $\objective\subseteq\analph^{*}$.
    Per \Cref{Corollary:Algorithm}, we know that 
    $\aterm\speccongleq{\objective}\atermp$ holds if and only if
    $\winsof{\atermppp\appl\aterm\appl\atermpppp}{\objective}$
    implies
    $\winsof{\atermppp\appl\atermp\appl\atermpppp}{\objective}$
    for all $\atermppp\in\specgnfn{\maxurg-1}$ and 
    $\atermpppp\in\specgnfn{\maxurg}$.
    We apply \Cref{Lemma:DFASimulation} to see that 
    this is equivalent to the statement
    $\winsof{\arrof{\atermppp\appl\aterm\appl\atermpppp}}
    {\arrof{\objective}}$
    implies
    $\winsof{\arrof{\atermppp\appl\atermp\appl\atermpppp}}
    {\arrof{\objective}}$
    for all $\atermppp\in\specgnfn{\maxurg-1}$ and 
    $\atermpppp\in\specgnfn{\maxurg}$.
    Per definition, 
    we have $\arrof{\atermppp\appl\aterm\appl\atermpppp}=
    \arrof{\atermppp}\appl\arrof{\aterm}\appl\arrof{\atermpppp}$ 
    and 
    $\arrof{\atermppp\appl\atermp\appl\atermpppp}=
    \arrof{\atermppp}\appl\arrof{\atermp}\appl\arrof{\atermpppp}$.
    For the moment,
    assume
    $\plgnfn{\anurg}=
    \set{\arrof{\atermpp}\mid \atermpp\in\specgnfn{\anurg}}$
    (up to $\specaxeq{\arrof{\objective}}$)
    for all $\anurg\geq 1$ without proof.
    For $\maxurg>1$,
    this makes the previous statement equivalent with the desired
    $\winsof{\atermppp\appl\arrof{\aterm}\appl\atermpppp}{\objective}$
    implies 
    $\winsof{\atermppp\appl\arrof{\aterm}\appl\atermpppp}{\objective}$
    for all $\atermppp\in\plgnfn{\maxurg-1}$ and 
    $\atermpppp\in\plgnfn{\maxurg}$.
    Now let $\maxurg=1$.
    We show that for all $\atermppp\in\specgnfn{0}$ and 
    $\atermpppp\in\specgnfn{1}$, there are 
    $\atermppp'\in \plgnfn{0}$ and $\atermpppp'\in\plgnfn{1}$ 
    where $\winsof{\atermppp\appl\atermpp\appl\atermpppp}{\objective}$
    if and only if $\winsof{\atermppp'\appl\arrof{\atermpp}\appl\atermpppp'}
    {\arrof{\objective}}$
    for all $\atermpp\in\specgnfn{1}$.
    We can also conversely find $\atermppp\in\specgnfn{0}$ 
    and $\atermpppp\in\specgnfn{1}$ for all 
    $\atermppp'\in\plgnfn{0}$ and $\atermpppp'\in\plgnfn{1}$
    with the same property.
    Then, the statement
    $\winsof{\atermppp\appl\aterm\appl\atermpppp}{\objective}$ 
    implies $\winsof{\atermppp\appl\atermp\appl\atermpppp}{\objective}$ 
    for all $\atermppp\in\specgnfn{0}$, $\atermpppp\in\specgnfn{1}$ 
    is equivalent to the statement
    $\winsof{\atermppp'\appl\arrof{\aterm}\appl\atermpppp'}
    {\arrof{\objective}}$ 
    implies $\winsof{\atermppp'\appl\arrof{\atermp}\appl\atermpppp'}
    {\arrof{\objective}}$ for all $\atermppp'\in\plgnfn{0}$,
    $\atermpppp'\in\plgnfn{1}$.
    Note that, we have 
    $\termheadof{\atermppp\appl\aterm\appl\atermpppp}=\atermppp$
    and 
    $\termbodyof{\atermppp\appl\aterm\appl\atermpppp}=\aterm\appl\atermpppp$.
    Per \Cref{Lemma:DFASimulationExtended}, we know that 
    for all $\arun_{\atermppp}\in \quasirunsof{\atermppp}$,
    $\winsof{\atermppp\appl\aterm\appl\atermpppp}{\objective}$ 
    if and only if 
    $\winsof{\arun_{\atermppp}\appl\arrof{\aterm\appl\atermpppp}}{\objective}$.
    We can let 
    $\atermppp' = \arun_{\atermppp}=(\astateinit, \astate_1)(\astate_1 , \astate_2)\ldots(\astate_{n-1}, \astate_{n})
    \in\plgnfn{0}\cap \quasirunsof{\atermppp}$
    where the DFA runs on $\atermppp$ from $\astateinit$ to $\astate_{n}$ and $\atermpppp' = \arrof{\atermpppp}$.
    
    Finally, we show our assumption
    $\plgnfn{\anurg}=
    \set{\arrof{\atermpp}\mid \atermpp\in\specgnfn{\anurg}}$
    (up to $\specaxeq{\arrof{\objective}}$)
    for all $\anurg\geq 1$.
    The inclusion
    $\plgnfn{\anurg}\subseteq
    \set{\arrof{\atermpp}\mid \atermpp\in\specgnfn{\anurg}}$
    follows from the fact that 
    $\arrof{\bigPchoiceOf{\anurg}\atermset}
    =\bigPchoiceOf{\anurg}\set{\arrof{\aterm}\mid 
    \aterm\in\atermset}$ along with the definitions
    of $\plgnfn{\anurg}$ and $\plagnfn{\anurg}$.
    The inclusion 
    $\set{\arrof{\atermpp}\mid \atermpp\in\specgnfn{\anurg}}
    \subseteq\plgnfn{\anurg}$
    is proven by induction on $\anurg$.
    For the base case, 
    we have $\plgnfn{1}=\set{\normof{\arrof{\atermpp}}\mid 
    \atermpp\in\specgnfn{1}}$ per definition.
    For the inductive case, the cases 
    $\specgnfn{\anurg}$ and $\specagnfn{\anurg}$ are 
    analogous, so we only handle one.
    Letting $\bigEchoiceOf{\anurg}\atermset\in\specgnfn{\anurg}$,
    we get  
    $\normof{\arrof{\bigEchoiceOf{\anurg}\atermset}}
    =\normof{\bigEchoiceOf{\anurg}\set{\normof{\arrof{\atermpp}}}\mid 
    \atermpp\in\atermset}$.
    Induction hypothesis tells 
    us that $\normof{\arrof{\atermpp}}\in\plagnfn{\anurg}$ 
    for all $\atermpp\in\atermset$, so 
    $\bigEchoiceOf{\anurg}\set{\normof{\arrof{\atermpp}}}\in\plgnfn{\anurg}$.
\end{proof}

The proof of \Cref{Lemma:DFASimulationExtended}
is more involved and relies on the following \Cref{Lemma:TermAssoc,Lemma:ArrSuccCommute}, 
which we prove first.

\begin{lemma}\label{Lemma:TermAssoc}
    Let $\aterm,\atermp\in\terms$ with
    $\aterm\sigeq\atermp$.
    Then $\ownof{\aterm}=\ownof{\atermp}$ and 
    Eve wins from $\aterm$ in $\anordinalp$ 
    moves if and only if Eve wins from $\atermp$ in 
    $\anordinalp$ moves.
\end{lemma}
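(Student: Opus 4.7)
The plan is to prove the lemma by a combined structural and transfinite induction: structural induction on a derivation of $\aterm \sigeq \atermp$, together with transfinite induction on $\anordinalp$ via the strategy-tree characterization from \Cref{Appendix:TreeBounds}, under which ``Eve wins in $\anordinalp$ turns'' is captured by the existence of moves into successors that win in strictly smaller ordinals (existentially for Eve-owned positions, universally for Adam-owned ones).

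Since $\sigeq$ on terms is defined as the least equivalence relation containing monoid equality on $\wordterms$, associativity of $\appl$, the two identities for $\tskip$, and the zero absorption by $\terr$, it suffices to verify the claim on each generator; reflexivity, symmetry and transitivity preserve both conclusions trivially. For monoid equality on word terms, both sides are passive, owned by Eve by convention, and admit no moves, so winning reduces to membership of the shared monoid element in $\objective$ and the ordinal plays no role. For associativity and the two identities, the outermost actions on both sides coincide (bracketing and $\tskip$ contribute no action), so $\leadingof{\aterm} = \leadingof{\atermp}$, which yields owner equality. Each move rewrites this common leading subterm, producing successors that remain $\sigeq$-related in the same rewriting case; the inner induction on $\anordinalp$ then closes via the strategy-tree characterization.

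The main obstacle is the zero absorption law $\aterm \appl \terr \appl \atermp \sigeq \terr$, where the leading subterms, and hence the syntactic owners, can genuinely differ. The key observation is that every maximal play starting from $\aterm \appl \terr \appl \atermp$ must terminate in a word term of the form $\aword \appl \terr \appl \awordp$, whose monoid interpretation in $\analph^{*}\cup\set{\terr}$ equals $\terr$, which is not in $\objective$ since $\objective \subseteq \analph^{*}$; likewise, $\terr$ on the right-hand side is not in $\objective$. Hence neither side admits a winning strategy for Eve at any ordinal $\anordinalp$, so the biconditional on winning in $\anordinalp$ turns holds vacuously. The ownership clause is then read in the same light: it is operationally irrelevant because no continuation on either side can ever reach $\objective$, so the root owner assignment does not affect game value.
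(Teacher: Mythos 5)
Your proposal is correct in substance and, for the main claim, follows essentially the same route as the paper's proof: the paper also proceeds by transfinite induction on $\anordinalp$, observing that $\sigeq$-related terms have the same $i$-th outermost concatenation operand leading (hence equal owners and successor sets that are again equal up to $\sigeq$), and then lifts Eve's strategy via the induction hypothesis, with the base case settled by monoid evaluation of word terms. Your reorganization as a case analysis over the generators of $\sigeq$, with the equivalence closure dispatched trivially and an inner transfinite induction per generator schema, is a difference of bookkeeping rather than of idea for the word-term, associativity, and $\tskip$ generators; it is sound, since successors of a generator-related pair are related by another instance of the same schema, so the inner induction hypothesis (which quantifies over all instances) applies.

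The genuine difference is your explicit treatment of the absorption generator $\aterm\appl\terr\appl\atermp\sigeq\terr$, which the paper's sketch silently subsumes under ``the same operand is leading'' --- an argument that simply does not apply there. Your vacuity argument is the right fix for the winning clause: moves only rewrite the leading action, so an outermost $\terr$ persists in every reachable position and every \emph{terminating} play evaluates to $\terr\notin\objective$; do note that your phrase ``every maximal play must terminate'' is slightly off, since non-terminals admit infinite plays, but those are lost by Eve as well, so the conclusion stands. You are also right that the ownership clause literally fails on this generator: $(\aterm\achoicen{1}\atermp)\appl\terr\appl\tskip\sigeq\terr$ is a direct generator instance relating an Adam-owned position to the Eve-owned word term $\terr$, so the paper's claim that owner equality ``already follows from the definition of ownership'' is accurate only for the other generators. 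Since the paper's later uses of \Cref{Lemma:TermAssoc} (head/body rebracketings in \Cref{Lemma:ArrSuccCommute,Lemma:DFASimulationExtended}) invoke ownership only for such rebracketing instances, or at positions from which Eve can never win, your reading of the ownership clause as restricted away from the $\terr$-collapse is a correct repair of the statement rather than a gap in your argument.
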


\begin{proof}[Proof Sketch]
    The definition of ownership 
    already implies $\ownof{\aterm}=\ownof{\atermp}$ for 
    all $\aterm,\atermp\in\terms$ with 
    $\aterm\sigeq\atermp$.
    The remainder of the 
    statement is proven by transfinite induction on $\anordinalp$.
    The base case follows from the monoid evaluation of 
    word terms.
    We sketch out the inductive case.
    W.l.o.g.\ we can only handle one direction of the implication.
    Let $\aterm\sigeq\atermp$ and 
    let Eve win from $\aterm$ in $\anordinalp$ moves.
    In both terms, the same $i$-th concatenation operand 
    (ignoring the bracketing) will be leading.
    Then, the successor sets are equal up to $\sigeq$.
    If $\ownof{\aterm}=\ownof{\atermp}=Eve$,
    then Eve wins from at least one $\aterm'\in\successorsof{\aterm}$
    in $\anordinalpp<\anordinalp$ moves.
    Here, we can apply the induction hypothesis to lift the 
    strategy to $\atermp$.
    The case $\ownof{\aterm}=\ownof{\atermp}=Adam$ is dual.
\end{proof}

We cal $\aterm\in\terms$ 
\emph{headless} if $\termheadof{\aterm}=\tskip$. 
For headless terms, $\arrof{.}$ and $\successorsof{.}$
commute.

\begin{lemma}\label{Lemma:ArrSuccCommute}
    For headless $\aterm\in\terms$,
    $\successorsof{\arrof{\aterm}}=\arrof{\successorsof{\aterm}}$.
\end{lemma}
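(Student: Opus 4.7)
The plan is to reduce the claim to the fact that, for a headless term $\aterm$, the leading subterm of $\arrof{\aterm}$ is exactly $\arrof{\leadingof{\aterm}}$ sitting at the corresponding position, and that the enclosing context commutes with $\arrof{.}$. Once this is established, matching the successors of $\aterm$ and $\arrof{\aterm}$ reduces to a straightforward case analysis on whether $\leadingof{\aterm}$ is a choice or a non-terminal.

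First I would argue that the translation $\arrof{.}$ is defined compositionally and preserves the outermost concatenation–choice–non-terminal skeleton. The only non-identity behaviour is on terminals: $\arrof{\aletter}$ is an angelic urgency-$1$ choice, which is a new outermost action of urgency $1$. This is precisely where the headlessness hypothesis enters. Since $\aterm$ is headless, no outermost terminal sits to the left of $\leadingof{\aterm}$ in $\aterm$; and since every action has urgency at least $1$, the action $\leadingof{\aterm}$ has urgency $\anurg\geq 1$. Consequently, after translation, the outermost actions of $\arrof{\aterm}$ consist of (i) $\arrof{\leadingof{\aterm}}$ at the same position, with the same urgency $\anurg$, (ii) translations of the other outermost actions of $\aterm$ (same urgency, strictly to the right of, or under the same urgency but to the right of, the leading position), and (iii) fresh urgency-$1$ choices stemming from outermost terminals of $\aterm$, all of which lie strictly to the right of the leading position. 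Thus the leftmost outermost action of maximal urgency in $\arrof{\aterm}$ is $\arrof{\leadingof{\aterm}}$.

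Next I would exploit this to factor $\arrof{\aterm}$ through its enclosing context. Writing $\aterm=\enclosingctx{\aterm}{\leadingof{\aterm}}$, the compositionality of the translation gives $\arrof{\aterm}=\enclosingctxarr{\acontext{\contextvar}}{\arrof{\leadingof{\aterm}}}$ where $\enclosingctxarr{\acontext{\contextvar}}{\contextvar}$ denotes the evident translation of $\enclosingctx{\aterm}{\contextvar}\in\contexts$ (treating $\contextvar$ as a non-terminal). By the paragraph above, this is also the context enclosing $\leadingof{\arrof{\aterm}}$ in $\arrof{\aterm}$. Now one distinguishes the two forms of $\leadingof{\aterm}$. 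If $\leadingof{\aterm}=\bigPchoiceOf{\anurg}{}{\atermset}$, then $\arrof{\leadingof{\aterm}}=\bigPchoiceOf{\anurg}{}{\setcond{\arrof{\atermp}}{\atermp\in\atermset}}$, and the successors of $\arrof{\aterm}$ are exactly $\enclosingctxarr{\acontext{\contextvar}}{\arrof{\atermp}}=\arrof{\enclosingctx{\aterm}{\atermp}}$ for $\atermp\in\atermset$, matching $\arrof{\successorsof{\aterm}}$. If $\leadingof{\aterm}=\nonterminal$, then $\arrof{\leadingof{\aterm}}=\nonterminal$ and the unique successor of $\arrof{\aterm}$ is $\enclosingctxarr{\acontext{\contextvar}}{\arreqmapof{\nonterminal}}=\enclosingctxarr{\acontext{\contextvar}}{\arrof{\eqmapof{\nonterminal}}}=\arrof{\enclosingctx{\aterm}{\eqmapof{\nonterminal}}}$, which is again the translation of the unique successor of $\aterm$.

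The main obstacle is the argument that the new urgency-$1$ choices introduced by translating terminals do not usurp the leading position in $\arrof{\aterm}$. Without headlessness this fails, since a prefix like $\aletter\appl\bigPchoiceOf{1}{}{\atermset}$ would have $\leadingof{\aterm}=\bigPchoiceOf{1}{}{\atermset}$, but $\arrof{\aterm}$ would introduce a new urgency-$1$ choice \emph{strictly to the left} of $\arrof{\leadingof{\aterm}}$, shifting the leading position. I would therefore isolate the positional comparison (``leftmost outermost action of highest urgency'') as an explicit auxiliary, and verify it by structural induction on the shape of $\aterm$, carefully using that headlessness propagates to all prefixes considered and that $\arrof{.}$ is identity on choices and non-terminals up to translating the participating subterms.
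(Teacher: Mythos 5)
There is a genuine gap in your central positional claim. You assert that ``since $\aterm$ is headless, no outermost terminal sits to the left of $\leadingof{\aterm}$,'' and build item (iii) on it. This is false: headlessness only forbids terminals before the \emph{leftmost} outermost action, whereas the leading subterm is the leftmost outermost action of \emph{highest} urgency, and terminals may well sit between the two. Take $\aterm=(\aletter\achoicen{1}\aletterp)\appl\aletterpp\appl(\aletter\echoicen{2}\aletterp)$: it is headless ($\termheadof{\aterm}=\tskip$), its leading subterm is the urgency-$2$ choice, yet the outermost terminal $\aletterpp$ lies strictly to its left, so $\arrof{\aletterpp}$ is a fresh urgency-$1$ choice to the \emph{left} of $\arrof{\leadingof{\aterm}}$ --- directly contradicting (iii). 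Since you announce that you would verify precisely this invariant by structural induction, the induction as envisioned would get stuck on such terms. The conclusion you want (the leading position is preserved by $\arr$) is still true, but for a different reason: fresh urgency-$1$ choices to the left of the leading position can only arise when the leading urgency is at least $2$ (if the leading urgency is $1$, the leading action is the leftmost outermost action, and only then does headlessness forbid terminals to its left), and urgency $1$ then lies strictly below the leading urgency. Your auxiliary lemma must be weakened to this quantitative statement, which is exactly the observation the paper isolates: $\urgencyof{\arrof{\atermppp}}=\urgencyof{\atermppp}$ whenever $\urgencyof{\atermppp}>0$, and $\urgencyof{\arrof{\atermppp}}\leq 1$ whenever $\urgencyof{\atermppp}=0$.

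A second, related slip is the remark that ``headlessness propagates to all prefixes considered.'' In the concatenation case $\aterm=\atermp\appl\atermpp$ with the leading subterm in the right factor, only the left factor $\atermp$ is headless; the right factor $\atermpp$, which contains the leading action, generally is not. The paper's proof navigates exactly this point: in the case $\atermp\appl\makeleading{\atermpp}$ it uses $\urgencyof{\atermpp}>\urgencyof{\atermp}\geq 1$ (so the leading urgency is at least $2$), decomposes $\atermpp$ around its leading \emph{action} --- a choice or non-terminal, hence headless --- applies the induction hypothesis to that action alone, and checks that everything to its left keeps urgency $\leq\max(1,\urgencyof{\atermp})$, strictly below the leading urgency, after translation. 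Once your positional invariant is repaired along these lines, your plan (leading subterm and enclosing context commute with $\arr$, then a two-way case split on choice versus non-terminal, using $\arreqmapof{\nonterminal}=\arrof{\eqmapof{\nonterminal}}$) does go through and is essentially the paper's argument reorganized; as written, however, the key step fails.
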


\begin{proof}
    The proof is by (transfinite) structural induction on 
    $\aterm\in\terms$.
    For the base case, we let $\aterm=\nonterminal\in\nonterminals$.
    This is the only base case, 
    because $\aterm$ is only headless if 
    $\aterm=\nonterminal\in\nonterminals$.
    Per definition, we have $\successorsof{\arrof{\nonterminal}}=
    \set{\arrof{\eqmapof{\nonterminal}}}
    =\arrof{\successorsof{\nonterminal}}$.
    
    The first inductive case is $\aterm=\bigPchoiceOf{\anurg}\atermset$.
    Per definition, $\arrof{\bigPchoiceOf{\anurg}\atermset}=
    \bigPchoiceOf{\anurg}\set{\arrof{\aterm}\mid \aterm\in\atermset}$.
    And thus,
    $\successorsof
    {\bigPchoiceOf{\anurg}\set{\arrof{\aterm}\mid \aterm\in\atermset}}
    =\set{\arrof{\aterm}\mid \aterm\in\atermset}
    =\arrof{\successorsof{\bigPchoiceOf{\anurg}\atermset}}$.

    The second inductive case is $\aterm=\atermp\appl\atermpp$.
    If $\atermp$ is not headless, then 
    $\aterm$ would not be headless.
    So we deduce that $\atermp$ must be headless.
    A headless term must contain an outermost choice or 
    a non-terminal, since terms can not be empty.
    So, $\urgencyof{\atermp}\geq 1$.
    We observe that for all 
    $\atermppp\in\terms$,
    $\urgencyof{\atermppp}>0$
    implies $\urgencyof{\arrof{\atermppp}}=\urgencyof{\atermppp}$ 
    and $\urgencyof{\atermppp}=0$ 
    implies $\urgencyof{\arrof{\atermppp}}\leq 1$.
    This is clear from the construction of 
    $\arrof{.}$:
    Only the urgencies of 
    urgency $0$ subterms change.
    These get replaced by urgency $1$ terms,
    unless they are $\terr$ or $\tskip$.
    
    The first case is $\makeleading{\atermp}\appl\atermpp$.
    So, we have $\urgencyof{\atermp}\geq\urgencyof{\atermpp}$.
    When $\urgencyof{\atermpp}=0$, then
    $\urgencyof{\arrof{\atermp}}\geq
    1 \geq\urgencyof{\arrof{\atermpp}}$.
    Otherwise for $\atermp$ to be leading, $\urgencyof{\atermp} \geq 1$ must hold and
    \[
        \urgencyof{\arrof{\atermp}}=\urgencyof{\atermp}\geq 
        \urgencyof{\atermpp}=\urgencyof{\arrof{\atermpp}}\,.
    \]
    In either case we have $\makeleading{\arrof{\atermp}}\appl\arrof{\atermpp}$.
    By I.H., 
    $\successorsof{\arrof{\atermp}}=\arrof{\successorsof{\atermp}}$.
    This results in
    $\successorsof{\arrof{\atermp\appl\atermpp}}
    =\successorsof{\arrof{\atermp}}\appl\arrof{\atermpp}
    =\arrof{\successorsof{\atermp}}\appl\arrof{\atermpp}
    =\arrof{\successorsof{\makeleading{\atermp}\appl\atermpp}}$.
    
    Now let $\atermp\appl\makeleading{\atermpp}$.
    Since the term $\atermp$ is headless, $\atermpp$ has urgency $\urgencyof{\atermpp}>\urgencyof{\atermp}\geq 1$.
    Let $\atermpp=\atermppp\appl
    \makeleading{\atermppppp}\appl\atermpppp$
    for some action $\atermppppp$.
    If we are strict,
    we also need to handle the 
    cases where $\atermpp$ 
    equals to $\atermppp\appl\makeleading{\atermppppp}$,
    $\makeleading{\atermppppp}\appl\atermpppp$, and 
    $\makeleading{\atermppppp}$.
    We omit them to avoid repetition.
    Since $\urgencyof{\atermppppp}=\urgencyof{\atermpp}>1$,
    the term $\atermppppp$ can only be a choice 
    $\bigPchoiceOf{\anurg}\atermset$ or 
    a non-terminal $\nonterminal$.
    Both of these terms are 
    headless, so we apply the induction hypothesis and obtain
    $\successorsof{\arrof{\atermppppp}}=
    \arrof{\successorsof{\atermppppp}}$.
    Because $\urgencyof{\atermppppp}>\urgencyof{\atermp\appl\atermppp}$,
    the leading subterm is
    $\arrof{\atermp}\appl\arrof{\atermppp}\appl
    \makeleading{\arrof{\atermppppp}}\appl\arrof{\atermpppp}$.
    Indeed, $\urgencyof{\arrof{\atermp\appl\atermppp}}
        \leq \max(1, \urgencyof{\atermp\appl\atermppp})$ 
    and $\urgencyof{\arrof{\atermp\appl\atermppp}}<\urgencyof{\arrof{\atermppppp}}$.
    Finally, we derive
    \begin{align*}
        \successorsof{\arrof{\atermp\appl\makeleading{\atermpp}}}
    &=\successorsof{\arrof{\atermp\appl\atermppp\appl
    \makeleading{\atermppppp}\appl\atermpppp}}\\
    &=\arrof{\atermp\appl\atermppp}\appl\successorsof{\arrof{\atermppppp}}
        \appl\arrof{\atermpppp}\\
    &=\arrof{\atermp\appl\atermppp}\appl\arrof{\successorsof{\atermppppp}}
        \appl\arrof{\atermpppp}\\
    &=\arrof{\successorsof{\atermp\appl\atermppp\appl\makeleading{\atermppppp}\appl\atermpppp}}\\
    &=\arrof{\successorsof{\atermp\appl\makeleading{\atermpp}}} 
    \end{align*}
\end{proof}

\begin{proof}[Proof of \Cref{Lemma:DFASimulationExtended}]
    We show both directions
    by an induction on the number of moves
    Eve needs to win.
    
    \emph{Forward Direction:}
    Let $\aterm\in\terms$ and $\arun \in \quasirunsof{\termheadof{\aterm}}$.
    For the base case, let Eve wins $\objective$ from $\aterm$ in $0$ moves.
    Then, $\termheadof{\aterm}=\aterm \in \wordterms$ and $\termbodyof{\aterm}=\tskip$.
    Because Eve wins, $\aterm$ can not contain $\terr$.
    Then, $\arun \in \quasirunsof{\aterm}$ must be of the form 
    $(\astateinit, \astate_1)\appl(\astate_1, \astate_2)\ldots(\astate_{n-1}, \astate_{n})
        \appl\arrof{\aletter_{n}}\ldots\arrof{\aletter_{k-1}}$.
    We know that the word $\aword= \aletter_0 \ldots \aletter_n \ldots \aletter_{k-1}$ that corresponds to 
    $\aterm \sigeq \aword$ has a run on the DFA for $\objective$.
    Per definition of $\quasirunsof{\aterm}$, 
    the DFA runs $\aword$ from 
    $\astateinit$ to $\astate_n$ in $n-1$ steps, 
    and Eve can choose the remaining transitions 
    to reach $\astate_k\in \finalstates$.
    
    For the inductive case, let Eve reach $\objective$ from $\aterm$ in
    $\anordinalp$ moves.
    Per \Cref{Lemma:TermAssoc}, this also holds from $\aword\appl\atermp$, 
    where $\aword=\termheadof{\aterm}$ and $\atermp=\termbodyof{\aterm}$.
    We show that $\winsof{\atermpp}{\arrof{\objective}}$
    for some $\arun_{\aword}\in \quasirunsof{\aword}$  and
    $\atermpp\sigeq\arun_{\aword}\appl\arrof{\atermp}$.
    Per 
    \Cref{Lemma:TermAssoc},
    showing this for one such $\atermpp$ suffices.
    We first observe that Eve has a strategy to reach
    $\arun_{\aword}'\appl\makeleading{\arrof{\atermp}}$
    where $\arun_{\aword}'\in \quasirunsof{\aword}$.
    %
    In case of $\urgencyof{\atermp} \geq 2$ we have $\arun_{\aword} = \arun_{\aword}'$ 
    and the leading subterm $\arun_{\aword}\appl\makeleading{\arrof{\atermp}}$ 
    due to $\urgencyof{\arun_{\aword}}\leq 1 < \urgencyof{\arrof{\atermp}}$.
    Otherwise, $\urgencyof{\arrof{\atermp}} = \urgencyof{\atermp}=1$ (it can't be 0 because $\atermp = \termbodyof{\aterm}$).
    If $\arun_{\aword} \in \quasirunsof{\aword} \cap \wordterms$, 
    we already have $\arun_{\aword}\appl\makeleading{\arrof{\atermp}}$.
    So let $\arun_{\aword} \in \quasirunsof{\aword} \setminus \wordterms$,
    i.e.\ $\urgencyof{\arun_{\aword}}=1$.
    Eve resolves each term of the form $\bigEchoiceOf{1}{\setcond{(\astatep, \astatepp)}
        {\astatepp\in\transitions(\astatep, \aletter)}}$ 
    for some
    $\aletter\in\Sigma$ and extend the determined part of the run 
    in $\quasirunsof{\aword}$.
    We know that $\aword\not\sigeq\terr$, because Eve wins 
    $\aword\appl\atermp$.
    So, per definition of $\quasirunsof{\aword}$, Eve can find 
    fitting transitions.
    Exhaustive application of this strategy 
    rewrites $\arun_{\aword}$ to 
    $\arun_{\aword}'\in \quasirunsof{\aword} \cap \wordterms$
    for the desired $\arun_{\aword}'\appl\makeleading{\arrof{\atermp}}$.
    
    We now show $\winsof{\arun_{\aword}'\appl\arrof{\atermp}}{\arrof{\objective}}$.
    Let $\ownof{\aword\appl\atermp}=Eve$.
    The case $\ownof{\aword\appl\atermp}=Adam$ is dual.
    Remember $\aterm = \aword\appl\atermp$ and $\aword = \termheadof{\aterm}$,
    so we have $\aword\appl\makeleading{\atermp}$.
    Then, there is a $\aword\appl\atermp' \in \successorsof{\aword\appl\atermp}$
    from which Eve wins in $\anordinalpp<\anordinalp$ moves.
    Since $\aword\appl\atermp'\sigeq
    \aword\appl\termheadof{\atermp'}\appl\termbodyof{\atermp'}$,
    \Cref{Lemma:TermAssoc} tells us that 
    Eve also wins from $\aword\appl\termheadof{\atermp'}
    \appl\termbodyof{\atermp'}$.
    We have $\arun_{\aword}'\appl\arrof{\termheadof{\atermp'}}\in 
    \quasirunsof{\aword\appl\termheadof{\atermp'}}
    =\quasirunsof{\termheadof{\aword\appl\atermp'}}$,
    so we can apply the induction hypothesis to see that 
    Eve wins from $\arun_{\aword}'\appl\arrof{\termheadof{\atermp'}}
    \appl\arrof{\termbodyof{\atermp'}}$ in $\anordinalpp$ turns.
    The fact 
    $\arun_{\aword}'\appl\arrof{\termheadof{\atermp'}}
    \appl\arrof{\termbodyof{\atermp'}}
    \sigeq 
    \arun_{\aword}'\appl\arrof{\atermp'}$ and 
    \Cref{Lemma:TermAssoc} imply 
    that $\winsof{\arun_{\aword}'\appl\arrof{\atermp'}}{\arrof{\objective}}$.
    Since $\atermp$ is headless, \Cref{Lemma:ArrSuccCommute} tells us
    $\successorsof{\arrof{\aterm}}=\arrof{\successorsof{\aterm}}$ 
    and thus $\arun_{\aword}'\appl\arrof{\atermp'}
        \in\successorsof{\arun_{\aword}'\appl\makeleading{\arrof{\atermp}}}$.
    Then, we also have
    $\winsof{\arun_{\aword}'\appl\arrof{\atermp}}{\arrof{\objective}}$.
    This concludes this direction of the proof.

    \emph{Backward Direction:}
    Let $\aterm\in\terms$, $\aword=\termheadof{\aterm}$,
    $\atermp=\termbodyof{\aterm}$ and 
    $\objective\subseteq\analph^{*}$.
    For the base case, let Eve win 
    $\arun_{\aword}\appl\arrof{\atermp}$ in 
    $0$ moves.
    Then, $\arrof{\atermp}$ is a command and this is only 
    possible if $\atermp=\tskip$.
    Since $\arun_{\aword}\in\arrof{\objective}$, the run
    $\arun_{\aword}$ is 
    accepting in the DFA for $\objective$.
    So, $\aword=\aterm\in\objective$.

    For the inductive case, 
    let Eve win $\arun_{\aword}\appl\arrof{\atermp}$ 
    in $\anordinalp$ moves
    for some $\arun_{\aword}\in \quasirunsof{\aword}$.
    The first case is $\makeleading{\arun_{\aword}}\appl\arrof{\atermp}$.
    Then, there is a term $\arrof{\aletter}$
    in $\arun_{\aword}$,
    so Eve can simply choose the corresponding 
    transition in the DFA to extend $\arun_{\aword}$ 
    to $\arun_{\aword}'\in \quasirunsof{\aword}$
    per definition of $\quasirunsof{\aword}$.
    Eve then wins from $\arun_{\aword}'\appl\arrof{\atermp}$ in 
    $\anordinalpp<\anordinalp$ moves.
    We can apply the induction hypothesis to see that 
    Eve wins from $\aword\appl\atermp$, and per \Cref{Lemma:TermAssoc} from $\aterm$.
    The second case is
    $\arun_{\aword}\appl\makeleading{\arrof{\atermp}}$.
    Let $\ownof{\arun_{\aword}\appl\arrof{\atermp}}=Eve$.
    Again, the case $\ownof{\arun_{\aword}\appl\arrof{\atermp}}=Adam$ is dual.
    Since $\atermp$ is headless, 
    we have $\successorsof{\arrof{\atermp}}=\arrof{\successorsof{\atermp}}$ 
    per \Cref{Lemma:ArrSuccCommute}.
    Then, Eve wins from some $\arun_{\aword}\appl\arrof{\atermp'}$ 
    where $\atermp'\in\successorsof{\atermp}$
    in $\anordinalpp<\anordinalp$ moves.
    Write $\arun_{\aword}\appl\arrof{\atermp'}\sigeq
    \arun_{\aword}\appl\arrof{\termheadof{\atermp'}}\appl
    \arrof{\termbodyof{\atermp'}}$.
    As in the previous direction, we have 
    $\arun_{\aword}\appl\arrof{\termheadof{\atermp'}}\in 
    \quasirunsof{\aword\appl\termheadof{\atermp'}}$.
    Per definition, $\termbodyof{\atermp'}$ is headless.
    We can apply the induction hypothesis to 
    get that Eve wins from $\aword\appl\termheadof{\atermp'}\appl 
    \termbodyof{\atermp'}\sigeq\aword\appl\atermp'\in
    \successorsof{\aword\appl\makeleading{\atermp}}$.
    So Eve wins from $\aword\appl\atermp\sigeq\aterm$ as well, 
    completing the proof.
\end{proof}

\textbf{Proof of \Cref{Lemma:CharTermsAtScale} for $\maxurg=1$:}~\\
\emph{Preliminary Facts:}
Before moving on to the computation
of $\chi(\arrcontexts)$ for $\maxurg=1$, we study 
$\specgnfnof{0}{\arrof{\objective}}$ more closely.
\begin{lemma}\label{Lemma:TrSyntacticCongruence}
    The syntactic monoid of $\arrof{\objective}$ is $\specgnfnof{0}{\arrof{\objective}}
    =\states^{2}\cup\set{\terr, \tskip}$.
    Furthermore, 
    the
    ${(\astate, \astatep)\appl(\astatep, \astatepp)}
    \specaxeq{\arrof{\objective}}
    (\astate,\astatepp)$
    and 
    ${(\astate, \astatep)\appl(\astateppp, \astatepp)}
    \specaxeq{\arrof{\objective}}
    \terr$ hold for all 
    $\astate, \astatep, \astatepp, \astateppp \in\states$ with $\astatep\neq\astateppp$.
\end{lemma}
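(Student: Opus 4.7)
The plan is to first establish the two displayed equivalences by direct analysis of runs in the DFA $\arrof{\objective}$, and then to use them as rewrite rules to reduce an arbitrary word over $\arralph = \states^{2}$ to one of the three canonical shapes $\tskip$, $\terr$, or a single letter $(\astate,\astatep)$. By \Cref{axiom:spec}, it suffices for both equivalences to verify the syntactic precongruence $\lreq{\arrof{\objective}}$; the axiomatic equivalence $\specaxeq{\arrof{\objective}}$ then follows.

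For $(\astate,\astatep)\appl(\astatep,\astatepp)\lreq{\arrof{\objective}}(\astate,\astatepp)$, fix arbitrary $\awordpp,\awordppp\in\arralph^{*}$ and let $\astateq$ be the state reached from $\astateinit$ by reading $\awordpp$ in $\arrof{\objective}$ (with $\astateq=\bot$ if this run fails). By definition of $\arrof{\transitions}$, reading the inserted word $(\astate,\astatep)\appl(\astatep,\astatepp)$ from $\astateq$ lands in $\astatepp$ when $\astateq=\astate$ and in $\bot$ otherwise; the very same holds when we instead read $(\astate,\astatepp)$. Hence the two insertions yield the same state, and therefore the same acceptance verdict once $\awordppp$ has been processed, establishing the $\lreq{\arrof{\objective}}$-equivalence. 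For the second equivalence, the same case analysis shows that $(\astate,\astatep)\appl(\astateppp,\astatepp)$ with $\astatep\neq\astateppp$ always drives the DFA into $\bot$ (either immediately if $\astateq\neq\astate$, or after one step, since $\astatep\neq\astateppp$). Thus $\awordpp\appl(\astate,\astatep)(\astateppp,\astatepp)\appl\awordppp\notin\arrof{\objective}$; on the other hand $\awordpp\appl\terr\appl\awordppp\sigeq\terr\notin\arrof{\objective}$ as well, so both sides of the implication defining $\lreq{\arrof{\objective}}$ hold vacuously.

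For the first part of the lemma, I proceed by induction on the length of a word $\aword\in\arralph^{*}$. If $\aword=\tskip$ or $\aword=(\astate,\astatep)$, there is nothing to show. Otherwise write $\aword=(\astate_{1},\astatep_{1})\appl\awordp$ with $\awordp$ shorter; by induction $\awordp\specaxeq{\arrof{\objective}}\atermpp$ with $\atermpp\in\set{\tskip,\terr,(\astate_{2},\astatep_{2})}$. The first two sub-cases are trivial (using \Cref{axiom:monoid} for $\atermpp=\tskip$ and absorption of $\terr$ for $\atermpp=\terr$). In the third sub-case we apply the second displayed equivalence when $\astatep_{1}=\astate_{2}$ to obtain $(\astate_{1},\astatep_{2})$, and the third displayed equivalence when $\astatep_{1}\neq\astate_{2}$ to obtain $\terr$. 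Thus every word is $\specaxeq{\arrof{\objective}}$-equivalent to some element of $\states^{2}\cup\set{\terr,\tskip}$, which gives the claimed upper bound on $\specgnfnof{0}{\arrof{\objective}}$.

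The main obstacle is minor but worth stating: we must be careful that the rewrites act on arbitrary sub-positions inside a word, which is licensed by the fact that $\specaxeq{\arrof{\objective}}$ is a precongruence and concatenation is associative (here we silently use \Cref{axiom:monoid}). With that settled, neither step is deep, since the transition monoid of $\arrof{\objective}$ consists of partial functions on $\states$ of the form $\set{\astate\mapsto\astatep}$ together with identity and the zero map, and the two equivalences encode exactly the composition rules on these functions.
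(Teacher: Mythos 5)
Your treatment of the two displayed equivalences is correct and essentially the paper's own argument: both reduce to observing that $(\astate,\astatep)\appl(\astatep,\astatepp)$ and $(\astate,\astatepp)$ induce the same state change in $\arrof{\objective}$ (and that a discontinuous pair drives the DFA into the rejecting sink $\bot$), and both conclude via \Cref{axiom:spec}. Your length induction showing that every word over $\states^2$ collapses to an element of $\states^{2}\cup\set{\terr,\tskip}$ likewise matches the paper's closure-under-concatenation step in substance. (Minor slips: you refer to a ``third displayed equivalence'' where only two exist, and you verify the two-sided relation $\lrleq{\arrof{\objective}}$, which is indeed what \Cref{axiom:spec} requires.)

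However, there is a genuine gap: the lemma asserts an \emph{equality}, $\specgnfnof{0}{\arrof{\objective}} = \states^{2}\cup\set{\terr,\tskip}$, and you prove only one inclusion. Your induction shows that the syntactic monoid has \emph{at most} these elements, i.e., it is a quotient of $\states^{2}\cup\set{\terr,\tskip}$; you explicitly stop at ``the claimed upper bound.'' What is missing is the pairwise distinctness of the listed classes, which occupies the first half of the paper's proof. The paper separates them with concrete contexts: for any $(\astate,\astatep)$, the word $(\astateinit,\astate)\appl(\astate,\astatep)\appl(\astatep,\afinalstate)$ is a continuous chain from $\astateinit$ to $\afinalstate\in\finalstates$ and hence lies in $\arrof{\objective}$ (note that acceptance of $\arrof{\objective}$ only requires continuity, not that the pairs correspond to transitions of the original DFA), so $(\astate,\astatep)\not\specaxeq{\arrof{\objective}}\terr$ by soundness; the same context separates $(\astate,\astatep)$ from any different pair $(\astateppp,\astatepp)$, and $\tskip$ from the pairs (with the degenerate exception $\sizeof{\states}=1$, which the paper flags). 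Without this direction the statement could fail in principle --- a priori some pair might have collapsed onto $\terr$ or onto another pair --- and the distinctness is not decorative: it is the reason the index of $\lreq{\arrof{\objective}}$ is exactly quadratic in $\sizeof{\states}$, which is what the surrounding complexity analysis relies on. Your final remark that the transition monoid consists of the maps $\set{\astate\mapsto\astatep}$, the identity, and the zero map points in the right direction, but identifying the \emph{syntactic} monoid with it still requires exhibiting the separating contexts.
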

\begin{proof}
    First we show that $\states^{2}\cup\set{\terr, \tskip}
        \subseteq \specgnfnof{0}{\arrof{\objective}}$, 
    i.e.\ that these elements are pairwise not equivalent.
    It is clear that for any $(\astate,\astatep)\in
    \states^{2}$,
    $
    \winsof{(\astateinit,\astate)\appl(\astate,\astatep)\appl(\astatep, 
    \afinalstate)}{\objective}$,
    so we get $(\astate,\astatep)\not\speccongleq{\arrof{\objective}}
    \terr$ and thus 
    $(\astate, \astatep)\not\specaxleq{\arrof{\objective}}\terr$
    per soundness.
    It is also clear to see that for any 
    $(\astate, \astatep) \neq (\astateppp, \astatepp)\in\states^{2}$,
    we have
    $\winsof{(\astateinit, \astate)\appl(\astate,\astatep)
    \appl(\astatep, \afinalstate)}{\arrof{\objective}}$ but
    $\winsof{(\astateinit, \astate)\appl(\astateppp,\astatepp)
    \appl(\astatep, \afinalstate)}{\arrof{\objective}}$
    fails to hold since $\astate \neq \astateppp$ or $\astatep \neq \astatepp$.
    $\tskip$ is the neutral element of the syntactic monoid 
    and unless $\sizeof{\states} = 1$ it is different from any $(\astate, \astatep) \in \states^2$.

    Now, we show $\specgnfnof{0}{\arrof{\objective}}\subseteq 
    \states^{2}\cup\set{\terr, \tskip}$.
    These are exactly the terminal symbols, so
    it will suffice to show that 
    the set $\states^{2}\cup\set{\terr}$ 
    is closed under concatenation
    up to $\specaxeq{\arrof{\objective}}$.
    Concatenations that involve $\terr$ or 
    $\tskip$ elements are reduced by \Cref{axiom:monoid}
    to an element from $\states^{2}\cup\set{\terr, \tskip}$.
    
    It remains to show the claimed congruences.
    For that, let $\astate,\astatep, \astatepp, \astateppp\in\states$ 
    with $\astatep \neq \astateppp$
    and let $\acontext{\contextvar} = \atermppp\appl\contextvar\appl\atermpppp$ 
    be a concatenative context with $\atermppp, \atermpppp \in \wordterms$.
    The first observation is that
    $\acontext{(\astate,\astatep)\appl(\astatepp, \astateppp)}$ 
    is losing for Eve, since she can not form a continuous run, 
    i.e.\ she can not reach $\arrof{\objective}$.
    Then, per \Cref{axiom:spec} we get 
    $(\astate,\astatep)\appl(\astateppp, \astatepp)\specaxeq{\arrof{\objective}}\terr$.
    An accepting, continuous run needs to have only three properties.
    Namely, it must start from $\astateinit$, it must not have 
    discontinuities, and it must end at some 
    $\afinalstate\in\finalstates$.
    But, $(\astate,\astatep)\appl(\astatep, \astatepp)$ 
    and $(\astate, \astatepp)$ are both continuous while 
    starting and ending at the same states.
    So we see that 
    $\acontext{(\astate, \astatep)\appl(\astatep,\astatepp)}$
    is an accepting run if and only if 
    $\acontext{(\astate, \astatepp)}$ is an accepting run.
    Thus, $(\astate,\astatep)\appl(\astatep, \astatepp)
    \specaxeq{\arrof{\objective}}
    (\astate, \astatepp)$.
\end{proof}

We define 
$\transofword{\aword}=\set{(\astate,\astatep)\in\states^{2}\mid 
\astate\overset{\awordp}{\to}\astatep, \aword\sigeq\awordp\in\analph^{+}}$
for all $\aword\in\wordterms$ without $\terr$ that are not $\tskip$.
\begin{lemma}\label{Lemma:TrDFARuns}
    $\arrof{\aword}\specaxeq{\arrof{\objective}}\bigEchoiceOf{1}{\transofword{\aword}}$.
\end{lemma}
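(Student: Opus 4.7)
}
The plan is to proceed by induction on the length of $\aword$ after canonicalizing via \Cref{axiom:monoid} so that we may assume $\aword = \aletter_{1}\appl\cdots\appl\aletter_{n}$ with $n\geq 1$ and all $\aletter_{i}\in\analph$ (the hypothesis rules out $\terr$ and $\tskip$). Throughout I use that $\specaxeq{\arrof{\objective}}$ is a congruence and that singleton choices $\bigEchoiceOf{1}\set{\aterm}\specaxeq{\arrof{\objective}}\aterm$ by~\labelcref{axiom:single}.

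For the base case $n=1$, the definition of the translation gives $\arrof{\aletter_{1}} = \bigEchoiceOf{1}\setcond{(\astate,\astatep)}{\transitions(\astate,\aletter_{1})=\astatep}$, and this set is exactly $\transofword{\aletter_{1}}$ by definition of $\transofword{\cdot}$. So the two sides are literally equal.

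For the inductive step, write $\aword = \awordp\appl\aletter$ with $\awordp\in\analph^{+}$ and $\aletter\in\analph$. The translation distributes over concatenation, so $\arrof{\aword} = \arrof{\awordp}\appl\arrof{\aletter}$. The induction hypothesis yields $\arrof{\awordp}\specaxeq{\arrof{\objective}} \bigEchoiceOf{1}\transofword{\awordp}$, and the base case gives $\arrof{\aletter}\specaxeq{\arrof{\objective}} \bigEchoiceOf{1}\setcond{(\astatep',\astatepp)}{\transitions(\astatep',\aletter)=\astatepp}$. Both choices have urgency $1$, so with~\labelcref{axiom:dist-left} and~\labelcref{axiom:dist-right} (the side conditions on urgency are met since the state-pair letters are urgency $0$) one derives
\begin{align*}
\arrof{\aword}\ \specaxeq{\arrof{\objective}}\ \bigEchoiceOf{1}\setcond{(\astate,\astatep)\appl(\astatep',\astatepp)}{(\astate,\astatep)\in\transofword{\awordp},\ \transitions(\astatep',\aletter)=\astatepp}\ .
\end{align*}
Now~\Cref{Lemma:TrSyntacticCongruence} lets us rewrite each concatenated pair: if $\astatep=\astatep'$ then $(\astate,\astatep)\appl(\astatep,\astatepp)\specaxeq{\arrof{\objective}}(\astate,\astatepp)$, and otherwise the concatenation is equivalent to~$\terr$. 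Applying these congruences pointwise under~\labelcref{axiom:lattice-mono} (and using~\labelcref{axiom:lattice-assoc} to flatten), the choice becomes $\bigEchoiceOf{1}(S\cup\set{\terr})$ where $S$ consists of all pairs $(\astate,\astatepp)$ such that there exists $\astatep$ with $(\astate,\astatep)\in\transofword{\awordp}$ and $\transitions(\astatep,\aletter)=\astatepp$, i.e.\ exactly the pairs $(\astate,\astatepp)$ with a run $\astate\xrightarrow{\awordp\aletter}\astatepp$. Thus $S = \transofword{\aword}$. Finally,~\labelcref{axiom:least} gives $\terr\axleq\aterm$ for every $\aterm\in S$ (at least one such exists, or else the claim reduces trivially to $\bigEchoiceOf{1}\set{\terr}$ on both sides after observing $\transofword{\aword}=\emptyset$ case separately); together with~\labelcref{axiom:lattice-mono},~\labelcref{axiom:lattice-assoc} and~\labelcref{axiom:single}, the $\terr$ summand can be absorbed, yielding $\bigEchoiceOf{1}\transofword{\aword}$.

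The only subtle point is the corner case where $\transofword{\aword}=\emptyset$: then every concatenated pair collapses to $\terr$, and both sides are $\specaxeq{\arrof{\objective}}\terr$ (the left by the same simplification, the right by convention on empty $\bigEchoice$ that we exclude, so one should either reinterpret $\bigEchoiceOf{1}\emptyset$ as $\terr$ or restate the lemma restricting to $\aword$ with at least one accepting-path-capable DFA transition). I expect the bookkeeping in the inductive step---tracking which concatenated pairs survive, which collapse to $\terr$, and the clean application of distributivity with matching urgencies---to be the main technical obstacle, but all required machinery is furnished by~\Cref{Lemma:TrSyntacticCongruence} and the lattice/distributivity axioms.
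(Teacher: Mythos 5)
Your proof is correct and follows essentially the same route as the paper's: induction on the length of $\aword$ with a definitional base case, then distributing via \labelcref{axiom:dist-right} and \labelcref{axiom:dist-left}, flattening with \labelcref{axiom:lattice-assoc}, and collapsing concatenated pairs through \Cref{Lemma:TrSyntacticCongruence} before discarding the inconsistent-run alternatives (the paper invokes \labelcref{axiom:lattice-ord} for that last step where you combine \labelcref{axiom:least} with \labelcref{axiom:lattice-mono}, an immaterial difference, as is your peeling one letter at a time versus the paper's binary split of the word). Your worried corner case $\transofword{\aword}=\emptyset$ in fact never arises: the objective is a DFA with a total transition function, so $\transofword{\aword}$ contains exactly one pair per starting state and the $\terr$ summand can always be absorbed.
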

\begin{proof}
    This is done by induction on $\sizeof{\aword}$, 
    the number of symbols contained in $\aword$.
    For the base case, we have ${\sizeof{\aword}=1}$.
    Then, $\arrof{\aword}=\bigEchoiceOf{1}{\set{(\astate,\astatep)
    \mid \astatep\in\transitions(\astate, \aword)}}$.
    For the inductive case, let $\aword\appl\awordp\in\wordterms$ 
    unequal to $\terr$ or $\tskip$ by $\sigeq$.
    Then, 
    \begin{align*}
        \arrof{\aword\appl\awordp}
        &=
        \arrof{\aword}\appl\arrof{\awordp}\\
        &\specaxeq{\arrof{\objective}}
        \bigEchoiceOf{1}{\transofword{\aword}}
        \appl \bigEchoiceOf{1}{\transofword{\awordp}}\\
        \intertext{Applying \labelcref{axiom:dist-left,axiom:dist-right}
        and flattening with \labelcref{axiom:lattice-assoc}:}
        &\specaxeq{\arrof{\objective}}
        \bigEchoiceOf{1}
        \set{(\astate,\astatep)\appl(\astateppp, \astatepp)\mid \\
        &\qquad\qquad\qquad
        (\astate,\astatep)\in\transofword{\aword}, 
        (\astateppp, \astatepp)\in\transofword{\awordp}}\\
        \intertext{Using 
        \Cref{Lemma:TrSyntacticCongruence} 
        and \labelcref{axiom:lattice-ord} to remove inconsistent runs:}
        &\specaxeq{\arrof{\objective}}
        \set{(\astate,\astatep)\appl(\astatep,\astateppp)\mid \\
        &\qquad\qquad\qquad
        (\astate, \astatep)\in\transofword{\aword}, 
        (\astatep, \astatepp)\in\transofword{\awordp}}\\
        &\specaxeq{\arrof{\objective}}
        \bigEchoiceOf{1}\transofword{\aword\appl\awordp}
    \end{align*}
\end{proof}

\subsection{Iterating the Characteristic Terms}
Recall that we have constructed $\plgnfn{1}$ to be the 
image of $\specgnfn{1}$ under $\trspecnorm\circ\arr$.
This poses a problem for $\maxurg=1$.
An Adam choice alone can not represent the image of 
a $\specagnfn{1}$ term under normalization.
Namely, for some 
$\bigEchoiceOf{1}_{i\in I}\bigAchoiceOf{1}\atermset_{i}\in
\specgnfnof{1}{\arrof{\objective}}$,
we might not have any $\aterm\in\terms$ with 
$\arrof{\aterm}\specaxeq{\arrof{\objective}}
\bigAchoiceOf{1}\atermset_{i}$.
So, if we naively apply the construction 
for $\maxurg=1$, we are forced to iterate over 
a substantial subset of $\specgnfnof{1}{\arrof{\objective}}$.
This results in a $\expof{2}{\bigoof{|\objective|^2}}$ time complexity.
For this reason,
we compute $\chi(\arrcontexts)$ directly,
by exploiting the Myhill-Nerode right-precongruence 
on the states of the DFA.\@
\begin{definition}
    For any $\astate,\astatep\in\states$, we have
    $\astate\leq_{N}\astatep$ if and only if for all 
    $\aword\in\Sigma^{*}$, 
    $\astate\overset{\aword}{\to}\afinalstate\in\finalstates$
    implies $\astatep\overset{\aword}{\to}\afinalstate'$
    for some $\afinalstate'\in\finalstates$.
\end{definition}

We extend this to pairs of states $(\astate, \astate'), 
(\astatep, \astatep')\in\states^{2}$.
We let $(\astate, \astate')\leq_{N}(\astatep, \astatep')$ 
if and only if $\astate=\astatep$ and 
$\astate'\leq_{N}\astatep'$.
We call a state pair $(\astate, \astatep)$ dead, if 
there is no run from $\astatep$ to a state in $\finalstates$
in the DFA for $\objective$.
For some $\acontext{\contextvar}\in\arrcontexts$,
we call $\mathcal{S}(\acontext{\contextvar})
=\set{\aterm\in\specgnfnof{0}{\arrof{\objective}}
\mid \winsof{\acontext{\aterm}}{\objective}}$ the solution space 
of $\acontext{\contextvar}$.
Towards computing $\chi(\arrcontexts)$, 
we show two important facts that expose the relationship
between the extended
$\leq_{N}$
and 
solution spaces of contexts.

First, we show that there is a 
context $\acontext{\contextvar}\in\arrcontexts$
for each not-dead $(\astate,\astatep)\in\states^{2}$,
where
$\mathcal{S}(\acontext{\contextvar})$ 
is the $\leq_{N}$-upward closure of $(\astate,\astatep)$.
\begin{lemma}\label{Lemma:SingleUpwardClosure}
    For all $(\astate,\astatep)\in\states^{2}$ that 
    is not dead, there is a 
    $\acontext{\contextvar}\in\arrcontexts$ with 
    $\mathcal{S}(\acontext{\contextvar})=
    \set{(\astate',\astatep')\in\states^2\mid 
    (\astate, \astatep)\leq_{N}(\astate', \astatep')}$.
\end{lemma}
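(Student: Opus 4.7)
The plan is to build the context $\acontext{\contextvar}\in\arrcontexts$ explicitly as $\atermpp\appl\contextvar\appl\atermppp$, where the prefix $\atermpp = (\astateinit, \astate) \in \plgnfn{0}$ enforces the first-coordinate condition $\astate' = \astate$ and the suffix $\atermppp \in \plgnfn{1}$ encodes the right-congruence check $\astatep \leq_N \astatep'$ on the second coordinate. By \Cref{Lemma:TrSyntacticCongruence}, for any pair $(\astate', \astatep') \in \states^2$, the concatenation $(\astateinit, \astate) \appl (\astate', \astatep')$ is $\specaxeq{\arrof{\objective}}$-equivalent to $(\astateinit, \astatep')$ when $\astate' = \astate$, and to $\terr$ otherwise, so the first-coordinate check is handled for free and reduces the remaining task to analysing $(\astateinit, \astatep')\appl\atermppp$.

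To build $\atermppp$, I would select a finite witness set $W\subseteq L(\astatep)$ that separates $\astatep$ from every state failing $\geq_N\astatep$: for each $\astatepp \in \states$ with $\astatep \not\leq_N \astatepp$, choose some $\aword_\astatepp \in L(\astatep) \setminus L(\astatepp)$; such witnesses exist by definition of $\leq_N$, and $W$ remains finite since $\states$ is finite (and $(\astate,\astatep)$ is not dead, so $L(\astatep)\neq\emptyset$). Setting
\[
\aterm \;=\; \bigEchoiceOf{1}\bigl\{\bigAchoiceOf{1}\{\classof{\aword}\mid \aword\in W\}\bigr\} \;\in\; \specgnfn{1}, \qquad \atermppp \;=\; \trspecnormof{\arrof{\aterm}} \;\in\; \plgnfn{1},
\]
then gives a term in the required grammar by the normalization of \Cref{Lemma:EffNormalForm}, which preserves $\specaxeq{\arrof{\objective}}$-equivalence.

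The verification proceeds by rewriting $(\astateinit, \astatep')\appl\atermppp$ using the normalization, the distributivity axiom \labelcref{axiom:dist-left} (applicable since the prefix has urgency $0$), the identity of \Cref{Lemma:TrDFARuns} that expands each $\arrof{\aword}$ as $\bigEchoiceOf{1}_{(s,s')\in\transofword{\aword}} (s,s')$, and the collapsing rule of \Cref{Lemma:TrSyntacticCongruence}. After these simplifications, Eve wins precisely when for every $\aword \in W$ the unique DFA-run $\astatep'\overset{\aword}{\to} s'$ lands in $\finalstates$; by the construction of $W$ this is equivalent to $\astatep \leq_N \astatep'$, yielding the desired characterisation on the $\states^2$-part of $\specgnfnof{0}{\arrof{\objective}}$.

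The main obstacle I foresee is the $\tskip$ corner. Since $\tskip$ is the concatenation unit, the insertion $\acontext{\tskip}$ collapses to $(\astateinit, \astate)\appl\atermppp$, and Eve wins this exactly when every $\aword\in W$ is accepted from $\astate$, i.e., when $\astatep \leq_N \astate$. When $\astatep \not\leq_N \astate$, the witness $\aword_\astate\in W$ already separates $\astate$ and $\tskip$ is correctly excluded from $\mathcal{S}(\acontext{\contextvar})$; in the dual case one must sharpen the construction---for instance by augmenting the left context with a further mandatory transition, or by reading the claimed equality as restricted to the $\states^2$ layer of $\specgnfnof{0}{\arrof{\objective}}$---and this is the step I expect to require the greatest care. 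The insertion $\acontext{\terr}$ poses no difficulty since $\acontext{\terr}\sigeq\terr$ and is lost by Eve immediately.
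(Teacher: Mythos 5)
Your proposal is correct and follows essentially the same route as the paper: the paper's context is $(\astatep_0, \astate)\appl\contextvar\appl\bigAchoiceOf{1}_{\aword\in\atermset}\arrof{\aword}$ with $\atermset=\setcond{\awordp}{\astatep\overset{\awordp}{\to}\astatep_{f}\in\finalstates}$, verified with exactly the two lemmas you invoke (\Cref{Lemma:TrSyntacticCongruence} to kill mismatching first coordinates and \Cref{Lemma:TrDFARuns} to reduce the game to runs of the DFA). Two differences are worth recording. First, the paper lets Adam choose over the \emph{entire}, possibly infinite, language accepted from $\astatep$, leaning on the infinitary syntax; your finite witness set $W$ (one separating word per state $\astatepp$ with $\astatep\not\leq_{N}\astatepp$, padded by some accepted word when $W$ would be empty, which non-deadness provides) achieves the same characterization with a finite choice, and your explicit normalization $\trspecnormof{\arrof{\cdot}}$ even makes the context literally a member of $\arrcontexts$, whereas the paper's un-normalized right term is only in $\plgnfn{1}$ up to $\specaxeq{\arrof{\objective}}$. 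Second, the $\tskip$ corner you flag is genuine, but note that the paper does not resolve it either: its proof only tests insertions from $\states^{2}$, and whenever $\astatep\leq_{N}\astate$ the insertion of $\tskip$ is in fact won by Eve, so the stated equality can only be read on the $\states^{2}$ layer of $\specgnfnof{0}{\arrof{\objective}}$ --- which is also how \Cref{Lemma:SolutionsUpwardClosed} and the fixed-point computation in \Cref{Lemma:CharTermsAtScale} use it. Your alternative fix of adding a mandatory transition on the left cannot work: any terminal prefix collapses to the single congruence class $(\astateinit,\astate)$ by \Cref{Lemma:TrSyntacticCongruence}, so $\acontext{\tskip}$ and $\acontext{(\astate,\astate)}$ always have the same winner and no context in $\arrcontexts$ separates them; the restricted reading is the only viable one.
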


\begin{proof}
    We claim that $\acontext{\contextvar}=
    (\astatep_0, \astate)\appl\contextvar\appl
    \bigAchoiceOf{1}_{\aword\in\atermset}\arrof{\aword}$ 
    where $\atermset=\set{\awordp\mid 
    \astatep\overset{\awordp}{\to}\astatep_{f}\in\finalstates}$.
    Note that $\atermset\neq\emptyset$, because 
    $(\astate, \astatep)$ is not dead.
    For $(\astateppp, \astatepppp)\in\states^{2}$ with 
    $\astate\neq\astatepp$,
    $\acontext{(\astateppp, \astatepppp)}$ is losing for Eve,
    since $(\astatep_0, \astate)\appl(\astateppp, \astatepppp)
    \specaxeq{\objective}\terr$ 
    per \Cref{Lemma:TrSyntacticCongruence}.
    Now assume $(\astate, \astateppp)\in\states^{2}$.
    We use \Cref{Lemma:TrDFARuns} and get 
    $\acontext{(\astate, \astateppp)}\specaxeq{\arrof{\objective}}
    (\astatep_0, \astate)\appl(\astate, \astateppp)\appl
    \bigAchoiceOf{1}_{\aword\in\atermset}\bigEchoiceOf{1}\reachof{\aword}$.
    Then, 
    Eve wins $\acontext{(\astate, \astateppp)}$ if and only if 
    Eve wins $(\astatep_{0}, \astate)\appl(\astate,\astateppp)\appl
    \bigEchoiceOf{1}\reachof{\aword}$ 
    for all $\aword\in\atermset$.
    This is equivalent to 
    $\aword$ 
    having a run from $\atermppp$ to some 
    $\astatep_{f}\in\finalstates$
    for all $\aword\in\atermset$,
    This is the definition of $\astatep\leq_{N}\astateppp$.
\end{proof}

Then, we show that for all $\acontext{\contextvar}\in\arrcontexts$,
the solution spaces are upward closed.

\begin{lemma}\label{Lemma:SolutionsUpwardClosed}
    Let $\acontext{\contextvar}\in\arrcontexts$
    and $(\astate, \astatep), (\astate', \astatep')\in\states$
    with $(\astate, \astatep)\leq_{N}(\astate', \astatep')$.
    Then,
    $(\astate, \astatep)\in\mathcal{S}(\acontext{\contextvar})$
    implies $(\astate', \astatep')\in\mathcal{S}(\acontext{\contextvar})$.
\end{lemma}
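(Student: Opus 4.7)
The plan is to reduce the lemma to a game-theoretic claim about $\atermppp$ alone, transfer that claim across $\arr$ to the $\objective$-side, and close by Myhill--Nerode monotonicity on the normal form of the preimage.

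First, I would unfold the prefix $\atermpp$. By Lemma \ref{Lemma:TrSyntacticCongruence}, $\atermpp \in \plgnfn{0} = \states^2 \cup \set{\terr,\tskip}$; a simple case analysis shows that, reading $\atermpp \appl (\astate, \astatep)$ on the DFA $\arrof{\objective}$ starting at $\astateinit$, one either hits the sink $\bot$ (in which case $(\astate, \astatep) \not\in \mathcal{S}(\acontext{\contextvar})$ so the implication is vacuous) or arrives at state $\astatep$. Because $\astate = \astate'$, the same prefix with $(\astate', \astatep')$ hits $\bot$ under the same syntactic conditions and otherwise arrives at $\astatep'$. Hence it suffices to prove the local claim: \emph{if Eve wins $\atermppp$ when $\arrof{\objective}$ is initialized at $\astatep$, she also wins when it is initialized at $\astatep'$}.

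Second, I would exploit $\atermppp \in \plgnfn{1} = \trspecnormof{\arrof{\specgnfn{1}}}$ to pick $\aterm \in \specgnfn{1}$ with $\trspecnormof{\arrof{\aterm}} = \atermppp$, and hence $\arrof{\aterm} \specaxeq{\arrof{\objective}} \atermppp$. By soundness (Proposition \ref{Proposition:Soundness}) I may substitute $\arrof{\aterm}$ for $\atermppp$ in the local claim. To instantiate ``the DFA starts at $\astatep$'' formally, pick $\awordp \in \analph^*$ with $\astateinit \overset{\awordp}{\to} \astatep$ in $\objective$ (if no such word exists, $\astatep$ is unreachable and the local claim is vacuous); by Lemma \ref{Lemma:DFASimulationExtended} applied to $\awordp \appl \aterm$, a quasi-run of $\awordp$ followed by $\arrof{\aterm}$ is won on $\arrof{\objective}$ iff $\awordp \appl \aterm$ is won on $\objective$. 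Doing the same with $\awordp'$ routing to $\astatep'$, the local claim reduces to $\winsof{\awordp \appl \aterm}{\objective}$ implies $\winsof{\awordp' \appl \aterm}{\objective}$.

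Third, the normal form $\aterm = \bigEchoiceOf{1}_{i \in I}\bigAchoiceOf{1}\atermset_i$ has each $\atermset_i$ consisting of $\lreq{\objective}$-classes of $\analph$-words with uniform DFA behavior, so representatives may be chosen freely. Eve wins $\awordp \appl \aterm$ iff some $i \in I$ has every $\aword \in \atermset_i$ satisfying $\astatep \overset{\aword}{\to} \finalstates$; the assumption $\astatep \leq_N \astatep'$ then gives $\astatep' \overset{\aword}{\to} \finalstates$ for the same $\aword$, so the same $i$ witnesses $\winsof{\awordp' \appl \aterm}{\objective}$. The main obstacle is the transfer in Step 2: Lemma \ref{Lemma:DFASimulationExtended} is phrased only for plays whose DFA starts at the initial state $\astateinit$, so I must route the DFA to the arbitrary state $\astatep$ via an explicit prefix word, leaning on contextual soundness of $\specaxeq{\arrof{\objective}}$ to lift the equivalence $\arrof{\aterm} \specaxeq{\arrof{\objective}} \atermppp$ under that prefix.
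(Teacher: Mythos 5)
There is a genuine gap, and it is localized in your Step~2: the claim that if no word routes $\astateinit$ to $\astatep$ then ``the local claim is vacuous'' is false. The letters of $\arrof{\objective}$ are \emph{arbitrary} pairs from $\states\times\states$ --- the translated transition function accepts any pair $(\astatep, \astatepp)$ as a letter and merely checks continuity; it does not require that some $\aletter\in\analph$ realize that state change in $\objective$. Consequently, $(\astate, \astatep)\in\mathcal{S}(\acontext{\contextvar})$ only needs the pair letters of $\atermpp\appl(\astate,\astatep)$ to chain from $\astateinit$, and Eve's win on the suffix depends only on the \emph{forward} behavior of $\objective$ from $\astatep$, namely whether runs $\astatep\overset{\aword}{\to}\finalstates$ exist. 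So the hypothesis of your local claim can perfectly well hold at a $\astatep$ that is unreachable in $\objective$; worse, the conclusion may concern an unreachable $\astatep'$, for which the routing word $\awordp'$ you need in order to invoke \Cref{Lemma:DFASimulationExtended} simply does not exist. Since the lemma quantifies over all pairs in $\states^{2}$, and the fixed-point computation it feeds iterates over all left-sides $\astate\in\states$ and all right-sides $\states'\subseteq\states$ whether reachable or not, the unreachable case cannot be discarded, and your transfer through actual $\objective$-words breaks down exactly there.

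Your Steps~1 and~3 are sound, and Step~3 is in fact the combinatorial core of the paper's own proof; the fix is to drop the detour through $\objective$ entirely and argue at the pair-letter level, as the paper does. Using \Cref{Lemma:TrDFARuns}, the suffix is, up to $\specaxeq{\arrof{\objective}}$, of the shape $\bigEchoiceOf{1}_{i\in I}\bigAchoiceOf{1}_{\aword\in\atermset_{i}}\bigEchoiceOf{1}\transofword{\aword}$, from which Eve's winning condition can be read off directly: the prefix must chain to $\astatep$, and there must be an $i\in I$ such that every $\aword\in\atermset_{i}$ admits a run $\astatep\overset{\aword}{\to}\astatep_{f}\in\finalstates$. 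This condition refers only to runs starting at $\astatep$, so $\astatep\leq_{N}\astatep'$ applies verbatim and the same witness $i$ serves for $(\astate, \astatep')$ --- no reachability of $\astatep$ or $\astatep'$ from $\astateinit$ is ever needed, and no appeal to \Cref{Lemma:DFASimulationExtended} or to soundness of the substitution $\arrof{\aterm}\specaxeq{\arrof{\objective}}\atermppp$ is required. If you insist on your route, you would first have to justify restricting to a trim DFA, which is assumed nowhere in the development and would alter the alphabet $\states^{2}$ over which $\arrof{\objective}$ and the sets $\plgnfn{\anurg}$ are built.
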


\begin{proof}[Proof of \Cref{Lemma:SolutionsUpwardClosed}]
    W.l.o.g.\ let $(\astate, \astatep), (\astate, \astatep')
    \in\states^{2}$ with 
    $(\astate, \astatep)\leq_{N}(\astate, \astatep')$
    and $\acontext{\contextvar}=(\astateppp, \astatepppp)\appl 
    \contextvar\appl\bigEchoiceOf{1}_{i\in I}\bigAchoiceOf{1}_{\aword\in\atermset_{i}}
    \bigEchoiceOf{1}\reachof{\aword}$.
    If $\astateppp\neq\astatep_0$ or 
    $\astatepppp\neq\astate$ no matter which
    $(\astate, \astatepp)$ is inserted to this context,
    Eve can never derive a continuous run.
    Now let $\acontext{\contextvar}=
    (\astatep_0, \astate)\appl 
    \contextvar\appl\bigEchoiceOf{1}_{i\in I}\bigAchoiceOf{1}_{\aword\in\atermset_{i}}
    \bigEchoiceOf{1}\reachof{\aword}$
    and assume that 
    Eve wins from 
    $\acontext{(\astate, \astatep)}$.
    Then, there is an $i\in I$ where for all $\aword\in\atermset$,
    Eve wins $(\astatep_0, \astate)\appl(\astate,\astatep)\appl
    \bigEchoiceOf{1}\reachof{\aword}$.
    So, there is an $i\in I$ where for all $\aword\in\atermset_{i}$,
    there is a run $\astatep\overset{\aword}{\to}\astatep_{f}$ 
    for some $\astatep_{f}\in\finalstates$.
    The latter part of the 
    statement and $(\astate, \astatep)\leq_{N}(\astate, \astatep')$
    implies that there also is a run 
    $\astatep'\overset{\aword}{\to}\astatep_{f}'$
    for some $\astatep_{f}'\in\finalstates$.
    So, from
    $\acontext{(\astate, \astatep')}$, 
    Eve plays the same $i\in I$.
    Let Adam play some $\aword\in\atermset_{i}$.
    The resulting term is $(\astatep_0, \astate)\appl 
    (\astate, \astatep')\appl\bigEchoiceOf{1}\reachof{\aword}$.
    So Eve chooses $(\astatep', \afinalstate')\in\reachof{\aword}$ 
    to reach 
    $(\astatep_0, \astate)\appl 
    (\astate, \astatep')\appl(\astatep', \afinalstate')$ and win.
\end{proof}

By applying \Cref{Equation:CharTermRecursive}
along with \Cref{Lemma:SingleUpwardClosure} and 
\Cref{Lemma:SolutionsUpwardClosed}, we see that
a set $X\subseteq\specgnfnof{0}{\arrof{\objective}}$ is 
a solution space if and only if it is an upward closure
that does not contain dead pairs.
We can iterate through all left-sides 
$\astate\in\states$ and right-sides $\states'\subseteq\states$ 
and build the upward closures, therefore solution spaces, in 
$\expof{1}{\bigoof{|\objective|^2}}$ time.
Recall that for $\acontext{\contextvar}\in\arrcontexts$,
we had $\chi(\acontext{\contextvar})=
\bigAchoiceOf{1}\mathcal{S}(\acontext{\contextvar})$
(\Cref{Equation:CharTerms}).
So, we can build 
$\set{\bigAchoiceOf{\anurg}\mathcal{S}(\acontext{\contextvar})
\mid \acontext{\contextvar}\in\arrcontexts}=
\set{\chi(\acontext{\contextvar})\mid 
\acontext{\contextvar}\in\arrcontexts}=\chi(\arrcontexts)$
in $\expof{1}{\bigoof{|\objective|^{2}}}$ time.

\section{Hyperproperties and Urgency}\label{Appendix:Hyper}

Hyperproperties emerged as a unifying approach to information flow and security properties, 
which cannot be stated as classical safety or liveness properties on a single trace.
A hyperproperty relates traces and is formulated over a set of traces 
rather than a single trace.
A novel approach are logics like HyperLTL~\cite{FRS15} to describe hyperproperties.
We will stay more general 
and define an $n$-trace hyperproperty 
to be any DFA $\adfa$ over ${(\analph^n)}^*$.
Note that we consider finite traces rather than infinite ones.
In particular, a hyperproperty takes about the relation of traces 
and does so in a highly synchronized manner:
A set of traces is accepted by $\adfa$ ensures that they all share the exact same length.
To model different traces of the same system 
so that the resulting observations are synchronized 
is a common problem with hyperproperties.
We will not tackle this issue here, 
but assume that the traces of a system $\akripke$ are partitioned into sets of same length.
That means that the set of traces $\tracesof{\akripke}\subseteq\analph^*$ is partitioned into sets of traces $\tracesofn{l}{\akripke}\subseteq\analph^l$ of length $l \in \nat$.
\begin{definition}\label{definition:hyperproperty}
    Let $\akripke$ be a system (Kripke structure) 
    with trace-set $\tracesof{\akripke} \subseteq \analph^*$.
    An $n$-trace hyperproperty $\adfa$ is satisfied by $\akripke$, $\akripke \models \adfa$, if there is $l \in \nat$ such that
    \[
        \exists \aword_1 \in \tracesofn{l}{\akripke}
        \forall \aword_2 \in \tracesofn{l}{\akripke}
        \ldots
        Q \aword_n \in \tracesofn{l}{\akripke}.\, 
        \prod_{i=1}^{n} \aword_i \in \langof{\adfa}
    \]
\end{definition}

The definition might be non-intuitive at first glance: 
Only for a single length $l$, the set $\tracesofn{l}{\akripke}$ has to satisfy the hyperproperty.
Further, every hyperproperty begins to quantify with an $\exists$ quantifier.
This makes it impossible to formulate hyperproperties from the $\forall\exists$-fragment of HyperLTL.\@
However, the ability to deterministically decide hyperproperties of the above shape 
is already sufficient to decide any hyperproperty,
including ones that start with $\forall$, 
for which we check its negation.

We show how to model check a hyperproperty 
by translating the system $\akripke$ 
and the hyperproperty $\adfa$ 
into a term $\aterm$ 
and an objective $\objective$,
such that $\adfa$ is satisfied by $\akripke$
if and only if~$\winsof{\aterm}{\objective}$.
Intuitively, we model the quantifiers in \Cref{definition:hyperproperty} by the players.
Each quantifier is resolved in one level of urgency throughout the whole term.
Only then, 
and in knowledge of the resolution of previous quantifiers,
a player resolves the next quantifier.
Formally, we define the terms $\aterm_{i}$ for $i \in [1,n+1]$ 
over the alphabet $\analph = \transitions$ by induction:
\[
    \aterm_{n+1} = \tskip
    \qquad \aterm_{i} = \begin{cases}
        \Echoice_{n-i+1} \transitions\appl \aterm_{i+1} & \text{$i$ odd} \\
        \Achoice_{n-i+1} \transitions\appl \aterm_{i+1} & \text{$i$ even}
    \end{cases}
\]
The final term $\aterm$ is a single non-terminal $\aterm = \nonterminal$ 
with $\eqmapof{\nonterminal} = \tskip \echoicen{n} \aterm_1\appl\nonterminal$.
This lets Eve choose the first trace 
and while doing so, 
she also fixed the length of the considered traces.

For the objective $\objective$, 
we want to capture the hyperproperty $\adfa$, 
which already is a DFA.\@
For the transformation to $\objective$ 
we only need to transform the input alphabet from $\transitions^n$ to $\transitions$ 
and make sure the players choose actual traces of the system $\akripke$.
Technically, for a word $\aword \in {(\analph^n)}^*$, 
we obtain $\aword_i \in \analph^*$ by restricting $\aword$ to the $i$th component.
We define the flattening $\flatwords$ 
of a word in ${(\analph^n)}^*$ 
to $\analph^*$ 
inductively by $\flatof{\tskip} = \tskip$ 
and $\flatof{(\atrans_1,\ldots,\atrans_n).\awordp} = \atrans_1\ldots\atrans_n\appl\flatof{\awordp}$.
We set the set of correct runs on component $i$ 
by $\objective_i = \setcond{\aword \in {(\analph^n)}^*}{\aword_i \in \tracesof{\akripke}}$.
\[
    \objective = \flatof{\langof{\adfa} \cap \bigcap_{i\;\mathrm{ odd}} \objective_i \cup \bigcup_{i\;\mathrm{ even}} \overline{\objective_i}}
\]
Note that $\sizeof{\objective}$ is linear in $\sizeof{\adfa} + \sizeof{\akripke}$. 

\begin{theorem}
    $\akripke \models \adfa$ if and only if $\winsof{\aterm}{\objective}$.
\end{theorem}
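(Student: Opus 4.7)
The plan is to analyze the game arena $\semof{\aterm}$ and establish a bijection between maximal plays and tuples $(l,\aword_1,\ldots,\aword_n)$ with $\aword_i\in\analph^l$, in which Eve controls the odd-indexed traces and Adam the even-indexed ones, resolved in the strict order dictated by the quantifier alternation in the hyperproperty.

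First, I would verify that the outer recursion forces Eve to commit to a length $l$ and the entire first trace $\aword_1$ before any other choice can be made. Since $\urgencyof{\nonterminal}=\maxurg=n$ and the leading subterm of $\aterm_1$ has urgency $n$, whereas $\urgencyof{\aterm_i}\leq n-i+1<n$ for $i\geq 2$, after each unrolling $\nonterminal\gamemove\tskip\echoicen{n}\aterm_1\appl\nonterminal$ the only candidate leading subterms are the outer urgency-$n$ echoice or the urgency-$n$ echoice sitting at the head of $\aterm_1$. A short induction on the number of moves then shows that the only possible progress is: Eve picks the branch $\aterm_1\appl\nonterminal$, then picks a transition $\atrans\in\transitions$ from $\Echoice_n\transitions$ in $\aterm_1$, after which $\nonterminal$ becomes leading again and is unrolled. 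The loop terminates exactly when Eve chooses $\tskip$, yielding a configuration $\atrans_1\appl\aterm_2\appl\atrans_2\appl\aterm_2\appl\cdots\appl\atrans_l\appl\aterm_2$ with the word $\aword_1=\atrans_1\cdots\atrans_l\in\analph^l$ chosen entirely by Eve.

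Second, I would iterate this analysis down through the urgencies. Once $\aword_1$ is fixed, the urgency-$(n-1)$ Adam choices inside the $l$ copies of $\aterm_2$ become leading one by one, left-to-right, forcing Adam to pick some $\aword_2\in\analph^l$. Repeating for urgencies $n-2$ (Eve), $n-3$ (Adam), \ldots, $1$, we obtain the claimed bijection between maximal plays and tuples $(l,\aword_1,\ldots,\aword_n)$, the terminal word of each play being the flattening $\flatof{\prod_{i=1}^n \aword_i}$ of the convolution. By construction of $\objective$, such a play is won by Eve precisely when $\prod_i\aword_i\in\langof{\adfa}$ together with $\aword_i\in\tracesofn{l}{\akripke}$ for every odd $i$, or else some even-indexed $\aword_i$ is not a trace of $\akripke$; note that Eve never benefits from picking a non-trace herself, so her optimal strategies stay inside $\tracesofn{l}{\akripke}$.

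Combining these observations yields $\winsof{\aterm}{\objective}$ iff Eve can choose $l$ and $\aword_1\in\tracesofn{l}{\akripke}$ such that for every Adam response $\aword_2$, either $\aword_2\notin\tracesofn{l}{\akripke}$ and Eve already wins, or Eve can continue with some $\aword_3\in\tracesofn{l}{\akripke}$, and so on, ending with the convolution belonging to $\langof{\adfa}$. This unfolds verbatim to the alternating trace-quantifier semantics of Definition~\ref{definition:hyperproperty}, giving the desired equivalence. The main obstacle is the first step: pinning down that the interleaving between the outer $\echoicen{n}$-recursion and the urgency-$n$ choice inside $\aterm_1$ really resolves in the claimed order, and in particular that the lower-urgency subterms of $\aterm_i$ for $i\geq 2$ cannot become leading prematurely. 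Once this ordering is established, the rest of the argument is routine bookkeeping between the components of $\objective$ and the quantifier structure of the hyperproperty.
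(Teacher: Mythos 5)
Your proposal is correct and takes essentially the same route as the paper's proof: the paper likewise first reduces to a finite unrolling (there is $l$ with $\winsof{{(\aterm_1)}^l}{\objective}$, explicitly noting that infinite unrolling of $\nonterminal$ loses for Eve — the one case your claimed bijection between maximal plays and tuples should mention, since infinite plays correspond to no tuple) and then establishes, by induction on the number $m$ of resolved urgency levels, exactly the quantifier-by-quantifier equivalence you describe for the intermediate terms $\aterm(\aword_1,\ldots,\aword_m)$, including the same observation that choosing a word outside $\tracesofn{l}{\akripke}$ loses for Eve via $\objective_{m+1}$ (odd index) and wins for her via $\overline{\objective_{m+1}}$ (even index). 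Your extra detail on the leading-subterm mechanics — why the outer $\echoicen{n}$ recursion and the urgency-$n$ choice in $\aterm_1$ resolve before any lower-urgency subterm, and why the copies of $\aterm_{m+1}$ are resolved left-to-right — is precisely what the paper compresses into ``by construction,'' so the two arguments differ only in presentation, not substance.
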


\begin{proof}
    By construction, $\winsof{\aterm}{\objective}$ 
    if and only if there is $l \in \nat$ with $\winsof{{(\aterm_1)}^l}{\objective}$,
    because unrolling $\nonterminal$ an infinite number of times yields the win to Adam.

    Next we inspect the terms $\aterm(\aword_1, \ldots,\aword_m)$ 
    that can occur whenever urgency $n-m$ is next to be resolved.
    Here, $\aword_i$ are from $\tracesofn{l}{\akripke}$ and for odd $i$ are chosen corresponding to the $\exists$-quantifiers.
    By construction, $\aterm(\aword_1, \ldots,\aword_m)$ has shape
    \[
        \atrans_{1,1}\atrans_{2,1}\ldots\atrans_{m,1}\aterm_{m+1} 
        \;\ldots\; \atrans_{1,l}\atrans_{2,l}\ldots\atrans_{m,l}\aterm_{m+1} \,.
    \]
    where $\aword_i = \atrans_{i,1}\ldots\atrans_{i,l}$. 
    
    We prove the statement: Eve wins $\winsof{\aterm(\aword_1, \ldots,\aword_m)}{\objective}$ 
    if and only if
    $m$ is even and there is $\aword_{m+1} \in \tracesofn{l}{\akripke}$
    with $\winsof{\aterm(\aword_1, \ldots,\aword_m,\aword_{m+1})}{\objective}$,
    or $m$ is odd and for all $\aword_{m+1} \in \tracesofn{l}{\akripke}$
    holds $\winsof{\aterm(\aword_1, \ldots,\aword_m,\aword_{m+1})}{\objective}$.
    
    If $m$ is even, then Eve can choose a sequence $\aword_{m+1} \in \transitions^l$ 
    to transform $\aterm(\aword_1, \ldots,\aword_m)$ into $\aterm(\aword_1, \ldots,\aword_m,\aword_{m+1})$.
    If she chooses a word outside $\tracesofn{l}{\akripke}$ she loses due to $\objective_{m+1}$.
    Thus, she wins if and only if there is $\aword_{m+1} \in \tracesofn{l}{\akripke}$ she can choose 
    and $\winsof{\aterm(\aword_1, \ldots,\aword_m,\aword_{m+1})}{\objective}$.

    The case of $m$ odd is similar.

    Finally, $n=m$ makes $\aterm_{m+1} = \tskip$, so ${\winsof{\aterm(\aword_1, \ldots,\aword_n)}{\objective}}$ if and only if $\aterm(\aword_1, \ldots,\aword_n) \in \flatof{\langof{\adfa}}$ by definition.
\end{proof}

The complexity of checking an $n$-trace hyperproperty ${\akripke \models \adfa}$ 
using this approach is $\kexptime{(2n - 1)}$ bounded (\Cref{Corollary:DecWinner}).
This is still far from optimal, 
considering that $n$-trace hyperproperties can be decided in $\kexpspace{(n - 1)}$~\cite{FRS15}.
The overhead stems from the shape of normalforms, 
which allow for Adam and Eve choices in each layer of urgencies, 
while our approach produces only one type of non-determinism for each urgency.
We intend to investigate this special case in a separate study.

\section{Model Checking Hyperproperties for Recursive Programs}

Checking hyperproperties on pushdown systems is known to be undecidable in the general case~\cite{PT18}.
But showing it undecidable does not satisfy the desire 
to check recursive programs against hyperproperties.
Indeed, a first approach was proposed directly in~\cite{PT18}, 
where one type of quantifier has to work with finite state approximations 
instead of the actual system.
We take a different route to restrict the general setting.
To motivate the restriction, consider the algorithm in \Cref{listing:Karatsuba} 
for multiplication of high-bit numbers.
\definecolor{codegreen}{rgb}{0,0.6,0}
\definecolor{codegray}{rgb}{0.5,0.5,0.5}
\definecolor{codepurple}{rgb}{0.58,0,0.82}
\definecolor{backcolour}{rgb}{0.95,0.95,0.92}
\lstdefinestyle{mystyle}{
    commentstyle=\color{codegreen},
    keywordstyle=\color{magenta},
    numberstyle=\tiny\color{codegray},
    stringstyle=\color{codepurple},
    basicstyle=\ttfamily\footnotesize,
    breakatwhitespace=false,         
    breaklines=true,                 
    captionpos=b,                    
    keepspaces=true,                 
    numbers=left,                    
    numbersep=5pt,                  
    showspaces=false,                
    showstringspaces=false,
    showtabs=false,                  
    tabsize=2
}
\lstset{style=mystyle}
\begin{lstlisting}[language=C++, caption={Karatsuba multiplication.}, label={listing:Karatsuba}]
bit[] KM(bit[] x, bit[] y) { 
    if (len(x) < 64) return (int) x * (int) y;

    mid = len(x) / 2;
    x1 = x[:mid];
    y1 = y[:mid];
    x2 = x[mid+1:];
    y2 = y[mid+1:];

    z2 = KM(x1, y1);
    z0 = KM(x2, y2);
    z1 = KM(x1 + x2, y1 + y2) - z2 - z0;

    return (z2 << 2mid) + (z1 << mid) + z0;
}
\end{lstlisting}%
The algorithm performs a high-bit multiplication of two numbers \texttt{x} and \texttt{y}.
Instead of naively multiplying 64-bit windows of the bit streams,
the algorithm recursively splits the inputs in half.
It performs 3 multiplications on the half-sized integers 
and adds them together for the final result.
The actual algorithm needs some more care for bit-overflows, 
but we shall ignore it here.

Algorithms like \Cref{listing:Karatsuba} are used by security protocols like \texttt{OpenSSL}~\cite{openssl}.
A known weakness of security protocols are timing-based attacks~\cite{Kocher96,BrumleyB03}.
In these attacks, one does not run the program once to derive a secret.
Instead, we measure its execution time over multiple runs with different controllable inputs
and deduct secret values used by the program from the measured execution times.
So, safety from timing attacks can be obtained by requiring a 2-trace hyperproperty on \texttt{KM}:
For all traces $\aword, \awordp$ of \texttt{KM}, their execution time does not differ.
We do not go into modelling details of how to phrase this property as a DFA $\adfa$.
But we make one crucial discovery 
when we want to compare multiple runs of the Karatsuba algorithm:
Its recursion depth is only dependent on the length of \texttt{x} and \texttt{y}, 
a parameter that is very often public knowledge.
Even more important, runs of different recursion depth have close to no chance for a similar execution time.
The previously stated hyperproperty would very likely not be satisfied.
But when the recursion depth parameter is usually known, 
we instead want to ask for a different hyperproperty to hold:
For all recursion depths $d$, and for all traces $\aword, \awordp$ with recursion depth $d$, 
their execution time does not differ.
More generally speaking, the traces $\aword$ and $\awordp$ agree on their \emph{recursive structure}.
We utilize this observation and restrict our check for a hyperproperty $\adfa$ to sets of traces 
that agree on their recursive structure. 

\subsection{Recursive Programs}

We consider the following simple language,
where we abstract away from the actual commands and focus on the recursion principle of the language.
\begin{align*}
    \acode \quad&\Coloneqq\quad \acmd \bnf {\afunc()} \bnf \acode\appl\acode \bnf \cif{\anexpr}{\acode}{\acode} \\
    \afunc \quad&\Coloneqq\quad \text{\texttt{$\afunc()$ -> $\acode\appl\cret$}}
\end{align*}
We have commands $\acmd \in \analph$ to modify the state, 
function calls $\afunc()$ 
and concatenation.
Parameters and return values are passed to/from $\afunc$ by state-manipulation.
A program $\aprog = (\funcs, \eqmap)$ is a set of function symbols $\funcs$ 
with a distinguished initial function $\fmain\in \funcs$
and the function definitions $\eqmap$.

\subsection{Semantics and Recursion Structure}
We assume the domain of booleans $\semdomain = [\vars \to \bool]$. 
The semantics are kept arbitrary for expressions $\semof{\anexpr} : \semdomain \to \bool$ 
and commands $\semof{\acmd}: \semdomain \to \semdomain$.
Semantics of function calls and concatenation are as expected.
\texttt{if}-statements branch left on non-zero evaluation and right on zero.
For the further development, we also require a special command $\cassumeof{\anexpr} \in \analph$.
Its semantics operate on a fresh variable $\assvar$ 
and are tuned to observe whether an $\cassume$ command failed.
\[
    \semof{\cassumeof{\anexpr}}(d) = \begin{cases}
        d & \semof{\anexpr}(d) = \tvar \\
        d[{\assvar}\mapsto{\tvar}] & \semof{\anexpr}(d) = \fvar
    \end{cases}    
\]

A \emph{branching} of $\aprog$ is a finite tree $\atree$ (a prefix closed subset of~$\nat^*$) with a labelling $\trlab$.
The labelling assigns nodes of $\atree$ commands $\acmd$, function calls $\afunc$, and $\cret$ statements.
The root node is labeled with ${\trlabof{\varepsilon} = \fmain}$.
Nodes $\anode$ labelled by $\trlabof{\anode} = \acmd$ or $\cret$ are leaves.
A \hbox{$\trlabof{\anode}=\afunc(n)$-labelled} node has children 
corresponding to one branch of their body $\acode$.
The set of branches $\branchesof{\acode}$ is 
\begin{gather*}
    \begin{aligned}
        \branchesof{\acmd} &= \set{\acmd} 
        & \branchesof{\afunc()} &= \set{\afunc}
        & \branchesof{\acode_1.\acode_2} &= \branchesof{\acode_1}\appl\branchesof{\acode_2}     
    \end{aligned} \\
    \branchesof{\cif{\anexpr}{\acode_1}{\acode_2}} 
    = \cassumeof{\anexpr}.\branchesof{\acode_1} \cup \cassumeof{!\anexpr}.\branchesof{\acode_2}
\end{gather*}
So if $\aword \in \branchesof{\acode}$ is the chosen branch, then $\anode$ has $\sizeof{\aword}$ children 
labelled by the corresponding $\acmd$, $\afunc$, or $\cret$ in $\aword$.

A \emph{trace} of $\aprog$ is a pair $(\trlab, \trst)$ 
where $\trlab$ is a branching with domain $\atree$
and $\trst : \atree \to \semdomain$ is a state assignment.
Every node $\anode$ is labelled by $\trstof{\anode} \in \semdomain$, 
the state of the program before execution of $\trlabof{\anode}$,
$\trstof{\varepsilon} \in \semdomain$ is the input state.
Consider node $\anode$ with $\trlabof{\anode} = \afunc$, 
so the children are one branch of its defining body.
Its leftmost child $\anode.0$ inherits the state, $\trstof{\anode.0} = \trstof{\anode}$.
The rightmost child $\anode.m$ has label $\trlabof{\anode.m} = \cret$ 
and yields back $\semof{\afunc}(\trstof{\anode}) = \trstof{\anode.m}$.
Intermediate children $\anode.i$ just carry out their semantics to the next node by
$\trstof{\anode.i+1} = \semof{\trlabof{\anode.i}}(\trstof{\anode.i})$.

A \emph{recursion structure} $\arec$ for $\aprog$ is a tree labelling $\arec : \atree \to \funcs$.
The internal of a tree $\atree$ 
is the set $\intof{\atree} = \setcond{\anode}{\anode.0 \in \atree}$.
\begin{definition}
    A branching $\trlab$ has recursion structure $\arec$ if $\intof{\domof{\trlab}} = \domof{\arec}$ and $\trlab$ and $\arec$ coincide on $\domof{\arec}$.
\end{definition}
The recursive structure thus identifies runs that differ only in $\trst$ 
and the actual commands executed,
but the function calls are exactly the same.
We say $(\trlab, \trst)$ is an $\arec$-trace when $\trlab$ has recursion structure $\arec$.
A recursion structure $\arec$ is proper, if there is an $\arec$-trace of $\aprog$.

\subsection{Checking Hyperproperties for similar Recursion Structure}

We define the yield $\yieldof{\atree}$ of a tree as usual.
The set of trace-observations 
with recursion structure $\arec$
is the set $\tracesofn{\arec}{\aprog} 
    = \setcond{\yieldof{\trlab}}{(\trlab, \trst) \text{ is an  $\arec$-trace of $\aprog$}}
    \subseteq \analph^*$.

\begin{definition}
    Let $\adfa$ be an $n$-trace hyperproperty.
    A program $\aprog$ satisfies $\adfa$, $\aprog \models \adfa$, 
    if there is a proper recursion structure $\arec$ with 
    \[
        \exists \aword_1 \in \tracesofn{\arec}{\aprog}
        \forall \aword_2 \in \tracesofn{\arec}{\aprog}
        \ldots
        Q \aword_n \in \tracesofn{\arec}{\aprog}.
        \prod_{i=1}^{n} \aword_i \in \langof{\adfa}
    \]
\end{definition}

To present our approach, we assume that programs are prefix-branching.
We call a program $\aprog$ \emph{prefix-branching} when function calls are never succeeded by commands $\aletter \in \analph$.
Formally, we assume the program code stems from the following, slightly different grammar:
\begin{align*}
    \afuncseq \quad&\Coloneqq\quad \afunc() \bnf \afuncseq\appl\afuncseq\\
    \acode \quad&\Coloneqq\quad \acmd \bnf \acode\appl\acode \\
    \pbcode \quad&\Coloneqq\quad \afuncseq \bnf \acode\appl\pbcode \bnf \cif{\anexpr}{\pbcode}{\pbcode}\\
    \afunc \quad&\Coloneqq\quad \text{\texttt{$\afunc()$ -> $\pbcode\appl\cret$}}
\end{align*}
Note, that this is not only a presentational decision.
While every program $\aprog$ has an equivalent representation in prefix-branching form, 
the translation incurs more functions 
and thus, fixing a recursion structure may fix more behavior (e.g.\ branching beyond function calls) than desired.
This means that we actually restrict our class of programs by enforcing prefix-branching.
We do not exactly know, how impactful the restriction is for practical code, 
but our example (\Cref{listing:Karatsuba}) exerts the desired structure (up to parameter passing).

\begin{theorem}
    $\aprog \models \adfa$ is decidable for prefix-branching $\aprog$.
\end{theorem}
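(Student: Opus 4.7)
The plan is to reduce $\aprog \models \adfa$ to the observation-making problem $\winsof{\aterm_{\aprog, \adfa}}{\objective}$ for a finitary urgency term $\aterm_{\aprog, \adfa}$ and a regular DFA objective $\objective$, and then to invoke \Cref{Corollary:DecWinner}. The encoding follows the pattern used for finite-state hyper model checking in \Cref{Appendix:Hyper}, but with an extra (top) urgency level that handles the choice of recursion structure shared by all $n$ traces. Concretely, the existential choice of a proper $\arec$ is merged with the existential choice of $\aword_1$ and placed at the highest urgency owned by Eve, while the alternating trace quantifiers $\aword_2, \ldots, \aword_n$ are placed at strictly decreasing urgencies that alternate between Adam and Eve according to the quantifier prefix.

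For each function $\afunc \in \funcs$ and each tuple $(d_1, \ldots, d_n) \in \semdomain^n$ of per-trace states, I would introduce a non-terminal $\nonterminal_{\afunc}^{d_1, \ldots, d_n}$. Assuming $\vars$ is finite, $\semdomain = [\vars \to \bool]$ is finite and so is the set of non-terminals, making $\aterm_{\aprog,\adfa}$ finitary. The defining equation traverses the prefix-branching body of $\afunc$ in a synchronized fashion: a command $\acmd$ emits a block of $n$ terminal symbols, one per trace, and updates each $d_i$ to $\semof{\acmd}(d_i)$; a subexpression $\cif{\anexpr}{\acode_1}{\acode_2}$ introduces a move at the top urgency (shared by all traces) that picks $\acode_1$ or $\acode_2$, followed by the emission of $\cassumeof{\anexpr}$ or $\cassumeof{!\anexpr}$ observations for each $d_i$ so that the objective can reject tuples on which a trace disagrees with the chosen branch; a function call $\afunc'()$ expands to $\nonterminal_{\afunc'}^{d_1, \ldots, d_n}$; and $\cret$ terminates the current activation. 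The trace-specific observations are injected at urgencies $n{-}1, \ldots, 1$, alternating ownership to mirror $\forall \aword_2, \exists \aword_3, \ldots$.

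The objective $\objective$ is the intersection of three regular languages: the flattening of $\langof{\adfa}$ to the single-stream alphabet of the term, exactly as in \Cref{Appendix:Hyper}; the constraint that every $\cassumeof{\cdot}$ observation succeeds in each of the $n$ trace tracks, which is what forces all $n$ traces to realize the same recursion structure chosen at top urgency; and a padding/synchronization check that aligns the per-trace observation blocks so that the convolution is well-formed. All three components are regular, so $\objective$ is a regular DFA. Since $\aterm_{\aprog, \adfa}$ is finitary and $\objective$ is regular, \Cref{Corollary:DecWinner} yields decidability of $\winsof{\aterm_{\aprog, \adfa}}{\objective}$, and hence of $\aprog \models \adfa$.

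The main obstacle will be proving correctness of the encoding, i.e.\ establishing a bijection between Eve's winning strategies for $\winsof{\aterm_{\aprog, \adfa}}{\objective}$ and the witnesses $(\arec, \aword_1)$ for $\aprog \models \adfa$, together with the corresponding Skolem strategies against Adam. The prefix-branching hypothesis is exactly what makes this bijection clean: in a prefix-branching body, every conditional appears strictly before the terminal function-sequence leaves, so the branch choices that determine the recursion structure of a single activation are all performed before any recursive call is issued and can therefore be scheduled at the top urgency without interleaving with lower-urgency trace-specific moves. Without prefix-branching, recursion-structure choices would need to interleave with trace observations inside the same activation, which clashes with the strict ordering imposed by urgencies. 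A minor subtlety that I would handle up-front is the selection of the initial state tuple $(d_1, \ldots, d_n)$ for $\fmain$, which becomes part of Eve's top-urgency move alongside $\arec$ and $\aword_1$; everything else then follows from the usual normalization-plus-fixed-point machinery underlying \Cref{Corollary:DecWinner}.
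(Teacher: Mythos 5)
Your high-level plan matches the paper's: reduce $\aprog \models \adfa$ to $\winsof{\aterm_{\aprog,\adfa}}{\objective}$ for a finitary term and a regular objective (regular because $\semdomain$ is finite), merge the existential choice of the recursion structure with the choice of $\aword_1$ at Eve's top urgency, alternate the remaining quantifiers at decreasing urgencies, and invoke \Cref{Corollary:DecWinner}. But your concrete encoding has a genuine semantic gap: you schedule the choice at every conditional $\cif{\anexpr}{\acode_1}{\acode_2}$ as a single top-urgency move \emph{shared by all $n$ traces} and reject tuples in which some trace disagrees with that shared choice. This quantifies over tuples of traces that share the entire \emph{branching}, not merely the recursion structure. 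The two notions differ: the recursion structure fixes only the tree of function calls, and prefix-branching programs admit conditionals whose two sides carry the same function-call sequence, e.g.\ $\cif{\anexpr}{\acmd_1\appl\afunc()}{\acmd_2\appl\afunc()}$. Traces taking different sides of such a conditional have the same recursion structure but different yields, so the inner $\forall$-quantifiers over $\tracesofn{\arec}{\aprog}$ must range over them. In your encoding the commands of the non-chosen branch are never emitted at all, so $\adfa$ is evaluated on the wrong convolution for any diverging trace, and discarding disagreeing tuples via the objective restricts the universal quantification — the check becomes unsound.

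The paper avoids this by sharing only the function-call sequence: at urgency $n$ Eve picks $\afuncseq \in \funcseqsof{\acode}$ (which, accumulated over all expansions, is exactly the recursion structure $\arec$) together with her own command prefix $\aword_1 \in \branchesofn{\afuncseq}{\acode}$, while each later quantifier $i$ independently picks its prefix $\aword_i \in \branchesofn{\afuncseq}{\acode}$ — including its own $\cassumeof{\cdot}$ letters — at urgency $n-i+1$; all per-trace state reasoning is delegated to the regular objective. This also explains why your non-terminals $\nonterminal_{\afunc}^{d_1,\ldots,d_n}$ cannot work under the correct scheduling: non-terminals carry maximal urgency and are therefore expanded before any urgency-$(n-i+1)$ choice with $i \geq 2$ is resolved, so the states $d_2,\ldots,d_n$ are simply not determined at expansion time — they depend on branch choices made only after the whole recursion structure is unfolded. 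Relatedly, letting Eve select the initial state tuple at top urgency hands her the inputs of the universally quantified traces; validity of a trace (some input makes all of its assumes succeed) must instead be checked per track inside the objective, as the paper does with the languages $\objective_i$. Repairing your proof essentially amounts to replacing the shared branch moves and the state-indexed non-terminals by the paper's $\afuncseq$-versus-prefix split with objective-side state simulation.
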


Similar as for the finite state case, 
we will translate $\aprog$ and $\adfa$ into a term $\aterm$ and an objective $\objective$.
The intent is to first fix a recursion structure, 
and then replay the call of a function $\afunc$ with function body $\acode$
by the rewriting of a non-terminal $\afunc$ into a term $\aterm(\afunc)$.
Term $\aterm(\afunc)$ basically chooses the branch through~$\acode$.
For prefix-branching, a branch is a word from~$\analph^*\funcs^*$.
With a fixed recursive structure also the sequence of function calls in~$\acode$ is fixed.
For a fixed sequence of function calls $\afuncseq = \afunc_1 \ldots \afunc_m$ 
the set of available prefixes is captured by ${\branchesofn{\afuncseq}{\acode} \subseteq \analph^*}$ with
\[
    \branchesofn{\afuncseq}{\acode}\appl\afuncseq
    = \branchesof{\acode} \cap \analph^*\appl \afuncseq\,.
\]
We assume that all command prefixes (the $\analph^*$ part) in $\branchesofn{\afuncseq}{\acode}$ share the same length,
$\branchesofn{\afuncseq}{\acode} \subseteq \analph^{l}$ for some $l \in \nat$.
Similar to the finite state case this is a modelling issue for synchronization towards $\adfa$.
It can be achieved by introducing $\tskip$s to branches too short.
Since $\branchesof{\acode}$ is finite, 
there is only a finite set of occurring function call sequences 
$\funcseqsof{\acode} = \setcond{\afuncseq}{\branchesofn{\afuncseq}{\acode} \neq \emptyset }$.
Finally, we define $\aterm(\afunc)$ in a term game $(\funcs, \eqmap)$ with $\eqmapof{\afunc} = \aterm(\afunc)$.
\begin{align*}
    \aterm_{n+1}^{\acode, \afuncseq}(\aword_1,\ldots,\aword_n) &= \flatof{\aword_1,\ldots,\aword_n}\\
    \aterm_{i}^{\acode, \afuncseq}(\aword_1,\ldots,\aword_{i-1}) & = \begin{cases}
        \Echoice^{n-i+1}_{\aword_i \in \branchesofn{\afuncseq}{\acode}}
            {\aterm_{i+1}^{\acode, \afuncseq}(\aword_1,\ldots,\aword_i)} & \text{$i$ odd} \\
        \Achoice^{n-i+1}_{\aword_i \in \branchesofn{\afuncseq}{\acode}}
            {\aterm_{i+1}^{\acode, \afuncseq}(\aword_1,\ldots,\aword_i)} & \text{$i$ even}     
    \end{cases} \\   
    \aterm(\afunc) &= \bigEchoiceOf{n}_{\afuncseq \in \funcseqsof{\acode}}{\aterm_1^{\acode,\afuncseq}()\appl\afuncseq}
\end{align*}

It remains to model the objective $\objective$.
As before, we construct sets $\objective_i \subseteq {(\analph^n)}^*$ for $i \in [1,n]$.
\begin{align*}
    \objective &= \flatof{\langof{\adfa} \cap \bigcap_{i \mathrm{ odd}} \objective_i \cup \bigcup_{i \mathrm{ even}} \overline{\objective_i}}\\
    \objective_i &= \setcond{\aword \in {(\analph^n)}^*}{\semof{\aword_i}(\assvar) = \tvar}
\end{align*}
A command sequence $\aword \in \analph^*$ operates on the specific finite domain $\semdomain$.
These language is regular because $\semdomain$ is finite.

\begin{theorem}
    $\winsof{\fmain}{\objective}$ if and only if $\aprog \models \adfa$.
\end{theorem}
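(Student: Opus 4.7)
The plan is to mimic the structure of the finite-state theorem for \Cref{definition:hyperproperty} from \Cref{Appendix:Hyper}, but with the added twist that Eve's highest-urgency choices now simultaneously fix a recursion structure \emph{and} play the role of the outer existential trace quantifier. The key observation that makes this work is the urgency layering: the recursion structure choices (outer $\bigEchoiceOf{n}$ over $\funcseqsof{\acode}$ in each $\aterm(\afunc)$) and Eve's first-trace choices (inner $\aterm_1$) both live at urgency $n$, while Adam's second-trace choices ($\aterm_2$) live at urgency $n-1$, etc. Since non-terminals have maximal urgency $n$, all function calls will be unfolded before any urgency-$<n$ choice is made.

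First I would make this precise: by induction on the move sequence, any play conforming to an optimal strategy can be organized into a sequence of phases. In phase~$0$, all urgency-$n$ actions are resolved. Because the leftmost leading convention picks the next non-terminal over any waiting lower-urgency choice, the function-call unrolling proceeds depth-first through $\fmain$, expanding each $\afunc$ into $\aterm(\afunc)$ and letting Eve choose both a sequence $\afuncseq \in \funcseqsof{\acode_{\afunc}}$ and her first-trace prefix $\aword_1 \in \branchesofn{\afuncseq}{\acode_{\afunc}}$. Collecting Eve's $\afuncseq$ choices yields a well-defined labeled tree that is exactly a (proper) recursion structure $\arec$, and the concatenation of her $\aword_1$-prefixes across all nodes of $\arec$ produces a word in $\tracesofn{\arec}{\aprog}$. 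After phase~$0$, the remaining term is a concatenation of nested choices $\aterm_2.\aterm_3\ldots\aterm_n$ arranged along the leaves of $\arec$, with no remaining non-terminals. In phase~$i \geq 1$, the player owning urgency $n-i$ ($\adam$ for $i$ odd, $\eve$ for $i$ even) selects a trace prefix at each leaf; again by left-to-right convention these selections are forced to follow branches compatible with the fixed~$\afuncseq$, giving a word in $\branchesofn{\afuncseq}{\acode}$ at each function-call node. As in the finite-state proof, the \cassume-semantics via $\assvar$ make $\objective_i$ a test of validity: if the trace $\aword_i$ violates a branch condition then $\semof{\aword_i}(\assvar) \neq \tvar$.

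Second I would combine the phases. Eve loses an odd-$i$ trace immediately if her chosen $\aword_i$ fails~$\objective_i$, so her optimal play only uses validated traces; dually, Adam's choices in even~$i$ that fail $\objective_i$ benefit Eve via the $\overline{\objective_i}$ disjunct, so Adam's optimal play picks only actual $\arec$-traces as well. After all phases, the flattening $\flatof{\aword_1 \oplus \cdots \oplus \aword_n}$ is the resulting word, and the remaining clause of $\objective$ demands membership in $\flatof{\langof{\adfa}}$. Putting the phases together in order yields the quantifier alternation
\[
\exists \arec \text{ proper}.\ \exists \aword_1 \in \tracesofn{\arec}{\aprog}.\ \forall \aword_2 \in \tracesofn{\arec}{\aprog}.\ \ldots\ Q \aword_n \in \tracesofn{\arec}{\aprog}.\ \prod_i \aword_i \in \langof{\adfa},
\]
which is exactly $\aprog \models \adfa$.

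The main obstacle is the first step: justifying that the strategic content of urgency-$n$ play is precisely the joint selection of a proper recursion structure plus a valid $\aword_1$, with no ``cross-contamination'' between function calls. Concretely, one must rule out that Eve's choice at a deeper $\afunc$-expansion depends on an outcome that hasn't happened yet, and show that the tree of $\afuncseq$ choices a strategy makes really is a tree labeling satisfying $\intof{\domof{\trlab}} = \domof{\arec}$. The prefix-branching assumption is crucial here, since it guarantees that $\branchesofn{\afuncseq}{\acode}$ describes a single command prefix followed by the fixed call sequence $\afuncseq$; without prefix-branching, commands interleaved between calls would entangle the urgency-$n$ choices across recursion depths. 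Once this correspondence is set up, the remaining bookkeeping and the treatment of $\cassume$ via $\assvar$ proceeds exactly as in the finite-state hyperproperty theorem.
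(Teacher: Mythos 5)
Your plan is correct and follows essentially the same route as the paper's proof: you decompose the game into urgency phases that correspond one-to-one to the quantifier blocks, use the components $\objective_i$ (via the $\assvar$-semantics of $\cassume$) to force both players onto genuine $\arec$-traces, and reduce the terminal position to membership of the flattened convolution in $\flatof{\langof{\adfa}}$. The only cosmetic difference is that you merge the selection of the recursion structure with Eve's choice of $\aword_1$ into a single urgency-$n$ phase (which is faithful to the actual interleaved move order), whereas the paper first argues that the non-terminal unrolling fixes some $\arec$ — explicitly noting that infinite unrolling loses for Eve, a case your plan leaves implicit — and then runs the phase induction on the terms $\aterm(\aword_1,\ldots,\aword_m)$.
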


\begin{proof}[Proof Sketch]
    Notice that the set of all reachable word terms from $\aterm_1^{\acode, \afuncseq}$ 
    is exactly $\flatof{\branchesofn{\afuncseq}{\acode}^n}$.
    So by construction, playing from $\fmain$ until the term has no more non-terminals 
    yields a term $\aterm_{\arec}$ for some recursion structure $\arec$,
    where the reachable word terms 
    form precisely the set ${\flatof{\branchesofn{\arec}{\aprog}^n}}$, 
    where $\branchesofn{\arec}{\aprog} 
        = \setcond{\yieldof{\trlab}}{\trlab \text{ is an $\arec$-branching}}$.
    As before, the alternative is that there is an infinite rewriting of non-terminals, 
    in which case Eve loses.
    Thus, $\winsof{\fmain}{\objective}$ if and only if there is $\arec$ such that $\winsof{\aterm_\arec}{\objective}$.

    Next we again inspect the terms $\aterm(\aword_1, \ldots,\aword_m)$ 
    that can occur whenever urgency $n-m$ is next to be resolved (at first, we have $\aterm_{\arec} = \aterm()$ with $m=0$).
    This time, $\aword_i$ are from $\tracesofn{\arec}{\aprog}$ and for odd $i$ are chosen corresponding to the $\exists$-quantifiers.
    By construction, $\aterm(\aword_1, \ldots,\aword_m)$ has shape
    \[
        \aterm_{m+1}^{\acode^1, \afuncseq^1}(\aword^1_1,\ldots, \aword^1_m)
        \appl\;\ldots\;
        \appl \aterm_{m+1}^{\acode^l, \afuncseq^l}(\aword^l_1,\ldots, \aword^l_m)\,,
    \]
    where $\acode^1 \ldots \acode^l$ is the depth-first left to right traversal of $\arec$.

    We prove: $\winsof{\aterm(\aword_1, \ldots,\aword_m)}{\objective}$ 
    if and only if 
    $m$ is even and there is $\aword_{m+1} \in \tracesofn{\arec}{\aprog}$
    with $\winsof{\aterm(\aword_1, \ldots,\aword_m,\aword_{m+1})}{\objective}$,
    or $m$ is odd and for all $\aword_{m+1} \in \tracesofn{\arec}{\aprog}$
    we have $\winsof{\aterm(\aword_1, \ldots,\aword_m,\aword_{m+1})}{\objective}$.
    
    If $m$ is even, then Eve can choose a sequence of branches 
    $\aword^i_{m+1} \in \branchesofn{\afuncseq^i}{\acode^i}$ 
    which together form the yield of a branching ${\aword_{m+1} = \aword^1_{m+1} \ldots \aword^l_{m+1} 
        \in \branchesofn{\arec}{\aprog}}$.
    She does so by transforming $\aterm(\aword_1, \ldots,\aword_m)$ 
    into $\aterm(\aword_1, \ldots,\aword_m,\aword_{m+1})$.
    If she chooses a branching which has no trace 
    $\aword_{m+1} \notin \tracesofn{\arec}{\aprog}$ 
    she loses due to $\objective_{m+1}$.
    Thus, she wins if and only if there is 
    $\aword_{m+1} \in \tracesofn{\arec}{\aprog}$ to choose     
    and we have ${\winsof{\aterm(\aword_1, \ldots,\aword_m,\aword_{m+1})}{\objective}}$.
    The case of $m$ odd is similar.
    
    The final case, where $n=m$ holds, implies that the immediate terms are 
    ${\aterm_{m+1}^{\acode^i, \afuncseq^i}(\aword^i_1,\ldots, \aword^i_m)   
        = \flatof{\aword^i_1,\ldots,\aword^i_n}}$
    and by definition of $\winsof{}{}$ for word terms, ${\winsof{\aterm(\aword_1, \ldots,\aword_n)}{\objective}}$ 
    if and only if ${\aterm(\aword_1, \ldots,\aword_n) \in \flatof{\langof{\adfa}}}$.
\end{proof}

\section{Specialized Decision Algorithm for Weak Terms with Linear Grammars}%
\label{Section:WeakLinearGrammars}

In this section, we show that a certain fragment, weak linear grammars/terms, admits more efficient decision algorithms.
For maximal urgency $\maxurg$, the complexity goes down from $\kexptime{2\maxurg-1}$ to $(\maxurg-2)-\mathsf{EXPSPACE}$.
In the following, we first build up to the definition of weak linearity.
Then, we show how observations in linear terms with weak linear grammars can be decided in $(\maxurg-2)-\mathsf{EXPSPACE}$.

The weak fragment of urgency imposes a restriction on available choice operators per urgency.
But as \Cref{axiom:norm} suggests, choice operators can potentially combine to mimic the effect of disallowed choice operators.
For example, let $\echoicen{1}$ be disallowed, however let $\achoicen{1}$ and $\echoicen{2}$ be both allowed.
Then, we observe 
\[
    \achoicen{1} (\aterm\echoicen{2}\atermp) \axeqper{\rnormalformname} \achoicen{1}(\aterm \echoicen{1} \atermp)\axeq (\aterm\echoicen{1}\atermp)
\]
We avoid this effect by further restricting the structure of weak terms.
To this end, we define the internal urgency of a term.
Intuitively, the internal urgency $\inurgof{\aterm}$ of a term $\aterm$ is the largest urgency that appears in a choice operator.
\begin{align*}
    \inurgof{\aterm \pchoicen{\anurg} \atermp} &= \max\set{\inurgof{\aterm}, \inurgof{\atermp}, \anurg} &
    \inurgof{\aterm\appl\atermp} &= \max\set{\inurgof{\aterm}, \inurgof{\atermp}}
\end{align*}

The internal urgency of non-terminals requires a more subtle definition.
This is because for $\nonterminal\in\nonterminals$, even if the term $\eqmapof{\nonterminal}$ does not contain choice operators from a certain urgency, 
$\eqmapof{\nonterminal}$ may invoke some other $\nonterminalp$ for which $\eqmapof{\nonterminalp}$ might recreate that urgency.
To obtain a sound definition, we have to consider $\eqmapof{\nonterminal}$ for all $\nonterminal\in\nonterminals$ at once.
We define $\inurgof{\nonterminal}$ for each $\nonterminal\in\nonterminals$ to be the smallest assignment that satisfies $\inurgof{\nonterminal}=\inurgof{\eqmapof{\nonterminal}}$, along with the properties of $\inurgof{.}$ defined above.
Intuitively, this means that non-terminals with internal urgency $\anurg$ can only use choice operators of urgency $\anurg$ and non-terminals with internal urgency at most $\anurg$.
With this at hand, we define urgency-alignment.
Intuitively, urgency-alignment of $\aterm$ requires that the urgency of all terms derivable not greater than $\urgencyof{\aterm}$.
Formally, we use an inductive definition.
Terms $\aterm=\nonterminal$ or $\aterm\in\Sigma$ are always urgency-aligned.
A term $\aterm=\atermp\appl\atermpp$ is urgency-aligned if $\atermp$ and $\atermpp$ are both urgency aligned, and a term $\aterm=\atermp\pchoicen{\anurg}\atermpp$ is urgency aligned if both $\atermp$ and $\atermpp$ are urgency-aligned with $\inurgof{\atermp}, \inurgof{\atermpp}\leq\anurg$. 
A grammar is urgency-aligned if $\eqmapof{\nonterminal}$ is urgency-aligned for all $\nonterminal\in\nonterminals$.

We are now ready to define weakness and linearity.
We say that a term $\aterm$ is \emph{weak}, if it, and its grammar are urgency-aligned, with $\aterm$ and $\eqmapof{\nonterminal}$ for all $\nonterminal\in\nonterminals$ only containing choice operators from $\set{\echoicen{\anurg} \mid \anurg \text{  even}}\cup\set{\achoicen{\anurg} \mid \anurg \text{ odd}}$.
Weak terms are normalized into ($\objective$-specialized) weak normal forms $\wnfn{\anurg}$ in $(\anurg-1)-\mathsf{EXPTIME}$ \Cref{Lemma:EffectiveNormWeak}.
The base case $\wnfn{0} = \specgnfn{0}$ stays the same.
For positive urgencies, odd $\anurg$ and even $\anurgp$:
\begin{align*}
    \wnfn{\anurg}&=
    \setcond{\bigAchoiceOf{\anurgp}\atermset}{\emptyset\neq\atermset\subseteq\wnfn{\anurg-1}}
    & \wnfn{\anurgp}&=
    \setcond{\bigEchoiceOf{\anurg}\atermset}{\emptyset\neq\atermset\subseteq\wnfn{\anurg-1}}
\end{align*}
Linearity restricts the recursive behaviour of the term and the grammar.
The syntactic construction of a \emph{linear} term $l$ is given below, where $\aterm$ represents an arbitrary term that does not contain non-terminals.
We call a grammar $(\nonterminals, \eqmap)$ linear if $\eqmapof{\nonterminal}$ is linear for all $\nonterminal\in\nonterminals$.
\begin{align*}
    l \quad ::= \quad \nonterminal \mid \aterm \mid l\pchoicen{\anurg} l \mid l.\aterm \mid \aterm.l 
\end{align*}

\subsection{Weak Normal Form}\label{Appendix:WeakNormalForm}
We present a slightly modified normalization algorithm that takes advantage of the restricted expressiveness of weak grammars and weak terms.
More specifically, we modify the functions $\specnormchoiceof{.}$ and $\specnormconcof{.}$ to this end.
The modified functions return terms with urgencies no larger than the urgency of the input and preserve weakness.
The proofs remain largely the same. 
The normalization process of non-terminals remains the same in relation to $\specnormchoiceof{.}$ and $\specnormconcof{.}$.
In this case, we extract upper bounds by employing the lattice height of $\wnfn{\maxurg}$.
We prove the lemma below in this section.
Note that for $\maxurg\geq 2$, the power $.^{5}$ gets absorbed by $\bigoof{.}$.

\begin{lemma}[Modifies \Cref{Lemma:EffNormalForm}]\label{Lemma:EffectiveNormWeak}
    Given a weak term $\aterm$, a weak grammar $(\nonterminals, \eqmap)$, and a regular objective $\objective\subseteq\Sigma^{*}$ given as a DFA, we can compute $\specnormof{\aterm}\in\wnfn{\maxurg}$ with $\specnormof{\aterm}\specaxeq{\objective}\aterm$ in time $(|\aterm| + |\eqmap||\nonterminals|)\cdot\repexpof{\maxurg-1}{\bigoof{|\specgnfnof{0}{\objective}|^{5}}}$.
\end{lemma}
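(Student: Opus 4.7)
\textbf{Proof plan for \Cref{Lemma:EffectiveNormWeak}.} The plan is to adapt the normalization pipeline from \Cref{Appendix:SpecializedNormalForm} so that it preserves weakness and urgency-alignment. The savings over \Cref{Lemma:EffNormalForm} come from a single structural observation: in a weak term only one player owns the choices at each urgency, so the alternating nesting of Eve and Adam choices at the same urgency—which drives the doubly-exponential blow-up in \Cref{Lemma:EffectiveResolveChoice}—never arises. This removes one level of exponentiation per urgency and gives the claimed $\repexpof{\maxurg-1}{\bigoof{|\specgnfnof{0}{\objective}|^{5}}}$ bound.

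First, I would bound $|\wnfn{\maxurg}|$. Since each layer $\wnfn{\anurg}$ only contains one player's choice over subsets of $\wnfn{\anurg-1}$, we get $|\wnfn{\anurg}| \leq 2^{|\wnfn{\anurg-1}|}$, hence $|\wnfn{\maxurg}| \leq \repexpof{\maxurg}{\bigoof{|\specgnfnof{0}{\objective}|^{c}}}$ for a small constant $c$ inherited from the base case, as opposed to the $\repexpof{2\maxurg}{\cdot}$ bound for $\specgnfnof{\maxurg}{\objective}$. Next, I would modify $\specnormchoiceof{.}$ so that its output lies in $\wnfn{\maxurg}$. The key point is that when normalizing $\bigPchoiceOf{\anurg}_{i\in I}\bigPchoiceOf{\anurgp}{\atermset_i}$ arising from a weak input, urgency-alignment forces $\anurgp \geq \anurg$, and the weak restriction forces the inner and outer choice operators to belong to the same player whenever $\anurgp = \anurg$. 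Hence \Cref{axiom:lattice-dist}, which is responsible for enumerating all choice functions $f:I\to\atermset_I$ in the general case, is never required. What remains are \Cref{axiom:norm} (or its dual \Cref{axiom:dual-norm}) to coerce the inner urgency down to $\anurg$ and \Cref{axiom:lattice-assoc} to flatten the nested choices, both of which are linear in the input size. The resulting term is bounded by the ambient $\wnfn{\maxurg}$, so the whole step takes time $|\atermpp| \cdot \repexpof{\maxurg-1}{\bigoof{|\specgnfnof{0}{\objective}|^{c}}}$.

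Then I would modify $\specnormconcof{.}$ analogously. The distributivity rules \Cref{axiom:dist-left} and \Cref{axiom:dist-right} produce at most $|\aterm|\cdot|\atermp|$ concatenative pairs in $\wnfn{\maxurg-1}$, each of which is normalized inductively; the resulting nested choices are resolved by the weak-preserving $\specnormchoiceof{.}$. Again, because the outer shape is weak, no mixed-player rearrangement is needed. Finally, non-terminal normalization proceeds via the Kleene fixed-point iteration from \Cref{Appendix:NTNormalization}. The definition of $\inurgof{\nonterminal}$ ensures that substituting weak normal forms for non-terminals yields a weak term, so every intermediate expression stays inside $\wnfn{\maxurg}$. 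The iteration therefore converges in at most $|\nonterminals|\cdot|\wnfn{\maxurg}|$ rounds, and each round costs $|\eqmap|\cdot\repexpof{\maxurg-1}{\bigoof{|\specgnfnof{0}{\objective}|^{c}}}$, giving the stated overall bound once $c$ is absorbed into the exponent $5$ for $\maxurg \geq 2$.

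The main obstacle is the bookkeeping for \emph{urgency-alignment} under $\specnormconcof{.}$. Concatenation can in principle place a high-urgency subterm to the left of a low-urgency one, and for urgency-alignment to be preserved we need to know that the outer context enclosing the concatenation already has urgency at least as large. I expect to prove this as an invariant that travels with the inductive definitions of both modified functions, using the fact that weak terms are built from urgency-aligned constituents and that \Cref{axiom:dist-left}, \Cref{axiom:dist-right}, \Cref{axiom:norm}, and \Cref{axiom:dual-norm} each preserve this invariant. Once the invariant is in place, the player-alignment argument above prevents \Cref{axiom:lattice-dist} from ever being triggered, and the complexity calculation goes through as in \Cref{Lemma:EffNormalForm} with $2\maxurg-1$ replaced by $\maxurg-1$.
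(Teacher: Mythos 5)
Your proposal takes essentially the same route as the paper's proof in \Cref{Appendix:WeakNormalForm}: the same size bound $|\wnfn{\anurg}|\leq 2^{|\wnfn{\anurg-1}|}$ giving one exponential per urgency instead of two, the same key observation that \labelcref{axiom:lattice-dist} --- the source of the choice-function enumeration in \Cref{Lemma:EffectiveResolveChoice} --- is never triggered because each urgency is owned by a single player, the same handling of concatenation via \labelcref{axiom:dist-left,axiom:dist-right} with a quadratic blow-up followed by recursion on the lower urgency, and the same Kleene iteration whose round count is bounded by the lattice height $|\nonterminals|\cdot|\wnfn{\maxurg-1}|$. These correspond directly to \Cref{Lemma:EffectiveResolveChoiceWeak,Lemma:EffectiveResolveConcatWeak} and the fixed-point accounting that concludes the paper's proof.

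One local slip deserves correction: you claim that urgency-alignment forces the inner urgency to satisfy $\anurgp\geq\anurg$, but the definition gives the opposite inequality, namely $\inurgof{\atermp},\inurgof{\atermpp}\leq\anurg$ for a term $\atermp\pchoicen{\anurg}\atermpp$. The paper exploits this by maintaining the invariant that the modified normalization functions never increase the urgency of their input; as a result, in \Cref{Lemma:EffectiveResolveChoiceWeak} the already-normalized inner terms sit at the same urgency and belong to the same player as the outer operator, so a plain union via \labelcref{axiom:lattice-assoc} suffices in linear time, and no coercion by \labelcref{axiom:norm} or \labelcref{axiom:dual-norm} is needed at all. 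Your variant, which pads inner normal forms up to $\wnfn{\maxurg}$ and then coerces urgencies down with \labelcref{axiom:norm}, creates mixed-player nestings at equal urgency that \labelcref{axiom:lattice-assoc} alone cannot flatten; this is repairable, since in a weak urgency-aligned term such nestings can only involve the singleton choices introduced by padding, which dissolve via \labelcref{axiom:single}, but the argument needs that remark to close. With that fix, or by adopting the paper's urgency-non-increasing invariant outright, your complexity accounting matches the claimed $(|\aterm| + |\eqmap||\nonterminals|)\cdot\repexpof{\maxurg-1}{\bigoof{|\specgnfnof{0}{\objective}|^{5}}}$ bound.
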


We first handle $\specnormchoiceof{.}$.

\begin{lemma}[Modifies
    \Cref{Lemma:ResolveChoice}]\label{Lemma:EffectiveResolveChoiceWeak}
    Let $\atermset\subseteq\wnfn{\anurg}$,
    and $\atermpp=\bigPchoiceOf{\anurg}\atermset$.
    We can construct its normalform $\specnormchoiceof{\atermpp}\in\wnfn{\anurg}$
    with ${\specnormchoiceof{\atermpp}\specaxeq{\objective}\atermpp}$ 
    in time $\bigoof{|\atermpp|}\leq\repexpof{\maxurg-1}{\bigoof{|\specgnfnof{0}{\objective}|}}$.
\end{lemma}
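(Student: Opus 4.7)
The plan is to exploit the defining structural property of the weak normal form: at each urgency level only one type of choice operator occurs, so both the outer $\bigPchoiceOf{\anurg}$ in $\atermpp$ and the outer $\bigPchoiceOf{\anurg}$ in every element of $\atermset$ belong to the same player. This is precisely the situation where the expensive axiom \labelcref{axiom:lattice-dist} (which underlies the double exponential blowup in \Cref{Lemma:EffectiveResolveChoice}) is not needed, and the normalization reduces to a plain flattening by associativity.

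First, I would unpack the hypothesis. Every $\aterm_i \in \atermset$ lies in $\wnfn{\anurg}$, hence is of the form $\bigPchoiceOf{\anurg}\atermset_i$ with $\atermset_i \subseteq \wnfn{\anurg-1}$, and the choice symbol $\ovoid$ is $\achoice$ when $\anurg$ is odd and $\echoice$ when $\anurg$ is even. The same parity argument applies to the outer $\bigPchoiceOf{\anurg}$ in $\atermpp$ (weakness is what rules out the mixed case that would otherwise force \labelcref{axiom:norm} together with \labelcref{axiom:lattice-dist}). Thus all choices at the top two levels have a common owner.

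Second, I would apply \labelcref{axiom:lattice-assoc} once to merge the two layers:
\begin{align*}
\atermpp \;=\; \bigPchoiceOf{\anurg}_{i\in I} \bigPchoiceOf{\anurg}\atermset_i \;\axeq\; \bigPchoiceOf{\anurg}\Bigl(\bigcup_{i\in I}\atermset_i\Bigr).
\end{align*}
Since each $\atermset_i \subseteq \wnfn{\anurg-1}$, the union is also a subset of $\wnfn{\anurg-1}$, so the right-hand side lies in $\wnfn{\anurg}$ and can be taken as $\specnormchoiceof{\atermpp}$. Algorithmically this is the union of already-computed subsets of $\wnfn{\anurg-1}$, with deduplication. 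Representing each $\atermset_i$ as a set over a common pool of $\wnfn{\anurg-1}$-pointers makes both the union and the deduplication run in time $\bigoof{|\atermpp|}$.

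Third, for the absolute size bound, I would argue inductively that an element of $\wnfn{\anurg}$ has size at most $|\wnfn{\anurg-1}|$, since it is a single choice over a subset of $\wnfn{\anurg-1}$. Starting from $|\wnfn{0}| = |\specgnfnof{0}{\objective}|$ and using $|\wnfn{\anurg}| \leq 2^{|\wnfn{\anurg-1}|}$, this yields $|\atermpp| \leq \repexpof{\maxurg-1}{\bigoof{|\specgnfnof{0}{\objective}|}}$, matching the claimed bound. The only real subtlety — and the one place care is needed — is checking that the weakness/urgency-alignment invariant really forces the common-owner property used in the flattening step; this is where the fragment definition earns its keep, and everything downstream is routine.
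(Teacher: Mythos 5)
Your proposal is correct and matches the paper's own proof: since weakness forces a single choice type per urgency, both layers share the same operator and a single application of \labelcref{axiom:lattice-assoc} flattens $\atermpp$ into one choice over $\bigcup_{i}\atermset_i \subseteq \wnfn{\anurg-1}$, computable as a deduplicated union in time $\bigoof{|\atermpp|}$. The paper's proof is just a terser version of this; your explicit parity justification and the inductive size bound fill in details the paper leaves implicit.
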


\begin{proof}
    We know for each $\aterm'\in\atermset$ we have $\aterm'=\bigPchoiceOf{\anurg}\atermsetp_{\aterm'}$.
    We simply construct the union $\atermpp'=\bigPchoiceOf{\anurg}\bigcup_{\aterm'\in\atermset}\atermsetp_{\aterm'}$ in time $\bigoof{|\atermpp|}\leq\repexpof{\maxurg-1}{\bigoof{|\specgnfnof{0}{\objective}|}}$.
    We observe $\atermpp\axeq\atermpp'$ by $\rlatassocname$.
    This concludes the proof.
\end{proof}

\begin{lemma}[Modifies
    \Cref{Lemma:ResolveConcat}]\label{Lemma:EffectiveResolveConcatWeak}
    For any $\aterm,\atermp\in\wnfn{\anurg}$, 
    we can find $\specnormconcof{\aterm\appl\atermp}
    \in\wnfn{\anurg}$
    with
    $\specnormconcof{\aterm\appl\atermp}
    \specaxeq{\objective}\aterm\appl\atermp$ 
    in time $\repexpof{\maxurg-1}{\bigoof{|\specgnfnof{0}{\objective}|^{4}}}$.
\end{lemma}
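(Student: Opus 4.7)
The plan is to adapt the inductive argument of \Cref{Lemma:ResolveConcat} (as made effective in \Cref{Lemma:EffectiveResolveConcat}) so as to exploit the single-player-per-urgency restriction of the weak fragment. I will proceed by induction on $\anurg$, preserving the invariant that the returned term lies in $\wnfn{\anurg}$. For the base case $\anurg=0$, both $\aterm,\atermp\in\specgnfn{0}$ are representatives of syntactic congruence classes of $\arrof{\objective}$, so a direct lookup in the syntactic monoid yields a representative of $\aterm\appl\atermp$ in time $\bigoof{|\specgnfnof{0}{\objective}|^{c}}$ for some small fixed $c$, matching the claimed bound.

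For the inductive step, write $\aterm=\bigPchoiceOf{\anurg}\atermset$ and $\atermp=\bigPchoiceOf{\anurg}\atermsetp$ with $\atermset,\atermsetp\subseteq\wnfn{\anurg-1}$. Because both terms are weak, they share the \emph{same} outermost operator, fixed by the parity of $\anurg$. Applying \Cref{axiom:dist-right} and then \Cref{axiom:dist-left} --- both admissible since every $\aterm'\in\atermset$ and $\atermp'\in\atermsetp$ has urgency $\anurg-1<\anurg$ --- gives
\[
\aterm\appl\atermp\;\specaxeq{\objective}\;\bigPchoiceOf{\anurg}\setcond{\aterm'\appl\atermp'}{\aterm'\in\atermset,\,\atermp'\in\atermsetp}\ .
\]
I will invoke the induction hypothesis on each pair to obtain $\specnormconcof{\aterm'\appl\atermp'}\in\wnfn{\anurg-1}$ and collect the results (using \Cref{axiom:lattice-assoc} for deduplication) under the single outer $\bigPchoiceOf{\anurg}$. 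The output is a single urgency-$\anurg$ choice over a subset of $\wnfn{\anurg-1}$, hence already in $\wnfn{\anurg}$. The essential improvement over \Cref{Lemma:EffectiveResolveConcat} is that no companion call to \Cref{Lemma:EffectiveResolveChoiceWeak} is needed at this level: the distribution produces exactly one outer layer of the correct operator type, rather than a stacked pair of dual choices that would otherwise require flattening.

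For the complexity, set $s(\anurg)=|\wnfn{\anurg}|$, so that $s(0)=|\specgnfnof{0}{\objective}|$ and $s(\anurg)\leq 2^{s(\anurg-1)}$, yielding $s(\anurg)\leq\repexpof{\anurg}{|\specgnfnof{0}{\objective}|}$. Writing $T(\anurg)$ for the running time at urgency $\anurg$, the distribution generates at most $s(\anurg-1)^{2}$ inner pairs, each triggering a recursive call; the collection and deduplication step is linear. Unfolding the recurrence $T(\anurg)\leq s(\anurg-1)^{2}\cdot T(\anurg-1)$ with $T(0)=\bigoof{|\specgnfnof{0}{\objective}|^{c}}$, the dominant factor is $s(\maxurg-1)^{2}\leq\repexpof{\maxurg-1}{\bigoof{|\specgnfnof{0}{\objective}|}}$, and absorbing the constants and polynomial overhead into the innermost exponent gives the claimed $\repexpof{\maxurg-1}{\bigoof{|\specgnfnof{0}{\objective}|^{4}}}$ bound. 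The main obstacle is the careful bookkeeping of this exponent arithmetic --- specifically, verifying that only one exponential blowup per urgency is incurred rather than two, which is precisely where the weak fragment pays off compared to \Cref{Lemma:EffectiveResolveConcat}.
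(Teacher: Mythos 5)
Your proposal is correct and follows essentially the same route as the paper's proof: distribute the concatenation via \labelcref{axiom:dist-right} and \labelcref{axiom:dist-left} (exploiting that weak terms share a single outermost operator per urgency), recurse on the $\wnfn{\anurg-1}$ pairs, flatten the two same-operator layers with \labelcref{axiom:lattice-assoc}, and account for the cost with the recurrence $T(\anurg)\leq s(\anurg-1)^{2}\cdot T(\anurg-1)$, which matches the paper's bound of $\repexpof{\maxurg-1}{\bigoof{|\specgnfnof{0}{\objective}|}}^{4}$ up to absorbing the exponent. The only nitpick is your justification for \labelcref{axiom:dist-right}: its side condition is $\urgencyof{\atermp}\leq\anurg$ for the full right factor (here of urgency exactly $\anurg$), not the strict inequality on the inner operands, but the application itself is sound.
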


\begin{proof}
    The base case $\maxurg=1$ is handled similarly to the inductive case, so we only show the inductive case.
    We first use the distribution axioms on the concatenation.
    \begin{align*}
        \aterm\appl\atermp
        &\overset{\hphantom{2\times\labelcref{axiom:dist-right}}}{=}(\bigPchoiceOf{\anurg}\atermset)
        \appl(\bigPchoiceOf{\anurg}\atermsetp)\\
        &\axeqper{\labelcref{axiom:dist-right}}\bigPchoiceOf{\anurg}_{\aterm' \in \atermset}
        \aterm'\appl
        (\bigPchoiceOf{\anurg}\atermsetp)\\
        &\axeqper{\labelcref{axiom:dist-left}}
        \bigPchoiceOf{\anurg}_{\aterm' \in \atermset}
        \bigPchoiceOf{\anurg}_{\atermp' \in \atermsetp}
        \aterm'\appl\atermp'\ . 
        \intertext{The resulting term allows for an application of the induction hypothesis on the subterms with lower urgency for $\maxurg>1$.}
        &\axeqper{\mathrm{I.H.}}
        \bigPchoiceOf{\anurg}_{\aterm' \in \atermset}
        \bigPchoiceOf{\anurg}_{\atermp' \in \atermsetp}
        \specnormconcof{\aterm'\appl\atermp'}\ . 
    \end{align*}
    The distribution causes the term to jump to at most the square of its previous size.
    In any case, there are at most $\repexpof{\maxurg-1}{\bigoof{|\specgnfnof{0}{\objective}|}}^{2}$ choice operands of lower urgency.
    For $\maxurg=1$, the concatenation is resolved by monoid multiplication which takes $\bigoof{|\specgnfnof{0}{\objective}|}^{2}$ time.
    This results in an overall complexity of $\repexpof{\maxurg-1}{\bigoof{|\specgnfnof{0}{\objective}|}}^{4}$.
    For $\maxurg>1$ we employ the induction hypothesis and see that the normalization of each subterm takes $\repexpof{\maxurg-2}{\bigoof{|\specgnfnof{0}{\objective}|}}^{2}$ time.
    So in total we again need at most $\repexpof{\maxurg-1}{\bigoof{|\specgnfnof{0}{\objective}|}}^{4}$ steps.
\end{proof}

Lemma~\ref{Lemma:EffectiveNormWeak} follows as a corollary of two observations.
First, each iteration of the least fixed point calculation will require at most $|\eqmap|\cdot\repexpof{\maxurg-1}{\bigoof{|\specgnfnof{0}{\objective}|}}^{4}$ steps for normalization.
Second, we only need $|\nonterminals|\cdot|\wnfn{\maxurg-1}|$ iterations towards a fixed point, since the height of the lattice $\wnfn{\maxurg}^{|\nonterminals|}$ is bounded by this number.
Replacing the non-terminals in $\aterm$ by their normalizations, the term grows by at most a factor of $\repexpof{\maxurg-1}{\bigoof{|\specgnfnof{0}{\objective}|}}$.
This is then normalized in $|\aterm|\cdot\repexpof{\maxurg-1}{\bigoof{|\specgnfnof{0}{\objective}|}}^{5}$.
Resulting in an overall time complexity of at most $(|\aterm| + |\eqmap||\nonterminals|)\cdot\repexpof{\maxurg-1}{\bigoof{|\specgnfnof{0}{\objective}|^{5}}}$.

\subsection{Deciding the winner for weak linear terms}
The main claim of the section is given in the lemma below.
\begin{lemma}\label{Lemma:EfficientWeakLin}
    Let $(\nonterminals, \eqmap)$ be a weak linear grammar with maximal urgency $\maxurg$, $\aterm$ a weak linear term, and $\objective\subseteq\analph^{*}$ a regular objective. 
    Then $\winsof{\aterm}{\objective}$ can be decided in ${(\max\set{|\eqmap|^{2}, |\aterm|^{2}}.|\eqmap|.|\nonterminals|.\repexpof{\maxurg-2}{\bigoof{|\specgnfnof{0}{\objective}|}})}^{4}$ space.
\end{lemma}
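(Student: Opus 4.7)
The plan is to combine the weak normalization of Lemma~\ref{Lemma:EffectiveNormWeak} with an alternating space-bounded procedure whose alternation depth matches the maximal urgency $\maxurg$. The key observation is that a weak normal form at urgency $\anurg$ is a tree of size $\repexpof{\anurg-1}{\bigoof{|\specgnfnof{0}{\objective}|}}$, one exponential above the target budget at $\anurg=\maxurg$; however, a single subtree at urgency $\anurg-1$ fits into that budget, and linearity of the grammar lets us materialize such a subtree on demand without first computing the enclosing normal form.

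First, I would represent each non-terminal's weak normal form $\specnormof{\nonterminal}\in\wnfn{\maxurg}$ symbolically, as a straight-line description derived from the linear grammar together with the on-demand versions of $\specnormchoiceof{.}$ and $\specnormconcof{.}$ from Lemmas~\ref{Lemma:EffectiveResolveChoiceWeak} and \ref{Lemma:EffectiveResolveConcatWeak}. A pointer into this representation has polynomial size in $|\eqmap|+|\aterm|$. Second, I would run an alternating reachability procedure on the resulting game arena: at each urgency $\anurg$ from $\maxurg$ down to $1$, the unique owner (fixed by weakness) existentially or universally picks one child of the current urgency-$\anurg$ normal form. That child, a weak normal form in $\wnfn{\anurg-1}$ of size at most $\repexpof{\anurg-2}{\bigoof{|\specgnfnof{0}{\objective}|}}$, is materialized using the symbolic representation and passed down to the recursive call. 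The base case at urgency $0$ reduces to membership of a single syntactic-congruence class in $\objective$, which is a table lookup on $\specgnfnof{0}{\objective}$.

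The workspace at any moment holds at most one urgency-$(\maxurg-1)$ normal form together with the symbolic grammar pointer, while the alternation depth is $\maxurg$. Summing along the recursion and including the polynomial overhead from Lemmas~\ref{Lemma:EffectiveResolveChoiceWeak} and \ref{Lemma:EffectiveResolveConcatWeak} yields the claimed bound of $(\max\set{|\eqmap|^{2},|\aterm|^{2}}\cdot|\eqmap|\cdot|\nonterminals|\cdot\repexpof{\maxurg-2}{\bigoof{|\specgnfnof{0}{\objective}|}})^{4}$ space. Determinism is obtained either by Savitch's theorem or by observing that an alternating computation of this space complexity lies in $\kexpspace{(\maxurg-2)}$ by a straightforward simulation, given that the alternation depth is a fixed constant $\maxurg$.

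The hard part will be producing the next urgency-$(\anurg-1)$ subtree on demand, indexed by the higher-urgency choices already fixed by the alternating algorithm, without ever constructing the full urgency-$\anurg$ normal form. Linearity is essential here: the distributivities \labelcref{axiom:dist-left} and \labelcref{axiom:dist-right}, when threaded along the linear spine of a derivation, create only polynomially many new summands per step, keeping the partial computation within the space budget. Proving that this selective distribution indeed yields a bona fide child of the urgency-$\anurg$ weak normal form—rather than just some axiomatically equivalent term—requires a nested induction on urgency and on the linear-expansion depth, and relies crucially on the fact that weakness prevents a lower-urgency choice from being elevated via \labelcref{axiom:norm} into a choice the wrong player resolves.
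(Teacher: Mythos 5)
There is a genuine gap, and you have located it yourself without closing it: your plan stands or falls with the ability to ``materialize on demand'' a \emph{bona fide child} of the urgency-$\maxurg$ normal form from a polynomial-size symbolic pointer into the grammar. For a term containing non-terminals, the child set of $\specnormof{\aterm}=\bigEchoiceOf{\maxurg}\atermset$ is the result of a fixed-point iteration of length up to $|\nonterminals|\cdot|\wnfn{\maxurg-1}|$, and $\atermset$ can contain on the order of $\repexpof{\maxurg-1}{\bigoof{|\specgnfnof{0}{\objective}|}}$ elements; no straight-line representation of polynomial size indexes into this set, and verifying exact membership would require computing with the full urgency-$\maxurg$ normal form, one exponent over the budget. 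Your appeal to linearity here is also misdirected: the distributivity axioms \labelcref{axiom:dist-left,axiom:dist-right} applied to a concatenation of two normal forms create the \emph{product} of the two child sets, so the number of summands blows up irrespective of how the derivation is threaded along a linear spine. What linearity actually buys is different: in any concatenation $\atermp\appl\atermpp$ of a linear term, at most one operand contains non-terminals, so the non-terminal-free side can be checked outright and erased, making the recursion tail-shaped and stack-free.

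The paper sidesteps your ``bona fide child'' obligation entirely by never asking for exact children. Lemma~\ref{Lemma:WeakLinPortion} replaces membership in the child set by a domination-preorder condition: it suffices to exhibit \emph{any} $\atermp\in\wnfn{\maxurg-1}$ with $\echoicen{\maxurg}\atermp\discleqof{\objective}\specnormof{\aterm}$ (dually with $\discgeqof{\objective}$ for odd $\maxurg$) and $\winsof{\atermp}{\objective}$. Since $\discleqof{\objective}$ is insensitive to bracketing and to the precise set structure of choices, this condition admits a compositional nondeterministic check (Lemma~\ref{Lemma:WeakLinLightRec}): for $\aterm=\atermp\appl\atermpp$, guess witnesses $\atermp'',\atermpp''\in\wnfn{\maxurg-1}$, verify $\atermpppp\discleqof{\objective}\specnormof{\atermp''\appl\atermpp''}$ by direct normalization within budget, discharge and erase the non-terminal-free side, and tail-recurse on the other; non-terminal unfolding is metered by a counter for the fixed-point iteration index $i$, costing only $\bigoof{\log i}$ bits, and Savitch's theorem determinizes at the end. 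This guess-and-verify reformulation in the domination preorder is the missing idea in your proposal; without it, the nested induction you defer to (proving the guessed subtree is an actual child rather than merely an axiomatically equivalent term) is exactly the step that cannot be carried out in the claimed space. As a side remark, your alternating machine is also unnecessary: weakness fixes a single owner per urgency, so plain nondeterminism together with the duality between $\discleqof{\objective}$ and $\discgeqof{\objective}$ suffices, which is how the paper keeps the determinization overhead to a single Savitch squaring.
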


We fix a weak linear grammar $(\nonterminals, \eqmap)$ with maximal urgency $\maxurg$, and a regular objective $\objective$ for the rest of this section.
A naive normalization of $\aterm$ into $\wnfn{\maxurg}$ gives us a complexity of $\kexptime{(\maxurg-1)}$ \Cref{Lemma:EffectiveNormWeak}.
Here, the size of the normal form terms is the main bottleneck, which can grow up to $\repexpof{\maxurg-1}{\bigoof{|\objective|}}$.
To circumvent this problem, the algorithm avoids a complete normalization.
Instead, it aims to non-deterministically guess a small-size portion $\atermp\in\wnfn{\maxurg-1}$ of the normalization, which suffices to show $\winsof{\aterm}{\objective}$.
The existence of this is guaranteed by the lemma below.
Because of it admits a direct check, we use the domination preorder $\discleqof{\objective}$ instead of $\specaxleq{\objective}$ or $\speccongleq{\objective}$.
\begin{lemma}\label{Lemma:WeakLinPortion}
    Let $\aterm$ be a weak term.
    Then, there is a $\atermp\in\wnfn{\maxurg-1}$ with 
    $\echoicen{\maxurg}\atermp\discleqof{\objective}\specnormof{\aterm}$, if $\maxurg$ is even 
    and $\achoicen{\maxurg} \atermp\discgeqof{\objective}\specnormof{\aterm}$ if $\maxurg$ is odd, where $\winsof{\aterm}{\objective}$ if and only if $\winsof{\atermp}{\objective}$.
\end{lemma}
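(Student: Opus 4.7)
The plan is to pass to a weak normal form of $\aterm$ and read off $\atermp$ as one alternative of the outermost choice. I first invoke \Cref{Lemma:EffectiveNormWeak} to obtain $\specnormof{\aterm}\in\wnfn{\maxurg}$ with $\specnormof{\aterm}\specaxeq{\objective}\aterm$; by soundness of the $\objective$-specialized axiomatic precongruence (\Cref{Proposition:Soundness}), this gives $\winsof{\aterm}{\objective}$ iff $\winsof{\specnormof{\aterm}}{\objective}$. By the parity convention of weak terms, $\specnormof{\aterm}$ has the shape $\bigEchoiceOf{\maxurg}\atermset$ when $\maxurg$ is even and $\bigAchoiceOf{\maxurg}\atermset$ when $\maxurg$ is odd, with $\emptyset\neq\atermset\subseteq\wnfn{\maxurg-1}$ in both cases.

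I then choose $\atermp\in\atermset\subseteq\wnfn{\maxurg-1}$ by case analysis on $\maxurg$. If $\maxurg$ is even and $\winsof{\aterm}{\objective}$, Eve's winning strategy at the top choice selects some winning $\atermp\in\atermset$; if $\notwinsof{\aterm}{\objective}$, every successor is losing and any $\atermp\in\atermset$ works. The odd case is dual: pick a losing element of $\atermset$ when $\notwinsof{\aterm}{\objective}$ (Adam's winning move exhibits one) and any element of $\atermset$ otherwise. In every sub-case, $\winsof{\aterm}{\objective}$ iff $\winsof{\atermp}{\objective}$ by construction.

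The remaining domination requirement unfolds directly from the definition of $\discleqof{\objective}$. For even $\maxurg$, $\echoicen{\maxurg}\atermp\discleqof{\objective}\bigEchoiceOf{\maxurg}\atermset$ amounts to finding $\atermppp\in\atermset$ with $\atermp\discleqof{\objective}\atermppp$; this is witnessed by $\atermp$ itself, using reflexivity of $\discleqof{\objective}$ on normal forms (a routine induction on urgency, with the base case given by reflexivity of $\sgleq{\objective}$). For odd $\maxurg$, $\achoicen{\maxurg}\atermp\discgeqof{\objective}\bigAchoiceOf{\maxurg}\atermset$ unfolds to the existence of $\atermpp\in\atermset$ with $\atermpp\discleqof{\objective}\atermp$, again witnessed by $\atermp$. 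I do not expect any real obstacle: the heavy lifting is packaged into \Cref{Lemma:EffectiveNormWeak} together with the alternating winning-condition semantics, so what remains is a brief case split plus an appeal to reflexivity. The one point that deserves explicit verification is that the modified $\specnormchoiceof{\cdot}$ and $\specnormconcof{\cdot}$ in \Cref{Appendix:WeakNormalForm} indeed preserve the parity of urgencies and never introduce choice operators of the wrong player, so that the outer operator of $\specnormof{\aterm}$ is truly determined by the parity of $\maxurg$; this follows by inspecting the axioms \labelcref{axiom:dist-left}, \labelcref{axiom:dist-right}, and \labelcref{axiom:lattice-assoc} used in those functions.
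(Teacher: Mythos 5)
Your proposal is correct and takes essentially the same route as the paper's proof: normalize via \Cref{Lemma:EffectiveNormWeak}, use the parity restriction on weak terms to conclude that $\specnormof{\aterm}$ is an outermost Eve choice (even $\maxurg$) resp.\ Adam choice (odd $\maxurg$) over $\wnfn{\maxurg-1}$, pick the alternative witnessing or refuting the win by a case split on $\winsof{\aterm}{\objective}$, and discharge the domination claim by unfolding its definition with $\atermp$ itself as witness. Your explicit appeals to soundness of $\specaxeq{\objective}$ for the win-equivalence and to reflexivity of $\discleqof{\objective}$ are details the paper leaves implicit, not a deviation in approach.
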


\begin{proof}
    Assume $\maxurg$ is even.
    The case with odd $\maxurg$ is dual.
    Let $\aterm$ be a weak term.
    Then, there is a term $\aterm'\in\wnfn{\maxurg}$ with $\specnormof{\aterm}=\aterm'$.
    The definition of $\wnfn{\maxurg}$ tells us  $\aterm'=\bigEchoiceOf{\maxurg}\atermset$ for some $\atermset\subseteq\wnfn{\maxurg-1}$.
    We observe that $\winsof{\aterm'}{\objective}$ holds if and only if there is a $\atermpp\in\atermset$ with $\winsof{\atermpp}{\objective}$.
    Using the definition of $\discleqof{\objective}$, we also get $\echoicen{\maxurg}\atermpp\discleqof{\objective}\aterm'$ for any $\atermpp\in\atermset$.
    Consider the case of $\winsof{\aterm'}{\objective}$, thus there is $\aterm''\in\atermset\subseteq\wnfn{\maxurg-1}$ and $\aterm''\discleqof{\objective}\aterm'$ as well as $\winsof{\aterm''}{\objective}$ holds.
    Similarly, if $\notwinsof{\aterm'}{\objective}$, any element $\aterm''\in\atermset\subseteq\wnfn{\maxurg-1}$ has $\notwinsof{\aterm''}{\objective}$ and $\aterm''\discleqof{\objective}\aterm'$.
\end{proof}

It remains to show that this fragment can be effectively guessed.
This is stated by the lemma below.
The normalization process tells us that for $k=|\nonterminals|.|\wnfn{\maxurg-1}|$, where $|\wnfn{\maxurg-1}|\leq\repexpof{\maxurg}{|\objective|}$, we get $\replaceLFPIt{\aterm}{k}\axeq\aterm$.
Here, $\replaceLFPIt{\aterm}{i}$ is $\aterm$ with non-terminals replaced by 
the $i$'th iteration towards a fixed point as discussed in \Cref{Appendix:WeakNormalForm}, 
the syntax is borrowed from \Cref{Appendix:NTNormalization}.
Initializing $i=k$ in the lemma below yields the desired $(\maxurg-2)-\mathsf{EXPSPACE}$ decidability result.
To achieve this complexity, the algorithm avoids maintaining a call stack.

\begin{lemma}\label{Lemma:WeakLinLightRec}
    Let $\aterm$ be a weak linear term with urgency $\maxurg$, let $\atermpppp\in\wnfn{\maxurg-1}$ and let $i\in\nat$.
    If $\aterm$ does not contain non-terminals, $\atermpppp\discleqof{\objective}\specnormof{\aterm}$ can be checked in space

    \[
        {(|\aterm|^{2}.{(|\eqmap|.|\nonterminals|.\repexpof{\maxurg-2}{\bigoof{|\specgnfnof{0}{\objective}|}})}^{2})}^{2}.
    \]

    If conversely, $\aterm$ contains non-terminals, $\atermpppp\discleqof{\objective}\replaceLFPIt{\aterm}{i}$ can be checked in space 

    \[
        {(\bigoof{\log(i)}+\max\set{|\eqmap|^{2}, |\aterm|^{2}}.{(|\eqmap|.|\nonterminals|.\repexpof{\maxurg-2}{\bigoof{|\specgnfnof{0}{\objective}|}})}^{2})}^{2}.
    \]
\end{lemma}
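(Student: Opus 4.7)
The proof proceeds by nested structural recursion on $\aterm$, exploiting linearity to keep the call stack short by threading the recursion along the unique branch that contains a non-terminal. I prove the two parts in sequence, with the first serving as the base case on which the second iterates.

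For the first part (no non-terminals in $\aterm$), I would proceed by induction on the syntactic structure of $\aterm$. The atomic cases ($\aterm$ a terminal, $\tskip$, or $\terr$) amount to computing $\specnormof{\aterm} \in \specgnfnof{0}{\objective}$ and performing a single $\sgleq{\objective}$-comparison. For a choice $\aterm_1 \pchoicen{\anurg} \aterm_2$, weakness ensures only one player's operator appears at each urgency, so the check $\atermpppp \discleqof{\objective} \specnormof{\aterm_1 \pchoicen{\anurg} \aterm_2}$ reduces --- via the inductive definition of $\discleqof{\objective}$ --- either to a disjunction or a conjunction of two recursive checks on $\aterm_1, \aterm_2$ (Eve's case versus Adam's case), with $\atermpppp$ possibly refined by one level. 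For a concatenation $\aterm_1 \appl \aterm_2$, I would normalize the operand that is not on the recursion path into $\wnfn{\maxurg-1}$ (using \Cref{Lemma:EffectiveNormWeak}), then fold this normal form together with $\atermpppp$ into a new characteristic term living in $\wnfn{\maxurg-2}$, and continue the structural recursion on the remaining operand against this refined term. The recursion depth is bounded by $|\aterm|$, and at each level we retain only one refined term of size at most $|\eqmap| \cdot |\nonterminals| \cdot \repexpof{\maxurg-2}{\bigoof{|\specgnfnof{0}{\objective}|}}$ together with a pointer into~$\aterm$, yielding the stated bound.

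For the second part ($\aterm$ contains non-terminals), I extend the recursion to handle the iterated substitution $\replaceLFPIt{\aterm}{i}$. When the recursion reaches a non-terminal $\nonterminal$ with counter $i>0$, control transitions to checking the current refined characteristic term against $\replaceLFPIt{\eqmapof{\nonterminal}}{i-1}$; when $i=0$, the substitution yields $\terr$ and the check is immediate by \Cref{axiom:least}. Linearity of the grammar guarantees at most one non-terminal occurrence per body $\eqmapof{\nonterminal}$, so the chain of unfoldings is linear and does not branch. Crucially, I would implement this as a tail-recursive loop rather than genuine recursion: the algorithm maintains only the current pointer into a body of size at most $|\eqmap|$, the current refined characteristic term in $\wnfn{\maxurg-2}$, and the counter $i$, overwriting these in place as unfolding proceeds. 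Storing the counter costs $\bigoof{\log i}$ bits, which is exactly the additive $\bigoof{\log i}$ term appearing in the bound.

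The main obstacle is the concatenation case. A naive recursion that independently normalizes both operands of $\aterm_1 \appl \aterm_2$ would already breach the bound, so the trick is to normalize only the side branch and absorb it into the refined characteristic term before recursing on the backbone. Soundness of this absorption step follows from the distributivity axioms \labelcref{axiom:dist-left,axiom:dist-right} combined with \Cref{axiom:spec}, and the size invariant --- that the refined term remains in $\wnfn{\maxurg-2}$ throughout the recursion --- follows from weakness, since normalization of a weak urgency-$(\maxurg-1)$ term cannot introduce higher urgencies. Once this invariant is established, the space accounting is routine: recursion depth $\max\{|\eqmap|, |\aterm|\}$ times per-frame storage $|\eqmap| \cdot |\nonterminals| \cdot \repexpof{\maxurg-2}{\bigoof{|\specgnfnof{0}{\objective}|}}$, squared to accommodate the intermediate products arising when folding normalized side branches into the running refined term, gives the stated bound.
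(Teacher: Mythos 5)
There is a genuine gap, and it sits exactly where you flag the ``main obstacle'': the concatenation case. You propose to normalize the non-terminal-free operand and then \emph{fold} its normal form together with $\atermpppp$ into a single ``refined characteristic term'' in $\wnfn{\maxurg-2}$, recursing on the backbone against that one term. No such fold is available within the space bound. After distributing, the check $\atermpppp\discleqof{\objective}\specnormof{\atermp\appl\atermpp}$ unwinds to an \emph{existential over pairs}: there exist $\atermp'',\atermpp''\in\wnfn{\maxurg-1}$ with $\atermpppp\discleqof{\objective}\specnormof{\atermp''\appl\atermpp''}$ and $\atermp''$, $\atermpp''$ dominated by the respective operand normal forms. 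The object your fold would have to represent is the set $S=\setcond{u\in\wnfn{\maxurg-1}}{\exists v.\ \atermpppp\discleqof{\objective}\specnormof{u\appl v}}$ for $v$ ranging over the (up to $\repexpof{\maxurg-1}{\bigoof{\sizeof{\specgnfnof{0}{\objective}}}}$-many) alternatives of the normalized side branch. This is an upward-closed subset of $\wnfn{\maxurg-1}$ whose explicit representation can itself be of size $\repexpof{\maxurg-1}{\bigoof{\sizeof{\specgnfnof{0}{\objective}}}}$ --- one exponent over budget --- and it is not in general the upward closure of a single term, let alone one living in $\wnfn{\maxurg-2}$. Note also that the recursion should not descend a level at all: in the correct argument the left-hand comparison terms stay in $\wnfn{\maxurg-1}$ throughout, since the guessed witnesses $\atermp''$, $\atermpp''$ simply replace $\atermpppp$ as the terms to be checked against the operands.

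The paper avoids your fold entirely: it gives a \emph{nondeterministic} algorithm that guesses the single witness pair $(\atermp'',\atermpp'')$ --- each of size only $\repexpof{\maxurg-2}{\bigoof{\sizeof{\specgnfnof{0}{\objective}}}}$ --- verifies $\atermpppp\discleqof{\objective}\specnormof{\atermp''\appl\atermpp''}$ by directly normalizing the small guessed concatenation, then uses linearity to check the non-terminal-free operand first, \emph{erase} that memory, and tail-recurse on the operand carrying the non-terminal. The outer squaring in the stated bound is then the application of Savitch's theorem to determinize; your accounting attributes it to ``intermediate products arising when folding,'' which does not correspond to any step of a valid argument (a deterministic variant could alternatively enumerate witnesses with a counter of $\repexpof{\maxurg-2}{\bigoof{\sizeof{\specgnfnof{0}{\objective}}}}$ bits, but you claim neither guessing nor enumeration). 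Your treatment of part two --- tail-recursive unfolding of the unique non-terminal with an $\bigoof{\log i}$ counter --- does match the paper, as does the atomic base case; but without repairing the concatenation case by witness guessing (or enumeration) plus the memory-freeing order enforced by linearity, the claimed size invariant fails and the proof does not go through.
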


\begin{proof}
    We only handle the case where $\maxurg$ is even, the case with an odd $\maxurg$ is dual.
    Our algorithm is a non-deterministic space bounded algorithm.
    We show that the check is possible in non-deterministic 
    $|\aterm|^{2}.\normcomplexity^{2}$ space if $\aterm$ does not contain non-terminals, and non-deterministic
    $\bigoof{\log(i)}+\max\set{|\aterm|^{2}, |\eqmap|^{2}}.\normcomplexity^{2}$ space if $\aterm$ contains non-terminals, where $\normcomplexity=|\eqmap|.|\nonterminals|.\repexpof{\maxurg-2}{\bigoof{|\objective|}}$.
    These complexities can be translated into deterministic space by Savitch's Theorem~\cite{SAVITCH}.
    This yields the deterministic complexities ${(|\aterm|^{2}.\normcomplexity^{2})}^{2}$ without non-terminals and ${(\bigoof{\log(i)}+\max\set{|\aterm|^{2}, |\eqmap|^{2}}.\normcomplexity^{2})}^{2}$ with non-terminals.
    
    We proceed by an outer induction on $i\in\nat$ and an inner structural induction on $\aterm$.
    The base case is $\aterm\in\Sigma\cup\set{\terr,\tskip}$ or $\aterm\in\nonterminals$.
    The check $\atermpppp\discleqof{\objective}\specnormof{\aterm'}$ for $\aterm'\in\Sigma\cup\set{\terr,\tskip}$ can be done in $\normcomplexity$ time, and thus in the same amount of space, by employing the usual normalization procedure followed by evaluating the relation.
    This also covers the case $\aterm\in\nonterminals$ as $\replaceLFPIt{\nonterminal}{0}=\terr$ for all $\nonterminal\in\nonterminals$.
    The inductive case for the inner induction is very similar between the outer base case $i=0$ and the outer inductive case moving from $i-1$ to $i$, so we do not handle it separately.
    
    We continue with the outer inductive case.
    For the base case of the inner induction, where $i\in\nat$, we have $\aterm\in\nonterminals$ or $\aterm\in\Sigma\cup\set{\terr,\tskip}$.
    The case $\aterm\in\Sigma\cup\set{\terr,\tskip}$ is handled exactly as in the case we initially handled.
    Let $\aterm=\nonterminal\in\nonterminals$.
    Then, $\replaceLFPIt{\aterm}{i}=\replaceLFPIt{\eqmapof{\nonterminal}}{i-1}$.
    Because $(\nonterminals, \eqmap)$ is a linear grammar, $\eqmapof{\nonterminal}$ is a linear term.
    We call the induction hypothesis to examine
    $\atermpppp\discleqof{\objective}\specnormof{\replaceLFPIt{\eqmapof{\nonterminal}}{i-1}}$ and returns its result.
    This is possible in non-deterministic $\bigoof{\log(i-1)}+\max\set{|\eqmapof{\nonterminal}|^{2}, |\eqmap|^{2}}.\normcomplexity^{2}$ space by the induction hypothesis.
    Note that we do not need to compute beyond calling the induction hypothesis, so we do not need additional space for it (i.e.\ we are a tail recursion).
    
    For the inner inductive case, let $\aterm$ be a weak linear term.
    If we have $\inurgof{\aterm}<\maxurg$, all non-terminals that appear in $\aterm$ must also have internal urgency of at most $\maxurg-1$.
    These non-terminals can only refer to non-terminals of internal urgency at most $\maxurg-1$.
    Then we can apply the normalization procedure, limited only to these non-terminals, to construct $\aterm'=\specnormof{\replaceLFPIt{\aterm}{i}}$.
    This takes $\bigoof{\log(i)}+|\aterm|.\normcomplexity$ time, i.e.\ the same amount of space.
    The former term stems from the counter used to stop at the $i$-th LFP iteration, and the latter term stems from the normalization.

    Now let $\aterm=\atermp\echoicen{\maxurg}\atermpp$.
    We have $\specnormof{\replaceLFPIt{\aterm}{i}}=\specnormof{\replaceLFPIt{\atermp}{i}}\echoicen{\maxurg}\specnormof{\replaceLFPIt{\atermpp}{i}}$.
    Even if $\aterm$ does not contain non-terminals, we still handle $\replaceLFPIt{\aterm}{i}$ instead of the equal term $\aterm$ to unify our notation.
    We write $\atermp'=\specnormof{\replaceLFPIt{\atermp}{i}}$, $\atermpp'=\specnormof{\replaceLFPIt{\atermpp}{i}}$, and also let 
    $\atermp'=\bigEchoiceOf{\maxurg}\atermsetp$ and 
    $\atermpp'=\bigEchoiceOf{\maxurg}\atermsetpp$, where $\atermsetp, \atermsetpp\subseteq\wnfn{\maxurg-1}$.
    We stress that the algorithm will not explicitly construct these terms. 
    We have 
    \[
        \specnormof{\replaceLFPIt{\aterm}{i}}=(\bigEchoiceOf{\maxurg}\atermsetp)\echoicen{\maxurg}(\bigEchoiceOf{\maxurg}\atermsetpp)\disceq
        \bigEchoiceOf{\maxurg}(\atermsetp\cup\atermsetpp).
    \]
    The final equality $\disceq$, i.e. $(\discleqof{\objective}\cap\discgeqof{\objective})$, follows from the definition of $\discleqof{\objective}$, which ignores bracketing among choices.
    By the definition of $\discleqof{\objective}$, we observe that $\atermpppp\discleqof{\objective}\replaceLFPIt{\aterm}{i}$ holds if and only if there is a $\atermppp\in\atermsetp$ or $\atermppp\in\atermsetpp$ with $\atermpppp\discleqof{\objective}\atermppp$. 
    The conditions $\atermppp\in\atermsetp$ and  $\atermppp\in\atermsetpp$ are respectively equivalent to  $\atermppp\discleqof{\objective}\atermp$ and $\atermppp\discleqof{\objective}\atermpp$.
    This means that a $\atermppp\in\set{\atermp, \atermpp}$ exists with $\atermpppp\discleqof{\objective}\atermppp$ if and only if $\atermpppp\discleqof{\objective}\replaceLFPIt{\aterm}{i}$ holds.
    To verify the existence of such terms, algorithm first guesses a term $\atermppp\in\set{\atermp, \atermpp}$.
    Then, it removes the rest of the term from the memory and checks $\atermpppp\discleqof{\objective}\replaceLFPIt{\atermppp}{i}$.
    This can be done in non-deterministic $\bigoof{\log(i)}+\max\set{|\atermp|^{2}, |\atermpp|^{2}, |\eqmap|^{2}}.\normcomplexity^{2}$ space by the induction hypothesis.
    This yields the desired complexity.
    If $\aterm$ does not contain non-terminals, we have $\replaceLFPIt{\atermppp}{i}=\atermppp$ for all $\atermppp\in\set{\atermp, \atermpp}$ and the check is possible in $\max\set{|\atermp|^{2}, |\atermpp|^{2}}.\normcomplexity^{2}$ space.

    For the remaining inductive case, we have $\aterm=\atermp.\atermpp$.
    We similarly write $\atermp'=\specnormof{\replaceLFPIt{\atermp}{i}}=\bigEchoiceOf{\maxurg}\atermsetp$ and $\atermpp'=\specnormof{\replaceLFPIt{\atermp}{i}}=\bigEchoiceOf{\maxurg}\atermsetpp$ with $\atermsetp, \atermsetpp\subseteq\wnfn{\maxurg-1}$.
    We get 
    \begin{align*}
        \specnormof{\replaceLFPIt{\aterm}{i}}=\specnormconcof{(\bigEchoiceOf{\maxurg}\atermsetp)\appl(\bigEchoiceOf{\maxurg}\atermsetpp)}
        &= \bigEchoiceOf{\maxurg}\set{\specnormof{\atermp''.\atermpp''}\mid \atermp''\in\atermsetp, \atermpp''\in\atermsetpp}.
    \end{align*}
    Note that this distribution is only possible because all terms $\atermp''\in\atermsetp$ and $\atermpp''\in\atermsetpp$ have urgency $\maxurg-1$.
    Same as before, we observe $\atermpppp\discleqof{\objective}\specnormof{\aterm}$ if and only if there are $\atermp'',\atermpp''\in\wnfn{\maxurg-1}$ with $\atermpppp\discleqof{\objective}\specnormof{\atermp''.\atermpp''}$, $\atermp''\discleqof{\objective}\specnormof{\replaceLFPIt{\atermp}{i}}$, and $\atermpp''\discleqof{\objective}\specnormof{\replaceLFPIt{\atermpp}{i}}$.
    If no such $\atermp''$ and $\atermpp''$ can be found, this implies that $\atermpppp\discleqof{\objective}\replaceLFPIt{\aterm}{i}$ does not hold.

    Assume $\aterm$ contains no non-terminals.
    Then $\replaceLFPIt{\atermp}{i}=\atermp$ and $\replaceLFPIt{\atermpp}{i}=\atermpp$.
    The algorithm proceeds as follows.
    It first guesses $\atermp'', \atermpp''\in\wnfn{\maxurg}$ and verifies $\atermpppp\discleqof{\objective}\normof{\atermp''.\atermpp''}$ by the usual normalization procedure, which takes at most $(|\atermp''|+|\atermpp''|).\normcomplexity\leq\normcomplexity^{2}$ space.
    If this fails, the algorithm returns false.
    Then it removes all information except $\atermp''$, $\atermp$, $\atermpp''$, and $\atermpp$.
    The space required for this bounded from above by $4.|\aterm|.\normcomplexity$.
    It proceeds by checking $\atermp''\discleqof{\objective}\specnormof{\atermp}$ and $\atermpp''\discleqof{\objective}\specnormof{\atermpp}$, which is possible in $\max\set{|\atermp|^{2}, |\atermpp|^{2}}.\normcomplexity^{2}$ by the induction hypothesis.
    In total, the algorithm requires 
    $4.|\aterm|.\normcomplexity + \max\set{|\atermp|^{2}, |\atermpp|^{2}}.\normcomplexity^{2}$ non-deterministic space.
    We have 
    \begin{align*}
        4.|\aterm|.\normcomplexity + \max\set{|\atermp|^{2}, |\atermpp|^{2}}.\normcomplexity^{2}
        &=
        4.|\aterm|.\normcomplexity+ 3.\max\set{|\atermp|^{2}, |\atermpp|^{2}}.\normcomplexity^{2}\\
        &\leq (4.|\aterm|+ 3.{(|\aterm|-1)}^{2}).\normcomplexity^{2}\\
        &= (4.|\aterm|+ 3.|\aterm|^{2} - 6.|\aterm| + 3).\normcomplexity^{2} \\
        &= (3.|\aterm|^{2} - 2.|\aterm| + 3).\normcomplexity^{2} \\
        &\leq 3(|\aterm|^{2}).\normcomplexity^{2}
    \end{align*}
    
    Now conversely assume that $\aterm$ contains non-terminals.
    Because $\aterm$ is linear, one of $\atermp$ or $\atermpp$ does not contain non-terminals.
    W.l.o.g.\ let $\atermp$ be this term.
    The algorithm first guesses $\atermp'',\atermpp''\in\wnfn{\maxurg-1}$ and verifies $\atermpppp\discleqof{\objective}\specnormof{\atermp''.\atermpp''}$ using the usual normalization process in $\normcomplexity^{2}$ space.
    If this fails, the algorithm returns false.
    Similarly to the case without non-terminals, it then removes all information except $\atermp''$, $\atermp$, $\atermpp''$, and $\atermpp$.
    Following this, it verifies $\atermp''\discleqof{\objective}\atermp$.
    As before, if the check fails, it returns false.
    The term $\atermp$ does not contain non-terminals, so the check takes an additional space of only $|\atermp|^{2}.\normcomplexity^{2}$ by the induction hypothesis.
    The same calculation as in the no non-terminal case yields an upper bound $|\aterm|^{2}.\normcomplexity^{2}$ on total memory used to complete this step.
    Upon completing the step, the algorithm erases $\atermp''$ and $\atermp$ from memory.
    Then it checks $\atermpp''\discleqof{\objective}\atermpp$ and returns the result of the check.
    This check is possible in $\bigoof{\log(i)}+\max\set{|\atermpp|^{2}, |\eqmap|^{2}}.\normcomplexity^{2}$ non-deterministic space by the induction hypothesis.
    Note that the algorithm does not store any additional information besides the information required to perform this check.
    This yields the desired space complexity.
\end{proof}

By slightly modifying the approach in Lemma~\ref{Lemma:DFASimulation}, we can also get a complexity bound wrt. to $|\objective|$.
Instead of letting Eve resolve the transitions in the DFA for $|\objective|$, we let Adam resolve them.
We abuse the notation and refer to this translation with the symbol $\arrof{.}$ for the rest of this section.
$$\begin{aligned}
    \arrof{\aletter}&=\bigAchoiceOf{1}{}{\setcond{(\astate, \astatep)}
        {\transitions(\astate,\aletter) = \astatep}} 
        &
        \arrof{\terr}&=\terr
        \\
        \arrof{\bigPchoiceOf{\anurg}{}{\atermset}}&=\bigPchoiceOf{\anurg}
        {}{\setcond{\arrof{\aterm}}{\aterm\in\atermset}}
        & 
        \arrof{\tskip} &= \tskip
        \\
        \arrof{\aterm\appl\atermp}&=\arrof{\aterm}\appl\arrof{\atermp}
        &
        \arrof{\nonterminal}&=\nonterminal \,,
\end{aligned}
$$
The translation for the non-terminals also remains the same $\arreqmapof{\nonterminal} =\arrof{\eqmapof{\nonterminal}}$.
The transitions of the translated objective $\arrof{\objective}$ remain the same, however it accepts $\bot$ as well as the original final states.
This is because the task of choosing the correct transition is now assigned to Adam, and a failure leads to an Eve win.
The translation only introduces $\achoicen{1}$ at the lowest urgency level and does not modify the urgencies.
Thus, a weak linear term $\aterm$ and a weak linear grammar $(\nonterminals, \eqmap)$, the translated $\arrof{\aterm}$ and $(\nonterminals, \arrof{\eqmap})$ both remain weak linear.
The Lemma~\Cref{Lemma:DFASimulation} holds for this type of translation as well.
\begin{lemma}\label{Lemma:DFASimulationAdam}
    $\winsof{\aterm}{\objective}$ if and only if $\winsof{\arrof{\aterm}}{\arrof{\objective}}$. 
\end{lemma}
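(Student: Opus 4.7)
The plan is to adapt the proof of \Cref{Lemma:DFASimulation} by dualizing the handling of DFA transitions. In the original argument, Eve's ownership of $\bigEchoiceOf{1}$ in $\arrof{\aletter}$ let her extend a partial DFA run, and inconsistent choices led to the non-accepting $\bot$ so that she was punished for cheating. Here, Adam owns $\bigAchoiceOf{1}$ in the new $\arrof{\aletter}$, and since $\bot$ is now accepting, it is Adam who is punished for cheating; his only sensible play is therefore to extend the partial run consistently with some DFA run on the generated observation. With this symmetry in place, exactly the same moves in the source game $\semof{\aterm}$ correspond to plays in $\semof{\arrof{\aterm}}$ that realize DFA runs on the produced word.

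Concretely, I would first redefine $\quasirunsof{\aword}$ so that the undetermined transitions are $\bigAchoiceOf{1}\setcond{(\astate,\astatep)}{\astatep\in\transitions(\astate,\aletter)}$ instead of $\bigEchoiceOf{1}$, and then prove the strengthened statement that for every $\aterm\in\terms$, every $\arun\in\quasirunsof{\termheadof{\aterm}}$, and every $\atermp\sigeq\arun\appl\arrof{\termbodyof{\aterm}}$, we have $\winsof{\aterm}{\objective}$ iff $\winsof{\atermp}{\arrof{\objective}}$. The proof proceeds by transfinite induction on the ordinal rank of Eve's winning strategy, following the template of \Cref{Lemma:DFASimulationExtended}. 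The base case uses that if Eve wins the source in $0$ moves, then $\aterm\sigeq\aword\in\objective$, and any $\arun\in\quasirunsof{\aword}$ either has an accepting DFA run as its deterministic portion, or Adam's only remaining moves on the undetermined $\bigAchoiceOf{1}$ blocks either extend the run consistently to an accepting state, or collapse to $\bot\in\arrof{\objective}$; in both cases Eve wins. The inductive case splits on whether the leading subterm lies inside $\arun$ or inside $\arrof{\termbodyof{\aterm}}$, mirroring the two cases of \Cref{Lemma:DFASimulationExtended}; the auxiliary \Cref{Lemma:TermAssoc} and \Cref{Lemma:ArrSuccCommute} carry over verbatim, because they depend only on the syntactic structure of successors and not on ownership assignment.

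The main obstacle will be the inductive subcase in which the leading subterm sits inside $\arun$, where it is now Adam rather than Eve who is rewriting a residual $\arrof{\aletter}$. We must argue that every Adam move either preserves $\arun'\in\quasirunsof{\aword}$ (so the induction hypothesis applies) or immediately gives Eve the win by reaching the accepting $\bot$. This amounts to showing that the set $\quasirunsof{\aword}$ is closed under Adam's consistent extensions and that inconsistent extensions are absorbed by $\arrof{\transitions}$ into the accepting sink. Once this closure property is nailed down, translating winning strategies across the two game arenas is routine, and the implication in \Cref{Lemma:DFASimulationAdam} for $\aterm$ itself is recovered by choosing $\arun$ to be the honest sequence of $\bigAchoiceOf{1}$ blocks in $\arrof{\termheadof{\aterm}}$, exactly as in the proof of \Cref{Lemma:DFASimulation}.
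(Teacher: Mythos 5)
Your proposal is correct and takes essentially the same route as the paper, whose proof of \Cref{Lemma:DFASimulationAdam} is exactly this dualization: it observes that determinism leaves Adam precisely one transition choice per $\arrof{\aletter}$ that does not lead to his loss (every inconsistent choice drives the DFA into the now-accepting sink $\bot$), and then follows the steps of \Cref{Lemma:DFASimulation} with the quasirun invariant adapted as you describe. When you nail down your closure property, note only that reaching the accepting sink does not by itself end the play: since Adam wins all infinite plays, Eve must still force termination after a cheat, which she can do by continuing with the image of her original winning strategy.
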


\begin{proof}[Proof Sketch]
    Letting Adam simulate the translations of the automaton does not change the winner.
    Determinicity still only allows for one choice of transition per $\arrof{\aletter}$ that does not lead to his loss.
    The proof follows the same steps as those of Lemma~\Cref{Lemma:DFASimulation}.
\end{proof}

The complexity wrt. $|\objective|$ follows.

\begin{lemma}\label{Lemma:EfficientWeakLinDFA}
    Let $(\nonterminals, \eqmap)$ be a weak linear grammar with maximal urgency $\maxurg$, $\aterm$ a weak linear term, and $\objective\subseteq\analph^{*}$ a regular objective. 
    Then $\winsof{\aterm}{\objective}$ can be decided in ${(\max\set{|\eqmap|, |\aterm|}.|\eqmap|.|\nonterminals|.\repexpof{\maxurg-2}{\bigoof{|\objective|^{2}}})}^{4}$ space.
\end{lemma}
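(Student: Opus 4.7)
The plan is to reduce the problem for $(\aterm, \objective)$ to a problem of the form $(\arrof{\aterm}, \arrof{\objective})$ via the Adam-choice DFA translation defined just above the lemma statement, and then invoke Lemma~\ref{Lemma:EfficientWeakLin} on the translated instance. By Lemma~\ref{Lemma:DFASimulationAdam} the winner is preserved, so the correctness reduces entirely to a complexity accounting for the translated instance.

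First I would verify that $\arrof{.}$ preserves weak linearity. Each terminal $\aletter$ is replaced by an Adam choice $\bigAchoiceOf{1}\setcond{(\astate,\astatep)}{\transitions(\astate,\aletter) = \astatep}$ of urgency~$1$ over the new alphabet $\states\times\states$. Urgency~$1$ is odd and hence assigned to Adam in the weak fragment, so introducing only $\achoicen{1}$-choices respects the weakness restriction on choice operators as well as urgency-alignment: any enclosing choice or concatenation in $\aterm$ has urgency at least~$1$, and the internal urgency of the inserted subterm is exactly~$1$. Since $\arr$ is defined compositionally and does not touch non-terminals (it only dives into $\eqmapof{\nonterminal}$), the linearity of the term and grammar is also preserved. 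Finally, each letter expands to a choice over at most $|\states|$ pairs, so $|\arrof{\aterm}|\leq \bigoof{|\aterm|\cdot|\states|}$ and $|\arrof{\eqmap}|\leq \bigoof{|\eqmap|\cdot|\states|}$.

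Second I would bound $|\specgnfnof{0}{\arrof{\objective}}|$ by $\bigoof{|\objective|^{2}}$. This is the number of $\lreq{\arrof{\objective}}$-classes over the alphabet $\states\times\states$. Every letter $(\astate,\astatep)$ acts on $\states\cup\set{\bot}$ as the function sending $\astate$ to $\astatep$ and every other state to $\bot$, and $\bot$ is absorbing. Compositions of such functions are again of this shape: either the identity, the zero function, or a function $f_{\astate,\astatep}$ characterised by a single state-pair. Hence the syntactic monoid of $\arrof{\objective}$ contains at most $|\states|^{2}+2$ elements, giving $|\specgnfnof{0}{\arrof{\objective}}|=\bigoof{|\objective|^{2}}$, exactly as in the compact encoding of Section~\ref{Section:CompactEncoding}.

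Finally I apply Lemma~\ref{Lemma:EfficientWeakLin} to $(\arrof{\aterm},\arrof{\eqmap},\arrof{\objective})$, which decides $\winsof{\arrof{\aterm}}{\arrof{\objective}}$ in space
\[
    \bigl(\max\set{|\arrof{\eqmap}|,|\arrof{\aterm}|}\cdot|\arrof{\eqmap}|\cdot|\nonterminals|\cdot\repexpof{\maxurg-2}{\bigoof{|\specgnfnof{0}{\arrof{\objective}}|}}\bigr)^{4}.
\]
Plugging in the size bounds from the first two steps and absorbing the polynomial factors in $|\states|$ into the $\bigoof{|\objective|^{2}}$ constant inside the exponent tower yields the claimed bound $(\max\set{|\eqmap|,|\aterm|}\cdot|\eqmap|\cdot|\nonterminals|\cdot\repexpof{\maxurg-2}{\bigoof{|\objective|^{2}}})^{4}$. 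The main obstacle is the bookkeeping in this last step: one must check that the polynomial blow-ups introduced by $\arr$ (both in the factors $|\arrof{\aterm}|, |\arrof{\eqmap}|$ and in the alphabet of $\arrof{\objective}$) are absorbed either into the outer polynomial factors or into the big-O inside $\repexpof{\maxurg-2}{\cdot}$, which for $\maxurg\geq 2$ dominates any polynomial slack.
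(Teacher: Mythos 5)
Your proposal coincides with the paper's own (implicit) proof: the paper derives this lemma in exactly the same way, namely by applying the Adam-choice translation $\arr$, invoking \Cref{Lemma:DFASimulationAdam} to preserve the winner, observing that weak linearity is preserved because only $\achoicen{1}$-choices at the lowest urgency are introduced, bounding the syntactic monoid of $\arrof{\objective}$ by $\states^{2}\cup\set{\terr,\tskip}$ (the analogue of \Cref{Lemma:TrSyntacticCongruence}), and then letting the complexity ``follow'' from \Cref{Lemma:EfficientWeakLin}. Your bookkeeping is in fact more explicit than the paper's one-line justification, and the residual polynomial slack you note at the end (the squared factors in \Cref{Lemma:EfficientWeakLin} versus the unsquared ones here, and the $\sizeof{\states}$-fold blow-up of term and grammar sizes, which for $\maxurg=2$ is not literally dominated by $\repexpof{0}{\bigoof{\sizeof{\objective}^{2}}}$) is glossed over in precisely the same way by the paper itself.
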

\section{Lower Bound}\label{Appendix:LowerBound}
We prove the lower bound given in \Cref{Theorem:UpperLowerInput} with a reduction from context-bounded multi-pushdown games, a concurrent programming model the complexity of which is well-understood~\cite{QadeerR05,MW20}. 
The proof of the lower bound given in \Cref{Theorem:UpperLower} can be found in \Cref{Appendix:Decidability}. 

%
\subsection{Multi-Pushdown Games}
We introduce multi-pushdown games trimmed to our needs. 
A \emph{$\ctxtbound$-context-bounded 2-stack pushdown game~($\ctxtbound$-2PDG)} is a tuple $(\states, \evestates, \astate_0, \goalstates,  \Gamma, \transitions)$
consisting of a finite set of states~$\states$, 
a set of states $\evestates\subseteq\states$ owned by Eve, 
an initial state $\astate_0$, 
a set of goal states $\goalstates\subseteq\states$, 
a stack alphabet~$\Gamma$, and 
a set of transitions $\transitions\subseteq\states\times\stackops
\times\states$. 
Transitions are annotated by a stack operation from 
$\stackops=\Gamma\times\Gamma^{\leq 2}\cup\set{\nxt}$.
With $(\asymbol, \aword)$, we pop $\asymbol$ from and 
push $\aword$ to the active stack.
With $\nxt$, we change the active stack, called a context switch. 
We assume there is a bottom of stack symbol $\stlsymbol\in\Gamma$ that is never removed. 

The semantics of a $\ctxtbound$-2PDG is a game arena 
$(\configs, \to, \own)$ with a reachability objective $\configs_{\goalstates}$ for Eve.
The positions are configuration from 
$\configs=\states\times[0, \ctxtbound]\times\Gamma^{*}\times\Gamma^*$.
A configuration $(\astate, \currcs, s_1, s_0)$ stores the current state~$\astate$, 
the number of context switches~$\currcs$ that have occurred so far, and the contents of the two stacks. 
Stack $s_{0}$ is active after an even number of context switches, stack $s_{1}$ is active when $\currcs$ is odd. 
The owner and moves are defined as expected, there are no context switches beyond $\ctxtbound$, and we assume there are no deadlocks. 
%
%
The objective is $\configs_{\goalstates}=\goalstates\times\set{\ctxtbound}\times\Gamma^{*}\times\Gamma^*$, 
meaning we reach a goal state and have exhausted the context switches.  
Plays, strategies, and winning are defined like for urgency programs.
The task is to decide whether Eve has a strategy to win
from $(\astate_0, 0, \stlsymbol, \stlsymbol)$.
\begin{theorem}{\cite{MW20}}
    $\ctxtbound$-2PDG are $\kexptime{(\ctxtbound - 2)}$-complete.
\end{theorem}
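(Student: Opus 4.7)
The plan is to establish the matching upper and lower bounds, following Seth and Meyer--Wagner, since the statement is a known result that the paper cites rather than re-derives.

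For the upper bound, I would compute \emph{context-indexed summaries}. A summary for context $i$ records, for every state and every suspended configuration of the stack that is inactive during context $i$, the set of entry states from which Eve wins if context $i$ begins there. The key observation is that within a single context only one stack is active, so computing the summary for context $i$ given summaries for contexts $i+1, \ldots, \ctxtbound$ is a single-stack pushdown game over an alphabet that incorporates the later summary domain. Each additional context then contributes at most one exponential to the summary size, in the style of Walukiewicz's pushdown game algorithm. The overall complexity lands at $\kexptime{(\ctxtbound - 2)}$ once one accounts for the fact that the first two contexts do not add a real alternation layer: with just two contexts each stack is used monotonically and the game collapses to polynomial time.

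For the lower bound, I would reduce from alternating Turing machines with $\kexpspace{(\ctxtbound - 3)}$ space (equivalently, $\kexptime{(\ctxtbound - 2)}$). The two stacks jointly encode the ATM tape using the standard trick: one stack holds the current configuration, and at each context switch the tape is transferred symbol by symbol to the other stack, allowing cell-by-cell verification of the transition between successive configurations. Alternation between existential and universal ATM states is mirrored by the ownership of control states in the PDG, and each block of ATM quantifier alternations consumes a bounded number of context switches, so that $\ctxtbound$ context switches suffice to simulate the full ATM computation within its space bound.

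The main obstacle in a from-scratch write-up is the precise alignment of exponentials on both sides: on the upper-bound side, each context must contribute exactly one exponential in the summary size and no more; on the lower-bound side, each group of contexts must simulate exactly one level of the alternation/space hierarchy without inadvertently leaking information between the two stacks outside of what the context switches permit. This bookkeeping is the heart of the Seth and Meyer--Wagner arguments, and rather than redo it we simply cite the theorem and rely on it for the \kexptime{(2\maxurg - 1)} lower bound in \Cref{Proposition:LowerBound}.
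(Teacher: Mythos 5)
The paper offers no proof of this statement: it is an external result, imported from \cite{Seth09,MW20} and used as a black box in \Cref{Proposition:LowerBound}, so your decision to cite rather than re-derive is exactly what the paper does. Since you nevertheless sketched the argument, two steps of your sketch are genuinely wrong, not merely under-detailed. On the upper bound, your explanation of the $-2$ fails: you claim that with two contexts ``each stack is used monotonically and the game collapses to polynomial time.'' Within a single context the active stack is pushed and popped arbitrarily, so already a game with \emph{zero} context switches is a full single-stack pushdown reachability game, which is \exptime-complete by Walukiewicz~\cite{Walukiewicz2001}; two contexts certainly do not collapse to \ptime. The correct accounting is at the base of the induction: roughly three contexts are absorbed into the single exponential of the base pushdown-game computation, and each further context adds one exponential, so the statement is meaningful (and is only invoked by the paper) for $\ctxtbound\geq 3$, with $\ctxtbound=2\maxurg+1$. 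This three-contexts-per-first-exponential alignment is precisely what the paper's encoding exploits when it squeezes three contexts out of the maximal urgency in \Cref{Proposition:LowerBound}.

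On the lower bound, the mechanism you describe cannot work. An ATM with \kexpspace{$(\ctxtbound-3)$} space (your complexity translation itself is fine) runs for up to $(\ctxtbound-2)$-fold exponentially many steps, while the entire play permits only $\ctxtbound$ context switches; ``transferring the tape symbol by symbol at each context switch'' to verify each successive configuration would therefore need unboundedly many switches. Moreover, alternation costs no context switches at all: ownership is attached to control states, so the players can alternate freely within one context, and your claim that ``each block of ATM quantifier alternations consumes a bounded number of context switches'' mislocates where the switches are spent. The known constructions instead let one player write an entire claimed computation onto the active stack and let the opponent challenge individual positions, using counters of tower-of-exponentials size for addressing; each additional context buys one exponential in the size of counter that can be maintained and checked across the two stacks, and that is where the $\ctxtbound-2$ exponentials actually come from. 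Since the paper relies on the theorem only as a citation, none of this affects the paper, but your reconstruction should not be mistaken for the proof in \cite{Seth09,MW20}.
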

\subsection{Reduction}
The reduction is in two steps, we first reduce 2PDG to the problem of making an observation:
\begin{proposition}{Repetition of \Cref{Proposition:LowerBound}}
    Given a $(2\maxurg+1)$-2PDG $\pdg$, 
    we can compute in poly time 
    $\aterm$ over $\analph$ and  $(\nonterminals, \eqmap)$ of maximal urgency~$\maxurg$ and an objective DFA $\objective$ 
    so that Eve wins $\pdg$ if and only if $\winsof{\aterm}{\objective}$.
\end{proposition}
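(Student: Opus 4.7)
The plan is a polynomial-time summary-based encoding. For every entry state $\astate$, exit state $\astatep$, stack top $\stsymbol$, and pair of context numbers $i \leq k \in \{1,\dots,2\maxurg+1\}$ of matching stack-parity I introduce a non-terminal $\nonterminal^{\astate,\astatep,\stsymbol}_{i,k}$. It summarises a play fragment that starts with $\stsymbol$ freshly pushed on the active stack in context $i$ at state $\astate$, and ends with $\stsymbol$ popped in context $k$ at state $\astatep$. There are polynomially many such summaries, and their defining equations will follow the classical pushdown pattern: an internal transition $(\astate,\aletter,\astatepp)\in\transitions$ recurses into $\nonterminal^{\astatepp,\astatep,\stsymbol}_{i,k}$; a push that installs $\stsymbolp$ on top is split, by guessing a matching-pop state $\astateppp$ and an intermediate context $m$, into a concatenation $\nonterminal^{\astatepp,\astateppp,\stsymbolp}_{i,m} \appl \nonterminal^{\astateppp,\astatep,\stsymbol}_{m,k}$; a pop emits a terminal letter recording the transition fired and closes the summary.

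The critical ingredient is how contexts are mapped to urgencies. Since higher urgency is resolved before lower, and within a shared urgency the leftmost leading subterm goes first, I can lay out the $2\maxurg+1$ contexts as a sequence of blocks whose urgencies weakly decrease in temporal order, so that the urgency-driven resolution coincides with the context-by-context execution of $\pdg$. A $\pdg$ state in $\evestates$ induces an $\echoicen{\anurg}$ over the outgoing transitions at the current context's urgency $\anurg$, while an Adam-owned state induces $\achoicen{\anurg}$. The top-level term $\aterm$ launches the summary for the initial configuration with stacks $\stlsymbol,\stlsymbol$ in context $1$, and the regular objective $\objective$ is a DFA of polynomial size that records the emitted transition labels, checks local consistency (consecutive transitions on the same stack must agree on state and stack symbol), and accepts iff the play ends in $\goalstates$ once the final context has been reached.

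The main obstacle will be arguing that urgency-driven resolution truly simulates the context-by-context execution of $\pdg$, despite the summaries for the two stacks being interleaved inside one term. The argument will rely on two deterministic properties of urgency resolution: all choices of the highest remaining urgency are settled before any lower one is touched, and within a shared urgency the leftmost leading subterm is picked first. Together they force the emitted transition labels, read in play order, to coincide with the concatenation of the per-context transitions in $\pdg$'s temporal order; the DFA $\objective$ then enforces consistency with the actual transition relation. Translating winning strategies back and forth along this correspondence yields the biconditional $\winsof{\aterm}{\objective}$ iff Eve wins $\pdg$. The construction is manifestly polynomial in $|\pdg|$, which together with the urgency bound $\maxurg$ and the $\kexptime{(2\maxurg-1)}$-hardness of $(2\maxurg+1)$-2PDG delivers the claimed lower bound.
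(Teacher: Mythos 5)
Your high-level instinct—map contexts to urgencies so that urgency-driven resolution replays the context-by-context execution, and let a polynomial DFA police consistency—matches the spirit of the paper's proof. But the paper does \emph{not} use procedure summaries: it encodes a 2PDG configuration directly as a term in which each stack is a concatenation of urgency-wrapped symbol non-terminals, so that the top of the active stack is always the leading subterm; pushes and pops literally rewrite the front of that concatenation. Your summary-based route has three genuine gaps. First, and most fundamentally, the objective is evaluated on the \emph{final word term read in term order}, not in the order choices are resolved during the play. Your correctness argument explicitly relies on the DFA seeing "the emitted transition labels, read in play order," which the semantics does not provide. Under a per-stack summary decomposition, the word groups letters by stack and by push--pop nesting, so two transitions adjacent in the word can be separated temporally by entire contexts on the other stack, during which the control state changes; your check that "consecutive transitions on the same stack agree on state" is therefore the wrong invariant. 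The paper solves exactly this with per-urgency bookkeeping in a product DFA (variables $f_{\anurg}, s_{\anurg}$ recording the first and latest state seen at urgency $\anurg$, stitched together by the cross-urgency condition $s_{\anurg+1}=f_{\anurg}$), machinery your proposal lacks and cannot simulate with a local check.

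Second, single-exit summaries $\nonterminal^{\astate,\astatep,\stsymbol}_{i,k}$ with the exit state and pop context guessed at push time are unsound for \emph{games}: the guess is fixed at the (high) urgency of the pushing context, hence before Adam's choices inside the sub-summary are resolved, so Adam can adaptively exit at a different state; the DFA then rejects, and rejection always punishes Eve. This breaks the direction "Eve wins $\pdg$ implies $\winsof{\aterm}{\objective}$." The standard repairs are Walukiewicz-style set-valued return predictions, or what the paper actually does: no prediction at all—whenever a stack symbol becomes leading, Eve \emph{then} chooses the current state via an angelic choice whose assertion the DFA verifies, at a point in resolution order where all relevant higher-urgency choices are already settled. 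Third, a symbol's resolution urgency depends on \emph{when} it is popped, which is unknown at push time; your static $(i,k)$-indexing embeds one more such prediction. The paper instead makes urgencies dynamic: at every context switch, a decrement process rewrites each symbol of the now-inactive stack from urgency $\anurg$ to $\anurg-1$, with the DFA checking (via switch flags and a stack separator) that the decrement is carried out correctly; together with lazily generating the second stack at the first switch, this is what squeezes $2\maxurg+1$ contexts out of urgency $\maxurg$. Your "weakly decreasing blocks" picture cannot assign $2\maxurg+1$ contexts to $\maxurg$ urgencies without engaging with this sharing and the leftmost-first rule, so as written the construction does not go through.
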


We now reduce the problem of making an observation to the specialized contextual equivalence. 
Indeed, $\winsof{\aterm}{\objective}$ is the same as to check $\whichcharof{\objective}{\contextvar} \speccongleq{\objective} \aterm$, where $\whichcharof{\objective}{\contextvar}$ is the characteristic term of the empty context formed for objective $\objective$. 
The problems is that the characteristic term may be exponential. 
%
%
We utilize the trick from \Cref{Section:CompactEncoding}. 
\begin{lemma}
Given $\aterm$ over $\Sigma$ and $(\nonterminals, \eqmap)$, and objective $\objective$,  
we can compute in poly time $\whichcharof{\arrof{\objective}}{\contextvar}$, $\arrof{\aterm}$, and $\arrof{\objective}$ so that $\winsof{\aterm}{\objective}$ if and only if $\charof{\contextvar} \speccongleq{\arrof{\objective}} \arrof{\aterm}$. 
\end{lemma}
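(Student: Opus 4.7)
The plan is to build, in polynomial time, a compact characteristic term $\chi$ for the empty context with respect to $\arrof{\objective}$, so that $\chi \specaxleq{\arrof{\objective}} \atermp$ holds precisely when Eve wins $\arrof{\objective}$ from $\atermp$, and then combine this with soundness and \Cref{Lemma:DFASimulation} to obtain the claimed equivalence.

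First I would compute $\arrof{\aterm}$, the translated equations $\arrof{\eqmap}$, and the DFA $\arrof{\objective}$ in polynomial time using the construction of \Cref{Section:CompactEncoding}. For the characteristic term, let $\astateinit$ be the initial state and $\finalstates$ the accepting states of $\arrof{\objective}$, and set $W = \{(\astateinit, \astate_f) : \astate_f \in \finalstates\}$, augmented with $\tskip$ if $\astateinit \in \finalstates$. If $W = \emptyset$ then $\objective$ is empty and the problem is trivial, so assume $W \neq \emptyset$ and define
\begin{align*}
\chi_1 &= \bigAchoiceOf{1} W, \\
\chi_2 &= \bigAchoiceOf{2} \{\bigEchoiceOf{1} \{w\} : w \in W\}, \\
\chi_k &= \bigAchoiceOf{k} \{\bigEchoiceOf{k-1} \{\chi_{k-1}\}\} \quad \text{for } k \geq 3,
\end{align*}
and take $\chi = \chi_\maxurg$. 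The total size is $O(\maxurg \cdot |W|)$, polynomial in the input.

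The main obstacle is to verify that $\chi$ is indeed characteristic for the empty context, i.e.\ equivalent under $\specaxeq{\arrof{\objective}}$ to the (potentially exponential) explicit term $\bigAchoiceOf{\maxurg}\{\atermp \in \plgnfn{\maxurg-1} : \winsof{\atermp}{\arrof{\objective}}\}$ from \Cref{Equation:CharTerms}. I would prove this by induction on the urgency, using the domination preorder $\discleqof{\arrof{\objective}}$ as in the proof of \Cref{Lemma:CharTerm}. The base relies on \Cref{Lemma:TrSyntacticCongruence}, which says that $W$ is exactly the set of winning urgency-$0$ normal forms up to $\lreq{\arrof{\objective}}$. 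At urgency $2$, every winning Eve choice $\bigEchoiceOf{1} T$ is dominated in $\discleqof{\arrof{\objective}}$ by $\bigEchoiceOf{1} \{w\}$ for any $w \in T \cap W$, so the huge Adam choice over all winning $\plgnfn{1}$ terms collapses to the polynomial $\chi_2$. For $k \geq 3$, at each level the minimum winning element is unique up to $\disceq$, so the Adam choice at urgency $k$ contracts to the single representative $\bigEchoiceOf{k-1}\{\chi_{k-1}\}$, giving our recursive formula.

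Once $\chi$ is shown characteristic, the equivalence is routine. By \Cref{Lemma:DFASimulation}, $\winsof{\aterm}{\objective}$ iff $\winsof{\arrof{\aterm}}{\arrof{\objective}}$. By the characteristic property applied to $\arrof{\aterm}$ (first normalizing via \Cref{Lemma:EffNormalForm}), $\winsof{\arrof{\aterm}}{\arrof{\objective}}$ iff $\chi \specaxleq{\arrof{\objective}} \arrof{\aterm}$. Soundness (\Cref{Proposition:Soundness}) then yields $\chi \speccongleq{\arrof{\objective}} \arrof{\aterm}$, giving the forward direction. Conversely, $\chi \speccongleq{\arrof{\objective}} \arrof{\aterm}$ applied to the identity context gives $\winsof{\chi}{\arrof{\objective}} \Rightarrow \winsof{\arrof{\aterm}}{\arrof{\objective}}$; since $\chi \specaxleq{\arrof{\objective}} \chi$ by reflexivity and $\chi$ is characteristic, we have $\winsof{\chi}{\arrof{\objective}}$, hence $\winsof{\arrof{\aterm}}{\arrof{\objective}}$, and finally $\winsof{\aterm}{\objective}$.
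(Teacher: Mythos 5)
There is a genuine gap: your compact term $\chi$ is not characteristic, and it fails already at urgency $2$. Your justification for the collapse --- ``every winning Eve choice $\bigEchoiceOf{1} T$ is dominated by $\bigEchoiceOf{1}\{w\}$ for any $w \in T \cap W$'' --- misreads the domination preorder on Adam choices: $\bigAchoiceOf{1}\{w\} \discleqof{\arrof{\objective}} \bigAchoiceOf{1}\atermset$ requires the \emph{single} $w$ to lie below \emph{every} element of $\atermset$, and $W$ is in general an antichain of pairwise $\rleq{\arrof{\objective}}$-incomparable pairs, so no such $w$ exists once $\atermset$ contains two of them. Concretely, take $\objective = L(ab^{*})$ with DFA $\astateinit \xrightarrow{a} \astate_1$, $\astate_1 \xrightarrow{b} \astate_2$, $\astate_2 \xrightarrow{b} \astate_1$ and $\finalstates = \set{\astate_1, \astate_2}$, so $W = \set{(\astateinit,\astate_1), (\astateinit,\astate_2)}$. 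The term $t = \bigEchoiceOf{1}\bigAchoiceOf{1}\set{(\astateinit,\astate_1),(\astateinit,\astate_2)}$ is winning for $\arrof{\objective}$, so any characteristic $\chi$ must satisfy $\chi \specaxleq{\arrof{\objective}} t$ and hence, by soundness, $\chi \speccongleq{\arrof{\objective}} t$. But in the context $\contextvar \appl ((\astate_1,\astate_2) \echoicen{2} (\astate_2,\astate_1))$ your $\chi_2$ wins while $t$ loses: inserting $\chi_2$, Adam's urgency-$2$ choice is leftmost and leading, so he commits to one $w \in W$ \emph{before} Eve resolves the context's urgency-$2$ choice, and Eve answers $(\astateinit,\astate_1)$ with $(\astate_1,\astate_2)$ and $(\astateinit,\astate_2)$ with $(\astate_2,\astate_1)$; inserting $t$ (urgency $1$), Eve must resolve the urgency-$2$ context choice first, blind, and Adam then picks the mismatching pair. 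So $\chi_2 \not\speccongleq{\arrof{\objective}} t$, $\chi_2$ is not characteristic, and the defect propagates through your $\chi_k$; your claim that ``the minimum winning element is unique'' is asserted, not proved, and your base contradicts it.

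The fix --- and the reason the lemma is nevertheless true --- is that the $W$-branching must sit at the \emph{bottom}: the solution space of the empty context has a unique least element up to $\disceq$, namely the tower of unary wrappers $\bigAchoiceOf{\maxurg}\bigEchoiceOf{\maxurg-1}\bigAchoiceOf{\maxurg-1}\cdots\bigEchoiceOf{1}\bigAchoiceOf{1} W$, in which Adam chooses among all of $W$ at urgency $1$, i.e.\ only after all higher-urgency choices anywhere in the context have been resolved. This term is $\discleqof{\arrof{\objective}}$-below every winning normal form (given a winning term $\bigEchoiceOf{\anurg}_{i}\bigAchoiceOf{\anurg}\atermset_i$, descend through the witnessing all-winning branch, and at the bottom match each $u$ in a winning Adam set by $u$ itself inside $W$), and it instantiates \Cref{Equation:CharTerms} because the Adam-$\maxurg$ choice over all winning terms collapses to its least element. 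Your $\chi_1$ is exactly this term for $\maxurg = 1$, which is why your base case is fine; $\chi_2$ destroys it by hoisting Adam's $W$-choice to urgency $2$, letting him decide too early and thereby making the term strictly too easy for Eve in urgency-$2$ contexts. With the corrected term, the remainder of your argument --- the translation via \Cref{Lemma:DFASimulation}, soundness for the forward direction, and reflexivity plus the empty context for the backward direction --- goes through, and polynomiality is unaffected since normalization of $\arrof{\aterm}$ is used only in the correctness proof, not in the computation.
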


We sketch the proof of \Cref{Proposition:LowerBound}.
%
We encode positions $(\astate, \currcs, \stsymboln{}\ldots\stlsymbol, 
\stsymbolpn{}\ldots\stlsymbol)$ of the 2PDG as urgency terms
\begin{align*}
    \headerurg{\anurgpp}\appl
    \underbrace{\history\appl\wrpsymboln{}{\anurg}\ldots
    \wrplsymbol{\anurg}\appl\stacksep}_{s_1}
    \appl
    \underbrace{\historyp\appl\wrpsymbolpn{}{\anurgp}\ldots
    \wrplsymbol{\anurgp}\appl\stacksep}_{s_0}. 
\end{align*}
Stack symbols $\wrpsymboln{}{}\in\Gamma$ are represented by terms $\wrpsymboln{}{\anurg}$  
of urgency~$\anurg$.
Terminal $\stacksep$ marks the end of a stack content encoding.
The terms $\history$ and $\historyp$ represent the history of the play.
Finally, $\headerurg{\anurgpp}$ implements context switches.
The construction controls $\anurgpp$, $\anurg$,
and $\anurgp$ so that the top of the active stack is leading.

The top of the active stack allows the game to proceed as
\begin{align*}
    \ldots\history\appl
    \underbrace{(
        \Echoice_{\astate\in\states}
        \Pchoice_{\atrans\in\transitions_{\astate, \stsymboln{}}}
        \encoding{\atrans}^{\anurg}
    )}_{\text{Rewritten from }\wrpsymboln{}{\anurg}}
    \ldots\;\wrplsymbol{\anurg}\appl\stacksep\ldots
    \;\gamemove\;
    \ldots\history\appl\encoding{\atrans}^{\anurg}_{\stsymboln{}}
    \ldots\;\wrplsymbol{\anurg}\appl\stacksep\ldots
\end{align*}
Eve selects the current state  $\astate\in\states$. 
%
Then the player owning this state selects the next transition.  
We use $\transitions_{\astate, \stsymboln{}} \subseteq \transitions$ to denote the set of transitions
from state $\astate$ with top of stack symbol $\stsymboln{}$. 
The set is non-empty because the 2PDG does not deadlock.  
The term~$\encoding{\atrans}^{\anurg}$ of the chosen transition contains terminals which join history $\history$ to record the state change and the urgency~$\anurg$. 
%
%
%
%
The objective~$\objective$ is a product DFA that reads the terminals for each urgency separately and enforces consistency with the 2PDA transitions. 
%
%
%
%

Push/pop operations modify the active stack encoding in the expected way. %
For context switches, the leading term must swap the stack. 
To implement this, we use a decrement process on the now no longer active stack.
We define stack symbols~$\wrpsymbol{\anurg}$ as ${\encoding{\to\sty}^\anurg\appl\atermpp
\echoicen{\anurg}\encoding{\to\nxt}^\anurg\appl \wrpsymbol{\anurg-1}}$, where  $\atermpp$ is the choice of the next transition explained above.
%
The decrement process relies on the alternative $\encoding{\to\nxt}^{\anurg}\appl \wrpsymbol{\anurg-1}$, which replaces $\wrpsymbol{\anurg}$ by~$\wrpsymbol{\anurg-1}$. 
%
A snapshot of the decrement process is 
\begin{align*}
    &{
    \begin{tikzpicture}
        \draw[->] (0,0) -> node[above] {\footnotesize Progression of the leading term} (4,0);
    \end{tikzpicture}
    }\\[-0.5em]
    \ldots\history'\appl
    \encoding{\nxt}
    &\underbrace{\ldots
    \encoding{\to\nxt}\appl
    \wrpsymboln{i-1}{\anurg-1}}
    _{\text{Urgency }\anurg-1}\appl 
    \makeleading{\wrpsymboln{i}{\anurg}}
    \ldots\wrplsymbol{\anurg}\stacksep\ldots
\end{align*}
The terminals $\encoding{\to\sty}$, $\encoding{\to\nxt}$, $\encoding{\nxt}$, and $\stacksep$ allow the objective to check the decrement process for correctness. 

Each urgency simulates two contexts. 
Since we do not need to access the odd stack before the 
first context switch, we 
only generate this stack when it is first accessed.
This allows us to simulate three contexts with the maximal urgency.
In total, the construction simulates $2\maxurg + 1$ contexts  with urgency $\maxurg$.

\section{Lower Bound Details}

\textbf{Construction, Objective:}
The set of terminal symbols $\analph$ consists of assignments 
$\setvar{x}{val}$
and assertions $\assertvar{x}{val}$ of the variables 
$f_{\anurg}\in\states\cup\set{-}$,
$s_{\anurg}\in\states\cup\set{-}$, 
and $c_{\anurg}\in\set{\mathrm{nxt}, \mathrm{sty}}$
for each $0<\anurg\leq\maxurg$, along with a 
variable $gn\in\set{+, -}$.
In the parts of the play, where the urgency 
of the term is $0<\anurg\leq\maxurg$,
variable $f_{\anurg}$
will keep track of the first MPDG state,
the variable $s_{\anurg}$ will keep track of the 
latest MPDG state, and 
$c_{\anurg}\in\set{\mathrm{nxt}, \mathrm{sty}}$
will be used to enforce the correctness of context switches.
The variable $gn\in\set{+, -}$ keeps track of 
whether the game has 
generated the second stack by 
making the first context switch.
The objective DFA $\adfa$ 
processes the updates and assertions 
on the values of these variables.
For each $0<\anurg\leq\maxurg$, 
the DFA also keeps an assertion failure flag
$err_{\anurg}\in\set{\bot, \top}$, that records 
whether there has been 
assertion failure for $f_{\anurg}$, $s_{\anurg}$,
$c_{\anurg}$, and $gn$.
If an assertion failure happens for one of these 
variables, then $err_{\anurg}$ is irrevocably set to $\bot$.
In the initial state $\astateinit$, we have 
$gn=f_{\anurg} = s_{\anurg} = -$, $c_{\anurg} = \sty $, and 
$err_{\anurg}=\top$ for all $0 < \anurg \leq \maxurg$.
The DFA $\adfa$ accepts if and only if $s_{1}\in\goalstates$,
there are no assertion errors ($err_{\anurg}=\top$ for all 
$0<\anurg\leq\maxurg$),
and the latest states are consistent with the first states 
($s_{\anurg+1}=f_{\anurg}$ for all $0<\anurg<\maxurg$).

\textbf{Construction, Assignments:}
We now move on to the construction of the defining assignments.
Each stack symbol is represented by a different term for 
each urgency.
The set of non-terminals is 
$\nonterminals=\set{\ntsymbol{\anurg}\mid 
\stsymbol\in \Gamma, 0<\anurg\leq\maxurg}$.
The representation of an individual stack symbol
for urgency $\anurg$,
wraps the corresponding non-terminal in a unary 
choice with urgency $\anurg$.
Formally the representing term is the singleton choice $\wrpsymbol{\anurg}=\bigEchoiceOf{\anurg}\ntsymbol{\anurg}$
for all $0<\anurg\leq \maxurg$.
This ensures that the $\anurg$-representation of a stack 
symbol has urgency $\anurg$ (remember that non-terminals have highest urgency).
Furthermore, the term that represent the stack symbol must be
the leftmost action.
This allows a concatenation of terms 
that represent stack symbols to act 
like one stack in the MPDG.\@
The defining assignments $\eqmap:\nonterminals\to\terms$ 
are laid out below for all 
$\ntsymbol{\anurg}\in\nonterminals$.
We use helper terms to simplify the representation.
For all $\anurg\leq\maxurg$, 
$\anurgp<\maxurg$,
$\astate,\astatep \in\states$, $\stsymbol\in\Gamma$,
and $\aword\in\stsymbol^{\leq 2}$ we have:
\begin{align*}
    \eqmapof{\ntsymbol{\anurg}}
    &=(\assertvar{c_{\anurg}}{\nxt}
    \appl \wrpsymbol{\anurg - 1})
    \echoicen{\anurg}(\assertvar{c_{\anurg}}{\sty}
    \appl \popterm{\stsymbol}{\anurg})\\
    \popterm{\stsymbol}{\anurg}&=\bigEchoiceOf{\anurg}_{\astate\in\states}
    \assertvar{s_{\anurg}}{\astate}\appl
    \encoding{\transitions_{\astate, \stsymbol}}^{\anurg}\qquad
    \encoding{\transitions_{\astate, \stsymbol}}^{\anurg}=
    \bigPchoiceOf{\anurg}_{\atrans\in\transitions_{\astate, \stsymbol}}
    \encoding{\atrans}^{\anurg}_{\stsymbol}\\
    \encoding{(\astate, \stsymbol, \aword, \astatep)}^{\anurg}
    _{\stsymbol}
    &=\setvar{s_{\anurg}}{\astatep}\appl
    \aword_{0}^{\anurg}\ldots\aword_{n}^{\anurg}\\
    \encoding{(\astate,\nxt,\astatep)}^{\maxurg}_{\stsymbol}
    &=(\assertvar{gn}{-}\appl\setvar{gn}{+}\appl
    \wrplsymbol{\maxurg}\appl\stacksep\appl\wrpsymbol{\maxurg})
    \;\echoicen{\maxurg}\\
    &\qquad(\assertvar{gn}{+}\appl
    \setvar{s_{\maxurg}}{\astatep}\appl\setvar{c_{\maxurg}}{\nxt}\appl
    \wrpsymbol{\maxurg})\\
    \encoding{(\astate,\nxt, \astatep)}^{\anurgp}_{\stsymbol}
    &=\setvar{s_{\anurgp}}{\astatep}\appl\setvar{c_{\anurgp}}{\nxt}\appl
    \wrpsymbol{\anurgp}\\
\end{align*}
\begin{align*}
    \stacksep&=\setvar{c_{1}}{\sty}\ldots\setvar{c_{\maxurg}}{\sty}\\
    \headerurg{\anurg}&=\headerurg{\anurg - 1}\appl
    \bigEchoiceOf{\anurg}_{\astate\in\states}\setvar{f_{\anurg}}{\astate}
    \appl\setvar{s_{\anurg}}{\astate}\\
\end{align*}

The initial term for the game is simply 
$\headerurg{\maxurg-1}\appl
\makeleading{\wrplsymbol{\maxurg}}\appl\stacksep$.
The terminals $\encoding{\to\sty}^\anurg$ and 
$\encoding{\to\nxt}^\anurg$ used in the main paper 
refer to the assertions $\assertvar{c_{\anurg}}{\sty}$ 
and $\assertvar{c_{\anurg}}{\nxt}$.
At context switches in urgency $\maxurg$,
Eve also needs to ``guess'' whether the second stack has been generated.
In the case where it has not yet been generated, the correct choice
generates it.
In the case where it has already been generated, the correct choice 
triggers a context switch in the usual way.
\subsection{Denotational Semantics}\label{Appendix:Denotational}
We show how to define a denotational semantics based on our axiomatization.
What we find interesting is that, with the axiomatization at hand, the denotational semantics is a derived construct: the semantic domain and the interpretation of function symbols are induced by the axiomatization, yet the semantics is guaranteed to be fully abstract wrt. contextual equivalence resp. its specialized variant.
The creativity that is saved in the definition of the semantics of course had to be invested up front when coming up with the axiomatization.  
We found it easier to study an axiomatization than a denotational semantics, because the problem is narrowed down to understanding the interplay between operators as opposed to coming up with a freely chosen semantic domain. 
We recall the basics of denotational semantics before turning to the details.

A \emph{denotational semantics} for our programming language is a pair $((\semdom, \subseteq), \seminter)$ consisting of a complete partial order~$(\semdom, \subseteq)$ of semantic elements and an interpretation $\seminter:\setfunsymb\to \semdom^\omega\to \semdom$ that assigns to each function symbol $\afunsymb \in \setfunsymb$ in our language a monotonic function $f^{\seminter}:\semdom^{\mathit{ar}(\afunsymb)}\to \semdom$ of the expected arity. 
The function symbols $\setfunsymb$  are $\analph$, $\set{\tskip, \terr, \appl}$, and choices of arbitrary arity with urgency $1$ to $\maxurg$. 
We lift the interpretation to all terms~$\aterm$ and assign them an element $\densemof{\aterm}\in\semdom$, called the denotational semantics of the term. 
For recursion-free terms, the lifting is purely compositional: 
\begin{align*}
\densemof{\aletter}=\aletter^{\seminter}\qquad \densemof{\aterm\appl \atermp}=\densemof{\aterm}\appl^{\seminter}\densemof{\atermp},
\end{align*} and similar for the other function symbols. 
For the  non-terminals $(\nonterminals, \eqmap)$, this allows us to understand the defining equations as a monotonic function $$\eqmap^{\seminter}:(\nonterminals\to \semdom)\to \nonterminals\to \semdom.$$ 
The least solution of this function is the denotational semantics of the non-terminals: $\densemof{\nonterminal}=[\mathit{lfp}.\eqmap^{\seminter}](\nonterminal)$ for every $\nonterminal\in\nonterminals$. 
This is the missing case to define the semantics of arbitrary program terms again in a compositional way. 

We focus on the denotational semantics induced by the axiomatic congruence. 
The development for the $\objective$-specialized axiomatic congruence with $\objective$ right-separating is the same.
If~$\objective$ is not right-separating, we cannot give a guarantee that the resulting semantics will be fully abstract.   
The \emph{denotational semantics induced by $\axeq$} is $((\semdom_{\axeq}, \axleq), \seminter_{\axeq})$.  
The set of semantic elements is $\semdom_{\axeq} = \factorize{\terms}{\axeq}$, we factorize the set of terms along the axiomatic congruence. 
The complete partial order on these congruence classes is the one given by the axiomatic precongruence. 
It is guaranteed to be well-defined due to the precongruence.
It is guaranteed to stabilize in an ordinal by the fact that chains are well-ordered sets. 
The interpretation of the function symbols is as expected: 
\begin{align*}
\aletter^{\seminter_{\axeq}}\ =\ \axclassof{\aletter}\qquad \axclassof{\aterm}\ \appl^{\seminter_{\axeq}}\ \axclassof{\atermp}\ =\ \axclassof{\aterm\appl\atermp}.
\end{align*} 
Well-definedness holds because $\axeq$ is a congruence, monotonicity holds because $\axleq$ is a precongruence. 
The semantics is fully abstract wrt. contextual equivalence, 
$\densemof{\aterm}=\densemof{\atermp}$ iff $\aterm\congleq\atermp$, which is merely a reformulation of \Cref{Theorem:FullAbstractionContextual}. 
We can define other fully abstract semantics by introducing representative systems on the congruence classes, for example based on normal forms. 


\end{document}